\newcommand{\Bd}{\mathrm{Bd}}
\DeclareMathOperator{\out}{\mathrm{out}}
\newcommand{\ii}{\mathsf{i}}
\newcommand{\f}{\mathsf{f}}
\newcommand{\Dr}{\mathrm{Dr}}
\newcommand{\interval}[2]{{#2}\rightarrow{#1}}
\DeclareMathOperator{\anchor}{  \adjincludegraphics[valign=B, width = 0.25cm]{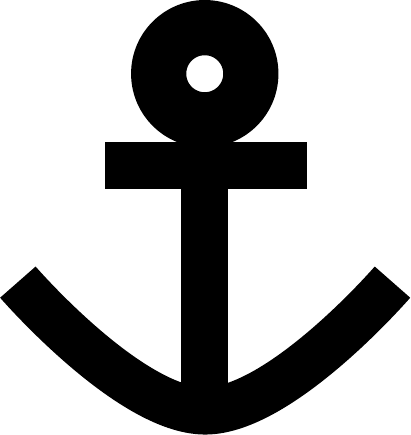}  }
\DeclareMathOperator{\anchorino}{  \adjincludegraphics[valign=B, width = 0.18cm]{anchor_symbol.pdf}  }
\DeclareMathOperator{\tr}{\mathrm{tr}}
\DeclareMathOperator{\Ob}{\mathrm{Ob}}
\DeclareMathOperator{\Irr}{\mathrm{Irr}}
\DeclareMathOperator{\Tube}{\rm{Tube}}
\newcommand{\1}{\mathbbm{1}}
\DeclareMathOperator{\op}{\mathrm{op}}
\DeclareMathOperator{\coev}{\mathrm{coev}}
\DeclareMathOperator{\ann}{\mathbb{A}}
\DeclareMathOperator{\inn}{\mathrm{in}}
\newcommand{\ZC}{Z_1(\mathcal{C})}
\DeclareMathOperator{\gl}{\mathrm{gl}}
\begin{document}

\title{Sector Theory of Levin-Wen Models I \\ Classification of Anyon Sectors}

\author[1]{Alex Bols \thanks{email: \href{abols01@phys.ethz.ch}{abols01@phys.ethz.ch}}}
\author[2]{Boris Kj\ae r \thanks{email: \href{bbk@math.ku.dk}{bbk@math.ku.dk}}}
\affil[1]{Institute for Theoretical Physics, ETH Z{\"u}rich}
\affil[2]{QMATH, Department of Mathematical Sciences, University of Copenhagen}

\maketitle

\begin{abstract}
We classify the irreducible anyon sectors of Levin-Wen models over an arbitrary unitary fusion category $\caC$, showing that they are in one-to-one correspondence with equivalence classes of simple objects of the Drinfeld center $Z(\caC)$. We achieve this by making explicit how the Levin-Wen Hamiltonian stabilizes subspaces isomorphic to state spaces of the corresponding Turaev-Viro TQFT, and developing a detailed understanding of these state spaces on punctured disks. In particular, we construct Drinfeld insertion operators on such spaces which can move anyons between the punctures, and can change their fusion channels. Using these Drinfeld insertions, we construct explicit string operators that excite anyons above the ground state. The fusion and braiding properties of these anyons will be analysed in a companion paper.
\end{abstract}

\section{Introduction} \label{sec:introduction}

It has been understood for a long time that the anyon content of a gapped 2+1 dimensional quantum field theory is captured by its \emph{sector theory} \cite{doplicher1971local, doplicher1974local, fredenhagen1989superselection, fredenhagen1992superselection, frohlich1988statistics, frohlich1990braid}. This insight was transferred to the setting of gapped 2+1 dimensional quantum lattice models in \cite{naaijkens2011localized, ogata2022derivation}, establishing the sector theory as an invariant of gapped phases. In this framework, anyon types are in one-to-one correspondence with \emph{superselection sectors}, equivalence classes of representations of the observable algebra that satisfy the superselection criterion with respect to the vacuum (see Definition \ref{def:anyonsector} below). Under the assumption of (approximate) Haag duality, each such representation can be obtained from the vacuum representation by acting with a \emph{string operator}, an endomorphism localized in a cone region which produces an anyon from the vacuum. These string operators allow for a natural definition of fusion and braiding structures on the category of superselection sectors, capturing the fusion and braiding properties of the model's anyonic excitations.

Recent years have seen much progress in computing the sector theory of exactly solvable toplogically ordered quantum lattice models, especially for Kitaev's quantum double models \cite{naaijkens2011localized, naaijkens2015kitaev, bols2024double, bols2025classification, bols2025category}. In this work we make further progress by investigating the sector theory of Levin-Wen models \cite{levin2005string, lin2021generalized}, which are believed to provide representatives for all topological orders in 2+1 dimensions which admit gapped boundaries. (A class of lattice models believed to capture all gapped phases in 2+1 dimensions has recently been proposed in \cite{sopenko2023chiral}.) The main result of this paper, the first in a series of two, is the classification of the irreducible anyon sectors of Levin-Wen models. Establishing the fusion and braiding structures will occupy the second paper of this series \cite{sectortheoryII}. The question of establishing Haag duality for quantum double and Levin-Wen models has been addressed in \cite{Naaijkens2012, Fiedler2014, ogata2025haag}.

The complete classification of irreducible anyons sectors was achieved for quantum double models with finite gauge group $G$ in \cite{bols2025classification}. The constructions there rely heavily on explicit localized and transportable \emph{amplimorphisms} originally introduced in \cite{naaijkens2015kitaev}. These amplimorphisms have the nice property that they provide an action of the model's anyon theory, namely the representation category of the quantum double $\caD(G)$ of the gauge group, on the quasi-local algebra of observables (see \cite[Section~5]{bols2025category}). It follows from the discussion in \cite[Section 5.1]{chen2022q} that string operators with this nice property \emph{cannot} exist for lattice models whose anyon theory has non-integer quantum dimensions, including many of the Levin-Wen models covered here. The main novelty of this work is the construction of explicit string operators for all the irreducible anyon sectors of Levin-Wen models. As the no-go result of \cite{chen2022q} suggests, our construction differs significantly from the string operators used in Kitaev's quantum double models \cite{kitaev2003fault, Bombin2008} and their generalisations based on $\rm C^*$-Hopf algebras \cite{Yan2022}.

The paper is organised as follows. In section \ref{sec:model} we describe the Levin-Wen model based on a unitary fusion category $\caC$ in infinite volume, and state our main theorem classifying the irreducible anyon sectors of its ground state. In Section \ref{sec:tube algebras and skein modules} we introduce and analyse \emph{skein modules} and various algebras acting on them, in particular the $\Tube$-algebras. These skein modules are the state spaces of a Turaev-Viro TQFT. Section \ref{sec:skein subspaces} makes explicit how the Levin-Wen Hamiltonian stabilizes \emph{skein subspaces} isomorphic to the skein modules of Section \ref{sec:tube algebras and skein modules}. In Section \ref{sec:anyon states}, with a good understanding of skein subspaces in hand, we construct, for every simple object $X$ of the Drinfeld center $Z(\caC)$ and any edge $e$ of the lattice, a pure state $\omega_e^X$ interpreted as an anyon excitation of type $X$ sitting near $e$. We construct string operators in Section \ref{sec:string operators} and show in Section \ref{sec:anyon representations} that these string operators yield the GNS representations $\pi_e^X$ of the anyon states $\omega_e^X$ when composed with the vacuum representation of the model. We use these string operators to show that the $\pi_e^X$ satisfy the superselection criterion. Finally, in Section \ref{sec:completeness} we show that any irreducible anyon representation is equivalent to one of the $\pi_e^X$, concluding that the irreducible anyon sectors of the model are in one-to-one correspondence with equivalence classes of simple objects of $Z(\caC)$. Appendices \ref{app:proof of characterization of skein modules} and \ref{app:proof of skein subspace isomorphism} contain proofs of certain basic properties of skein modules and skein subspaces.

\subsection*{Acknowledgements}
We thank Corey Jones and David Penneys for useful conversations. B. K. was supported by the
Villum Foundation through the QMATH Center of Excellence (Grant No. 10059) and the Villum Young Investigator (Grant No. 25452) programs.


\setcounter{tocdepth}{2}
\tableofcontents

\section{The Levin-Wen model and its anyon sectors} \label{sec:model}

\subsection{Local degrees of freedom}

Fix a unitary fusion category (UFC) $\caC$ with a representative set of simple objects $\Irr \caC$. (See Section \ref{subsec:UFC} below for details.) To each site $v \in \Z^2 \subset \R^2$ of the square lattice we associate a local degree of freedom
$$ \caH_v = \bigoplus_{a, b, c, d \in \Irr \caC} \, \caC(a \otimes b \rightarrow c \otimes b). $$
An element $\phi \in \caH_v$ in the subspace $\caC(a \otimes b \rightarrow c \otimes d)$ will be represented graphically by
\begin{equation}
    \adjincludegraphics[valign=c, height = 1.0cm]{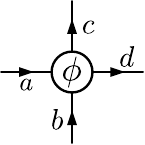} \quad,
\end{equation}
where the morphism $\phi$ is associated to the site $v \in \Z^2$, and the diagram is read from bottom left to top right.

Following \cite{kong2014universal, christian2023lattice, green2024enriched}, the degrees of freedom $\caH_v$ are equipped with the \emph{skein inner product} given for $\phi, \psi \in \caC(a \otimes b \rightarrow c \otimes d)$ by
\begin{equation}
    \langle \phi, \psi \rangle := \frac{\tr \{ \phi^{\dag} \circ \psi \}}{\sqrt{d_a d_b d_c d_d}} = \frac{1}{\sqrt{d_a d_b d_c d_d}} \,\,\, \adjincludegraphics[valign=c, height = 1.0cm]{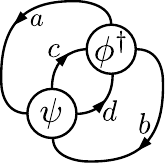} \quad.
\end{equation}

For any $v \in \Z^2$ we put $\caA_v = \End(\caH_v)$, and for finite $V \subset\subset \Z^2$ we put $\caH_V = \bigotimes_{v \in V} \caH_v$ and $\caA_V = \bigotimes_{v \in V} \caA_v \simeq \End( \caH_V )$. If $V \subset W$ is an inclusion of finite subsets of $\Z^2$ then there is a natural embedding $\caA_V \hookrightarrow \caA_W$ by tensoring with the identity of $\caA_{W \setminus V}$. For any, possibly infinite, $X \subset \Z^2$ these inclusions make $\{ \caA_V \}_{V \subset\subset X}$ into a directed system of matrix algebras. We denote its direct limit by $\caA_X^{\loc}$ and put $\caA_X = \overline{\caA_X^{\loc}}^{\norm{\cdot}}$. The *-algebra $\caA_X^{\loc}$ is called the algebra of local observables on $X$, and the $\rm C^*$-algebra $\caA_X$ is called the quasi-local algebra on $X$. We also write $\caA = \caA_{\Z^2}$ and $\caA^{\loc} = \caA_{\Z^2}^{\loc}$, called the quasi-local algebra and the local algebra respectively.

For any $S \subset \R^2$ let $\overline S = S \cap \Z^2$ and write $\caA_S := \caA_{\overline S}$. If $\overline S$ is finite, we also write $\caH_S := \caH_{\overline S}$.

\subsection{The Levin-Wen Hamiltonian}

Let $\bse_1 = (1, 0)$ and $\bse_2 = (0, 1)$ be unit vectors in $\Z^2 \subset \R^2$. We denote by $ \vec \caE = \{ (v_0, v_1) \in \Z^2 \times \Z^2 \, : \, \dist(v_0, v_1) = 1 \} $ the set of oriented edges of $\Z^2$ and by $ \caE = \{ (v_0, v_1) \in \Z^2 \times \Z^2 \, : \, v_1 - v_0 \in \{ \bse_1, \bse_2 \} \}$ the set of edges of $\Z^2$ that are oriented to the top right. We identify $\caE$ with the set of unoriented edges of $\Z^2$.
For $e = (v_0, v_1) \in \vec \caE$ we write $\bar e = (v_1, v_0)$ for its orientation reversal. We also write $\partial_{\ii} e = v_0$ and $\partial_{\f} e = v_1$.

For each $e = (v_0, v_1) \in \vec \caE$ we define a projector $A_e$ acting on $\caH_{v_0} \otimes \caH_{v_1}$ by
\begin{equation} \label{eq:A_e defined}
    A_e \,\,\, \adjincludegraphics[valign=c, height = 0.8cm]{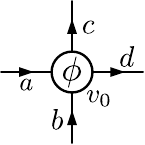} \,\, \otimes \adjincludegraphics[valign=c, height = 0.8cm]{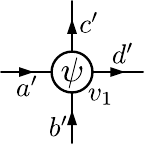} \,\,\, = \,\,\, \delta_{d \, a'} \,\, \adjincludegraphics[valign=c, height = 0.8cm]{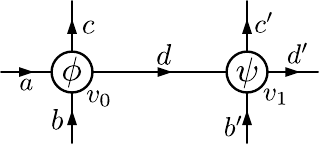} \quad.
\end{equation}
Clearly $A_e = A_{\bar e}$. We say $A_e$ enforces the \emph{string-net constraint} at edge $e$.

Denote by $\caF$ the set of faces of $\Z^2$. For any face $f \in \caF$ we let $\caH_f = \bigotimes_{v \in f} \caH_v$ and define an orthogonal projector $B_f$ on $\caH_f$ which annihilates the orthogonal complement of $\prod_{e \in f} A_e \caH_f$ and acts on $\prod_{e \in f} A_e \caH_f$ by inserting the regular element of $\caC$ and using local relations to bring the diagram back into product form: 
\begin{equation} \label{eq:B_f defined}
    B_f \,\, \adjincludegraphics[valign=c, height = 1.0cm]{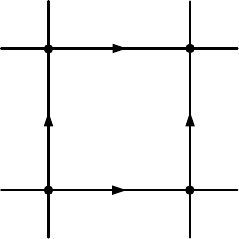} \,\,\, = \,\,\, \frac{1}{\caD^2} \, \sum_{a \in \Irr \caC} \, d_a \,\, \adjincludegraphics[valign=c, height = 1.0cm]{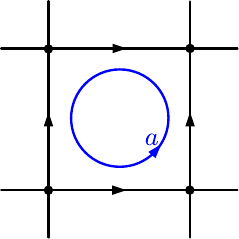} \quad.
\end{equation}
Here $d_a$ is the quantum dimension of the object $a$, and $\caD^2 = \sum_a \, d_a^2$. A precise definition will be given in Section \ref{subsec:Bf defined} below.

It was shown that $B_f$ is an orthogonal projector in \cite[page~10]{green2024enriched}. Moreover, all $\{B_f\}_{f \in \caF}$ commute with each other, as has been shown in \cite[~Proposition 5.14]{zhang2016temperley}. This fact will follow from Lemma \ref{lem:commutativity lemma} below.

The Levin-Wen Hamiltonian is the formal commuting projector Hamiltonian
$$ H_{LW} = - \sum_{f \in \caF}  B_f. $$
A state $\omega : \caA \rightarrow \C$ is a frustration free ground state of $H_{LW}$ if $\omega(B_f) = 1$ for all $f \in \caF$.
The following Proposition has been proved in \cite{jones2023localtopologicalorderboundary} (see section 2.3 and Theorem 4.8 of that paper):
\begin{proposition} \label{prop:unique ffgs}
	The Levin-Wen Hamiltonian has a unique frustration free ground state which we denote by $\omega^{\I}$. This frustration free ground state is pure.
\end{proposition}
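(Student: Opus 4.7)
The plan is to exploit the commuting-projector structure of $H_{LW}$ together with a local topological order (LTQO) property. All $B_f$ commute pairwise, and they commute with the edge constraints $A_e$ (both facts will follow from the graphical/skein calculus developed later, in particular Lemma \ref{lem:commutativity lemma}). Consequently, a frustration-free ground state $\omega$ is determined by its values on the joint range of the family $\{ B_f \}_{f \in \caF}$.

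First I would introduce, for each finite disk-like region $D \subset \Z^2$, the local projector $P_D := \prod_{f \subset D} B_f$ and identify its range on the string-net constrained subspace with a skein module on the disk whose boundary labels are determined by the configuration at $\partial D$. The key geometric input is that the Turaev--Viro state space on a disk, for fixed boundary data, is one-dimensional, corresponding to the insertion of the trivial object of $Z(\caC)$ in the simply connected bulk. This is exactly the kind of identification that Sections \ref{sec:tube algebras and skein modules} and \ref{sec:skein subspaces} establish.

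Next I would derive the LTQO statement: for every local observable $a \in \caA^{\loc}$ supported in a region strictly contained in a finite disk $D$, one has $P_D \, a \, P_D = c_D(a) \, P_D$ for some scalar $c_D(a)$. Indeed, $P_D$ projects onto a subspace whose fibres (over boundary configurations on $\partial D$) are one-dimensional, and $a$ commutes with the boundary data, so it must act as a scalar on each fibre; commutativity of $P_D$ with $a$ on the boundary configurations then collapses these scalars into a single $c_D(a)$. Any frustration-free ground state therefore satisfies $\omega(a) = \omega(P_D a P_D) = c_D(a)$, independently of $\omega$. Consistency of $c_D(a)$ under enlarging $D$ follows from LTQO itself and defines a unique normalized positive functional on $\caA^{\loc}$ that extends by continuity to a unique state $\omega^{\I}$ on $\caA$, giving uniqueness.

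For purity: if $\omega^{\I} = \lambda \omega_1 + (1 - \lambda) \omega_2$ is a nontrivial convex decomposition with $\lambda \in (0,1)$, then $\omega^{\I}(B_f) = 1$ forces $\omega_i(B_f) = 1$ for each $f$ and each summand, so each $\omega_i$ is itself a frustration-free ground state, and uniqueness yields $\omega_1 = \omega_2 = \omega^{\I}$. The main obstacle is the rigorous identification in the first two paragraphs: showing that the range of $\prod_{f \subset D} B_f$ on the string-net subspace fibres over boundary configurations as one-dimensional skein modules on the disk, and that local observables act there by scalars. This is precisely the skein-subspace analysis deferred to Sections \ref{sec:tube algebras and skein modules} and \ref{sec:skein subspaces}, which is why the authors simply quote the result from \cite{jones2023localtopologicalorderboundary} at this stage.
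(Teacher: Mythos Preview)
Your overall strategy---prove LTQO on disks, deduce uniqueness from that, and obtain purity from the convex-decomposition argument---is sound in outline and is indeed essentially the approach of \cite{jones2023localtopologicalorderboundary}. However, your justification of the LTQO step contains a genuine error. You claim that the range of $P_D$ fibres over boundary configurations $b$ on $\partial D$ into \emph{one-dimensional} pieces. This is false: for a disk $D$, the skein subspace $H_D$ is isomorphic (via Propositions~\ref{prop:characterisation of skein modules} and~\ref{prop:isomorphism of skein subspace and skein module}) to $\caC(\I \rightarrow \chi^{\otimes \caS_0})$, and the fibre over a fixed boundary label $b$ is $\caC(\I \rightarrow \otimes b)$, which has dimension equal to the fusion multiplicity $N^{\I}_{b}$---typically greater than one. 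The correct mechanism is not one-dimensionality of the fibres but \emph{irreducibility of $H_D$ as a $\Tube_{\caS_0}$-module}: if $a$ is supported strictly inside $D$ (away from the collar of $\caS_0$), then $P_D \, a \, P_D$ commutes with the $\Tube_{\caS_0}$-action on $H_D$, and Schur's lemma forces it to be a scalar. Your sentence about ``collapsing these scalars into a single $c_D(a)$'' is unclear and does not establish this.

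The paper's own proof (given at the end of Section~\ref{subsec:anyons on punctures}) takes a rather different route. It does not invoke LTQO on disks at all. Instead it punctures the plane at an edge $e$, exhibits a specific minimal projector $p^{\I} \in \Tube_{\caS_e}$ of type $\I$ (Lemma~\ref{lem:p^one is minimal}), proves the identity $\frt_e(p^{\I}) = B_{f_1} B_{f_2}$ for the two faces adjacent to $e$ (Lemma~\ref{lem:p^one enforces ground state constraints}), and then shows via Cauchy--Schwarz that the set of frustration-free ground states coincides with the state space $\caS_{C^{(e)}}^{p^{\I}}$. Uniqueness and purity then follow from the general result Proposition~\ref{prop:unique anyon states}, which is proved by restricting to annuli and using Lemma~\ref{lem:characterization of some caDs}. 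Your purity argument (convex decomposition forces each summand to satisfy all constraints) is correct and is exactly what the paper does inside the proof of Proposition~\ref{prop:unique anyon states}.
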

We present an independent proof at the end of section \ref{subsec:anyons on punctures} below. We let $(\pi^{\I}, \caH, \Omega)$ be the GNS triple of $\omega^{\I}$. Since $\omega^{\I}$ is pure, the \emph{vacuum representation} $\pi^{\I}$ is irreducible.

\subsection{Classification of irreducible anyon sectors}

The cone with apex at $a \in \R^2$, axis $\hat v \in \R^2$ of unit length, and opening angle $\theta \in (0, 2\pi)$ is the open subset of $\R^2$ given by
\begin{equation*}
 	\Lambda_{a, \hat v, \theta} := \{ x \in \R^2 \, : \, (x - a) \cdot \hat v > \norm{x-a} \cos (\theta/2)   \}. \quad\quad\quad\quad \adjincludegraphics[width=2.5cm,valign=c]{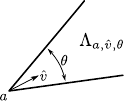}
\end{equation*}
Any subset $\Lambda \subset \R^2$ of this form will be called a cone.

\begin{definition} \label{def:anyonsector}
	A representation $\pi: \caA \rightarrow \mathcal{B}(\caH)$ satisfies the superselection criterion w.r.t. the vacuum representation $\pi^{\I}$ if for any cone $\Lambda$, there is a unitary equivalence
	$$ \pi|_{\Lambda} \simeq_{u.e.} \pi^{\I}|_{\Lambda} $$
	of representations of $\caA_{\Lambda}$. We will call such a representation an \emph{anyon representation}. A unitary equivalence class of anyon representations is called an \emph{anyon sector}. An anyon sector is called irreducible if any and therefore all of its representative representations is irreducible.
\end{definition}

\begin{theorem} \label{thm:classification of anyon sectors}
	The irreducible anyon sectors with respect to the vacuum representation of the Levin-Wen model over $\caC$ are in one-to-one correspondence with equivalence classes of simple objects of the Drinfeld center $Z(\caC)$.
\end{theorem}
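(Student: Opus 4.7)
The plan is to build a family of irreducible anyon representations $\{\pi_e^X\}_{X \in \Irr Z(\caC)}$ indexed by simple objects of the Drinfeld center, verify that each satisfies the superselection criterion, and then show completeness: every irreducible anyon representation is unitarily equivalent to some $\pi_e^X$. The construction of the states $\omega_e^X$ exploits the identification, established in Section \ref{sec:skein subspaces}, of subspaces stabilized by $H_{LW}$ with skein modules of punctured disks, the state spaces of the Turaev-Viro TQFT. Placing a puncture at $e$ labeled by a simple $X \in Z(\caC)$ and using the Drinfeld insertion operators of Section \ref{sec:tube algebras and skein modules} to normalize the corresponding vector, one gets a pure vector state $\omega_e^X$, with purity following along the same lines as the argument for $\omega^{\I}$ referenced in Proposition \ref{prop:unique ffgs}.

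Next I would construct explicit string operators. For each simple $X$ and each half-infinite path of lattice edges terminating at $e$, one builds operators that, on the relevant skein subspaces, implement Drinfeld insertion of $X$ at $e$ by transporting the anyon in from infinity along the path while adjusting fusion channels as it progresses. Since $Z(\caC)$ generically has non-integer quantum dimensions, these cannot be amplimorphisms in the sense of \cite{naaijkens2015kitaev}, in agreement with the no-go result of \cite{chen2022q}; instead they become well-defined endomorphisms only after composing with $\pi^{\I}$. The composition yields the GNS representation of $\omega_e^X$, and deforming the supporting path out of a given cone $\Lambda$ shows $\pi_e^X|_{\Lambda^c} \simeq_{u.e.} \pi^{\I}|_{\Lambda^c}$, hence $\pi_e^X$ is an anyon representation; a standard argument using the purity of $\omega_e^X$ gives irreducibility. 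Injectivity of $X \mapsto [\pi_e^X]$ would be established by exhibiting elements of the $\Tube$-algebra, encircling the puncture, whose action distinguishes the simple summands of $Z(\caC)$ and survives passage to the GNS representation.

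The final and hardest step is completeness. Given any irreducible anyon representation $\pi$, I would fix a large disk $D$ containing the edge $e$ and use the superselection criterion applied to a covering of $\R^2 \setminus D$ by cones to find a unitary $U$ such that $U\pi(\cdot)U^* = \pi^{\I}(\cdot)$ on $\caA_{D^c}$. This forces the GNS vector of $U \pi U^*$ to differ from $\Omega$ only by an excitation supported in (a neighborhood of) $D$; using the explicit description of skein subspaces, this vector lives in a skein module of a punctured disk. Decomposing that module under the $\Tube$-algebra gives a direct sum over $\Irr Z(\caC)$, and the hypothesis that $\pi$ is irreducible selects a single summand $X$, from which $\pi \simeq \pi_e^X$ will follow.

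The main obstacle I expect is this completeness step. The usual amplimorphism technology used for quantum doubles in \cite{bols2025classification} is unavailable, so the match between an abstract anyon representation and a $\Tube$-algebra module has to be built directly from the skein-subspace machinery, with careful attention to the interface between local (skein) data and quasi-local $\rm C^*$-algebraic data, and without recourse to Haag duality (treated separately in \cite{ogata2025haag}). This is where the detailed analysis of skein subspaces and Drinfeld insertions developed in Sections \ref{sec:tube algebras and skein modules}--\ref{sec:anyon states} will do the heavy lifting.
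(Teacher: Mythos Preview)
Your overall architecture matches the paper's: construct $\pi_e^X$ via pure states at punctures, build string operators, verify superselection and disjointness, then prove completeness. Two points deserve correction.

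First, a smaller one. You write that the string operators ``become well-defined endomorphisms only after composing with $\pi^{\I}$''. The paper in fact constructs honest unital $*$-endomorphisms $\rho_{\scrC}^X$ of $\caA$ itself (Lemma \ref{lem:limiting endomorphisms}), as pointwise limits of inner automorphisms $\Ad[(U_n^X)^*]$. The unitary gates $u_L^X$ combine pair creation, annihilation, and hopping into a single self-adjoint unitary (Eq.~\eqref{eq:unitary gate defined}); this is precisely how one evades the obstruction of \cite{chen2022q} while staying inside $\End(\caA)$. The representation $\pi_{\scrC}^X := \pi^{\I} \circ \rho_{\scrC}^X$ is then shown to be irreducible and equivalent to the GNS representation $\pi_e^X$ (Proposition \ref{prop:equivalence of string and GNS reps}), which takes some work.

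Second, and more seriously, your completeness argument has a gap. You propose to cover $\R^2 \setminus D$ by cones and extract a single unitary $U$ with $U \pi(\cdot) U^* = \pi^{\I}(\cdot)$ on all of $\caA_{D^c}$. The superselection criterion only hands you unitaries $U_i$ implementing the equivalence on individual cones $\Lambda_i$; on overlaps the discrepancy $U_1 U_2^*$ commutes with $\pi^{\I}(\caA_{\Lambda_1 \cap \Lambda_2})$, not with $\pi^{\I}(\caA_{\Lambda_1 \cup \Lambda_2})$, so there is no gluing without something like Haag duality, which you correctly note is not assumed. The paper (Proposition \ref{prop:anyon rep contains finitely excited vector state}) does \emph{not} produce such a $U$. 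Instead it takes two overlapping cones, uses Corollary 2.6.11 of \cite{bratteli2012operator} to show the corresponding vector states $\omega_1, \omega_2$ are close at infinity, builds the strong-limit projector $p_{B_N^c} = \lim \pi(P_{B_N^{c} \cap B_n})$ inside $\caB(\caH)$, and checks that $p_{B_N^c} |\Omega_1\rangle \neq 0$. Normalising this vector yields a vector state $\psi$ with $\psi(B_f) = 1$ outside $B_N$. Completeness (Proposition \ref{prop:completeness of anyon sectors}) then proceeds by applying the central projectors $\frt_R(P^X)$ to $\psi$, and showing that the resulting state restricted to $B_{R+1}^c$ lands in the singleton $\caS_{C_{>R+1}}^{\star^X}$ (Proposition \ref{prop:unique anyon states}), hence agrees with $\omega_e^X$ there. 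This is the step where the skein-subspace machinery and the maximally-mixed-boundary-condition Lemmas \ref{lem:restriction yields maximally mixed boundary conditions} and \ref{lem:characterization of some caDs} actually enter.
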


The proof appears in Section \ref{subsec:proof of main theorem}.

\section{Tube algebras and skein modules} \label{sec:tube algebras and skein modules}

\subsection{Unitary fusion category and Drinfeld center} \label{subsec:UFC}

Throughout this paper we consider a fixed unitary fusion category (UFC) $\caC$ with its canonical spherical structure. See \cite{etingof2015tensor, turaev2017monoidal, penneysUFCnotes, penneys2018unitary} for general introductions. We write $\caC(x \rightarrow y)$ for the vector space of morphisms from $x \in \Ob \caC$ to $y \in \Ob \caC$. Each $\caC(x \rightarrow x)$ is equipped with a spherical trace $\tr$ and the quantum dimension of a non-zero object $x$ is $d_x := \tr \id_x > 0$. We choose a representative set of simple objects $\Irr \caC$ which contains the tensor unit $\I$. The total quantum dimension of $\caC$ is $\caD := \sqrt{ \sum_{a \in \Irr \caC} \, d_a^2 }$.

The dual of an object $x \in \Ob \caC$ is denoted $x^*$ and its evaluation and coevaluation morphisms are $\ev_x : x^* \otimes x \rightarrow \I$ and $\coev_x : \I \rightarrow x \otimes x^*$. For each $a \in \Irr \caC$ there is a unique $\bar a \in \Irr \caC$ such that $\bar a$ is isomorphic to $a^*$.

The unitary structure consists of dagger maps $\dag: \caC(x \rightarrow y) \rightarrow \caC(y \rightarrow x)$ written as $ f \mapsto f^{\dag}$. Each morphism space $\caC(x \rightarrow y)$ is a Hilbert space with the trace inner product $(f, g)_{\tr} := \tr \lbrace f^{\dag} \circ g \rbrace$.

We freely use the graphical calculus to represent and manipulate morphisms of $\caC$. See \cite[Chapter~I.2]{turaev2017monoidal} for a good introduction. As usual, we suppress associators, unitors, and pivotal isomorphisms.

The \emph{Drinfeld center} of $\caC$ is the category $Z(\caC)$ whose objects are pairs $(X, \sigma)$ of $X \in \Ob \caC$ and a \emph{half-braiding} $\sigma : X \otimes - \implies - \otimes X$ whose component morphisms and their inverses we represent by
\begin{equation} \label{eq:half-braidings}
    \sigma_x = \adjincludegraphics[valign=c, height=1.0cm]{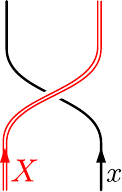}, \quad\quad \sigma_x^{-1} = \adjincludegraphics[valign=c, height=1.0cm]{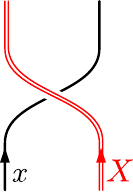}.
\end{equation}
The morphisms $Z(\caC)( (X, \sigma) \rightarrow (X' , \sigma') )$ are precisely those morphisms $f \in \caC(X \rightarrow X')$ for which
\begin{equation}
    \adjincludegraphics[valign=c, height=1.5cm]{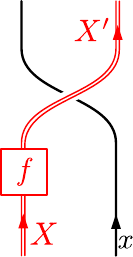} \quad  =  \quad \adjincludegraphics[valign=c, height=1.5cm]{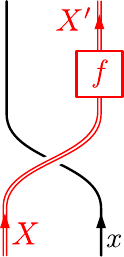}
\end{equation}
for all $x \in \Ob \caC$.

The Drinfeld center of $\caC$ is a unitary modular tensor category whose pivotal and dagger structures coincide with that of $\caC$ \cite{muger2003subfactorsquantumdouble}. (See \cite[Proposition~2.3]{henriques2015categorifiedtracemoduletensor} for the statement about the pivotal structure in the case where $\caC$ is not strict pivotal.) In particular, we can choose a finite set of representative simple objects $\Irr Z(\caC)$ containing the tensor unit $\I$, and for each $X \in \Irr Z(\caC)$ there is a unique $\bar X \in \Irr Z(\caC)$ which is isomorphic to $X^*$.

\subsection{String diagrams and skein modules} \label{subsec:string diagrams and skein modules}

We introduce skein modules on decorated surfaces. See \cite[Appendix A]{koenig2010quantum}, \cite[Section~2]{kirillov2011string}, \cite{walker2021universal}, \cite{walker2006tqft} for similar setups. We work in the category of piecewise linear manifolds throughout. A \emph{decorated 1-manifold} is a compact oriented 1-manifold (possibly with boundary) $\caN$ together with a \emph{decoration}, which consists of a finite collection of \emph{signed marked points} $m_{\caN} \subset \caN$. A \emph{decorated surface} is an oriented surface $\Sigma$ together with a decoration of its boundary $\partial \Sigma$. We will often make reference to the topology of the underlying surface of $\Sigma$ and for example say that $\Sigma$ is homeomorphic to a sphere with $m$ holes removed.

A \emph{string diagram} on $\Sigma$ is an embedded graph $\Gamma$ in $\Sigma$ whose edges are labelled by objects of $\caC$ and whose internal vertices (\ie not the marked boundary points) are labelled by morphisms of $\caC$ in the Hom space determined by the labels of the attaching edges by the usual rules of the graphical calculus. Using the forgetful functor $Z(\caC) \to \caC$, string diagrams may also be labelled by objects and morphisms of $Z(\caC)$.
The graph meets the boundary of $\Sigma$ transversally at the marked boundary points, with edges oriented into the boundary where they meet positive boundary points, and oriented away from the boundary where they meet negative boundary points. The labels of the edges attaching to $\partial \Sigma$ constitute the \emph{boundary condition} $b : m_{\partial \Sigma} \rightarrow \Ob \caC$ of the string diagram. We say the boundary condition is \emph{simple} if all these labels belong to $\Irr \caC$. We write $\Bd(\Sigma)$ for the set of connected boundary components of $\Sigma$, seen as decorated 1-manifolds. Given a boundary condition $b$ and a decorated submanifold $\caN \subset \partial \Sigma$, we write $b_{\caN}$ for the restriction of $b$ to the marked points in $\caN$. Given a string diagram $x$ on $\Sigma$, we also write $x_{\caN}$ for the boundary condition on $\caN$ induced by $x$.

Given a decorated 1-manifold $\caN$ we let $\hat \caN$ be the decorated 1-manifold obtained from $\caN$ by reversing orientation and flipping the signs of all marked points. Given a decorated surface $\Sigma$ we let $\hat \Sigma$ be the decorated surface obtained from $\Sigma$ by reversing the orientation and flipping the signs of all marked boundary points. Then $\partial \hat \Sigma = \widehat{\partial \Sigma}$ as decorated 1-manifolds. Given a string diagram $x$ on $\Sigma$ we obtain a string diagram $\hat x$ on $\hat \Sigma$ by reversing the orientations of all edges and replacing the label of each internal vertex by its dagger.

Let $\scrS(\Sigma;b)$ denote the set of all string diagrams on $\Sigma$ with boundary condition $b$. We define $A(\Sigma; b) := \C[ \scrS(\Sigma; b)] / \sim$, the space of all finite formal $\C$-linear combinations of string diagrams on $\Sigma$ with boundary condition $b$, modded out by the equivalence relation $\sim$ generated by
\begin{itemize}
    \item isotopy of string diagrams in $\Sigma$ keeping $\partial \Sigma$ fixed.
    \item local relations of the graphical calculus applied inside contractible disks in $\Sigma$, to string diagrams whose edges cross the boundary of this disk transversally.
\end{itemize}
We call $A(\Sigma) := \bigoplus_{b \in \caB(\Sigma)} A(\Sigma; b)$ the \emph{skein module} on $\Sigma$. Here, $\caB(\Sigma)$ is the finite set of all simple boundary conditions for string diagrams on $\Sigma$.

Given a string diagram $x$ on a decorated surface $\Sigma$, we write $[x]_{\Sigma} \in A(\Sigma)$ for its equivalence class in the skein module. When the decorated surface is clear from context, we often simply write $[x]$.

\begin{convention} \label{conv:coloured vertices}
    For any $x_1, \cdots, x_n \in \Ob \caC$, a pair of coloured vertices,
    \begin{equation*}
        \adjincludegraphics[valign=c, width = 2.0cm]{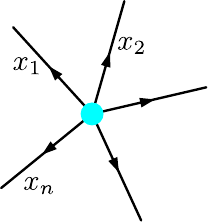} \otimes \adjincludegraphics[valign=c, width = 2.0cm]{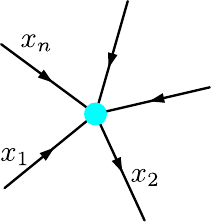} = \sum_{i} \adjincludegraphics[valign=c, width = 2.0cm]{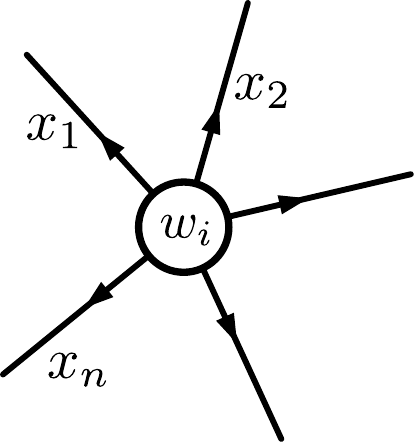} \otimes \adjincludegraphics[valign=c, width = 2.0cm]{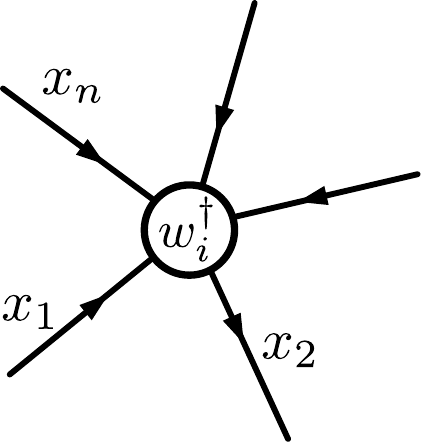},
    \end{equation*}
    will stand for a sum over an orthonormal basis $\{ w_i \}$ of $\caC(\I \rightarrow x_1 \otimes \cdots \otimes x_n)$ and its dagger. This object does not depend on the choice of basis $\{ w_i \}$.
\end{convention}

With this convention, we have the following well known local relation of the graphical calculus (\cite[Lemma~1.1]{kirillov2010turaev}):
\begin{equation} \label{eq:decomposition into simples}
    \sum_{a \in \Irr \caC} \, d_a \, \adjincludegraphics[valign=c, width = 1.5cm]{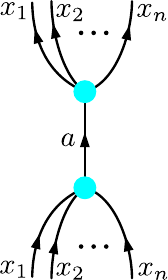} = \adjincludegraphics[valign=c, width = 1.5cm]{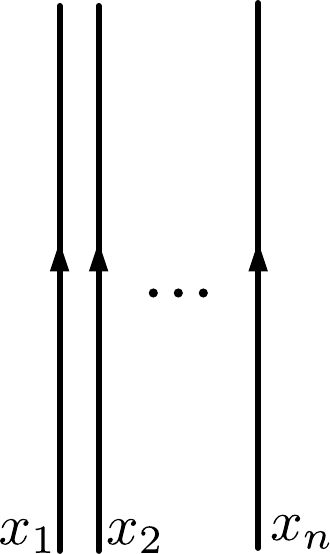} \quad.
\end{equation}

\begin{convention} \label{conv:dotted line}
    We allow string diagrams to have edges coloured by a dotted line, defined by
    \begin{equation} \label{eq:dotted line defined}
        \adjincludegraphics[valign=c, height = 0.8cm]{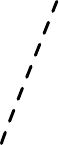} = \frac{1}{\caD^{2}} \sum_{a \in \Irr \caC} d_a \,\,\, \adjincludegraphics[valign=c, height = 0.8cm]{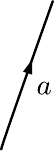}\quad.
    \end{equation}
    This dotted line satisfies the following local relations for any $x \in \Ob \caC$ and any $X \in \Ob Z(\caC)$ (\cite[Corollary~3.5]{kirillov2011string}, \cite[Lemma~2.2]{kirillov2010turaev}):
    \begin{equation} \label{eq:normalization and absorption properties}
        \adjincludegraphics[valign=c, width = 1.5cm]{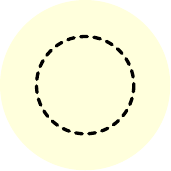} = \adjincludegraphics[valign=c, width = 1.5cm]{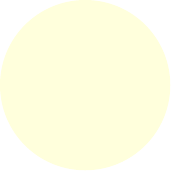}, \quad 
        \adjincludegraphics[valign=c, width = 1.5cm]{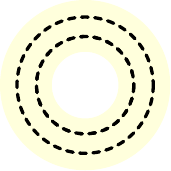} = \adjincludegraphics[valign=c, width = 1.5cm]{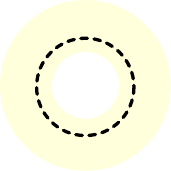}, \quad \adjincludegraphics[valign=c, width = 1.5cm]{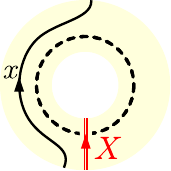} = \adjincludegraphics[valign=c, width = 1.5cm]{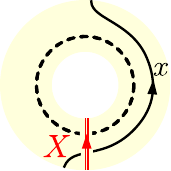}
    \end{equation}
    on embedded disks and annuli in any decorated surface. These are called the normalization, projector, and cloaking properties of the dotted line.
\end{convention}

\subsection{Gluing} \label{subsec:gluing}

Let $\Sigma$ be a decorated surface and let $\caN, \caM \subset \partial \Sigma$ be two disjoint decorated submanifolds of the boundary of $\Sigma$. We say $\caN$ can be glued to $\caM$  if $\caN \simeq \hat \caM$ as decorated 1-manifolds. If this is the case then there is an orientation reversing homeomorphism $\psi : \caN \rightarrow \caM$ which maps marked boundary points to marked boundary points. We let $\Sigma_{\psi}$ be the extended surface obtained from $\Sigma$ by identifying $\caN$ and $\caM$ under the map $\psi$. The resulting decorated surface $\Sigma_{\psi}$ depends only on the isotopy class of $\psi$ (\cite[Lemma~4.1.1]{bakalov2001lectures}).

Let the identification $\psi$ be fixed and write $\Sigma_{\gl}$ for the glued surface. Given a string diagram $x$ on $\Sigma$ we say the boundary conditions $x_{\caM}$ and $x_{\caN}$ match if all corresponding pairs of marked boundary points $(m, \psi(m)) \in m_{\caN} \times m_{\caM}$ are labelled by the same object. Suppose this is the case for $x$. Let us write $x_{\gl}$ for the string diagram obtained by interpreting $x$ as a string diagram on $\Sigma_{\gl}$:
\begin{equation*}
    \adjincludegraphics[valign=c, height=1.0cm]{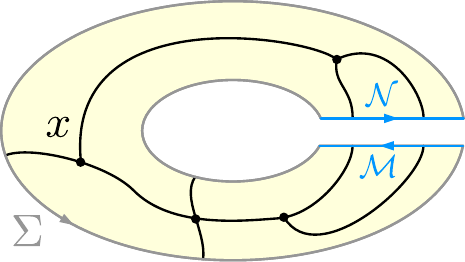} \, \xmapsto{\gl} \, \adjincludegraphics[valign=c, height=1.0cm]{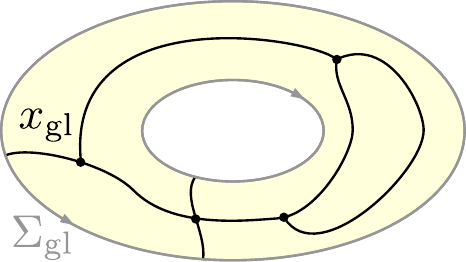} \quad.
\end{equation*}
Setting $[x_{\gl}] = 0$ if the boundary conditions do not match, the gluing of string diagrams descends to a well-defined gluing map $\gl : A(\Sigma) \rightarrow A(\Sigma_{\gl})$ by setting $ \gl([x]) := [x_{\gl}] $ and extending to $A(\Sigma)$ by linearity.

\subsection{Cylinder algebras and their actions on skein modules} \label{subsec:Tube algebras and their actions}

Let $\caN$ be a decorated 1-manifold, then $\caN$ is a disjoint union of decorated circles and decorated intervals. By taking the product with the closed unit interval $I$ we get a decorated surface $\caN \times I$ which is a union of decorated cylinders and disks. We let $\partial_b (\caN \times I) = (\hat \caN \times \{0\})$ be the \emph{bottom} and $\partial_t (\caN \times I) = ( \caN \times \{1\} )$ the \emph{top} of the decorated surface $\caN \times I$.

The skein module $A(\caN) := A(\caN \times I)$ has the structure of a $\rm C^*$-algebra with multiplication of two elements $[x]$ and $[y]$ defined by gluing the bottom of $[x]$ to the top of $[y]$, and with *-operation given by $[x]^* = [f(\hat x)]$ where $f : \caN \times I \rightarrow \caN \times I : (\theta, r) \mapsto (\theta, 1-r)$. That is, the string diagram is flipped upside down, orientations of all strands are reversed, and all morphisms are replaced by their daggers. We call $A(\caN)$ the \emph{cylinder algebra} on $\caN$. If $\caS$ is a decorated circle then $\Tube_{\caS} = A(\caS)$ is called the \emph{Tube algebra} on $\caS$ \cite{izumi2000, izumi2001examples, muger2003subfactorsquantumdouble}.

If $\caI$ is a decorated \emph{interval} then its marked points $m_{\caI} = (m_1, \cdots, m_n)$ have a linear order following the opposite orientation of $\caI$. For any labelling $\underline a : m_{\caI} \rightarrow \Irr \caC$ we write $\otimes \underline a = a(m_1)^{\sigma_1} \otimes \cdots \otimes a(m_n)^{\sigma_n}$ where $\sigma_i \in \{ +, - \}$ is the sign of the marked point $m_i$, and $a^+ = a$ while $a^- = a^*$ for any $a \in \Ob \caC$. Let $\chi = \bigoplus_{a \in \Irr \caC}a$. We further define the object
\begin{equation}
    \chi^{\otimes\caI} := \bigotimes_{i = 1}^{n} \chi^{\sigma_i} \simeq \bigoplus_{\underline a : m_{\caI} \rightarrow \Irr \caC} \otimes \underline a.
\end{equation}
There is an obvious identification $A(\caI) \simeq \caC( \chi^{\otimes\caI} \rightarrow \chi^{\otimes\caI} )$.

If $\caN' \subset \caN$, then $\caN' \times I \subset \caN \times I$, and we define an embedding $\iota : A(\caN') \hookrightarrow A(\caN)$ by
\begin{equation} \label{eq:cyclinder inclusion}
    \iota \left( \adjincludegraphics[valign=c, height = 1.0cm]{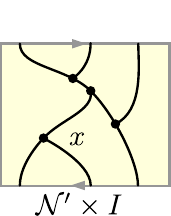} \right) = \adjincludegraphics[valign=c, height = 1.0cm]{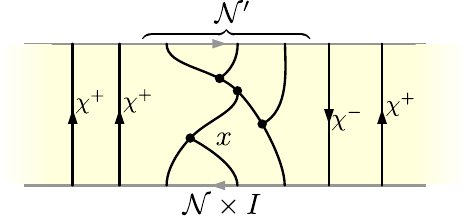}\quad.
\end{equation}

Let $\Sigma$ be a decorated surface with compact boundary and let $\caN \subset \partial \Sigma$ be a decorated submanifold of the boundary of $\Sigma$. Then there is a left action $\triangleright_{\caN}$ of $A(\caN)$ on $A(\Sigma)$ given by gluing the bottom of $[x] \in A(\caN)$ onto $\caN \subset \partial \Sigma$. Similarly, there is a left action $\triangleright^{\op}_{\caN}$ of $A(\hat \caN)^{\op}$ on $A(\Sigma)$ given by gluing the top of $[x] \in A(\hat \caN)$ onto $\caN \subset \partial \Sigma$. There is an isomorphism of $\rm C^*$-algebras $A(\hat \caN) \simeq A(\caN)^{\op}$ induced by flipping string diagrams upside down. We denote the image of $a \in A(\caN)$ under this isomorphism by $a^{\op}$. Then we have $a \, \triangleright^{\op}_{\caN} = a^{\op} \, \triangleright_{\hat \caN}$.

If $\Sigma$ has compact boundary, $\partial \Sigma$ is the disjoint union of connected boundary components $\caS \in \Bd(\Sigma)$, each of which is a decorated circle. We have
$$ A(\partial \Sigma) = \bigotimes_{\caS \in \Bd(\Sigma)} \, \Tube_{\caS}. $$

\subsection{Matrix units for \texorpdfstring{$\Tube_{\caS}$}{TubeS}} \label{ssubsec:matrix units for Tube_S}

Let $\caN$ be a decorated 1-manifold and pick a \emph{fiducial point} on $\caN$ which is distinct from all the marked points. A decorated 1-manifold equipped with a fiducial point is called an \emph{extended 1-manifold}. A decorated surface of which each connected boundary component is equipped with a fiducial point is call an \emph{extended surface}. Fiducial points of extended 1-manifolds will always be depicted by an X-mark in figures.

Let $\caS$ be an extended circle and assume that $\caS$ has $n$ marked points $m_{\caS} \subset \caS$. The fiducial point induces an enumeration of the marked points $m_{\caS} = (m_1, \cdots, m_n)$ starting at the fiducial point and following the \emph{opposite} orientation of $\caS$. Let $\sigma_i \in \{+, -\}$ denote the sign of the marked point $m_i$ for $i = 1, \cdots, n$.

For a boundary condition $\underline{a} : m_{\caS} \rightarrow \Irr \caC$ we write $\otimes \underline a = a(m_1)^{\sigma_1} \otimes \cdots \otimes a(m_n)^{\sigma_n}$. We also write $d_{\underline a} := d_{\otimes \underline a}$ for the quantum dimension of the boundary condition $\underline{a}$, and set $\chi^{\otimes\caS} = \bigoplus_{\underline{a} : m_{\caS} \rightarrow \Irr \caC} \otimes \underline{a}$.

For every $X \in \Irr Z(\caC)$ and every boundary condition $\underline{a}$ we fix an orthonormal basis $\{ w^{X \, \underline a}_i \}_i$ of $\caC(X \rightarrow \otimes \underline a)$ with respect to the trace inner product. We will represent these morphisms graphically as follows:
$$ w^{X \underline{a}}_{i} = \adjincludegraphics[valign=c, height = 1.2cm]{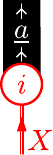}, \quad\quad (w^{X \underline{a}}_{i})^{\dag} =  \adjincludegraphics[valign=c, height = 1.2cm]{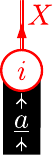}, \quad\quad \text{where} \quad \quad  \adjincludegraphics[valign=c, height = 1.0cm]{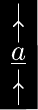} = \adjincludegraphics[valign=c, height = 1.0cm]{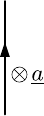} = \adjincludegraphics[valign=c, height = 1.0cm]{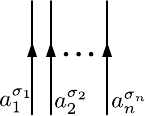}\quad.
$$
These morphisms satisfy the following useful identity:
\begin{lemma} \label{lem:useful identity}
    We have
    \begin{equation} \label{eq:useful identity}
        \adjincludegraphics[valign=c, height = 1.5cm]{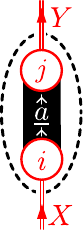} = \delta_{X, Y} \, \delta_{i, j} \frac{\id_X}{d_X}.
    \end{equation}
\end{lemma}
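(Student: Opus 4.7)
The plan is to identify the left-hand side of \eqref{eq:useful identity} as a morphism $M : X \to Y$ in $\mathcal{C}$ whose specific closure forces it to live in $Z(\mathcal{C})\bigl((X,\sigma) \to (Y,\tau)\bigr)$, and then apply Schur's lemma in $Z(\mathcal{C})$ together with orthonormality of the chosen bases $\{w^{X\underline{a}}_i\}$ to pin down the resulting scalar.

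First I would check that the displayed closure is a $Z(\mathcal{C})$-intertwiner. This is a naturality computation: the closure of the $\otimes\underline{a}$-strands in the diagram is built using the half-braidings \eqref{eq:half-braidings} of $X$ and $Y$, so sliding an arbitrary strand $z$ across $M$ from one side to the other produces the same morphism by the half-braiding axiom. Hence $M \in Z(\mathcal{C})\bigl((X,\sigma) \to (Y,\tau)\bigr)$. Since $X$ and $Y$ are chosen from $\Irr Z(\mathcal{C})$, Schur's lemma in the semisimple category $Z(\mathcal{C})$ forces $M = 0$ unless $X = Y$, in which case $M = \lambda\,\id_X$ for some scalar $\lambda \in \mathbb{C}$.

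To compute $\lambda$ in the case $X = Y$, I would take the spherical trace of both sides. The right-hand side gives $\lambda\, d_X$. On the left-hand side, tracing closes the $X$-strand into a loop on the outside of the diagram, which after a small isotopy (using the cloaking property \eqref{eq:normalization and absorption properties} to slide the loop past the half-braided closure) converts the whole picture into the plain spherical trace of $w^{X\underline{a}}_i \circ (w^{X\underline{a}}_j)^{\dag}$. By cyclicity this equals
\[
    \tr\bigl((w^{X\underline{a}}_j)^{\dag} \circ w^{X\underline{a}}_i\bigr) \;=\; (w^{X\underline{a}}_j,\, w^{X\underline{a}}_i)_{\tr} \;=\; \delta_{ij},
\]
by orthonormality of the fixed basis of $\mathcal{C}(X \to \otimes\underline{a})$. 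Equating the two expressions yields $\lambda = \delta_{ij}/d_X$, which together with the $X=Y$ Kronecker factor from Schur gives \eqref{eq:useful identity}.

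The main technical obstacle is the second step: carrying out the isotopy and unwinding in the tracing argument without introducing spurious dimension factors. One must track orientations and the uses of $\sigma,\tau$ carefully so that closing the $X$-strand genuinely reduces the half-braided closure to a plain spherical trace of $w \circ w^{\dag}$. Once this bookkeeping is in place, the identity follows immediately from Schur's lemma and the orthonormality conventions fixed at the start of Section \ref{ssubsec:matrix units for Tube_S}; no further input beyond the semisimplicity and spherical structure of $Z(\mathcal{C})$ is needed.
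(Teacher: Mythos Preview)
Your approach is essentially the paper's: show the left-hand side lies in $Z(\caC)(X \to Y)$, apply Schur, then trace to extract the scalar. One small correction on the mechanics: the diagram carries a dotted loop rather than explicit half-braidings of $X$ and $Y$, and the paper invokes the cloaking property \eqref{eq:normalization and absorption properties} of that dotted line (not a naturality argument via the half-braiding axiom) to conclude the morphism is a $Z(\caC)$-intertwiner; in the trace step the paper then uses sphericity of $\caC$ to unwrap the dotted line, after which one is left precisely with $\tr\bigl((w^{X\underline a}_j)^\dag \circ w^{X\underline a}_i\bigr) = \delta_{ij}$ as you describe.
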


\begin{proof}
    The cloaking property of the dotted line makes the left hand side of the claimed equality an element of $Z(\caC)( X \rightarrow Y )$. Since $X$ and $Y$ are irreducible it follows that the morphism represented by the left hand side equals $\delta_{X, Y} \, \lambda \,  \id_X$ for some $\lambda \in \C$. Assuming $X = Y$ and taking the trace yields
    $$ \lambda d_X = \adjincludegraphics[valign=c, height = 1.5cm]{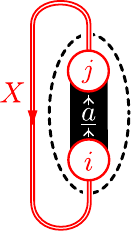} = \adjincludegraphics[valign=c, height = 1.5cm]{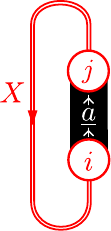} = \delta_{i, j}.  $$
    Here we used sphericity of $\caC$ in the second step to unwrap the dotted line.
\end{proof}

\begin{proposition} \label{prop:matrix units for Tube_n}
    The $\Tube_{\caS}$-elements
    \begin{equation*}
        E_{ \underline{b}, j;\underline{a}, i}^X := d_X \, \adjincludegraphics[valign=c, width = 2.0cm]{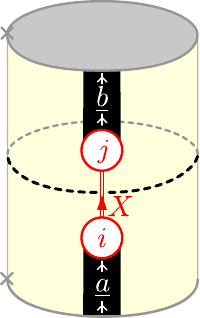}
    \end{equation*}
    form a complete set of matrix units for $\Tube_{\caS}$. We have
    \begin{equation} \label{eq:Tube_n matrix unit properties}
        E_{\underline{c}, k;\underline{b}', j'}^Y E_{\underline{b}, j;\underline{a}, i}^X = \delta_{X Y} \delta_{\underline{b} \, \underline{b}'} \delta_{j j'} \, E_{\underline{c}, k; \underline{a}, i}^X, \quad \text{and} \quad \sum_{X, \underline{a}, i} E^X_{\underline{a}, i; \underline{a}, i} = \id_{\Tube_{\caS}}.
    \end{equation}
    In particular, the minimal central projections of $\Tube_{\caS}$ are given by
    $$ P^X = \sum_{\underline{a}, i} E_{\underline{a}, i; \underline{a}, i}^X $$
    for each $X \in \Irr Z(\caC)$.
\end{proposition}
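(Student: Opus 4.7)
The plan is to establish three properties of the family $\{E^X_{\underline{b}, j;\underline{a}, i}\}$: (i) the claimed matrix unit multiplication rule, (ii) the resolution of identity $\sum E^X_{\underline{a}, i; \underline{a}, i} = \id_{\Tube_{\caS}}$, and (iii) linear spanning of $\Tube_{\caS}$. The block decomposition of $\Tube_{\caS}$ and the formula for the minimal central projections $P^X$ then follow immediately from the matrix unit structure.

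For (i), I would form the product $E^Y_{\underline{c}, k; \underline{b}', j'} \cdot E^X_{\underline{b}, j; \underline{a}, i}$ by stacking diagrams along the middle decorated circle of the annulus. The gluing forces the intermediate boundary conditions to agree, producing the factor $\delta_{\underline{b}\,\underline{b}'}$. Assuming they agree, the middle slab of the resulting diagram is precisely the configuration treated by Lemma \ref{lem:useful identity}, in which $(w^{X \underline{b}}_j)^\dag$ meets $w^{Y \underline{b}}_{j'}$ through a linking dotted line; the lemma collapses this slab to $\delta_{XY}\delta_{jj'}\,\id_X / d_X$. The two dotted loops originally present in the two factors merge into a single dotted loop via the projector/absorption relation of \eqref{eq:normalization and absorption properties}, and the normalization prefactors $d_X d_Y \cdot (1/d_X)$ combine to give $d_X$, exactly matching the prefactor appearing in $E^X_{\underline{c}, k; \underline{a}, i}$.

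For (ii), the identity of $\Tube_{\caS}$ is the annular diagram with only straight radial strands. For each boundary condition $\underline{a}$, since $\{w^{X \underline{a}}_i\}$ is an orthonormal basis in the trace inner product we have $(w^{X \underline{a}}_i)^\dag \circ w^{X \underline{a}}_j = \delta_{ij}\,\id_X / d_X$, so that $\sum_{X, i} d_X \, w^{X \underline{a}}_i (w^{X \underline{a}}_i)^\dag$ equals the sum of the $X$-isotypic projectors in $\otimes \underline{a}$, which is $\id_{\otimes \underline{a}}$. The dotted loops encircling the $X$-strands in the definition of $E^X_{\underline{a}, i; \underline{a}, i}$ are freely insertable by the normalization and projector properties of \eqref{eq:normalization and absorption properties}, and it is precisely the cloaking property that upgrades the sum from $\Irr \caC$ to $\Irr Z(\caC)$ by implementing the half-braiding of $X$. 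For (iii), any string diagram on the annulus can be reduced to a standard form by isotoping all internal morphisms onto a single radial slice and expanding the resulting composite $\caC$-morphism in the bases $\{w^{X \underline{a}}_i\}$ (with a dotted-line insertion justified by \eqref{eq:decomposition into simples} and the cloaking property to ensure the labels run over $\Irr Z(\caC)$); this exhibits every element of $\Tube_{\caS}$ as a finite linear combination of the $E^X_{\underline{b}, j; \underline{a}, i}$.

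The main obstacle I anticipate is the graphical bookkeeping in step (i): one must carefully verify that stacking the two $E$'s genuinely reproduces the configuration of Lemma \ref{lem:useful identity}, with the correct relative position of the linking dotted line and the $X$-, $Y$-strands, and that the residual dotted loop after merging lands in the standard position required by the definition of $E^X_{\underline{c}, k; \underline{a}, i}$. A secondary but related subtlety lies in step (iii), where one must track why the dotted loop upgrades the summation index from $\Irr \caC$ to $\Irr Z(\caC)$; once this diagrammatic reduction is accepted, the remaining algebraic verifications are essentially automatic.
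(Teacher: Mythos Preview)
Your argument for (i) is correct and matches the paper: stack the two diagrams, use cloaking to bring one dotted loop into position, and apply Lemma~\ref{lem:useful identity}; linear independence of the $E$'s follows.

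For completeness, however, the paper takes a shorter route than your (ii)--(iii). Rather than proving the resolution of identity and spanning directly, the paper observes that the matrix unit relations already imply linear independence, and then invokes the isomorphism
\[
\Tube_{\caS} \;\simeq\; \bigoplus_{a \in \Irr\caC} \caC(a \otimes \chi^{\otimes\caS} \to \chi^{\otimes\caS} \otimes a) \;\simeq\; \bigoplus_{X \in \Irr Z(\caC)} \caC(\chi^{\otimes\caS} \to X^*) \otimes \caC(X \to \chi^{\otimes\caS})
\]
(the second isomorphism being Lemma~7.4 of Kirillov--Balsam). This gives a dimension count: the number of $E$'s equals $\dim\Tube_{\caS}$, so they span, and the resolution of identity follows automatically.

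Your direct approach to (ii) has a genuine gap. The identity $\sum_{X,i} d_X\, w_i^{X\underline a}(w_i^{X\underline a})^\dag = \id_{\otimes\underline a}$ is the standard decomposition into simples \emph{only when $X$ ranges over $\Irr\caC$}; here $X$ ranges over $\Irr Z(\caC)$, and the underlying $\caC$-objects $F(X)$ are generally not simple and not mutually disjoint. Correspondingly, the dotted loop in $E^X_{\underline a,i;\underline a,i}$ is not ``freely insertable'': it is essential data that couples the $\caC$-morphisms $w_i$ to the half-braiding of $X$, and you cannot strip it off to reduce to a pure $\caC$-identity. The ``upgrading from $\Irr\caC$ to $\Irr Z(\caC)$'' that you flag as a subtlety is in fact the entire content of completeness, and is exactly what the cited Kirillov--Balsam lemma supplies. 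Your spanning argument (iii) has the same issue: after isotoping to a single slice you get a morphism in $\caC(c\otimes\chi^{\otimes\caS}\to\chi^{\otimes\caS}\otimes c)$, and expanding this in the $Z(\caC)$-indexed bases $\{w_i^{X\underline a}\}$ again requires precisely the nontrivial identification above.
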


We will say that a projector $p \in \Tube_{\caS}$ is of \emph{type} $X \in \Irr Z(
\caC)$ if it is dominated by the central projector $P^X$.

\begin{proofof}[Proposition \ref{prop:matrix units for Tube_n}]
    Using the cloaking property of one dotted loop to pull the other dotted loop to the front of the cylinder and using the relation Eq. \eqref{eq:useful identity} we find
    \begin{align*}
        &E_{\underline{c}, k; \underline{b}', j'}^Y E_{\underline{b}, j; \underline{a}, i}^X = \delta_{\underline{b} \, \underline{b}'} \, d_X d_Y \, \adjincludegraphics[valign=c, width = 1.7cm]{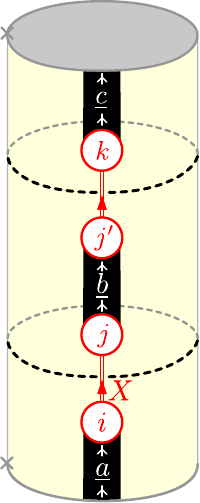} = \delta_{\underline{b} \, \underline{b}'} \, d_X d_Y \, \adjincludegraphics[valign=c, width = 1.7cm]{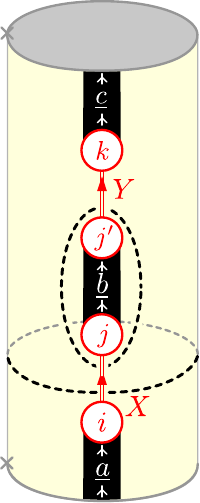} \\
        &= \delta_{X, Y} \delta_{\underline{b} \, \underline{b}'} \delta_{j \, j'} \, d_X \, \adjincludegraphics[valign=c, width = 1.7cm]{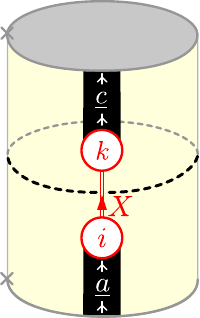} = \delta_{X Y} \delta_{\underline{b} \, \underline{b}'} \delta_{j \, j'} \, E_{\underline{c}, k; \underline{a}, i}^X \,\, ,
    \end{align*}
    so the $E_{\underline{b}, j ;\underline{a}, i}^X$ do indeed form a system of matrix units. Linear independence of the $E_{\underline{b}, j;\underline{a}, i}^X$ also follows from this.

    That this set of matrix units is complete follows from the identification of $\Tube_{\caS}$ with $$\bigoplus_{a \in \Irr \caC} \caC( a \otimes \chi^{\otimes\caS} \rightarrow \chi^{\otimes \caS} \otimes a  )
    \simeq \bigoplus_{X\in \Irr Z(\caC) } \caC(\chi^{\otimes \caS} \to X^*) \otimes \caC(X \to \chi^{\otimes S}),
    $$ 
    where this isomorphism is provided in Lemma 7.4 of \cite{kirillov2010turaev}.
\end{proofof}

\subsection{The TQFT inner product} \label{subsec:TQFT inner product}

The skein modules $A(\Sigma)$ are the state spaces of the Turaev-Viro-Barrett-Westbury TQFT \cite{turaev1992state, barrett1996invariants}. Since $\caC$ is unitary one can use the TQFT partition function $Z$ to equip these skein modules with an inner product as follows  (see \cite{walker2021universal, walker2006tqft}).

Recall that the partition function $Z$ assigns a complex number $Z(M; x)$ to any 3-manifold $M$ with a string diagram $x \in \scrS( \partial M )$ on its boundary. In fact, $Z(M; x)$ depends only on the equivalence class of $x$ in $A(\partial M)$ so we have a linear map $Z_M : A(\partial M) \rightarrow \C$ for each 3-manifold $M$. Since any 3-manifold admits a handle decomposition, the partition function is completely determined by the following normalisation, gluing, and product properties:
\begin{itemize}
	\item For the 3-ball $B$ and a string diagram $x$ on the 2-sphere $\partial B$ we have 
            $$Z_B(x) = \ev(x) \, ,$$
		where $\ev(x)$ interprets $x$ as a diagram in the plane and uses the graphical calculus of $\caC$ to evaluate that diagram to a number (\cite[Theorem~2.4]{kirillov2010turaev}, \cite{barrett1996invariants}).

	\item If $M_{\gl}$ is obtained from a 3-manifold $M$ with boundary $\partial M = N \cup \overline N \cup L$ by gluing $N$ to $\overline N$ then \cite[~Section 4]{walker2021universal}
		\begin{equation} \label{eq:gluing formula}
			Z_{M_{\gl}}(x_{\gl}) = \sum_{e} \, \frac{Z_{M}(x \cup e \cup \hat e)}{(e, e)} \, ,
		\end{equation}
		where the sum ranges over an orthogonal basis of $A(N ; b)$, with $b$ the boundary condition induced by $x$.

    \item If $M = M_1 \sqcup M_2$ is a disjoint union and $x = x_1 \sqcup x_2$ is a string diagram on $\partial M$ with $x_1$ on $\partial M_1$ and $x_2$ on $\partial M_2$ then
    $$ Z_M(x) = Z_{M_1}(x_1) Z_{M_2}(x_2). $$
\end{itemize}

We can use the partition function to define an inner product on the skein module $A(\Sigma)$ as follows. Whenever $\Sigma$ is a surface, $\Sigma \times I$ is defined to be the pinched product of $\Sigma$ and $I = [0, 1]$ (\ie the Cartesian product is pinched by retracting $\partial \Sigma \times I \simeq \partial \Sigma$ so that $\partial( \Sigma \times I ) = \Sigma \cup \hat \Sigma$). Given $[x] \in A(\Sigma; b)$ and $[y] \in A(\Sigma; b')$ we put
$$ ( [x], [y] )_{A(\Sigma)} := \delta_{b \, b'} \, Z_{\Sigma \times I}( \hat x \cup y ),  $$
where $\hat x \cup y$ is the string diagram on $\partial( \Sigma \times I)$ consisting of string diagram $\hat x$ on $\hat \Sigma$ and string diagram $y$ on $\Sigma$. 
This yields a well-defined inner product on $A(\Sigma)$ which we will call the TQFT inner product.

It is straightforward to verify that the Tube actions on the various boundary components of $\Sigma$ are *-actions w.r.t. the TQFT inner product. This gives $A(\Sigma)$ the structure of a unitary $A(\partial \Sigma)$-module.

\subsection{Characterisation of skein modules on punctured spheres} \label{subsec:skein modules on punctured spheres}

We characterise the skein module $A(\Sigma)$ where $\Sigma$ is an extended surface homeomorphic to the sphere with $m$ holes cut out.
The characterization will depend on a choice of \emph{anchor} for the extended surface $\Sigma$. (see \cite{henriques2016planar} for a similar notion). An anchor $\anchor$ for $\Sigma$ is an embedded graph in $\Sigma$ consisting of one vertex called the anchor point, and for each boundary component $\caS \in \Bd(\Sigma)$ an edge running from the anchor point to that boundary component, attaching transversally at an attachment point on $\caS$ which is distinct from the fiducial point on that boundary component. The edges of the graph are moreover linearly ordered going clockwise around the anchor point. This induces an enumeration $\{ \caS^{\anchorino}_{\kappa} \}_{\kappa = 1}^m$ of $\Bd(\Sigma)$ which we call the enumeration induced by $\anchor$. Two anchors are equivalent if one can be deformed into the other by isotopy in $\Sigma$ which keeps the fiducial points fixed on all boundary components.

The characterization of $A(\Sigma)$ is given in terms of morphism spaces of $\caC$ and $Z(\caC)$, all of which we regard as Hilbert spaces equipped with the trace inner product.
For any extended circle $\caS$ and for any $X \in \Ob Z(\caC)$ the space $\caC( X \rightarrow \chi^{\otimes\caS} )$ is a unitary left $\Tube_{\caS}$-module with action given by
\begin{equation} \label{eq:Tube_n_module}
    \adjincludegraphics[valign=c, width = 2.0cm]{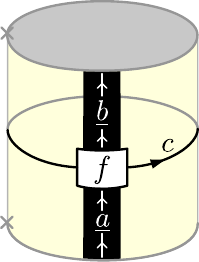} \,\,\, \triangleright \,\,\,  \adjincludegraphics[valign=c, width = 0.7cm]{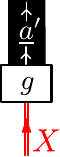} = \delta_{\underline{a}, \underline{a}'} \times \,\adjincludegraphics[valign=c, width = 2.5cm]{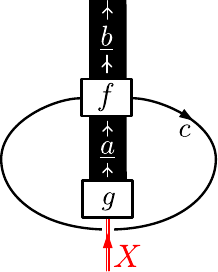} \quad.
\end{equation}
If $X \in \Irr Z(\caC)$, then the module $\caC( X \rightarrow \chi^{\otimes\caS} )$ is irreducible.

We will show that $A(\Sigma)$ is isomorphic as a unitary $A(\partial \Sigma)$-module to
\begin{align} \label{eq:definition of algebraic module}
    \begin{split}
    \caC_{\anchorino}(\Sigma) :=  \bigoplus_{X_1, \ldots, X_m \in \Irr ( Z(\caC) )}  Z(\caC)( &\I \rightarrow X_1 \otimes \cdots \otimes X_m) \\ & \otimes \caC( X_1 \rightarrow \chi^{\otimes\caS^{\anchorino}_1} ) \otimes \cdots \otimes \caC(X_m \rightarrow \chi^{\otimes\caS^{\anchorino}_m}).
    \end{split}
\end{align}
For each $\caS^{\anchorino}_{\kappa}$ there is a unitary $\Tube_{\caS^{\anchorino}_{\kappa}}$ action on $\caC_{\anchorino}(\Sigma)$ given by Eq. \eqref{eq:Tube_n_module} which we denote by $\triangleright_{\kappa}$. Recalling that $A(\partial\Sigma) = \bigotimes_{\caS \in \Bd(\Sigma)} \, \Tube_{\caS}$, we see that these actions give $\caC_{\anchor}(\Sigma)$ the structure of a unitary $A(\partial \Sigma)$-module.

Given an anchor $\anchor$ for $\Sigma$ we define a linear map $\Phi_{\Sigma}^{\anchorino} : \caC_{\anchorino}(\Sigma) \rightarrow A(\Sigma)$ by
\begin{equation} \label{eq:Phi_anchor_defined}
    \Phi_{\Sigma}^{\anchorino} : \al \otimes w_1 \otimes \cdots \otimes w_m \mapsto \left( \prod_{\kappa=1}^m d_{X_{\kappa}} \right)^{1/2} \, \caD^{m-1} \,\,\, \adjincludegraphics[valign=c, width = 6.0cm]{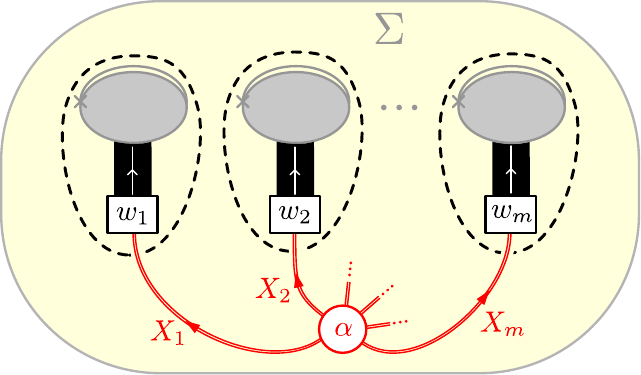},
\end{equation}
where in the string diagram on the right hand side, the morphism $\al$ sits at the anchor point of $\anchor$, and the strand labelled $X_{\kappa}$ lies along the $\kappa^{\mathrm{th}}$ edge of $\anchor$ until it resolves into the morphisms $w_{\kappa}$ near the boundary component $\caS^{\anchorino}_{\kappa}$. If two anchors $\anchor \sim \anchor'$ are equivalent, then $\Phi^{\anchorino}_{\Sigma} = \Phi_{\Sigma}^{\anchorino'}$.

\begin{proposition} \label{prop:characterisation of skein modules}
    Let $\Sigma$ be an extended surface homeomorphic to the sphere with $m$ holes cut out, and let $\anchor$ be an anchor for $\Sigma$. The map
    \begin{equation*}
    		\Phi_{\Sigma}^{\anchorino} :  \caC_{\anchorino}(\Sigma) \rightarrow A(\Sigma)
    \end{equation*}
    is a unitary isomorphism of  $A(\partial \Sigma)$-modules.
\end{proposition}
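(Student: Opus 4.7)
The plan is to verify three properties of $\Phi_{\Sigma}^{\anchorino}$ in sequence: (i) $A(\partial\Sigma)$-equivariance, (ii) isometry with respect to the TQFT inner product, and (iii) surjectivity. Since the map depends only on the isotopy class of the anchor, I would first fix a convenient representative in which the anchor edges are disjoint arcs from a single interior point to prescribed attachment points on the boundary circles, with the anchor point surrounded by a small disk.

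For (i), take a matrix unit $E^Y_{\underline b, j; \underline a, i}$ of $\Tube_{\caS_\kappa^{\anchorino}}$ and glue it onto the $\kappa$-th boundary component of $\Phi_{\Sigma}^{\anchorino}(\alpha \otimes w_1 \otimes \cdots \otimes w_m)$. The tube carries a dotted loop inside it (by Proposition~\ref{prop:matrix units for Tube_n}); use the cloaking property from \eqref{eq:normalization and absorption properties} to slide this dotted loop over the $X_\kappa$-strand that runs along the anchor edge. The inner face of $\caS_\kappa^{\anchorino}$ then presents exactly the configuration of Lemma~\ref{lem:useful identity}: two $w$-morphisms facing each other through a dotted loop. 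This collapses to $\delta_{X_\kappa Y}\,\delta_{\underline a_\kappa\, \underline a}\, d_{X_\kappa}^{-1}\,\id_{X_\kappa}$. The chosen prefactors $d_{X_\kappa}^{1/2}$ in \eqref{eq:Phi_anchor_defined} absorb this factor so the outcome is precisely $\Phi_{\Sigma}^{\anchorino}$ applied to the image under $\triangleright_\kappa$ of Eq.~\eqref{eq:Tube_n_module}.

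For (ii), unfold the inner product as $(\Phi(\alpha\otimes\vec w),\Phi(\alpha'\otimes\vec w'))_{A(\Sigma)} = Z_{\Sigma\times I}(\hat x \cup y)$ where $x,y$ are the two string diagrams. Because $\Sigma$ is homeomorphic to an $m$-holed sphere, $\Sigma\times I$ is a $3$-ball with $m$ open balls removed. Apply the gluing formula \eqref{eq:gluing formula} along $m$ disks each separating the two anchor copies near the $\kappa$-th boundary collar: the orthogonal basis for the separating disk can be chosen as the $w^{X\underline a}_i/\sqrt{d_X/d_{\underline a}}$'s, and Lemma~\ref{lem:useful identity} forces $\delta_{X_\kappa X'_\kappa}\delta_{i,i'}\delta_{\underline a,\underline a'}$ in each collar. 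After these cuts one is left with evaluating $Z$ on a single $3$-ball whose boundary carries $\hat\alpha \cup \alpha'$ joined by parallel $X_\kappa$-strands surrounded by dotted loops; by the normalisation $Z_B=\ev$ and the absorption/normalisation identities of \eqref{eq:normalization and absorption properties}, this reduces to $(\alpha,\alpha')_{\tr}$. A careful bookkeeping shows the factors of $\caD^{m-1}$ and $d_{X_\kappa}^{1/2}$ in \eqref{eq:Phi_anchor_defined} are tuned to cancel all the extra normalisations, giving the tensor product inner product on $\caC_{\anchorino}(\Sigma)$.

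For (iii), given any class $[\Gamma]\in A(\Sigma)$, isotope $\Gamma$ to lie in a neighbourhood of the anchor together with collars of the boundary components. Insert a dotted loop parallel to each $\caS_\kappa^{\anchorino}$ using the first identity in \eqref{eq:normalization and absorption properties}, and use the cloaking property to drag the strands entering the collar through this dotted loop; expanding the loop via \eqref{eq:dotted line defined} and decomposing the resulting morphism between $\chi^{\otimes\caS_\kappa^{\anchorino}}$ and the pulled-through bundle of strands into the basis $\{w^{X\underline a}_i\}$ turns the collar region into the standard form of $\Phi$. Similarly encircle the anchor point by a dotted loop and use the cloaking property to promote the tuple of strands emanating from the anchor into $Z(\caC)$-strands; the morphism at the anchor point then lives in $Z(\caC)(\I\to\bigotimes_\kappa X_\kappa)$. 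This writes $[\Gamma]$ as a linear combination of elements in the image of $\Phi_{\Sigma}^{\anchorino}$.

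The main obstacle is step (iii): the sequence of cloaking manoeuvres and dotted-line insertions has to be carried out consistently around every boundary component and at the anchor, and one must check that the freedom in the choice of intermediate bases does not spoil well-definedness. This is essentially Lemma~7.4 of \cite{kirillov2010turaev} generalised from the annulus to an $m$-holed sphere, and I expect the detailed verification to be relegated to Appendix~\ref{app:proof of characterization of skein modules}.
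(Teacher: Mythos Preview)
Your proposal is essentially correct and follows the same strategy as the paper: equivariance via cloaking near each boundary, orthonormality via the gluing formula and Lemma~\ref{lem:useful identity}, and surjectivity by manipulating an arbitrary diagram into anchor form using dotted-loop moves. The paper's Appendix~\ref{app:proof of characterization of skein modules} differs mainly in organisation: it proves (ii) and (iii) simultaneously by showing that the images of an orthonormal basis of $\caC_{\anchorino}(\Sigma)$ form an orthonormal basis of $A(\Sigma)$, which is slightly more economical than treating isometry and surjectivity separately. For equivariance the paper checks a spanning set of $\Tube$-elements directly rather than matrix units, but this is cosmetic.

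One point to tighten in your step~(iii): you cannot insert a dotted loop \emph{parallel to a boundary circle} using the disk normalisation alone, since such a loop is not contractible. The paper instead applies the resolution of identity $\sum_{X,\underline a,i} E^X_{\underline a,i;\underline a,i} = \id$ from Proposition~\ref{prop:matrix units for Tube_n} at each boundary, which produces the dotted-loop-plus-$X_\kappa$-strand pattern automatically; only afterwards does it insert a small contractible dotted loop near the anchor point and expand it via cloaking. Your description conflates these two distinct insertions.
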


The proof appears in Appendix \ref{app:proof of characterization of skein modules}.

\begin{remark}
    There is a well-known description of Tube algebras in terms of $\chi$ and the adjoint of the forgetful functor $F: Z(\caC) \to \caC$
    (see \cite{muger2003subfactorsquantumdouble}, and \cite{kawagoe2024levin,kawagoe2020microscopic} for a nice overview in the context of Levin-Wen models). 
    Following \cite{henriques2015categorifiedtracemoduletensor}, we denote the adjunction $F \dashv \Tr$.
    Then
    \begin{equation*}
        \Tube_{\caS} \simeq \End_{\ZC}(\Tr \chi^{\otimes\caS}),
    \end{equation*}
    with the isomorphism provided by the adjunction once the vector space of the Tube algebra is realised as
    $\Tube_{\caS} \simeq \bigoplus_{a\in \Irr \caC} \caC(a \otimes \chi^{\otimes\caS} \otimes a^* \to \chi^{\otimes\caS})$.
    On  $\Tube_{\caS}$-representations the adjunction gives the isomorphisms
    \begin{equation*}
        \caC(X \to \chi^{\otimes\caS}) \simeq Z(\caC) (X \to \Tr \chi^{\otimes\caS}),
    \end{equation*}
    where the right hand side is a natural representation of $\End_{Z(\caC)}(\Tr \chi^{\otimes\caS})$ given by composition of morphisms.    
    In this description, \Cref{prop:characterisation of skein modules} can be recast as 
    \begin{equation}\label{eq:m-fold Tr}
        A(\Sigma) \simeq
        Z(\caC)(\1 \to \Tr \chi^{\otimes\caS_1} \otimes \cdots \otimes \Tr \chi^{\otimes\caS_m}).
    \end{equation}
\end{remark}

For later use, we investigate how the map $\Phi_{\Sigma}^{\anchorino}$ changes when we modify the anchor using Dehn twists around the holes of $\Sigma$.

\begin{lemma} \label{lem:anchors related by Dehn twists}
    Let $\Sigma$ be a decorated surface homeomorphic to the sphere with $m$ holes cut out. Let $\anchor$ and $\anchor'$ be anchors for $\Sigma$ so that $\anchor'$ is obtained from $\anchor$ by a Dehn twist around boundary component $\caS_{\kappa}^{\anchorino}$. Then
    \begin{equation}
        \Phi_{\Sigma}^{\anchorino'} \big( \al \otimes w_1 \otimes \cdots \otimes w_m  \big) = \theta_{X_{\kappa}} \times \Phi_{\Sigma}^{\anchorino} \big( \al \otimes w_1 \otimes \cdots \otimes w_m  \big)
    \end{equation}
    for any $\al \in Z(\caC)(\I \rightarrow X_1 \otimes \cdots \otimes X_m)$ and any $w_{\kappa} \in \caC( X_{\kappa} \rightarrow \chi^{\otimes \caS^{\anchorino}_{\kappa}} )$ for $\kappa = 1, \cdots, m$.

    Here $\theta_{X_{\kappa}} \in U(1)$ is the twist of $X_{\kappa}\in \Irr Z(\caC)$, defined for any $X \in\Ob Z(\caC)$ by
    \begin{equation} \label{eq:twists defined}
        \adjincludegraphics[valign=c, height = 0.8cm]{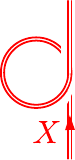} = \theta_X \times \adjincludegraphics[valign=c, height = 0.8cm]{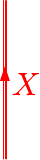}.
    \end{equation}
\end{lemma}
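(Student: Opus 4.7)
The plan is to localize the comparison to an annular neighborhood of $\caS^{\anchorino}_\kappa$ and to identify the effect of the Dehn twist on the $X_\kappa$-strand with the ribbon twist.

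First I would fix an annular neighborhood $A$ of $\caS^{\anchorino}_\kappa$ that is disjoint from the anchor point and from the other boundary components of $\Sigma$, and arrange so that the Dehn twist relating $\anchor$ and $\anchor'$ is supported in $A$. The diagrams $\Phi_\Sigma^{\anchorino}(\al \otimes w_1 \otimes \cdots \otimes w_m)$ and $\Phi_\Sigma^{\anchorino'}(\al \otimes w_1 \otimes \cdots \otimes w_m)$ then agree on $\Sigma \setminus A$. Inside $A$, both contain the cloaking dotted loop around $\caS^{\anchorino}_\kappa$, the morphism $w_\kappa$, and the simple strands attaching $w_\kappa$ to $\caS^{\anchorino}_\kappa$; the only discrepancy is that the portion of the $X_\kappa$-strand inside $A$ reaches $w_\kappa$ directly in the first diagram but winds once around $\caS^{\anchorino}_\kappa$ in the second.

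Next, using the cloaking property of the dotted loop (the third relation of \eqref{eq:normalization and absorption properties}), I would isotope the loop outward past the winding of the $X_\kappa$-strand, so that inside an enlarged cloaked region the $X_\kappa$-strand winds once around the bundle of simple strands emanating from $w_\kappa$. Repeated use of the half-braiding \eqref{eq:half-braidings} associated with $X_\kappa \in \Ob Z(\caC)$ then lets me slide the $X_\kappa$-strand across these simple strands one at a time, unwrapping the winding. What remains is the original diagram $\Phi_\Sigma^{\anchorino}(\al \otimes w_1 \otimes \cdots \otimes w_m)$ together with an added self-loop on the $X_\kappa$-strand just before $w_\kappa$, which by the defining equation \eqref{eq:twists defined} evaluates to $\theta_{X_\kappa} \cdot \id_{X_\kappa}$. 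Pulling out this scalar and reversing the cloaking isotopy gives the claimed identity.

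The main obstacle is the explicit unwrapping, since winding around $\caS^{\anchorino}_\kappa$ is topologically nontrivial in the annulus and its reduction to a self-twist uses the $Z(\caC)$-structure of $X_\kappa$ at every simple-strand crossing as well as the presence of the cloaking dotted loop. A conceptually cleaner route is to observe that $\tau_\kappa$ is a self-homeomorphism of $\Sigma$ fixing $\partial\Sigma$ pointwise and so induces a unitary $(\tau_\kappa)_*\colon A(\Sigma)\to A(\Sigma)$ with $\Phi_\Sigma^{\anchorino'} = (\tau_\kappa)_*\circ \Phi_\Sigma^{\anchorino}$. Deforming the support of $\tau_\kappa$ into the cylinder used to implement any $\Tube_{\caS^{\anchorino}_\kappa}$-action shows that $(\tau_\kappa)_*$ intertwines this action and is therefore scalar on each $X$-isotypic component provided by Proposition~\ref{prop:characterisation of skein modules}. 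A computation in the baseline case where $\Sigma$ is an annulus with $X$ and $\bar X$ data on its two boundaries then identifies this scalar with $\theta_{X_\kappa}$ via the standard identification of Dehn twists with ribbon twists in the modular tensor category $Z(\caC)$, completing the proof.
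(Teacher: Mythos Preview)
Your first approach is close in spirit to the paper's, but the key manipulation is not quite right as you describe it. The paper's proof is a single graphical move: it leaves the dotted loop in place around $\caS_\kappa^{\anchorino}$ and uses the cloaking property to pull the wound portion of the $X_\kappa$-strand \emph{through} that loop, after which the winding is visibly a self-crossing of the $X_\kappa$-strand and hence equals $\theta_{X_\kappa}\cdot\id_{X_\kappa}$ by \eqref{eq:twists defined}. In your version you first isotope the dotted loop outward past the winding and then try to unwrap by sliding the $X_\kappa$-strand across the simple strands. But sliding across the simple strands via half-braidings does not change the winding number of the $X_\kappa$-strand around the puncture: after all the slides, the $X_\kappa$-strand still encircles the hole $\caS_\kappa^{\anchorino}$, which is a nontrivial class in $\pi_1$ of the annulus and cannot be contracted to a self-loop by crossing strands alone. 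What makes the reduction possible is precisely that a dotted loop hugging the puncture lets a $Z(\caC)$-strand pass \emph{over the hole itself}; once you have moved that loop outward, you have lost this. You could repair your argument by invoking the projector property in \eqref{eq:normalization and absorption properties} to reinstate a second dotted loop around the puncture before unwinding, but the paper's one-step use of cloaking is cleaner.

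Your second approach---realising the Dehn twist as a $\Tube_{\caS_\kappa^{\anchorino}}$-intertwiner on $A(\Sigma)$, hence a scalar on each isotypic component by Schur, and identifying that scalar with $\theta_{X_\kappa}$---is correct and gives a genuinely different, more structural proof that the paper does not pursue.
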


\begin{proof}
    Writing $K = \left( \prod_{\kappa=1}^m d_{X_{\kappa}} \right)^{1/2} \, \caD^{m-1}$ we find
    \begin{align*}
        \Phi_{\Sigma}^{\anchorino'} \big( \al \otimes w_1 \otimes &\cdots \otimes w_m  \big) = K \times \adjincludegraphics[valign=c, height=1.0cm]{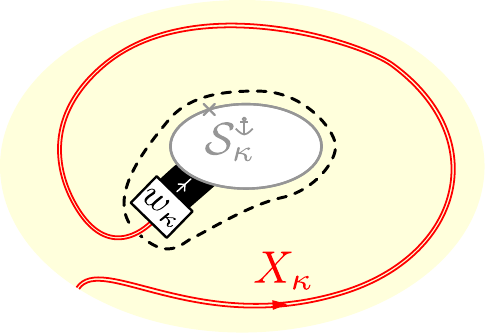} = K \times \adjincludegraphics[valign=c, height=1.0cm]{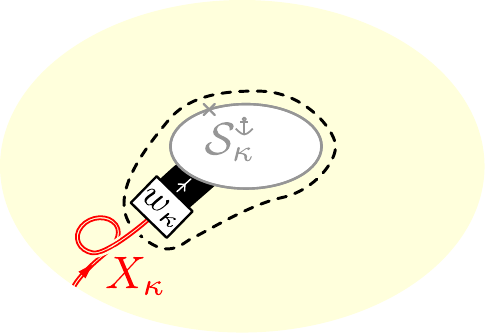} \\
        &= K \times \theta_{X_{\kappa}} \times \adjincludegraphics[valign=c, height=1.0cm]{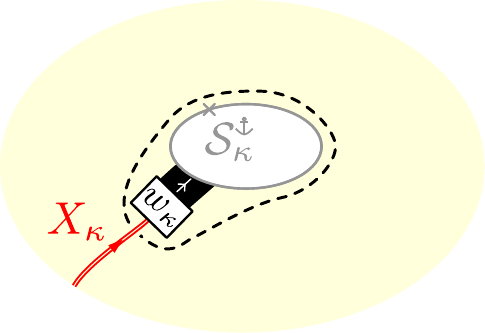} = \theta_{X_{\kappa}} \times \Phi_{\Sigma}^{\anchorino} \big( \al \otimes w_1 \otimes \cdots \otimes w_m  \big),
    \end{align*}
    where we used the cloaking property of the dotted line in the second step.
\end{proof}

\section{Skein subspaces} \label{sec:skein subspaces}

\subsection{Regions and their string-net subspaces} \label{subsec:regions and string-net subspaces}

Let $C^{\Z^2}$ be the cell complex with vertices $C_0^{\Z^2} = \Z^2$, all edges $\caE$ of $\Z^2$ thought of as open intervals in $\R^2$ connecting neighbouring vertices, and all faces $\caF$ of $\Z^2$ thought of as open unit squares in $\R^2$.

A \emph{region} is a subcomplex $C$ of $C^{\Z^2}$. In particular, if $C$ contains a face $f$ then it also contains all edges on the boundary of that face, and if $C$ contains an edge $e$ then it also contains the vertices at the endpoints of that edge. We write $C_0, C_1$ for the 0-skeleton and 1-skeleton of $C$ respectively,  considered as subcomplexes of $C$. We also write $V_C$, $E_C$, and $F_C$ for the set of vertices, edges, and faces of $C$ respectively. Also, $\vec E_C = \{ e \in \vec \caE \,: \, e \in E_C\, \text{or} \, \bar e \in E_C \}$ is the set of oriented edges of $C$, and $\caA_C = \caA_{C_0}$ is the algebra of observables supported on $C$.
The \emph{oriented boundary} of $C$ is
$$ \vec \partial C := \{ e \in \vec \caE \, : \, \partial_{\f} e \in V_C \, \text{and} \, \partial_{\ii} e \not\in V_C \}, $$
and we write
$$ \vec \caE_C := \vec \partial C \cup \vec E_C = \{ e \in \vec \caE \, : \, \partial_{\f} e \in V_C \}. $$

A \emph{labelling} of $C$ is an assignment $l : \vec \caE_C \rightarrow \Irr \caC$ of simple objects from $\Irr \caC$ to oriented edges of $C$. An \emph{internal labelling} of $C$ is an assignment $l : \vec E_C \rightarrow \Irr \caC$ and a \emph{boundary labelling} of $C$ is an assignment $l : \vec \partial C \rightarrow \Irr \caC$. We denote by $\caL(C), \caL^{\inn}(C)$ and $\caL^{\partial}(C)$ respectively the sets of labellings, internal labellings, and boundary labellings of $C$.

An (internal) labelling $l$ of $C$ is called a (internal) \emph{string-net labelling} if $l(e) = l(\bar e)$ for all $e \in \vec E_C$. We denote by $\caL_{SN}(G)$ the set of all string-net labellings of $C$ and by $\caL_{SN}^{\inn}(C)$ the set of all internal string-net labellings of $C$.

We can identify any vertex $v$ with the subcomplex $C^v=C^v_0 = \{v\}$. Then $\vec \partial v = \{ e_1, e_2, e_3, e_4 \}$ where $e_1, e_2, e_3, e_4$ are the four oriented edges pointing into $v$. Given a boundary labelling $l : \vec \partial v \rightarrow \Irr \caC$ we let $\caH_v(l) \subset \caH_v$ be the subspace of $\caH_v$ corresponding to the direct summand $\caC( l(e_1) \otimes l(e_2) \rightarrow l(e_3) \otimes l(e_4) )$. Then $\caH_v = \bigoplus_{l \in \caL^{\partial}(v)} \caH_v(l)$.

Given a finite region $C$ we write
$$ \caH_C := \caH_{V_C} = \bigoplus_{l \in \caL(C)} \caH_C(l), \quad \text{where} \quad \caH_C(l) := \bigotimes_{v \in V_C} \, \caH_v(l|_{\vec \partial v}). $$
The \emph{string-net subspace} associated to $C$ is
$$ H_{C_1} := \bigoplus_{l \in \caL_{SN}(C)} \caH_C(l) = \left( \prod_{e \in E_C} A_e \right) \caH_C, $$
\ie the string-net constraints are imposed along all edges of $C$. As the notation reflects, the string-net subspace of $C$ depends only on the 1-skeleton $C_1$.

We further decompose the string-net subspace according to boundary labellings. For $b \in \caL^{\partial}(C)$ we denote by $\caL^b(C)$ the set of labellings $l$ of $C$ such that $l|_{\vec \partial C} = b$. Similarly, $\caL^b_{SN}(C)$ is the set of string-net labellings $l$ of $C$ such that $l|_{\vec \partial C} = b$. Then we have
$$ H_{C_1} = \bigoplus_{b \in \caL^{\partial}(C)} H_{C_1}^b \quad \text{with} \quad H_{C_1}^b := \bigoplus_{l \in \caL^b_{SN}(C)} \caH_C(l). $$

\subsection{Extended surfaces assigned to regions}

To any region $C$ we assign an extended surface $\Sigma_C$ as follows. The surface $\Sigma_C$ is the 1/3-fattening of $C \subset \R^2$. That is, $\Sigma_C := \{ v \in \R^2 \, : \, \dist(v, C) \leq 1/3 \}$ where $\dist(v, C) = \inf_{w \in C} \, \norm{w - v}_{\infty}$ and we regard $C$ as a closed subset of the plane. The marked boundary points are the points of intersection of $\partial \Sigma_C$ with edges of $\Z^2$. Such a marked boundary point has \emph{positive} sign if the corresponding edge $e \in \caE$ (which is oriented towards the top left) points out of the surface $\Sigma_C$, and it has \emph{negative} sign if the corresponding edge $e \in \caE$ points into the surface. Finally, each connected boundary component $\caN \in \Bd(\Sigma_C)$ is the union of some horizontal and vertical line segments. If $\caN$ is compact then we place the fiducial point at the upper extremity of top leftmost vertical segment. If $\caN$ is not compact we pick an arbitrary fiducial point.
When $C$ is finite and connected there is a distinct \emph{outer} boundary component of $\Sigma_C$. All other boundary components are referred to as \emph{inner}. 

Note that $\Sigma_C$ and $\Sigma_{C_1}$ have the same marked boundary points. Indeed, the surface $\Sigma_{C_1}$ has all the same boundary components as $\Sigma_C$, and in addition a boundary component for each face of $C$, but there are no marked boundary points on the boundary components of $\Sigma_{C_1}$ corresponding to these faces.

Boundary conditions for string diagrams on $\Sigma_C$ or $\Sigma_{C_1}$ are equivalent to boundary labellings of the region $C$ itself, and we will identify these concepts.

\subsection{Isomorphism of string-net subspaces with skein modules} \label{subsec:isomorphism of string-net with skein}

Let $C$ be a finite region. There is a natural way to regard product vectors in the string-net subspace $H_{C_1}$ as string diagrams on $\Sigma_{C_1}$. For a given string-net labelling $l \in \caL_{SN}(C)$, consider unit vectors $f_v \in \caH_v(l)$ for all $v\in C_0$. Recall that $f_v$ is a morphism in $\caC$. Denoting $f=(f_v)_{v\in C_0}$, we define  $\phi_{f} = \bigotimes_{v \in C_0} f_v \in \caH_C(l)$. By construction, $\phi_f$ belongs to the string-net subspace $H_{C_1}$, and $H_{C_1}$ is spanned by such product vectors. 
We define the corresponding string diagram $x_f$ on $\Sigma_{C_1}$ whose graph is the intersection of the graph of $\Z^2$ with the surface $\Sigma_{C_1}$ (with edges directed towards the top right), which has edges labelled according to $l$, and vertices $v \in C_0$ labelled by the morphisms $f_v$. 
We define $\pi_{C_1} : H_{C_1} \rightarrow A(\Sigma_{C_1})$ by
\begin{equation} \label{eq:pi_C defined}
    \pi_{C_1}( \phi_f ) = [ x_f ]_{\Sigma_{C_1}},
\end{equation} 
where we interpret the string diagram $x_f$ as a string diagram on $\Sigma_{C_1}$.

The following lemma extends Lemma 5.3 of \cite{kirillov2011string} to include surfaces with boundary. Their proof generalizes mutatis mutandis. 
\begin{lemma}[\cite{kirillov2011string}]
\label{lem:string-net isomorphism}
    For any finite region $C$, the map $\pi_{C_1} : H_{C_1} \rightarrow A(\Sigma_{C_1})$ is an isomorphism of vector spaces.
\end{lemma}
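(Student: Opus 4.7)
The plan is to adapt the proof of \cite[Lemma 5.3]{kirillov2011string} to the case of regions with non-trivial outer boundary. First, I would note that $\pi_{C_1}$ respects the decomposition by boundary labelling: for a product vector $\phi_f \in \caH_C(l)$, the associated string diagram $x_f$ meets $\partial \Sigma_{C_1}$ along strands labelled by $l|_{\vec \partial C}$, so $\pi_{C_1}(H_{C_1}^b) \subseteq A(\Sigma_{C_1}; b)$ for every boundary labelling $b \in \caL^\partial(C)$. It therefore suffices to prove that each restriction $\pi_{C_1}^b : H_{C_1}^b \to A(\Sigma_{C_1}; b)$ is a bijection.

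For surjectivity, take any string diagram $\Gamma$ representing a class in $A(\Sigma_{C_1}; b)$. Since the 1-skeleton $C_1$ is a spine of $\Sigma_{C_1}$, an ambient isotopy of $\Sigma_{C_1}$ fixing the boundary brings $\Gamma$ into a tubular neighborhood of $C_1$ so that it meets each edge $e \in E_C$ transversally, with all vertex and braiding data concentrated in small disks around the vertices $v \in V_C$. Near each such edge $e$, apply the resolution of identity \eqref{eq:decomposition into simples} repeatedly to rewrite the bundle of strands crossing $e$ as a linear combination of single strands labelled by simple objects $a \in \Irr \caC$, absorbing the accompanying fusion morphisms into the disks near the endpoints. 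The result expresses $[\Gamma]$ as a linear combination of classes $[x_f]$ with $f_v \in \caH_v(l|_{\vec\partial v})$ for string-net labellings $l$ extending $b$; since these reductions are carried out away from $\partial \Sigma_{C_1}$, the boundary labelling is preserved.

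For injectivity, I would argue in two steps. Distinct string-net labellings $l \in \caL_{SN}^b(C)$ produce linearly independent skein classes: one can use the cloaking and projector properties of the dotted line from Convention \ref{conv:dotted line} to construct, for each $l$, a linear functional on $A(\Sigma_{C_1}; b)$ that detects the simple object carried by each edge of $C$, separating the sectors $\caH_C(l)$. Within a fixed sector $\caH_C(l)$, the restriction of $\pi_{C_1}^b$ identifies the tensor product $\bigotimes_{v \in V_C} \caC(l(e_1) \otimes l(e_2) \to l(e_3) \otimes l(e_4))$ with its realization as compositions of planar morphisms; this is injective because the graphical calculus of the semisimple category $\caC$ is faithful on morphism spaces and there are no further local relations available once all edges carry simple labels.

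The main obstacle will be making the tubular-neighborhood reduction in the surjectivity step fully rigorous, since one must verify that every local move used to push $\Gamma$ onto the spine $C_1$ (planar isotopies, cap/cup creations, reversal of strand orientations via duality) can be absorbed into a morphism at some $\caH_v(l|_{\vec\partial v})$ without disturbing the boundary data. This is handled exactly as in the closed-surface case treated by Kirillov, using that $\caC$ has duals and is a strict pivotal category, so the local relations in $A(\Sigma_{C_1})$ are rich enough to bring any diagram into the normal form $x_f$.
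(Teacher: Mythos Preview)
Your approach aligns with what the paper does: it simply cites \cite[Lemma~5.3]{kirillov2011string} and asserts that the proof goes through mutatis mutandis for surfaces with boundary. Your surjectivity argument (isotope onto the spine $C_1$, then resolve multi-strands across edges into simples via \eqref{eq:decomposition into simples}) is exactly the right one and is the standard reduction to canonical form.

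The injectivity argument, however, has a gap. The cloaking and projector properties of the dotted line in Convention~\ref{conv:dotted line} do \emph{not} detect which simple object labels a strand: a dotted loop encircling an $a$-strand simply returns the $a$-strand (by the normalisation and cloaking identities), independently of $a$. So you cannot build edge-label-detecting functionals from dotted lines alone. Likewise, the appeal to ``faithfulness of the graphical calculus'' within a fixed labelling sector is too vague: faithfulness tells you that two diagrams representing the same morphism in $\caC$ are related by local moves, but here you must rule out nontrivial linear dependencies among the classes $[x_f]$ across an entire surface with holes, which is a statement about the skein module, not about a single Hom space.

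The clean fix is to cut $\Sigma_{C_1}$ along one arc transversal to each internal edge $e \in E_C$. This decomposes $\Sigma_{C_1}$ into a disjoint union of disks $D_v$, one per vertex, each carrying precisely the four marked points of $\vec\partial v$. For a disk with marked points the skein module is canonically the Hom space (cf.\ the examples in Appendix~\ref{subsec:orthonormal bases for some skein modules}), so $A(D_v; l|_{\vec\partial v}) \simeq \caH_v(l|_{\vec\partial v})$. Reassembling via gluing and summing over all internal edge labels yields exactly $\bigoplus_{l \in \caL_{SN}^b(C)} \bigotimes_v \caH_v(l|_{\vec\partial v}) = H_{C_1}^b$, and one checks that this identification is inverse to $\pi_{C_1}^b$. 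Alternatively, one may simply observe that Lemma~\ref{lem:TQFT vs skein on string-net space} (whose proof does not rely on the present lemma) shows $\sigma_{C_1}$ is an isometry, hence injective, and $\pi_{C_1}$ differs from $\sigma_{C_1}$ only by nonzero scalars on each boundary-label block.
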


\begin{convention} \label{conv:graphical representation}
    We will freely use the isomorphism $\pi_{C_1}$ to represent states in $H_{C_1}$ by string diagrams on $\Sigma_{C_1}$, where we can use the graphical calculus. Whenever a string diagram on a surface $\Sigma_{C_1}$ is interpreted as an element of $H_{C_1}$, this will implicitly be done using $\pi_{C_1}$.

    If $x$ is a string diagram on $\Sigma_{C_1}$, then we write $\phi_x := \pi_{C_1}^{-1}( [x] )$ for the corresponding vector in the string-net subspace $H_{C_1}$.
\end{convention}

Let us also introduce the map $\sigma_{C_1} : H_{C_1} \rightarrow A(\Sigma_{C_1})$ defined on  product vectors by
\begin{equation} \label{eq:sigma_C defined}
    \sigma_{C_1}( \phi_f ) :=  d_{\partial l}^{-1/4} \, [ x_f ]_{\Sigma_{C_1}},
\end{equation}
and extended linearly to all of $H_{C_1}$. Here, $d_{\partial l} = \prod_{e \in \vec \partial C} d_{l(e)}$ is the product of the quantum dimensions of the objects assigned by $l$ to the boundary of $C$. Note that by construction we have $\sigma_{C}( H_{C_1}^b ) \subset A(\Sigma_{C_1}; b)$ for any boundary condition $b$.

The reason for modifying the map $\pi_{C_1}$ with weights as in Eq. \eqref{eq:sigma_C defined} is to obtain an isomorphism of Hilbert spaces, see \Cref{lem:TQFT vs skein on string-net space}. 

\subsection{Actions of Tube algebras} \label{subsec:Tube actions on collars}

Let $C$ be a region such that $\Sigma_C$ has compact boundary. To each connected boundary component $\caS \in \Bd(\Sigma_C)$ we associate a \emph{collar region} $C^{\caS}$, defined to be the smallest subregion of $C$ such that $\caS \in \Bd( \Sigma_{C^{\caS}})$. Note that if $C$ contains a face $f$ and $C'$ is obtained from $C$ by removing $f$, then $\Sigma_{C'}$ has a connected boundary component at $f$ which we denote by $\caS_f$. We denote the corresponding collar region by $C^f := C^{\caS_f}$.

More generally, if $\caN \subset \caS \in \Bd(\Sigma_C)$ then we let $C^{\caN}$ be the smallest subregion of $C$ so that $\caN \subset \partial \Sigma_{C^{\caN}}$. The region $C^{\caN}$ does not depend on $\caS$ in the sense that if $\caN \subset \caS' \inn \Bd(\Sigma_{C'})$ for some other region $C'$, then $C^{\caN}$ is also the smallest subregion of $C'$ such that $\caN \subset \partial \Sigma_{C^{\caN}}$.

For each $\caS \in \Bd(\Sigma_C)$ we define a representation $\frt_{\caS}$ of $\Tube_{\caS}$ on $\caH_{C^{\caS}}$ as follows. First, for any $a \in \Tube_{\caS}$, the operator $\frt_{\caS}(a)$ acts as zero on the orthogonal complement of the string-net subspace $H_{C^{\caS}}$. That is, any $\frt_{\caS}(a)$ enforces string-net constraints on the collar region $C^{\caS}$.

On the string-net subspace, the action is given by
\begin{equation} \label{eq:frt_caS defined}
    \frt_{\caS}(a) \, \phi := \sigma_{C^{\caS}}^{-1} \big( a \triangleright_{\caS} \sigma_{C^{\caS}}( \phi )  \big)
\end{equation}
for any $\phi \in H_{C^{\caS}}$.
It follows from Lemma \ref{lem:TQFT vs skein on string-net space} that each $\frt_{\caS}$ is a *-representation of $\Tube_{\caS}$. Let us also define a *-representation of $\Tube_{\hat \caS}^{\op}$ on $\caH_{C^{\caS}}$ by 
\begin{equation}
    \frt^{\op}_{\caS}(a) := \frt_{\caS}(a^{\op})
\end{equation}
for any $a \in \Tube_{\hat \caS}$.

Unpacking definitions and making use of Convention \ref{conv:graphical representation}, the action of $\frt_{\caS}([y])$ on $H_{C^{\caS}}$ is represented graphically by
\begin{equation} \label{eq:graphical frt action}
    \frt_{\caS} \left( \adjincludegraphics[valign=c, width = 3.0cm]{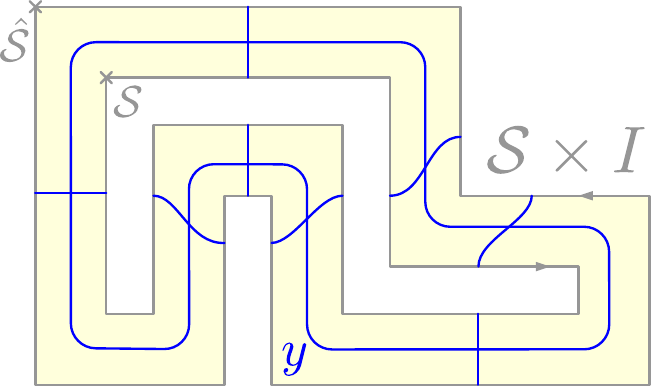}  \right) \,\,\adjincludegraphics[valign=c, width = 3.0cm]{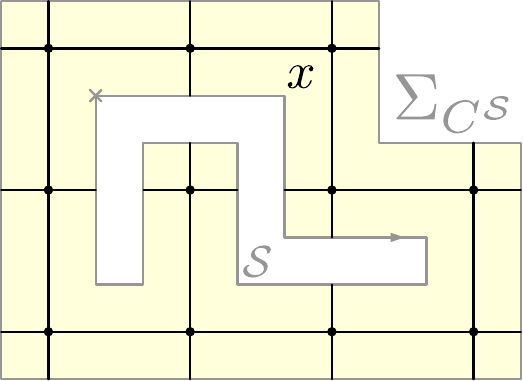}  = \delta_{x_{\caS}, y_{\hat \caS}} \, \left( \frac{d_{y_{\caS}}}{d_{x_{\caS}}} \right)^{1/4} \, \adjincludegraphics[valign=c, width = 3.0cm]{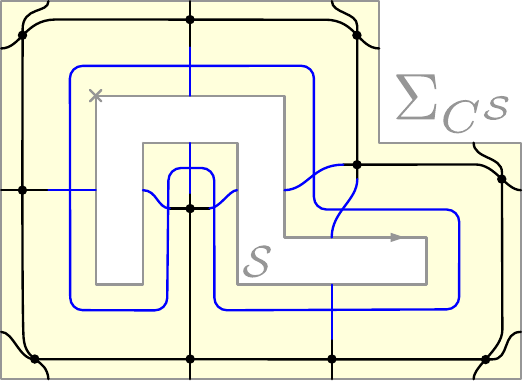} \,\, .
\end{equation}
Here we represented $\caS \times I$ as a subset of the plane with the bottom boundary component identified with the outer boundary in the figure.

More generally, if $\caN \subset \caS \in \Bd(\Sigma_C)$ then we define a *-representation $\frt_{\caN}$ of $A(\caN)$ on $\caH_{C^{\caN}}$ by letting $\frt_{\caN}(a)$ annihilate the orthogonal complement of the string-net subspace $H_{C^{\caN}}$, and letting $\frt_{\caN}(a)$ act on this subspace by
\begin{equation} \label{eq:frt_caN defined)}
    \frt_{\caN}(a) \phi := \sigma_{C^{\caN}}^{-1} \big(  a \triangleright_{\caN} \sigma_{C^{\caN}}( \phi )  \big)
\end{equation}
for any $\phi \in H_{C^{\caN}}$. We also define the $A(\hat \caN)^{\op}$-action $\frt_{\caN}^{\op}(a) := \frt_{\caN}(a^{\op})$. It follows immediately from the definitions that
\begin{lemma} \label{lem:frt inclusions}
    If $\caN \subset \caS \in \Bd(\Sigma_C)$ for some finite region $C$ then
    $$ \frt_{\caN}(a) \frt_{\caS}(\id) = \frt_{\caS}(\id) \frt_{\caN}(a) = \frt_{\caS}(\iota(a)), $$
    where $\iota : A(\caN) \rightarrow A(\caS)$ is the inclusion \eqref{eq:cyclinder inclusion}.
\end{lemma}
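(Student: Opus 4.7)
The plan is to unpack both sides of the claimed identities using the graphical description \eqref{eq:graphical frt action} of the $\frt$-actions. First I would observe that $\frt_{\caS}(\id) = P_{H_{C^{\caS}}}$ is the orthogonal projector onto the string-net subspace, since $\id \in \Tube_{\caS}$ acts as the identity on the skein module. Second, the minimality of the collar regions, combined with $\caN\subset\caS\in\Bd(\Sigma_{C^{\caS}})$, gives $C^{\caN}\subset C^{\caS}$, which lets me view $\frt_{\caN}(a)$ as an operator on $\caH_{C^{\caS}}=\caH_{C^{\caN}}\otimes\caH_{C^{\caS}\setminus C^{\caN}}$ by tensoring with the identity on the complement. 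The lemma then reduces to two claims: (a) on $H_{C^{\caS}}$ the operators $\frt_{\caN}(a)\otimes\id$ and $\frt_{\caS}(\iota(a))$ coincide, and (b) $\frt_{\caN}(a)\otimes\id$ commutes with the projector $P_{H_{C^{\caS}}}$.

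For (a), I would first note that any string-net labelling of $C^{\caS}$ restricts to a string-net labelling of $C^{\caN}$, yielding the inclusion $H_{C^{\caS}}\subset H_{C^{\caN}}\otimes\caH_{C^{\caS}\setminus C^{\caN}}$. Reading off the graphical formula \eqref{eq:graphical frt action}, both sides act on a vector $\phi \in H_{C^{\caS}}$ by attaching a tube diagram carrying the morphisms of $a$ along the arc $\caN$ while leaving the rest of $\caS$ alone --- and this ``leaving alone'' is precisely the action on $\caS\setminus\caN$ by identity strands, which is the defining property of $\iota$ in Eq.~\eqref{eq:cyclinder inclusion}. Matching the $(d_y/d_x)^{1/4}$ normalisation weights between the two descriptions is routine.

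For (b), preservation of $H_{C^{\caS}}$ already follows from (a), so I only need to show that $\frt_{\caN}(a)\otimes\id$ also preserves $H_{C^{\caS}}^{\perp}$. A basis vector $\phi\in H_{C^{\caS}}^{\perp}$ has some string-net constraint violated at an internal edge $e$ of $C^{\caS}$. If both endpoints of $e$ lie in $C^{\caN}$, then the violation is visible in the $C^{\caN}$ factor, so $\phi\in H_{C^{\caN}}^{\perp}\otimes\caH_{C^{\caS}\setminus C^{\caN}}$ and is annihilated. Otherwise $e$ has at least one endpoint in $C^{\caS}\setminus C^{\caN}$, and its label mismatch survives: the identity action fixes any label on the $C^{\caS}\setminus C^{\caN}$ side, and the Tube action $a\triangleright_{\caN}$ fixes any label on a $C^{\caN}$ side because the corresponding marked point on $\partial\Sigma_{C^{\caN}}$ cannot lie on $\caN$ (otherwise it would also sit on $\caS$, contradicting that $e$ is internal to $\Sigma_{C^{\caS}}$). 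The main obstacle is exactly this piece of geometric bookkeeping about where marked points of boundary edges of $C^{\caN}$ end up on $\partial\Sigma_{C^{\caN}}$; once that is pinned down, everything else is a routine unravelling of the definitions of $\frt$, $\iota$, and the skein-module gluing actions.
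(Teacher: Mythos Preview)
Your argument is correct and considerably more detailed than the paper's own treatment: the paper simply asserts that the lemma ``follows immediately from the definitions'' and provides no further proof. Your decomposition into claims (a) and (b) is sound, and the graphical verification of (a) --- that gluing $a$ along $\caN$ while doing nothing on $\caS \setminus \caN$ is exactly the action of $\iota(a)$ along $\caS$, with the $(d_y/d_x)^{1/4}$ factors matching because identity strands leave boundary labels on $\caS \setminus \caN$ unchanged --- is precisely the content of the statement.

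One simplification worth noting: your direct case-analysis for (b) is unnecessary. Once (a) is established for all $a \in A(\caN)$, you know $\frt_{\caN}(a)$ preserves $H_{C^{\caS}}$; since $\frt_{\caN}$ is a $*$-representation, so does $\frt_{\caN}(a^*) = \frt_{\caN}(a)^*$, and taking adjoints gives that $\frt_{\caN}(a)$ preserves $H_{C^{\caS}}^{\perp}$ as well. Equivalently,
\[
\frt_{\caS}(\id)\,\frt_{\caN}(a) = \big(\frt_{\caN}(a^*)\,\frt_{\caS}(\id)\big)^* = \frt_{\caS}(\iota(a^*))^* = \frt_{\caS}(\iota(a)).
\]
This sidesteps the geometric bookkeeping you flag as the ``main obstacle''. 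In particular it avoids the subcase where both endpoints of an internal edge $e$ of $C^{\caS}$ lie in $V_{C^{\caN}}$ but $e \notin E_{C^{\caN}}$; your case split does not explicitly cover this, and while it does not actually arise for the collar regions at hand, the adjoint route makes the issue moot.
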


\subsection{Definition of \texorpdfstring{$B_f$}{Bf} projectors} \label{subsec:Bf defined}

For a boundary component $\caS_f$ that corresponds to a face $f \in \caF$ we write $\frt_f := \frt_{\caS_f}$ for the corresponding $\Tube_{\caS_f}$-representation on $\caH_{C^f}$. We define the orthogonal projector
\begin{equation} \label{eq:Bf defined}
    B_f = \frt_f( P^{\I} ), \quad \text{where} \quad  P^{\I} = \adjincludegraphics[valign=c, width = 0.8cm]{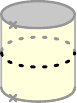} \in \Tube_{\caS_f}.
\end{equation}
That is, $B_f$ enforces string-net constraints and inserts a dotted loop around the face $f$.
Using Convention \ref{conv:graphical representation}, we can also describe the action of $B_f$ on $H_{C^f}$ graphically as
\begin{equation} \label{eq:B_f graphical definition}
        \adjincludegraphics[valign=c, width = 2.0cm]{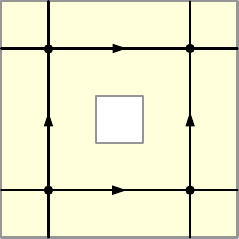} \,\,\, \xmapsto{B_f} \,\,\, \adjincludegraphics[valign=c, width = 2.0cm]{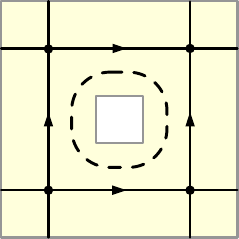} \,\, .
\end{equation}

\subsection{Skein subspaces} \label{subsec:skein subspaces defined}

Let $C$ be a finite region. We define its associated \emph{skein subspace} by
\begin{equation} \label{eq:skein subspace defined}
    H_C := B_C H_{C_1},
\end{equation}
where $B_C := \prod_{f \in F_C} B_f$. As a subspace of $\caH_{C}$, the skein subspace $H_C$ is equipped with the skein inner product.

The inclusion $\Sigma_{C_1}\subset \Sigma_C$ induces a map
$$\iota_C : A(\Sigma_{C_1}) \rightarrow A(\Sigma_C) : [y]_{\Sigma_{C_1}}  \mapsto [y]_{\Sigma_C}$$ 
which is well-defined since local relations in $\Sigma_{C_1}$ are also local relations in $\Sigma_C$, and isotopy fixing the boundary of $\Sigma_{C_1}$ extends to isotopy fixing the boundary of $\Sigma_C$. The map $\iota_C$ is surjective because any string diagram on $\Sigma_C$ can be deformed by isotopy to be contained in $\Sigma_{C_1}\subset \Sigma_C$.

Recalling definitions \eqref{eq:pi_C defined}\eqref{eq:sigma_C defined}, define 
\begin{align*}
    \pi_C = \iota_C \circ \pi_{C_1} \,\qquad \qquad &: H_{C_1} \to A(\Sigma_C), \\
    \sigma_C =   \caD^{-|F_C|} \times \iota_C \circ \sigma_{C_1}  &: H_{C_1} \to A(\Sigma_C).
\end{align*}
The normalization of $\sigma_C$ is so that its restriction to $H_C$ yields an isomorphism of Hilbert spaces.

\begin{proposition} \label{prop:isomorphism of skein subspace and skein module}
    The map $\sigma_C|_{H_C} : H_C \rightarrow A(\Sigma_C)$ is a unitary isomorphism. Moreover, for any connected boundary component $\caS \in \Bd(\Sigma_C)$ of $\Sigma_C$, we have 
    $$ a \triangleright_{\caS} \sigma_C(\psi) = \sigma_C \big( \frt_{\caS}(a) \psi \big)$$
    for all $\psi \in H_C$ and all $a \in \Tube_{\caS}$. In particular, the skein subspace $H_C$ equipped with the $\Tube$-actions given by the representations $\{  \frt_{\caS }\}$ is a unitary $A(\partial \Sigma_C)$-module, and $\sigma_C|_{H_C}$ is an intertwiner of unitary $A(\partial \Sigma_C)$-modules.
\end{proposition}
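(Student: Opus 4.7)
The plan is to factor the map as $\sigma_C = \caD^{-|F_C|}\,\iota_C \circ \sigma_{C_1}$ and treat the two pieces separately. By Lemma~\ref{lem:TQFT vs skein on string-net space}, $\sigma_{C_1}:H_{C_1}\to A(\Sigma_{C_1})$ is a unitary isomorphism, and by the construction of the representations $\frt_{\caS}$ in Eq.~\eqref{eq:frt_caS defined}, $\sigma_{C_1}$ intertwines, for every $\caS\in\Bd(\Sigma_{C_1})$, the action $\frt_{\caS}$ on $H_{C_1}$ with the canonical Tube action $\triangleright_{\caS}$ on $A(\Sigma_{C_1})$.

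First I would identify the image $\sigma_{C_1}(H_C)$. Since $B_f=\frt_f(P^{\I})$ and the $B_f$ commute (which follows from the commutation of the $\triangleright_{\caS_f}$ at different face circles), the intertwining property gives
\[
    \sigma_{C_1}(B_C\phi) \;=\; \Bigl(\prod_{f\in F_C} P^{\I}\triangleright_{\caS_f}\Bigr) \sigma_{C_1}(\phi),
\]
so that $\sigma_{C_1}$ restricts to a unitary isomorphism of $H_C$ onto
\[
    V := \Bigl\{\xi\in A(\Sigma_{C_1}) \,:\, P^{\I}\triangleright_{\caS_f}\xi = \xi \text{ for all } f\in F_C\Bigr\}.
\]
The task then reduces to proving that $\caD^{-|F_C|}\,\iota_C|_V : V\to A(\Sigma_C)$ is a unitary isomorphism of $A(\partial\Sigma_C)$-modules.

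For surjectivity, any class in $A(\Sigma_C)$ has a representative supported in $\Sigma_{C_1}\subset\Sigma_C$; inserting a contractible dotted loop on each face disk does not change its class by the normalization property of the dotted line in Eq.~\eqref{eq:normalization and absorption properties}, and, viewed back in $\Sigma_{C_1}$, each such loop is a meridian of the cylinder $\caS_f\times I$, i.e.\ exactly the diagrammatic form of the vacuum projector $P^{\I}\in\Tube_{\caS_f}$; hence the lifted representative lies in $V$. For the isometry statement, I would compute the TQFT inner product on $A(\Sigma_C)$ via the gluing formula Eq.~\eqref{eq:gluing formula}: since $\Sigma_C\times I$ is obtained from $\Sigma_{C_1}\times I$ by gluing in a 3-ball $D_f\times I$ along each face circle $\caS_f$, and since on $V$ only the vacuum channel contributes at each $\caS_f$, one obtains $(\iota_C(v),\iota_C(w))_{A(\Sigma_C)} = \caD^{2|F_C|}(v,w)_{A(\Sigma_{C_1})}$ for all $v,w\in V$. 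The prefactor $\caD^{-|F_C|}$ in the definition of $\sigma_C$ is then precisely what is needed to make $\sigma_C|_{H_C}$ an isometry.

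The intertwining property with the remaining Tube actions is then automatic: for any $\caS\in\Bd(\Sigma_C)\subset\Bd(\Sigma_{C_1})$, the action $\triangleright_{\caS}$ commutes with $\iota_C$ because both amount to gluing along $\caS$, and it preserves $V$ because it commutes with the $P^{\I}\triangleright_{\caS_f}$ for $f\in F_C$; the desired intertwining for $\sigma_C$ is thus inherited from that of $\sigma_{C_1}$. The main obstacle is the isometry step, namely extracting the correct factor $\caD^{2}$ per face from the gluing formula: concretely, one must identify the 3-ball partition function that appears after gluing along $\caS_f$, evaluate it on a dotted-loop configuration using the disk normalization in Eq.~\eqref{eq:normalization and absorption properties}, and compare it with the norm of $P^{\I}$ in $\Tube_{\caS_f}$ (itself computed via the same disk partition function). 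The bookkeeping is delicate but, once $\sigma_{C_1}$ is known to be unitary, all factors are forced.
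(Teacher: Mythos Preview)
Your approach is correct and essentially the same as the paper's, with a small organizational difference: rather than introducing the subspace $V\subset A(\Sigma_{C_1})$ explicitly, the paper works on $H_{C_1}$ throughout and shows directly that $\sigma_C\circ B_C=\sigma_C$ (Lemma~\ref{lem:sigma_C absorbs B_C}), which is equivalent to your statement that $\iota_C|_V$ is surjective. The isometry is obtained in both cases from the gluing formula Eq.~\eqref{eq:gluing formula} applied to the 3-balls filling the face holes, yielding the same $\caD^{2|F_C|}$ factor (see Lemma~\ref{lem:TQFT vs skein on skein subspaces}).

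One small point: your claim that $\sigma_{C_1}$ intertwines $\frt_{\caS}$ with $\triangleright_{\caS}$ ``by the construction in Eq.~\eqref{eq:frt_caS defined}'' is a bit too quick. That equation defines $\frt_{\caS}$ via $\sigma_{C^{\caS}}$ on the collar region, not via $\sigma_{C_1}$ on the whole 1-skeleton; passing from one to the other requires the observation that the local relations implementing the Tube action are confined to the collar and hence extend by identity to the rest of $C_1$. The paper makes this step explicit as Lemma~\ref{lem:sigma intertwines Tube actions}. This is the only gap, and it is easily filled.
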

The proof appears in Appendix \ref{app:proof of skein subspace isomorphism}. As a corollary, we obtain the \emph{commutativity lemma}:
\begin{lemma} \label{lem:commutativity lemma}
    Let $C$ be a finite region and let $\caS, \caS' \in \Bd(\Sigma_C)$ be distinct boundary components. Then we have
    $$ \frt_{\caS}(a) \frt_{\caS'}(b) = \frt_{\caS'}(b) \frt_{\caS}(a), $$
    for all $a \in \Tube_{\caS}$ and all $b \in \Tube_{\caS'}$.
    In particular, all $B_f$ projectors commute with one another.
\end{lemma}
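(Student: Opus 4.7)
The plan is to deduce the lemma from Proposition \ref{prop:isomorphism of skein subspace and skein module}, in two stages: first commutativity on the skein subspace, then extension to the ambient Hilbert space.

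For the first stage, Proposition \ref{prop:isomorphism of skein subspace and skein module} furnishes a unitary $A(\partial \Sigma_C)$-module isomorphism $\sigma_C|_{H_C} : H_C \to A(\Sigma_C)$ that intertwines each $\frt_{\caT}(c)$ with the gluing action $c \triangleright_{\caT}$. Since $A(\partial \Sigma_C) = \bigotimes_{\caT \in \Bd(\Sigma_C)} \Tube_{\caT}$, elements $a \in \Tube_{\caS}$ and $b \in \Tube_{\caS'}$ at distinct boundary components lie in distinct tensor factors, so $a \triangleright_{\caS}$ and $b \triangleright_{\caS'}$ commute on $A(\Sigma_C)$. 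Pulling back through $\sigma_C$ yields $\frt_{\caS}(a)\,\frt_{\caS'}(b) = \frt_{\caS'}(b)\,\frt_{\caS}(a)$ on $H_C$.

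For the second stage, I would pass to the common domain $\caH_D$ with $D = C^{\caS} \cup C^{\caS'}$, extending each operator by the identity on the complementary factor. Each $\frt_{\caT}(c)$ annihilates $H_{C^{\caT}}^{\perp}$, and the string-net projections onto $H_{C^{\caS}}$ and $H_{C^{\caS'}}$ mutually commute since they are built from the commuting local projectors $A_e$. One therefore reduces to the joint string-net subspace $H_{D_1}$, on which the analogue of Proposition \ref{prop:isomorphism of skein subspace and skein module} for $\sigma_{D_1} : H_{D_1} \to A(\Sigma_{D_1})$ (supplemented by the graphical rule \eqref{eq:graphical frt action}) again realises both operators as actions of distinct tensor factors of a cylinder algebra on a skein module, giving commutativity.

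The corollary about the $B_f$ projectors then follows: for distinct faces $f \neq f'$, take $C$ to contain both collars $C^f, C^{f'}$ but to exclude $f$ and $f'$ themselves, so that $\caS_f, \caS_{f'} \in \Bd(\Sigma_C)$ are distinct boundary components. By \eqref{eq:Bf defined}, $B_f = \frt_f(P^{\I})$ and $B_{f'} = \frt_{f'}(P^{\I})$, and the lemma yields $B_f B_{f'} = B_{f'} B_f$. I expect the main obstacle to be the second stage when $C^{\caS} \cap C^{\caS'} \neq \emptyset$, which is exactly the case for adjacent faces in the $B_f$ application: the tensor-factor argument of stage one works cleanly on the skein subspace via $\sigma_C$, but one must carefully track how the two local Tube insertions and the two string-net projections interact on the overlap. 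The saving grace is that both are ultimately graphical operations localised in neighbourhoods of the disjoint boundary circles $\caS$ and $\caS'$, which can be isotoped past one another inside $\Sigma_{D_1}$.
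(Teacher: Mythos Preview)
Your approach is essentially the paper's: the paper skips your Stage~1 and works directly on $\widetilde C = D = C^{\caS} \cup C^{\caS'}$, applying Proposition~\ref{prop:isomorphism of skein subspace and skein module} to $\widetilde C$ itself (not an ``analogue'' --- since collar regions contain no faces, $H_{\widetilde C} = H_{\widetilde C_1}$ and $\sigma_{\widetilde C} = \sigma_{\widetilde C_1}$, so the proposition applies verbatim), and then notes that on $H_{\widetilde C}^{\perp}$ both products vanish because some string-net constraint on one of the collars is violated. Your worry about overlapping collars is therefore unfounded: the overlap case is exactly what Proposition~\ref{prop:isomorphism of skein subspace and skein module} applied to $\widetilde C$ handles, since $\caS$ and $\caS'$ remain distinct boundary components of $\Sigma_{\widetilde C}$.
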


\begin{proof}
    Let $C^{\caS}$ and $C^{\caS'}$ be the collar regions of the boundary components $\caS$ and $\caS'$ and put $\widetilde C = C^{\caS} \cup C^{\caS'}$. Then $\caS$ and $\caS'$ are also boundary components of $\Sigma_{\widetilde C}$ with the same corresponding $\Tube$-representations $\frt_{\caS}$ and $\frt_{\caS'}$. We regard $\frt_{\caS}(a)$ and $\frt_{\caS'}(b)$ as operators on $\caH_{\widetilde C}$. On the subspace $H_{\widetilde C} \subset \caH_{\widetilde C}$, commutativity of  $\frt_{\caS}(a)$ and $\frt_{\caS'}(b)$ now follows 
    from Proposition \ref{prop:isomorphism of skein subspace and skein module} and the fact that $\Tube$-actions corresponding to different boundary components on the skein module $A(\Sigma_{\widetilde C})$ all commute with each other. On the orthogonal complement of $H_{\widetilde C}$ both $\frt_{\caS}(a) \frt_{\caS'}(b)$ and $\frt_{\caS'}(b) \frt_{\caS}(a)$ vanish because at least one string-net constraint on one of the collar regions $C^{\caS}$ or $C^{\caS'}$ is violated.

    The claim about the $B_f$'s now follows by recalling that $B_f = \frt_f(P^{\I})$ and considering, for any two faces $f_1$ and $f_2$, the region $\widetilde C=C^{f_1}\cup C^{f_2}$.
\end{proof}

A boundary condition for $H_C$ is an assignment $\underline p = \{ p_{\caS} \}_{\caS \in \Bd(\sigma_C)}$ of a projector $p_{\caS} \in \Tube_{\caS}$ to each connected boundary component $\caS \in \Bd
(\Sigma_C)$. We define
\begin{equation}\label{eq:def H_C(p)}
    H_C( \underline p ) := \left( \prod_{\caS \in \Bd(\Sigma_C)}^m \frt_{\caS}(p_{\caS})  \right) \, H_C.
\end{equation}
Given an enumeration $\{ \caS_{\kappa} \}_{\kappa = 1}^m$ of $\Bd(\Sigma_C)$ we will also write this as
$$ H_C(p_{\caS_1}, \cdots, p_{\caS_m}) = H_C( \underline p ). $$
We further define the following collections of density matrices:
\begin{align*}
    \caD_C :=& \, \{ \text{density matrices } \rho \in \caB(\caH_C) \, \text{supported on } H_C \}, \\
    \caD_C( \underline p ) :=& \, \{ \text{density matrices } \rho \in \caB(\caH_C) \, \text{supported on } H_C( \underline p ) \}, \\
    =& \, \{ \rho \in \caD_C \, : \,   \Tr \lbrace \rho \,  \frt_{\caS}(p_{\caS}) \rbrace = 1 \, \text{for all } \, \caS \in \Bd(\Sigma_C) \}.
\end{align*}
Given an enumeration $\{ \caS_{\kappa} \}_{\kappa = 1}^m$ of $\Bd(\Sigma_C)$ we will also write this as
$$ \caD_C(p_{\caS_1}, \cdots, p_{\caS_m}) = \caD_C( \underline p ). $$

A density matrix $\rho \in \caD_C$ is said to satisfy \emph{maximally mixed boundary conditions} at boundary component $\caS \in \Bd(\Sigma_C)$ if
$$ \quad \frt_{\caS}(u) \rho \frt_{\caS}(u^*) = \rho  $$
for all unitaries $u \in \Tube_{\caS}$.
This condition is denoted by  $\star$. If in addition 
$$ \Tr \lbrace \rho \frt_{\caS}(P^X) \rbrace = 1, $$
then we say $\rho$ satisfies the maximally mixed boundary condition $\star^X$ of type $X$ at boundary component $\caS$.

We allow a boundary condition $\underline p = \{ p_{\caS} \}_{\caS}$ to have components $p_{\caS} = \star$ or $p_{\caS} = \star^X$ for some $X \in \Irr Z(\caC)$, and define $\caD_C( \underline p )$ to consist of those density matrices $\rho \in \caD_C$ such that for each $\caS \in \Bd(\Sigma_C)$ we have $\Tr \lbrace \rho p_{\caS} \rbrace = 1$ in case $p_{\caS}$ is a projector, and such that $\rho$ satisfies the maximally mixed boundary conditions at $\caS$ in case $p_{\caS} = \star$, or maximally mixed boundary conditions of type $X$ at $\caS$ if $p_{\caS} = \star^{X}$.

\subsection{Characterization of skein subspaces} \label{subsec:characterization of skein subspaces}

Let $C$ be a finite connected region. Then there is $m \in \N_0$ so that $\Sigma_C$ is homeomorphic to the sphere with $m+1$ holes cut out. An anchor $\anchor$ for $\Sigma_C$ is said to be an anchor for $C$ if the induced enumeration $\{ \caS_{\kappa}^{\anchorino}\}_{\kappa = 0}^m$ is such that $\caS^{\anchorino}_0$ is the outer boundary of $\sigma_C$, the outer boundaries of the surfaces $\Sigma_C$ play a distinguished role and we find it convenient to replace $\caC_{\anchorino}(\Sigma_C)$ by
\begin{align} \label{eq:great interface space}
    \begin{split}
    \caC^*_{\anchorino}(\Sigma_C) :=  \bigoplus_{X_0, \cdots, X_m \in \Irr ( Z(\caC) )}  Z(\caC)( &X_0^* \rightarrow X_1 \otimes \cdots \otimes X_m) \\ & \otimes \caC( X_0 \rightarrow \chi^{ \otimes \caS^{\anchorino}_0} ) \otimes \cdots \otimes \caC(X_m \rightarrow \chi^{\otimes\caS^{\anchorino}_m}),
    \end{split}
\end{align}
which is naturally isomorphic to $\caC_{\anchorino}(\Sigma_C)$ by composing elements of the factor $Z(\caC)( X_0^* \rightarrow X_1 \otimes \cdots \otimes X_m)$ with a duality morphism.
Composing with the isomorphism of \Cref{prop:characterisation of skein modules}, this defines a map $\Phi_{\Sigma_C}^{\anchorino,*} : \caC^*_{\anchorino}(\Sigma_C) \rightarrow A(\Sigma_C)$, with
\begin{equation} \label{eq:modified Phi}
    \Phi_{\Sigma_C}^{\anchorino, *} \big( \al \otimes w_0 \otimes \cdots \otimes w_m \big) := 
    \Phi_{\Sigma_C}^{\anchorino} \big( (( \id_{X_0} \otimes \alpha ) \circ \coev_{X_0}) \, \otimes w_0 \otimes \cdots \otimes w_m   \big)
\end{equation}
for any $\al \in Z(\caC)( X_0^* \rightarrow X_1 \otimes \cdots \otimes X_m )$, and any boundary conditions $w_{\kappa} \in \caC(X_{\kappa} \rightarrow \chi^{\otimes\caS^{\anchorino}_{\kappa}})$ for $\kappa = 0, \cdots, m$.
Combined with the isomorphism of \Cref{prop:isomorphism of skein subspace and skein module}, $H_C \simeq A(\Sigma_C)$, this gives:
\begin{proposition} \label{prop:the great interface}
    Let $C$ be a finite connected region and let $\anchor$ be an anchor for $C$. Then the map
    \begin{equation} \label{eq:great interface isomorphism}
        \Psi_C^{\anchorino} := \sigma_C^{-1} \circ \Phi_C^{\anchorino, *} : \caC^*_{\anchorino}(\Sigma_C) \rightarrow H_C
    \end{equation}
    is a unitary isomorphism of  $A(\partial \Sigma_C)$-modules.
\end{proposition}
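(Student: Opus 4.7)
The plan is to factor $\Psi_C^{\anchorino}$ as a composition
\begin{equation*}
\caC^*_{\anchorino}(\Sigma_C) \xrightarrow{D} \caC_{\anchorino}(\Sigma_C) \xrightarrow{\Phi_{\Sigma_C}^{\anchorino}} A(\Sigma_C) \xrightarrow{(\sigma_C|_{H_C})^{-1}} H_C,
\end{equation*}
where $D$ is the duality map that acts on each summand as $\alpha \mapsto (\id_{X_0} \otimes \alpha) \circ \coev_{X_0}$ on the factor $Z(\caC)(X_0^* \rightarrow X_1 \otimes \cdots \otimes X_m)$ and as the identity on each $\caC(X_\kappa \rightarrow \chi^{\otimes \caS_\kappa^{\anchorino}})$ factor. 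By Eq. \eqref{eq:modified Phi}, this factorisation recovers the definition of $\Psi_C^{\anchorino}$. The outer two arrows are unitary isomorphisms of $A(\partial \Sigma_C)$-modules by Proposition \ref{prop:characterisation of skein modules} and Proposition \ref{prop:isomorphism of skein subspace and skein module} respectively, so the content of the argument reduces entirely to analysing $D$.

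The $A(\partial \Sigma_C)$-equivariance of $D$ is essentially built in: the action on $\caC_{\anchorino}(\Sigma_C)$ defined by Eq. \eqref{eq:Tube_n_module} is supported on the factors $\caC(X_\kappa \rightarrow \chi^{\otimes \caS_\kappa^{\anchorino}})$, which $D$ leaves untouched, and the module structure on $\caC^*_{\anchorino}(\Sigma_C)$ is defined by the same formulas on those same factors. That $D$ is a linear bijection is the standard duality adjunction in $Z(\caC)$, with inverse $\tilde\alpha \mapsto (\ev_{X_0} \otimes \id_{X_1 \otimes \cdots \otimes X_m}) \circ (\id_{X_0^*} \otimes \tilde\alpha)$.

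The main (and only nontrivial) step is verifying that $D$ preserves the trace inner products on $Z(\caC)$-Hom spaces, since everything else then follows by composition. For $\alpha, \beta \in Z(\caC)(X_0^* \rightarrow X_1 \otimes \cdots \otimes X_m)$, unfolding the definitions gives
\begin{equation*}
(D\alpha, D\beta)_{\tr} = \tr_{\I}\!\left( \coev_{X_0}^{\dag} \circ (\id_{X_0} \otimes (\beta^{\dag} \circ \alpha)) \circ \coev_{X_0} \right),
\end{equation*}
and the right-hand side reduces to $\tr(\beta^{\dag} \circ \alpha) = (\alpha, \beta)_{\tr}$ via the identity $\coev_{X_0}^{\dag} \circ (\id_{X_0} \otimes \gamma) \circ \coev_{X_0} = \tr(\gamma) \, \id_{\I}$, valid for every $\gamma : X_0^* \rightarrow X_0^*$ in the unitary spherical setting of $\caC$ (one checks it on $\gamma = \id_{X_0^*}$ using $\coev_{X_0}^{\dag} \circ \coev_{X_0} = d_{X_0}\,\id_{\I}$ and extends by linearity plus simplicity-reduction, or equivalently invokes cyclicity of the spherical trace). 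Composing the three unitary $A(\partial \Sigma_C)$-module isomorphisms then yields that $\Psi_C^{\anchorino}$ is itself a unitary isomorphism of $A(\partial \Sigma_C)$-modules.
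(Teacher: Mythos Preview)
Your proof is correct and follows exactly the paper's approach: the paper simply states that $\caC^*_{\anchorino}(\Sigma_C)$ is ``naturally isomorphic'' to $\caC_{\anchorino}(\Sigma_C)$ via the duality map and then invokes Propositions~\ref{prop:characterisation of skein modules} and~\ref{prop:isomorphism of skein subspace and skein module}. You are in fact more careful than the paper, which does not spell out why the duality map $D$ is unitary and $A(\partial\Sigma_C)$-equivariant; your verification of the identity $\coev_{X_0}^{\dag} \circ (\id_{X_0} \otimes \gamma) \circ \coev_{X_0} = \tr(\gamma)\,\id_{\I}$ fills that gap. (Minor quibble: with the paper's convention $(f,g)_{\tr} = \tr(f^{\dag} \circ g)$, your displayed formula should have $\alpha^{\dag} \circ \beta$ rather than $\beta^{\dag} \circ \alpha$, but the argument is unaffected.)
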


We can now characterise the spaces $\caD_C( \underline p )$ as follows:

\begin{lemma} \label{lem:characterization of some caDs}
	Let $C$ be a finite region and let $\underline p = \{ p_{\caS} \}_{\caS \in \Bd(\Sigma_C)}$ with each $p_{\caS}$ either a minimal projector of type $X_{\caS} \in \Irr Z(\caC)$, or a maximally mixed boundary condition $\star^{X_{\caS}}$ of type $X_{\caS}$. Let $\anchor$ be an anchor for $C$. Then $\caD_C( \underline p )$ is isomorphic to the space of density matrices on $Z(\caC)( X^*_{\caS^{\anchorino}_0} \rightarrow \bigotimes_{\kappa = 1}^{m} \, X_{\caS^{\anchorino}_{\kappa}} )$.
\end{lemma}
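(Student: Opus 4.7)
The plan is to transport the problem to the algebraic model $\caC^*_\anchorino(\Sigma_C)$ via \Cref{prop:the great interface} and then use the irreducibility of the factors to decouple the boundary constraints from a central Hom-space.

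First, I would apply $\Psi_C^\anchorino$ to identify $\caD_C(\underline p)$ with the space of density matrices on $\caC^*_\anchorino(\Sigma_C)$ satisfying the conditions obtained by pulling back $\underline p$ along the intertwiner. Writing
\begin{equation*}
    \caC^*_\anchorino(\Sigma_C) = \bigoplus_{X_0, \ldots, X_m} V_0(\underline X) \otimes \bigotimes_{\kappa = 0}^m V_\kappa(X_\kappa),
\end{equation*}
where $V_0(\underline X) = Z(\caC)(X_0^* \to X_1 \otimes \cdots \otimes X_m)$ and $V_\kappa(X_\kappa) = \caC(X_\kappa \to \chi^{\otimes \caS^\anchorino_\kappa})$, one sees from the definition of the $\Tube$-actions $\triangleright_\kappa$ (Eq. \eqref{eq:Tube_n_module}) that $\Tube_{\caS^\anchorino_\kappa}$ acts through its irreducible representation on the factor $V_\kappa(X_\kappa)$ of type $X_\kappa$, and acts trivially on all other factors of $V_0(\underline X) \otimes \bigotimes_\kappa V_\kappa(X_\kappa)$.

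Next, since every $p_{\caS^\anchorino_\kappa}$ (a minimal projector of type $X_{\caS^\anchorino_\kappa}$ or a star condition $\star^{X_{\caS^\anchorino_\kappa}}$) is dominated by the central projector $P^{X_{\caS^\anchorino_\kappa}}$, the trace condition forces the density matrix to be supported on the single summand $\underline X = (X_{\caS^\anchorino_0}, \ldots, X_{\caS^\anchorino_m})$. On this summand, each $V_\kappa(X_{\caS^\anchorino_\kappa})$ is an irreducible $\Tube_{\caS^\anchorino_\kappa}$-module, so the constraint on the $\kappa$-th slot pins down a fixed density matrix $\tau_\kappa$ on $V_\kappa$: if $p_{\caS^\anchorino_\kappa}$ is a minimal projector, it acts as a rank-one projector on $V_\kappa$ and the trace condition forces $\tau_\kappa$ to be the unit density matrix on the corresponding line; if $p_{\caS^\anchorino_\kappa} = \star^{X_{\caS^\anchorino_\kappa}}$, then by Schur's lemma the invariance under all unitaries of $\Tube_{\caS^\anchorino_\kappa}$ forces $\tau_\kappa = \id_{V_\kappa}/\dim V_\kappa$.

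Combining these observations, every $\rho \in \caD_C(\underline p)$ takes the form $\rho = \rho_0 \otimes \bigotimes_{\kappa=0}^m \tau_\kappa$ with the $\tau_\kappa$ fixed by $\underline p$, and $\rho_0$ an arbitrary density matrix on $V_0(\underline X) = Z(\caC)(X^*_{\caS^\anchorino_0} \to \bigotimes_{\kappa=1}^m X_{\caS^\anchorino_\kappa})$. The map $\rho \mapsto \rho_0$ is the desired isomorphism. The only substantive point is the verification that the $\Tube_{\caS^\anchorino_\kappa}$-actions really decouple from the central factor $V_0(\underline X)$; this is immediate from the explicit formula \eqref{eq:Phi_anchor_defined} for $\Phi_\Sigma^\anchorino$, in which the morphism $\al$ sits at the anchor point and is unaffected by the cylinder gluing at each boundary component.
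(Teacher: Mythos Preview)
Your proposal is correct and follows essentially the same route as the paper: transport to $\caC^*_{\anchorino}(\Sigma_C)$ via Proposition~\ref{prop:the great interface}, use the type constraints to land on a single summand, and then exploit irreducibility of each $\Tube_{\caS^{\anchorino}_\kappa}$-module $V_\kappa(X_\kappa)$ to force the density matrix to factor off a fixed $\tau_\kappa$ in each boundary slot. The paper organizes this as a case analysis (all minimal projectors first, then one $\star$, then multiple $\star$'s ``similarly''), whereas you treat all slots uniformly; the only place to be careful is that in the $\star$ case the conclusion $\rho = \rho_0 \otimes \tau_\kappa \otimes \cdots$ comes from the commutant argument (invariance under $\id \otimes U$ forces $\rho \in \caB(\cdot) \otimes \C\id_{V_\kappa}$), not merely from knowing the marginal on $V_\kappa$---but this is exactly what your invocation of Schur's lemma is meant to express, and it matches the paper's reasoning verbatim.
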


\begin{proof}
    If each $p_{\caS}$ is a minimal projector of type $X_{\caS}$, then Proposition \ref{prop:the great interface} yields an isomorphism, 
    $$ H_C(\underline p) \stackrel{\Psi_C^{\anchorino}}{\simeq} Z(\caC)( X_{\caS^{\anchorino}_0} \rightarrow \bigotimes_{\kappa = 1}^{m} \, X_{\caS^{\anchorino}_{\kappa}} ),$$
    and the claim follows.

	Let us now consider the case with one maximally mixed boundary condition $p_{\caS_*} = \star^{X_{\caS_*}}$ and all other $p_{\caS}$ given by minimal projections of types $X_{\caS}$. Proposition \ref{prop:the great interface} yields an isomorphism of $\Tube$-modules
	\begin{align*} 
		H_C(\underline p) \stackrel{\Psi_C^{\anchorino}}{\simeq} Z(\caC)&( X^*_{\caS^{\anchorino}_0} \rightarrow \bigotimes_{\kappa = 1}^{m} \, X_{\caS^{\anchorino}_{\kappa}} ) \otimes \caC( X_{\caS_*} \rightarrow \chi^{\otimes\caS_*} ).
	\end{align*}
    Density matrices $\rho \in \caD_C(\underline p)$ are supported on $H_C(\underline p)$ and therefore in one-to-one correspondence with density matrices $\rho' = \big( \Psi_C^{\anchorino}|_{H_C(\underline{p})} \big)^{-1} \,\rho \, \Psi_C^{\anchorino}|_{H_C(\underline{p})}$ acting on $Z(\caC)( X^*_{\caS^{\anchorino}_0} \rightarrow \bigotimes_{\kappa = 1}^{m} \, X_{\caS^{\anchorino}_{\kappa}} ) \otimes \caC( X_{\caS_*} \rightarrow \chi^{\otimes\caS_*} )$. The algebra $\Tube_{\caS_*}$ acts on the latter space by arbitrary unitaries of the form $\I \otimes \, U$ since the representation on $\caC( X_{\caS_*} \rightarrow \chi^{\otimes\caS_*} )$ is irreducible. The maximally mixed boundary condition $\star^{X_{\caS_*}}$ asserts that $\rho'$ commutes with all unitaries of this form. This implies that $\rho' = \sigma \otimes \I$ for a density matrix $\sigma$ supported on the space $Z(\caC)( X^*_{\caS^{\anchorino}_0} \rightarrow \bigotimes_{\kappa = 1}^{m} \, X_{\caS^{\anchorino}_{\kappa}} )$. This shows that the space of density matrices $\caD_C(\underline{p})$ is isomorphic to the space of density matrices on $Z(\caC)( X^*_{\caS^{\anchorino}_0} \rightarrow \bigotimes_{\kappa = 1}^{m} \, X_{\caS^{\anchorino}_{\kappa}} )$, as required.
    
	The case with multiple maximally mixed boundary conditions is obtained similarly.
\end{proof}

\begin{remark} \label{rem:particles and fusion in entanglement bootstrap program}
    The spaces $\caD_C(\underline p)$, with each $p_{\caS} = \star$ given by the maximally mixed boundary condition, are precisely the information convex sets of the entanglement bootstrap program \cite{shi2020fusion}. Indeed, the information convex sets of \cite{shi2020fusion} consist of density matrices supported on some region $C$, obtained by tracing out the boundary degrees of freedom of density matrices in $\caD_{\widetilde C}$, where $\widetilde C$ is a slight `fattening' of $C$. It follows from Lemma \ref{lem:restriction yields maximally mixed boundary conditions} below that such density matrices satisfy maximally mixed boundary conditions on all boundary components.
    
    By applying Lemma \ref{lem:characterization of some caDs} to annuli and twice punctured disks, we therefore obtain the simple anyon types and the fusion rules of the Levin-Wen model in the framework of the entanglement bootstrap program.
\end{remark}

\subsection{Restrictions yield maximally mixed boundary conditions}

Maximally mixed boundary conditions arise when density matrices supported on skein subspaces of some region are restricted to certain smaller regions.
\begin{lemma} \label{lem:restriction yields maximally mixed boundary conditions}
    Let $C' \subset C$ be finite regions such that $\Sigma_C \setminus \Sigma_{C'}$ is an annulus. Then there are unique boundary components $\caS \in \Bd(\Sigma_C)\setminus \Bd(\Sigma_{C'})$, and $\caS' \in \Bd(\Sigma_{C'})\setminus \Bd(\Sigma_{C})$.
    Suppose $\rho \in \caD_C$ satisfies boundary conditions of type $X \in \Irr Z(\caC)$ at boundary component $\caS$. Let $\rho' = \Tr_{C_0 \setminus C'_0} \lbrace \rho \rbrace$ be the restriction of $\rho$ to the region $C'$. Then $\rho' \in \caD_{C'}$ satisfies maximally mixed boundary conditions of type $X$ at boundary component $\caS'$.
\end{lemma}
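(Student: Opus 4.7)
The strategy is to verify directly each of the three conditions characterizing a maximally mixed boundary of type $X$ at $\caS'$: that $\rho'$ is supported on $H_{C'}$, that $\Tr\{\rho'\frt_{\caS'}(P^X)\} = 1$, and that $\frt_{\caS'}(u)\rho'\frt_{\caS'}(u)^* = \rho'$ for every unitary $u \in \Tube_{\caS'}$. The first is immediate: since $E_{C'} \subset E_C$ and $F_{C'} \subset F_C$, every projector $A_e$ with $e \in E_{C'}$ and every $B_f$ with $f \in F_{C'}$ lies in $\caA_{C'}$ and satisfies $A_e\rho = B_f \rho = \rho$; because the partial trace $\Tr_{V_C \setminus V_{C'}}$ commutes with multiplication by elements of $\caA_{C'}$, the same identities pass to $\rho'$.

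For the remaining two conditions the key fact is that $\frt_{\caS'}(a) \in \caA_{C'^{\caS'}} \subset \caA_{C'}$, so again commutes with $\Tr_{V_C \setminus V_{C'}}$ and hence
\begin{align*}
\Tr\{\rho'\,\frt_{\caS'}(a)\} &= \Tr\{\rho\,\frt_{\caS'}(a)\}, \\
\frt_{\caS'}(u)\,\rho'\,\frt_{\caS'}(u)^* &= \Tr_{V_C \setminus V_{C'}}\{\frt_{\caS'}(u)\,\rho\,\frt_{\caS'}(u)^*\}.
\end{align*}
Both properties for $\rho'$ therefore reduce to corresponding statements about $\rho$. I would then argue that, on $H_C$, the operator $\frt_{\caS'}(a)$ admits a skein-module interpretation: under the isomorphism $\sigma_C : H_C \to A(\Sigma_C)$ of Proposition \ref{prop:isomorphism of skein subspace and skein module}, it corresponds to inserting the cylinder $a$ in a collar of the \emph{internal} circle $\caS' \subset \Sigma_C$ (on the $C'$ side). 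Isotopy within the annulus $\Sigma_C \setminus \Sigma_{C'}$ lets one move this insertion to an arbitrarily thin collar of the genuine boundary $\caS$ of $\Sigma_C$, and by the characterization of Proposition \ref{prop:characterisation of skein modules} this yields an identification with an action of $\Tube_\caS$.

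Specializing to the central element $P^X$, isotopy through the annulus identifies $\frt_{\caS'}(P^X)|_{H_C}$ with $\frt_\caS(P^X)|_{H_C}$, since central projectors of type $X$ in $\Tube_{\caS'}$ and $\Tube_\caS$ correspond under any such annular transport. Using the type-$X$ hypothesis $\frt_\caS(P^X)\rho = \rho$, this gives $\frt_{\caS'}(P^X)\rho = \rho$ and hence $\Tr\{\rho'\,\frt_{\caS'}(P^X)\} = 1$. For the invariance under conjugation by a general unitary $u \in \Tube_{\caS'}$ the corresponding transported element $\tilde u \in \Tube_\caS$ is not central, so one cannot conclude $\frt_{\caS'}(u)\rho\frt_{\caS'}(u)^* = \rho$ pointwise; instead one uses the decomposition of Proposition \ref{prop:the great interface} applied with an anchor whose edge to $\caS$ crosses the annulus radially. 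Inside the type-$X$ sector the Tube-$\caS$ action is the standard irreducible representation on $\caC(X \to \chi^{\otimes \caS})$, and tracing out the annular vertices collapses this factor to its normalized matrix trace, which is invariant under conjugation by any unitary. This gives $\Tr_{V_C \setminus V_{C'}}\{\frt_{\caS'}(u)\,\rho\,\frt_{\caS'}(u)^*\} = \rho'$ as required.

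The main obstacle will be making the isotopy-and-trace argument rigorous at the level of the physical partial trace. Concretely, one needs to show that the partial trace $\Tr_{V_C \setminus V_{C'}}$, when restricted to the skein subspace $H_C$ and transported to $A(\Sigma_C)$ via $\sigma_C$, implements the TQFT-type gluing/cap of the annular factor in the decomposition $A(\Sigma_C) \simeq A(\Sigma_{C'}) \otimes_{\Tube_{\caS'}} A(A)$, yielding on the type-$X$ sector a tensor product of a reduced state on the $Z(\caC)(\I \to X \otimes \bigotimes_\kappa X_{\caS_\kappa})$ factor with the normalized identity on $\caC(X \to \chi^{\otimes\caS'})$. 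By Lemma \ref{lem:characterization of some caDs} this is exactly the content of the maximally mixed boundary condition of type $X$ at $\caS'$, finishing the proof.
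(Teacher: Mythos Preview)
Your outline is correct through the first reduction: the $\rho'\in\caD_{C'}$ check is fine, and you correctly observe that $\frt_{\caS'}(a)\in\caA_{C'}$ commutes with $\Tr_{V_C\setminus V_{C'}}$. But the remainder has a genuine gap, and the paper's proof supplies exactly the missing idea.

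The paper's key step is the identity
\[
B_C\,\frt_{\caS'}(a) \;=\; B_C\,\frt^{\op}_{\hat\caS'}(a)
\]
for all $a\in\Tube_{\caS'}$, proved directly via the cloaking property of the dotted loops in the faces sitting on $\caS'$. Here $\frt^{\op}_{\hat\caS'}(a)$ is supported on the \emph{annular} region $C''$ (the cells in $\Sigma_C\setminus\Sigma_{C'}$), not on $C'$. This lets one convert the $C'$-supported action into a $C''$-supported one on $H_C$. Since $\rho=B_C\rho=\rho B_C$, cyclicity of the partial trace over $\caH_{C''}$ then gives $\rho'\frt_{\caS'}(a)=\frt_{\caS'}(a)\rho'$ for every $a$, and the maximally mixed condition follows immediately from commuting with the whole algebra. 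The type-$X$ statement is obtained from the same identity specialized to $a=P^X$, using $(P^X)^{\op}=P^{\bar X}$ and Proposition~\ref{prop:the great interface} applied to the annulus $C''$.

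Your approach tries to bypass this by transporting $\frt_{\caS'}$ to the genuine boundary $\caS$ via isotopy in $A(\Sigma_C)$ and then arguing that the physical partial trace collapses the $\caC(X\to\chi^{\otimes\caS})$ factor in the $\Psi_C^{\anchorino}$ decomposition to its normalized trace. But, as you yourself flag, this requires knowing that the spatial partial trace $\Tr_{V_C\setminus V_{C'}}$ implements the TQFT gluing of the annular factor---a fact that is not established anywhere in the paper and is not obvious, since $\Psi_C^{\anchorino}$ is highly non-local. Your intermediate claim that $\frt_{\caS'}(P^X)|_{H_C}=\frt_{\caS}(P^X)|_{H_C}$ is also not literally true (the two operators live on disjoint collar supports); what is true is that they have the same expectation in any $\rho\in\caD_C$, but proving this again needs the $\frt_{\caS'}\leftrightarrow\frt^{\op}_{\hat\caS'}$ identity or an equivalent. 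So the clean way forward is to prove that identity first; once you have it, both conclusions drop out in a few lines without any appeal to how the partial trace interacts with the skein-module isomorphism.
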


\begin{proof}
    By hypothesis $\Sigma_C \setminus \Sigma_{C'}$ has two connected boundary components $\caS$ and $\hat \caS'$ corresponding to those of $\Sigma_C$ and $\Sigma_{C'}$ as stated. 
    Let $C''$ be the  region consisting of all faces, edges, and vertices contained in the annulus $\Sigma_C \setminus \Sigma_{C'}$.
    
    Consider a product state $\phi = \phi_{x'} \otimes \phi_{x''} \in H_{C'_1} \otimes H_{C''_1}$ corresponding to string diagrams $x'$ on $\Sigma_{C'_1}$ and $ x''$ on $\Sigma_{C''_1}$, and $a\in \Tube_{\caS'}$ corresponding to a string diagram $y$ on $\caS' \times I$. In the case of an inner boundary, we may illustrate the situation as follows (recall that we use Convention \ref{conv:graphical representation} to interpret string diagrams as vectors in $H_{C_1}$):
    \begin{equation*}
	   \phi := \adjincludegraphics[valign=c, width = 4.0cm]{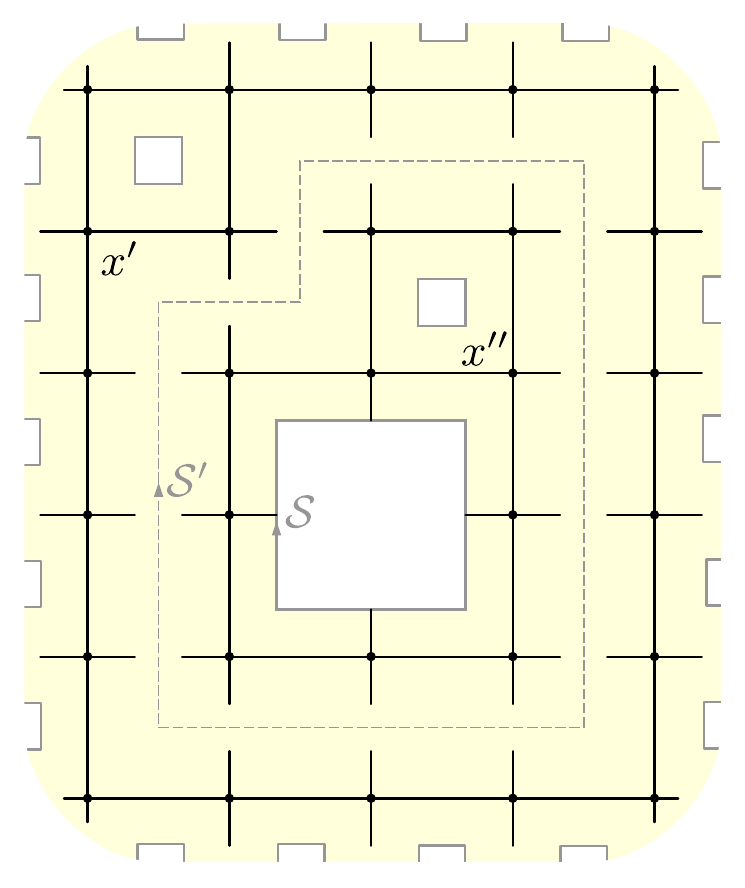}, \quad a := \left( \frac{d_{y_{\caS'}}}{d_{y_{\hat \caS'}}} \right)^{1/4} \times \, \adjincludegraphics[valign=c, width = 2.5cm]{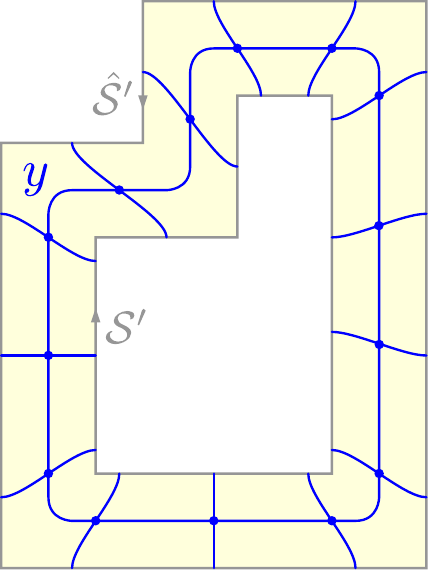}.
    \end{equation*}
    The normalization factor of $a$ is chosen to cancel the factor depending on boundary conditions (Eq. \eqref{eq:graphical frt action}) coming from the actions $\frt_{\caS'}$ and $\frt^{\op}_{\hat \caS'}$, so that we can represent, 
    \begin{equation*}
        B_C \frt_{\caS'}(a) \phi = \adjincludegraphics[valign=c, width = 4.0cm]{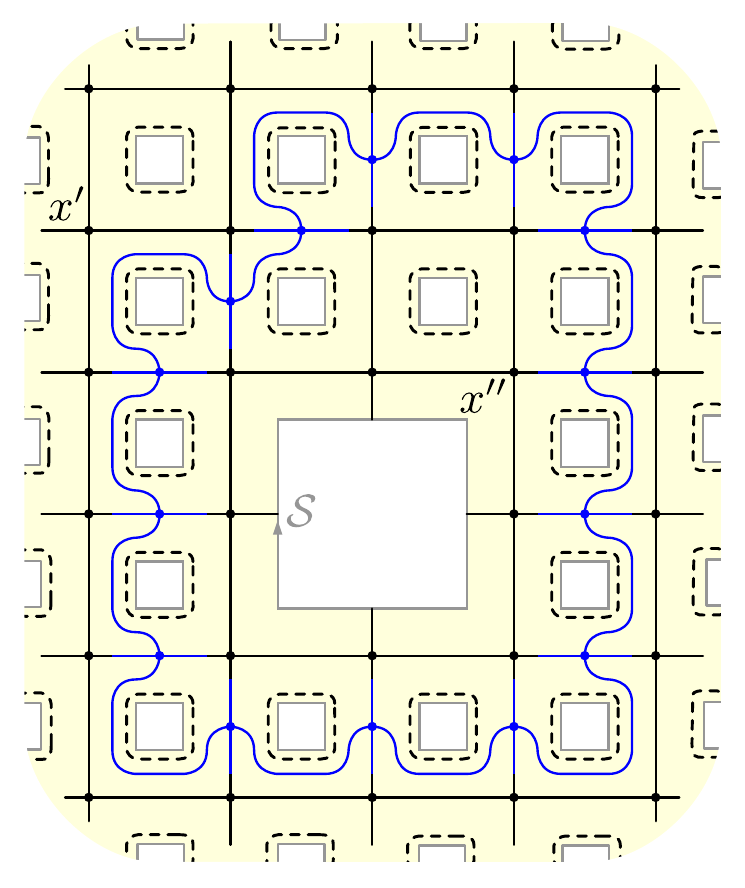}, \quad B_C \frt^{\op}_{\hat \caS'}(a) \phi = \adjincludegraphics[valign=c, width = 4.0cm]{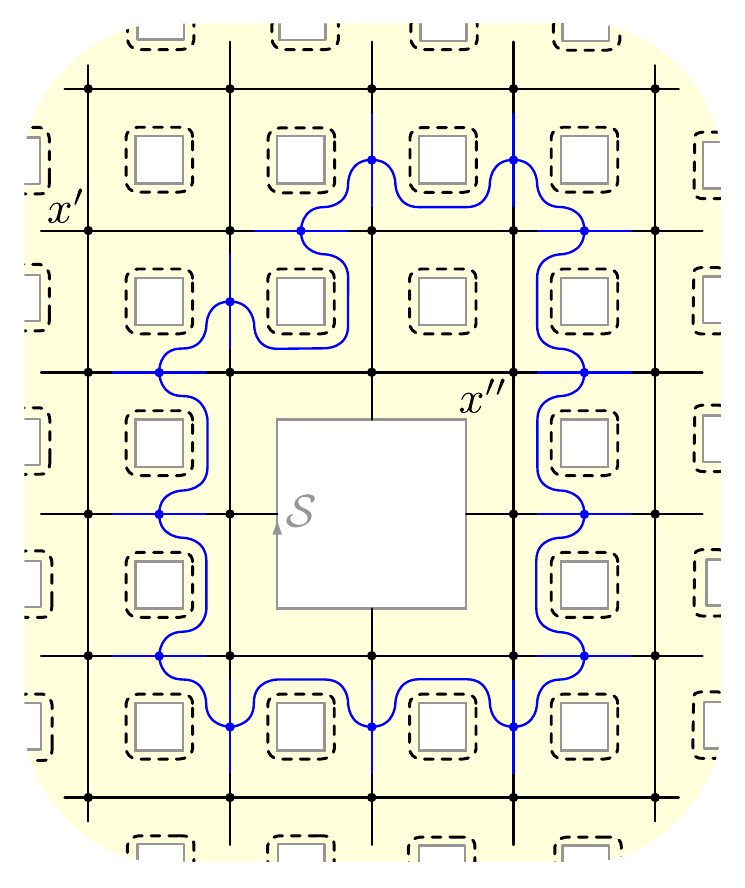}.
    \end{equation*}
    Note that both these vectors are zero if any of the boundary conditions of the string diagrams $x', x''$ and $y$ do not match where they are glued together. 
    By using the cloaking property of the dotted loops within each face that sits on $\caS'$,     we find that $B_C \frt_{\caS'}(a) \phi = B_C {\frt}^{\op}_{\hat \caS'}(a) \phi$. By linearity, this equality extends to any $a \in \Tube_{\caS'}$ and any $\phi \in H_{C'_1} \otimes H_{C''_1}$. In fact, this equality holds for any $\phi \in \caH_C$ because if any string-net constraint in $C'$ or in $C''$ is violated, then $\phi$ is annihilated either by a $\Tube$-action or by $B_C$. We conclude that
    \begin{equation}\label{eq:Tube agrees with op on B_C}
        B_C \frt_{\caS'}(a) = B_C \frt^{\op}_{\hat \caS'}(a).
    \end{equation} 
    By taking adjoints, we find that also
    $  \frt_{\caS'}(a) B_C =  \frt^{\op}_{\hat \caS'}(a) B_C.$
    Using these relations, the fact that $\rho = \rho B_C = B_C \rho$, and the fact that $\frt_{\caS'}(a)$ is supported on $C'$ while $\frt^{\op}_{\hat \caS'}(a)$ is supported on $C''$, we find that
    \begin{align*}
        \rho' \, \frt_{\caS'}(a) &= \Tr_{\caH_{C''}} \lbrace  \rho \frt_{\caS'}(a)  \rbrace = \Tr_{\caH_{C''}} \lbrace \rho B_C \frt_{\caS'}(a) \rbrace  = \Tr_{\caH_{C''}} \lbrace \rho B_C \frt^{\op}_{\hat \caS'}(a) \rbrace \\
        &= \Tr_{\caH_{C''}} \lbrace \frt^{\op}_{\hat \caS'}(a) B_C \rho \rbrace = \Tr_{\caH_{C''}} \lbrace \frt_{\caS'}(a) \rho \rbrace = \frt_{\caS'}(a) \rho'
    \end{align*}
    for any $a\in \Tube_{\caS'}$. 
    Thus $\rho'$ belongs to the commutant of $\frt_{\caS'}( \Tube_{\caS'} )$ in $\caB(\caH_{C'})$ which immediately implies $\rho' = \frt_{\caS'}(u) \rho' \, \frt_{\caS'}(u^*)$ for any unitary $u \in \Tube_{\caS'}$.

    Let us finally show that $\rho'$ satisfies boundary conditions of type $X$ at the boundary component $\caS'$. By assumption, we have $\Tr_{\caH_C} \lbrace \rho \frt_{\caS}(P^X) \rbrace = 1$. Writing $\rho'' = \Tr_{\caH_{C'}} \lbrace \rho \rbrace$ for the restriction of $\rho$ to the annular region $C''$, we immediately obtain that $\Tr_{\caH_{C''}} \lbrace \rho'' \, \frt_{\caS}(P^X) \rbrace = 1$. It follows that $\rho''$ belongs to $\caD_{C''}( P^X, \id)$, where the first slot corresponds to the boundary component $\caS$ and the second slot corresponds to the boundary component $\hat \caS'$.
    By Proposition \ref{prop:the great interface}, in fact $\rho'' \in \caD_{C''}(P^X, P^{\bar X})$, meaning that $\Tr_{\caH_{C''}} \lbrace \rho'' \, \frt_{\hat \caS'}(P^{\bar X}) \rbrace = 1$. 
    
    Using \eqref{eq:Tube agrees with op on B_C}, and the fact that $(P^X)^{\op} = P^{\bar X}$ for any minimal central projector $P^X$ of a $\Tube$ algebra, we get
    \begin{align*}
        \Tr_{\caH_{C'}} \lbrace \rho' \, \frt_{\caS'}(P^{X}) \rbrace &=  
        \Tr_{\caH_C} \lbrace \rho \, B_C \, \frt_{\caS'}(P^{X}) \rbrace 
        = \Tr_{\caH_C} \lbrace \rho \, B_C \, \frt_{\hat \caS'}^{\op}(P^{X}) \rbrace \\
        &= \Tr_{\caH_C} \lbrace \rho \, B_C \, \frt_{\hat \caS'}(P^{\bar X}) \rbrace  
        = \Tr_{\caH_{C''}} \lbrace \rho'' \, \frt_{\hat \caS'}(P^{\bar X}) \rbrace = 1
    \end{align*}
    as desired.
\end{proof}

\subsection{Gluing along intervals} \label{ssubsec:gluing skein subspaces}

A dual path $I = \{f_i\}_{i = 1}^l$ is a sequence of faces such that $f_i$ neighbours $f_{i+1}$ for all $i = 1, \cdots, l-1$. This dual path is self-avoiding if all the $f_i$ are distinct. We write $\partial_{\ii} I = f_1$ and $\partial_{\f} I = f_l$. We also write $I_{\inn} = \{ f_i \}_{i = 2}^{l-1}$ for the dual path obtained from $I$ by removing the first and last face. ($I_{\inn}$ can be empty). We write $\caE_I \subset \caE$ for the set of edges which lie between successive faces of $I$. 

Let $D$ be a finite connected region. We say $D$ can be cut along a self-avoiding dual path $I$ if all faces of $I_{\inn}$ and all edges in $\caE_I$ belong to $D$, and the initial and final faces $\partial_{\ii} I$ and $\partial_\f I$ of $I$ are distinct and sit on the outer boundary of $D$. Write $D \setminus I$ for the region obtained from $D$ by removing the faces in $I_{\inn}$ as well as all the edges between successive faces of $I$. Then $D \setminus I$ is the union of two uniquely determined disjoint finite connected regions $C$ and $C'$.  Moreover, each inner boundary component of $\Sigma_D$ is either an inner boundary component of $C$ or an inner boundary component of $C'$. We write $D = C \sqcup_I C'$ and say $D$ is obtained by gluing $C$ and $C'$ along $I$.

The subregions $C$ and $C'$ have associated surfaces $\Sigma_C$ and $\Sigma_{C'}$ whose outer boundary components both contain a boundary interval (see Section \ref{subsec:gluing}) $\caI$ and $\caI'$ respectively, determined by $I$ as follows. Let $m_{C, I}$ be the set of marked boundary points of $\Sigma_C$ which lie on one of the edges of $\caE_I$. Define $m_{C', I}$ similarly. Then we may define $\caI$ as the smallest closed subinterval of $\partial \Sigma_C$ that contains $B_{1/3}(m)$ for each $m \in m_{C, I}$, where $B_r(x) \subset \R^2$ is the closed ball or radius $r$ centred on $x \in \R^2$. The boundary interval $\caI'$ is defined similarly. Moreover, $\caI$ can be glued to $\caI'$ to obtain a new surface homeomorphic to $\Sigma_D$ as extended surfaces. We will identify the glued surface with $\Sigma_D$. Denote by $\gl : A(\Sigma_C \sqcup \Sigma_{C'}) \simeq A(\Sigma_C) \otimes A(\Sigma_{C'}) \rightarrow A(\Sigma_D)$ the associated gluing map as described in Section \ref{subsec:gluing}:
\begin{equation*}
    \adjincludegraphics[valign=c, height = 2.0cm]{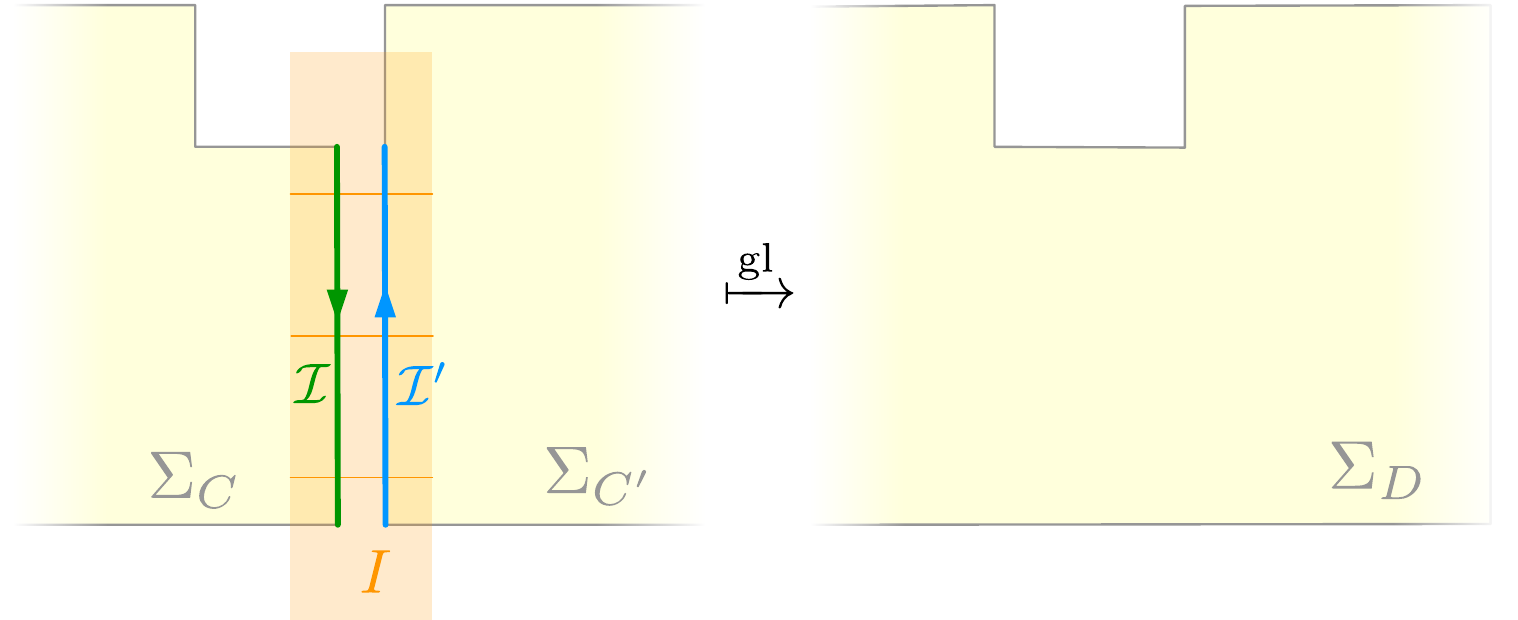} \quad.
\end{equation*}

Let $B_I = \prod_{f \in I_{\inn}} B_f$.
\begin{lemma} \label{lem:gluing skein subspaces}
    Let $D = C \sqcup_I C'$ be as above. For any $\phi \in H_{C_1}^{b}$ and $\phi' \in H_{C'_1}^{b'}$ we have
    \begin{equation}\label{eq:gluing skein modules}
        \big( \gl \circ (\sigma_C \otimes \sigma_{C'}) \big)( \phi \otimes \phi' ) = \caD^{\abs{I_{\inn}}} \, d_{b_{\caI}}^{-1/2} \, \sigma_D\big( B_I \, \phi \otimes \phi' \big).
    \end{equation}
\end{lemma}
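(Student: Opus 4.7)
My plan is to verify the equality on a spanning set of product vectors $\phi = \phi_x \in H_{C_1}^b$ and $\phi' = \phi_{x'} \in H_{C'_1}^{b'}$, which correspond via $\pi_{C_1}$ and $\pi_{C'_1}$ to string diagrams $x$ on $\Sigma_{C_1}$ and $x'$ on $\Sigma_{C'_1}$. I will first dispose of the case in which the boundary labels of $b$ and $b'$ disagree on the paired edges of $\caE_I$. On the LHS, the gluing map $\gl$ of Section~\ref{subsec:gluing} annihilates such mismatched diagrams by definition. On the RHS, every edge of $\caE_I$ borders a face of $I_{\inn}$ as soon as $|I| \geq 3$, so the corresponding $B_f$ annihilates $\phi\otimes\phi'$ because $\frt_f$ vanishes off the string-net subspace of its collar. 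In the degenerate case $|I|=2$ one has $I_{\inn}=\emptyset$ and the statement is nontrivial only when the boundary labels already agree.

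Assuming the boundary labels match, $\phi\otimes\phi' \in H_{D_1}$ and $\pi_{D_1}(\phi\otimes\phi') = [y]_{\Sigma_{D_1}}$ for the combined string diagram $y := x \cup x'$. Unpacking $\sigma_C = \caD^{-|F_C|}\iota_C\circ\sigma_{C_1}$ (and likewise for $C'$) and using that $\gl([x]_{\Sigma_C}\otimes[x']_{\Sigma_{C'}}) = [y]_{\Sigma_D}$, the LHS becomes $\caD^{-|F_C|-|F_{C'}|}\, d_{\partial l}^{-1/4}\, d_{\partial l'}^{-1/4}\,[y]_{\Sigma_D}$. For the RHS, the graphical formula \eqref{eq:B_f graphical definition} shows that $B_I\phi_y$ corresponds under $\pi_{D_1}$ to $y$ with a dotted loop inserted around each $f\in I_{\inn}$. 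Applying $\sigma_{D_1}$ supplies a factor of $d_{\partial l_y}^{-1/4}$, and $\iota_D$ then fills in the holes at the faces of $I_{\inn}$. Each of the newly inserted dotted loops now encircles an empty disk in $\Sigma_D$ and disappears by the normalization property in \eqref{eq:normalization and absorption properties}, leaving $\sigma_D(B_I(\phi\otimes\phi')) = \caD^{-|F_D|}\,d_{\partial l_y}^{-1/4}\,[y]_{\Sigma_D}$.

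It remains to match prefactors. The face count $|F_D| = |F_C|+|F_{C'}|+|I_{\inn}|$ converts $\caD^{|I_{\inn}|-|F_D|}$ on the RHS into the $\caD$-power on the LHS. For the quantum dimensions, $\vec\partial D$ is obtained from $\vec\partial C \cup \vec\partial C'$ by removing the (oriented) edges along $\caE_I$, which become internal edges of $D$; the labels of the removed edges are exactly those of $b_{\caI}$ and $b_{\caI'}$, which coincide by assumption. Hence $d_{\partial l_y} = d_{\partial l}\, d_{\partial l'}/d_{b_{\caI}}^2$ and $d_{b_{\caI}}^{-1/2}\, d_{\partial l_y}^{-1/4} = d_{\partial l}^{-1/4}\, d_{\partial l'}^{-1/4}$, giving the claimed equality. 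The only genuinely delicate step is the bookkeeping of the quantum-dimension weights $d_{\partial l_y}$ under the passage from two separate boundaries to a joined one, which is exactly what the factor $d_{b_{\caI}}^{-1/2}$ in the statement is designed to absorb; everything else is a direct unpacking of the definitions of $\sigma_C$, $\sigma_D$, $B_I$, and the normalization of the dotted line.
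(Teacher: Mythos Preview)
Your proof is correct and follows essentially the same route as the paper: both reduce to product vectors, use that $\gl \circ (\pi_C \otimes \pi_{C'}) = \pi_D$ by construction, and then track the normalization factors $d_{\partial l}^{-1/4}$ and $\caD^{-|F|}$ relating the $\sigma$'s to the $\pi$'s. The only cosmetic difference is that the paper packages your ``dotted loops around $I_{\inn}$ contract after $\iota_D$'' step into an invocation of Lemma~\ref{lem:sigma_C absorbs B_C} (with $B_D$ in place of $B_I$), whereas you spell it out directly; your treatment of the mismatched-boundary $|I|=2$ corner case is a touch informal, but the paper is equally terse there (it simply inserts the $\delta_{b_{\caI},\,b'_{\caI'}}$ by hand).
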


\begin{proof}
    By construction of the gluing map $\gl$ and \Cref{lem:sigma_C absorbs B_C} we have 
    $$\gl \circ (\pi_C \otimes \pi_{C'}) = \pi_D = \pi_D \circ B_D.$$
    To finish the proof it therefore suffices to verify that normalization factors relating the $\sigma$'s to the $\pi$'s depending on boundary conditions agree.
    For $\phi \in H^b_{C_1}$ and $\phi' \in H^{b'}_{C'_1}$ we find
    $$ \sigma_C( \phi ) = \caD^{-\abs{F_C}} \, d_b^{1/4} \, \pi_C(\phi), \quad  \sigma_{C'}( \phi' ) = \caD^{-\abs{F_{C'}}} \, d_{b'}^{1/4} \, \pi_{C'}(\phi),$$
    and
    $$ (\sigma_D \circ B_D)( \phi \otimes \phi' ) = \delta_{b_{\caI}, b'_{\caI'}} \, \caD^{-\abs{F_D}} \, \left(\frac{d_{b} d_{b'}}{ d_{b_{\caI}}^2 }\right)^{1/4} \, (\pi_D \circ B_D)( \phi \otimes \phi' ). $$
    The claim follows by noting that $\abs{F_D} - \abs{F_C} - \abs{F_{C'}} = \abs{I_{\inn}}$ and that $\gl$ also imposes compatible boundary conditions $\delta_{b_{\caI}, b'_{\caI'}}$.
\end{proof}

\section{Anyon States} \label{sec:anyon states}

\subsection{Unique anyon states from local constraints}

Let $C$ be an infinite region such that $\Sigma_C$ is homeomorphic to $\R^2$ with an open disk removed. The associated surface $\Sigma_C$ has a single connected boundary component $\caS$. We denote the corresponding $\Tube_{\caS}$ action by $\frt = \frt_{\caS}$ for the remainder of this section.

Note that Lemma \ref{lem:commutativity lemma} implies that $\frt(a)$ commutes with $B_f$ for all $a \in \Tube_{\caS}$ and all $f \in F_C$.

\begin{definition} \label{def:state spaces}
    Let $\caS_C$ be the set of states $\omega$ on $\caA_C$ for which $\omega(B_f) = 1$ for all $f \in F_C$.

    For any projector $p \in \Tube_{\caS}$ we let $\caS_C^p$ be the set of states $\omega$ on $\caA_C$ such that $\omega \in \caS_C$ and $\omega(\frt(p)) = 1$.

    We further define $\caS_C^{\star}$ to be the set of states $\omega \in \caS_C$ for which $\omega = \omega \circ \Ad[ \frt(u) ]$ for all unitaries $u \in \Tube_{\caS}$.
    
    For any $X \in \Irr Z(\caC)$ we let $\caS_C^{\star^X} =\caS_C^{\star}\cap \caS^{P^X}_C$. We say that $\omega\in \caS_C^{\star^X}$ satisfies maximally mixed boundary conditions of type $X$.
\end{definition}

\begin{proposition} \label{prop:unique anyon states}
    If $p \in \Tube_\caS$ is a minimal projector, then $\caS_C^p$ consists of a single pure state. Similarly, the state space $\caS_C^{\star^X}$ consists of a single state (which however need not be pure).
\end{proposition}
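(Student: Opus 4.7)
Fix an increasing exhaustion $C_1 \subset C_2 \subset \cdots$ of $C$ by finite connected subregions whose $\Sigma_{C_n}$ are annuli with $\caS$ as inner boundary and some compact outer boundary $\caS_n^{\out}$, and such that each $\Sigma_{C_{n+1}} \setminus \Sigma_{C_n}$ is itself an annulus; for example $C_n := C \cap [-n,n]^2$ for $n$ large. For any $\omega \in \caS_C^p$, write $\rho_n$ for the restriction of $\omega$ to $\caA_{C_n}$. The constraints $\omega(B_f) = 1$ for $f \in F_{C_n}$ and $\omega(\frt(p)) = 1$ force $\rho_n$ to be supported on the subspace of $H_{C_n}$ carrying boundary condition $p$ at $\caS$ and no condition at $\caS_n^{\out}$.

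Applying Proposition \ref{prop:the great interface}, this subspace decomposes as
\[
    \bigoplus_{Y \in \Irr Z(\caC)} Z(\caC)(Y^* \rightarrow X) \otimes \caC(Y \rightarrow \chi^{\otimes\caS_n^{\out}}),
\]
where $X \in \Irr Z(\caC)$ is the type of $p$. Since $X$ is simple, only $Y = \bar X$ contributes, so $\rho_n$ automatically satisfies a type-$\bar X$ boundary condition at $\caS_n^{\out}$. Applying Lemma \ref{lem:restriction yields maximally mixed boundary conditions} to the inclusion $C_n \subset C_{n+1}$, the type-$\bar X$ condition on $\rho_{n+1}$ at $\caS_{n+1}^{\out}$ yields maximally mixed boundary conditions of type $\bar X$ on $\rho_n$ at $\caS_n^{\out}$. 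Hence $\rho_n \in \caD_{C_n}(\star^{\bar X}, p)$, and by Lemma \ref{lem:characterization of some caDs} this set is in bijection with density matrices on the one-dimensional space $Z(\caC)(X \rightarrow X) \simeq \C$, and therefore contains a unique element. This determines $\rho_n$ uniquely, and since $\caA_C^{\loc} = \bigcup_n \caA_{C_n}$ is dense in $\caA_C$, the state $\omega$ is uniquely determined.

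For existence, define $\rho_n$ to be the unique element of $\caD_{C_n}(\star^{\bar X}, p)$. The family is compatible because $\Tr_{C_{n+1}\setminus C_n}\rho_{n+1}$ again lies in $\caD_{C_n}(\star^{\bar X}, p)$ by Lemma \ref{lem:restriction yields maximally mixed boundary conditions}, so uniqueness at level $n$ forces equality with $\rho_n$. The resulting compatible family defines a state $\omega$ on $\caA_C^{\loc}$ which extends by continuity to $\caA_C$ and satisfies all the required constraints. Purity in the minimal projector case follows from convexity of $\caS_C^p$: any decomposition $\omega = \lambda\omega_1 + (1-\lambda)\omega_2$ with $0 < \lambda < 1$ forces $\omega_i(B_f) = \omega_i(\frt(p)) = 1$ since $B_f$ and $\frt(p)$ are projectors, so $\omega_i \in \caS_C^p$, whence $\omega_i = \omega$ by uniqueness.

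The maximally mixed case $\caS_C^{\star^X}$ is argued identically, now giving $\rho_n \in \caD_{C_n}(\star^{\bar X}, \star^X)$ which by Lemma \ref{lem:characterization of some caDs} is again one-dimensional, yielding a unique state. Purity can fail here because the invariance constraint $\omega \circ \Ad[\frt(u)] = \omega$ for all unitaries $u \in \Tube_{\caS}$ need not be inherited by the summands of a convex decomposition. I expect the only real subtlety in the argument to be the fusion-rule step pinning the outer-boundary type to $\bar X$; once that is in hand, the proof is a mechanical application of the characterization results of Section \ref{sec:skein subspaces}.
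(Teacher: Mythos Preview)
Your proof is correct and follows essentially the same approach as the paper: exhaust $C$ by finite annular regions, use Lemma~\ref{lem:characterization of some caDs} to see that the local density matrices with boundary condition $(p,\star^{\bar X})$ form a singleton, use Lemma~\ref{lem:restriction yields maximally mixed boundary conditions} for compatibility, and deduce purity from the fact that the constraints are by projectors. Your write-up is in fact more explicit than the paper's in two places: you spell out the fusion-rule step forcing the outer boundary to have type $\bar X$ (the paper simply asserts membership in $\caD_{C(R)}(p,\star^{\bar X})$), and you explicitly invoke Lemma~\ref{lem:restriction yields maximally mixed boundary conditions} on consecutive $C_n \subset C_{n+1}$ to justify why the restriction of an arbitrary $\omega \in \caS_C^p$ has maximally mixed outer boundary conditions, whereas the paper leaves this step implicit.
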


\begin{proof}
    We first prove the claim about $\caS_C^p$. For any $R > 0$, denote by $C(R)$ the subregion of $C$ consisting of all vertices, edges and faces of $C$ that are contained in $B_R = \{ v \in \R^2 \,: \, \norm{v}_{\infty} \leq R \}$.

    There is an $R_0$ such that for all $R \geq R_0$ large enough, the associated surface $\Sigma_{C(R)}$ is homeomorphic to an annulus. The inner boundary component of $\Sigma_{C(R)}$ is the same as the unique boundary component of $\Sigma_C$, and so comes with the same $\Tube_{\caS}$ representation $\frt$. Let $\caS_R$ be the outer boundary component of $\Sigma_{C(R)}$ and denote by $\frt_R = \frt_{\caS_R}$ the $\Tube_{\caS_R}$ action associated to that boundary component.

    Since $p$ is minimal, it has definite type $X$ for some $X\in \Irr Z(\caC)$.
    By \Cref{lem:characterization of some caDs} there is a unique density matrix in $\caD_{C(R)}(p, \star^{\bar X})$ defining a state $\psi_R$ on $\caA_{C(R)}$ for all $R\ge R_0$.
    By \Cref{lem:restriction yields maximally mixed boundary conditions}, $\psi_{R'}|_{C(R)}=\psi_R$ whenever $R'>R$, so it is clear that 
    $\psi(O) = \lim_{R \uparrow \infty} \psi_R(O)$ defines a state on $\caA_C^{\loc}$ which extends uniquely to a state $\psi : \caA_C \rightarrow \C$.
    By construction $\psi_R(\frt(p)) = 1$ for all $R>R_0$, and $\psi_R(B_f) = 1$ for $f \in F_C$ whenever $R$ is large enough, so $\psi \in \caS_C^p$.

    To see that $\psi$ is the only state in $\caS_C^p$ it suffices to note that the restriction of any state in $\caS_C^p$ to the region $C(R)$ for $R \geq R_0$ corresponds to the unique density matrix in $\caD_{C(R)}(p, \star^X)$.

    To see that $\psi$ is pure, suppose $\psi = \lambda \psi_1 + (1- \lambda) \psi_2$ for some $\lambda \in (0, 1)$. For any projector $q$, if $\psi(q) = 1$ then $\lambda \psi_1(q) + (1-\lambda) \psi_2(q) = 1$, which can only hold if $\psi_1(q) = \psi_2(q) = 1.$ Applying this to $q = \frt(p)$ and $q = B_f$ it follows that $\psi_1$ and $\psi_2$ both belong to $\caS_C^p$ as well. By the uniqueness we conclude that $\psi_1 = \psi_2 = \psi$, so $\psi$ is pure.

    The claim about $\caS_C^{\star^X}$ is shown in exactly the same way, using that the spaces $\caD_{C(R)}(\star^X, \star^{\bar X})$ all contain a unique density matrix by Lemma \ref{lem:characterization of some caDs}. (Purity of the unique state in $\caS_C^{\star^X}$ does not follow in this case because, unlike $\caS_C^p$, the space $\caS_C^{\star^X}$ is not defined by commuting projector constraints).
\end{proof}

\subsection{Pure anyons at punctures} \label{subsec:anyons on punctures}

Consider the infinite region $C^{(e)}$ obtained from $C^{\Z^2}$ by removing the edge $e$ and the two neighbouring faces. This region satisfies the  requirements of Proposition \ref{prop:unique anyon states}. In particular, the associated surface $\Sigma_{C^{(e)}}$ has a single connected boundary component $\caS_e$, referred to as the puncture at $e$, with an associated collar region $C^{\caS_e}$ and $\Tube_{\caS_e}$-representation denoted by $\frt_e = \frt_{\caS_e}$. It follows from Proposition \ref{prop:unique anyon states} that for any minimal projector $p \in \Tube_{\caS_e}$ we have a unique pure state $\omega_e^{p} \in \caS_{C^{(e)}}^p$. Note that $\caA_{C^{(e)}} = \caA$, so $\omega_e^p$ is a state on the full quasi-local algebra.

Let us finally argue how we get the existence, uniqueness, and purity of the frustration free ground state $\omega^{\I}$ (Proposition \ref{prop:unique ffgs}) as a corollary of Proposition \ref{prop:unique anyon states}. Let
\begin{equation} \label{eq:p^one defined}
    p^{\I} := v v^*, \quad \text{where} \quad v := \frac{1}{\caD} \sum_{a \in \Irr \caC} d_a^{1/2} \, \adjincludegraphics[valign=c, width = 2.0cm]{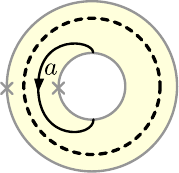} \in \Tube_{\caS_e},
\end{equation}
where we present $\caS_e \times I$ as an annulus whose outer boundary is identified with the bottom of $\caS_e \times I$.

\begin{lemma} \label{lem:p^one is minimal}
    The element $p^{\I} \in \Tube_{\caS_e}$ is a minimal projector of type $\I$.
\end{lemma}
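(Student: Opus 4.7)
My plan is to prove that $p^{\I} = vv^*$ is a minimal projector of type $\I$ by first computing $v^*v$ explicitly in $\Tube_{\caS_e}$, verifying that it is a minimal projector of type $\I$, and then deducing the corresponding properties for $vv^*$ via the identity $(vv^*)^2 = v(v^*v)v^*$. Self-adjointness of $vv^*$ is automatic, so the main tasks are idempotence, minimality, and the type classification.

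To set up the computation, I would first unpack the structure of $\caS_e$: the boundary circle of the puncture created by removing $e$ together with its two adjacent faces is crossed exactly twice by the edge $e$ itself, so $\caS_e$ has precisely two marked points, of opposite signs, both associated to the edge $e$. Consequently, for any simple boundary condition $\underline{a}$, both marked points carry the same label $a \in \Irr\caC$, and $\caC(\I \to a \otimes a^*)$ is one-dimensional, spanned after normalization by $w^{\I, a}_1 = \coev_a/\sqrt{d_a}$. Under the decomposition from Proposition \ref{prop:matrix units for Tube_n}, I would identify $v$ as a distinguished vector in $\caC(\I \to \chi^{\otimes \caS_e})$, with the weights $d_a^{1/2}/\caD$ in \eqref{eq:p^one defined} being precisely those required to make $v$ a unit vector with respect to the TQFT inner product.

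The heart of the argument is the computation of $v^*v$. Flipping $v$ upside down and taking daggers gives $v^*$, and stacking $v^*$ on top of $v$ produces a double sum $\frac{1}{\caD^2}\sum_{a,b}d_a^{1/2}d_b^{1/2}(v_b^*v_a)$ on the cylinder. I would then apply the graphical identities from Convention \ref{conv:dotted line}: most importantly the collapse formula \eqref{eq:decomposition into simples} recast as the definition of the dotted line, together with the normalization and cloaking properties \eqref{eq:normalization and absorption properties}. The strands labeled $a$ and $b$ meeting in the middle of the cylinder force $a=b$ through the graphical calculus, collapsing the double sum into a single sum $\sum_a d_a \cdot (\text{$a$-loop})$ which absorbs into a dotted loop via \eqref{eq:decomposition into simples}, while the disk-absorption identity eliminates a residual dotted loop. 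The expected outcome is that $v^*v$ equals a minimal matrix unit $E^{\I}$ of type $\I$ in the description of Proposition \ref{prop:matrix units for Tube_n}. Once this is established, $vv^*$ is idempotent via $(vv^*)^2 = v(v^*v)v^* = vv^*$ (since $v(v^*v) = v$), and minimality of $vv^*$ follows from the standard matrix-algebra argument that the initial and final projections of a partial isometry have the same rank.

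To confirm the type classification, I would use the cloaking property of the dotted loop in \eqref{eq:normalization and absorption properties} directly to show $P^{\I} v = v$, equivalently that $v$ lies in the $\I$-isotypic component of $\Tube_{\caS_e}$, which forces $vv^* \leq P^{\I}$. The main obstacle in the argument is the graphical bookkeeping in the computation of $v^*v$: one must carefully track the $1/\caD^2$ prefactor against the double sum $\sum_{a,b} d_a^{1/2}d_b^{1/2}$, apply the collapse identity in the correct intermediate position, and then absorb the remaining dotted loop on a contractible disk. Getting all these factors to cancel exactly so that the result is a projector (rather than a nontrivial scalar multiple of one) is the most error-prone step and determines whether the chosen normalization $d_a^{1/2}/\caD$ in \eqref{eq:p^one defined} is correct.
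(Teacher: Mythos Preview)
Your approach is correct and essentially the same as the paper's: compute $v^*v$, identify it as a minimal matrix unit of type $\I$ via Proposition~\ref{prop:matrix units for Tube_n}, and then deduce that $p^{\I}=vv^*$ is a minimal projector of type $\I$ because $v$ is a partial isometry whose initial and final projections are Murray--von Neumann equivalent. The paper phrases the last step as ``$vv^*$ is unitarily equivalent to $v^*v$'', while you spell out $(vv^*)^2 = v(v^*v)v^*$ and the rank argument, but the content is identical; one small caution is that in the graphical computation the end result of $v^*v$ \emph{is} the dotted loop on the tube (this is the matrix unit $E^{\I}$), so there is no ``residual dotted loop'' to eliminate afterwards.
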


\begin{proof}
    We compute
    \begin{equation}
        v^* v = \frac{1}{\caD^2} \sum_{a} \, d_a \, \adjincludegraphics[valign=c, width = 2.0cm]{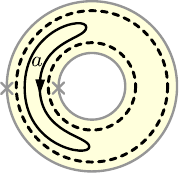} = \adjincludegraphics[valign=c, width = 2.0cm]{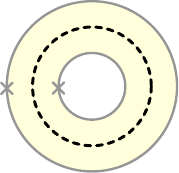} \,\, ,
    \end{equation}
    which is a minimal projector of type $\I$ by Proposition \ref{prop:matrix units for Tube_n}.  This implies that $v$ is a partial isometry, and that $p^{\I} = v v^*$ is unitarily equivalent to $v^* v$, and is therefore also a minimal projector of type $\I$.
\end{proof}

\begin{lemma} \label{lem:p^one enforces ground state constraints}
    Let $f_1$ and $f_2$ be the faces neighbouring a given edge $e$. Then $\frt_e(p^{\I}) = B_{f_1} B_{f_2}$.
\end{lemma}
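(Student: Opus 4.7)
The plan is to verify the identity by direct graphical computation on the string-net subspace $H_{(C^e)_1}$, using the definition of the action $\frt_e$ from \eqref{eq:graphical frt action}, the definition of $B_f$ from \eqref{eq:B_f graphical definition}, and the explicit form of $v$ from \eqref{eq:p^one defined}. Both $\frt_e(p^{\I})$ and $B_{f_1} B_{f_2}$ annihilate the orthogonal complement of the string-net subspace (the former because each $\frt_e(a)$ does by construction; the latter because each $B_{f_i}$ enforces string-net constraints on its collar), so it suffices to check the identity on vectors represented by string diagrams via Convention~\ref{conv:graphical representation}.

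The first step is to expand
\[
    \frt_e(p^{\I}) \;=\; \frt_e(v)\,\frt_e(v^*) \;=\; \frac{1}{\caD^2}\sum_{a, b \in \Irr \caC} d_a^{1/2} d_b^{1/2}\,\frt_e(v_a)\,\frt_e(v_b^*),
\]
where $v_a$ denotes the single-summand contribution from the figure in \eqref{eq:p^one defined}. Applying \eqref{eq:graphical frt action} twice stacks the diagrams $v_a$ and $v_b^*$ inside the tube $\caS_e \times I$. By the construction of $v$ (a single $a$-strand crossing the tube along the location of the removed edge $e$) the product $v_a v_b^*$ closes this strand up: the boundary conditions must match, which gives $\delta_{a,b}$, and the resulting diagram is a single $a$-loop sitting in a neighborhood of $\caS_e$ encircling the removed ``dumbbell'' region $\{e,f_1,f_2\}$.

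The second step is to reinterpret this single $a$-loop on the larger surface $\Sigma_{C^{(e)}}$. Using isotopy (which is allowed inside $\Sigma_{C^{(e)}}$ once we have embedded the tube via $\frt_e$) and the cloaking property \eqref{eq:normalization and absorption properties}, the $a$-loop around the dumbbell region can be slid and resolved into two independent $a$-loops, one around $f_1$ and one around $f_2$. Performing the sum $\tfrac{1}{\caD^2}\sum_a d_a$ over this pair of $a$-loops and using the defining relation \eqref{eq:dotted line defined} of the dotted line (applied twice, once to each loop, after suitably inserting $\id = \sum_a d_a^2/\caD^2$-type resolutions where needed) converts the result into two independent dotted loops, one around each face. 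By \eqref{eq:B_f graphical definition} this is precisely the effect of $B_{f_1}B_{f_2}$ on the string-net subspace, completing the identification.

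The main obstacle is the second step: namely, producing the correct normalization when splitting the single $a$-loop around the dumbbell region into two independent dotted loops around $f_1$ and $f_2$. The coefficient $\tfrac{1}{\caD^2}\sum_a d_a$ naturally matches the normalization of \emph{one} dotted loop, so the crux is showing that the geometry of the $a$-loop (inherited from $v v^*$) is such that it factors, via a cloaking-based isotopy, through \emph{two} dotted-loop insertions rather than one. This amounts to a careful accounting of how the square root weights $d_a^{1/2}$ in $v$ combine with the graphical structure inside the tube so that the decoupling into two loops is algebraically consistent. Once this normalization step is verified graphically, the equality $\frt_e(p^{\I}) = B_{f_1}B_{f_2}$ follows immediately.
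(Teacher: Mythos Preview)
Your first step contains a concrete error: the product $v_a\,v_b^*$ in $\Tube_{\caS_e}$ does \emph{not} force $\delta_{a,b}$. By the construction of $v$ in \eqref{eq:p^one defined}, the $a$-cap sits at the \emph{top} of $v_a$ while the bottom carries trivial labels; hence $v_b^*$ has trivial top and the $b$-cup at its bottom. Stacking $v_a$ over $v_b^*$ glues trivial to trivial, so every $(a,b)$-summand survives and $p^{\I}=vv^*$ is a genuine double sum with an $a$-cap on top and a $b$-cup on bottom. (It is in $v^*v$, computed in the proof of Lemma~\ref{lem:p^one is minimal}, that the $\delta_{a,b}$ appears.) Consequently your picture of ``a single closed $a$-loop encircling the dumbbell'' is wrong: nothing closes up at this stage.

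What actually happens when $\frt_e(p^{\I})$ is applied to a state with half-edge labels $(c,d)$ at $\caS_e$ is that the $b$-sum in $v^*$ is killed by matching $b$ to the state's labels, producing $\delta_{c,d}$ (this is exactly the string-net constraint at $e$ that $B_{f_1}B_{f_2}$ also imposes), and the $c$-cup \emph{heals} the edge by joining the two half-strands into a single $c$-strand running between the endpoints of $e$. The $a$-cap then supplies the new boundary labels, and only a \emph{single} sum $\tfrac{1}{\caD^2}\sum_a d_a$ remains. The normalisation obstacle you flag in your second step---one regular-element sum producing two dotted loops---is therefore not a bookkeeping subtlety but the visible symptom of the wrong picture: there is no cloaking move that splits one loop into two with the correct weights. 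In the paper's computation the second dotted loop does not come from a second sum; it emerges from a short isotopy that rewrites the healed $c$-strand together with the inserted $a$-strand as the two face-encircling loops of $B_{f_1}B_{f_2}$.
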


\begin{proof}
    Write $V_e = \frt_e(v)$, then (using Convention \ref{conv:graphical representation})
    \begin{align*}
        \adjincludegraphics[valign=c, width = 3.0cm]{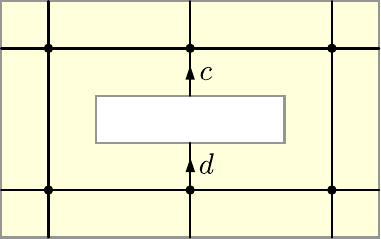} \xmapsto{V_e^*} \, \frac{\delta_{c, d}}{\caD} \,\,\, &\adjincludegraphics[valign=c, width = 3.0cm]{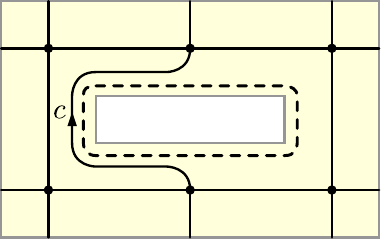} \, \xmapsto{V_e} \frac{\delta_{c, d}}{\caD^2} \sum_a d_a \,\,\, \adjincludegraphics[valign=c, width = 3.0cm]{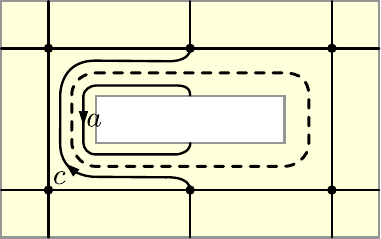} \\
        = \frac{\delta_{c, d}}{\caD^2} \sum_a d_a \,\,\,  &\adjincludegraphics[valign=c, width = 3.0cm]{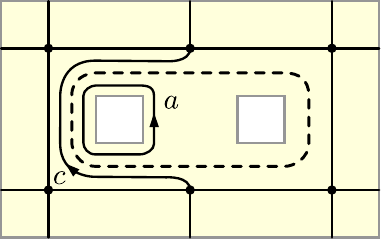} = \delta_{c, d} \,\,\, \adjincludegraphics[valign=c, width = 3.0cm]{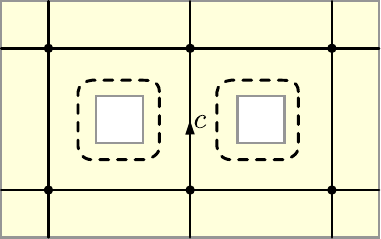}  \\ 
        = B_{f_1} B_{f_2} \,\,\, &\adjincludegraphics[valign=c, width = 3.0cm]{healing1.pdf},
    \end{align*}
    where in applying $V_e^*$ we noted that the factor $d_c^{1/2}$ coming from the definition of $v$ cancels the factor $d_c^{-1/2}$ due to the dependence of $\frt_e$ on boundary conditions (see Eq. \eqref{eq:graphical frt action}), and in applying $V_e$ we get one factor of $d_a^{1/2}$ from the definition of $v$ and other factor of $d_a^{1/2}$ from the boundary conditions. This shows that $\frt_e(p^{\I}) = V_e V_e^* = B_{f_1} B_{f_2}$ on the string-net subspace. If any of the string net constraints is violated, left and right hand sides of the claimed equality both evaluate to zero. This proves the Lemma.
\end{proof}

\begin{proofof}[\Cref{prop:unique ffgs}]
    It follows from Lemma \ref{lem:p^one enforces ground state constraints} that the space $\caS_{C^{(e)}}^{p^{\I}}$ consists of those states $\omega$ on $\caA$ for which
    $$ 1 = \omega(B_f) = \omega(B_{f_1} B_{f_2}) $$
    for all faces $f$ belonging to $C^{(e)}$, and where $f_1$ and $f_2$ are the two faces neighbouring the edge $e$. Using the Cauchy-Schwarz inequality we find
    $$ \abs{ \omega( B_{f_1} - B_{f_1} B_{f_2} ) }^2 = \abs{ \omega( B_{f_1} (\I - B_{f_1} B_{f_2}) ) }^2 \leq \omega( B_{f_1} ) \omega( \I - B_{f_1} B_{f_2} ) = 0, $$
    where we used that $B_{f_1}$ and $B_{f_2}$ are commuting projectors (Lemma \ref{lem:commutativity lemma}). It follows that $\omega(B_{f_1}) = \omega(B_{f_1} B_{f_2}) = 1$ and therefore also  $\omega(B_{f_2}) = 1$. We conclude that $\caS_{C^{(e)}}^{p^{\I}}$ consists precisely of the frustration free ground states of the Levin-Wen Hamiltonian. By Proposition \ref{prop:unique anyon states} and Lemma \ref{lem:p^one is minimal} we find that $\caS_{C^{(e)}}^{p^{\I}}$ contains a single pure state $\omega_e^{p^{\I}}$, which is therefore the unique frustration free ground state of the Levin-Wen Hamiltonian.
\end{proofof}

\subsection{Restrictions of infinite volume states}
Restrictions lead to maximally mixed boundary conditions also for infinite volume states. 
We formalise this in the present setting. 

\begin{lemma} \label{lem:restrictions of infinite volume states}
    Let $C$ be an infinite region so that $\Sigma_C$ is homeomorphic to $\R^2$ with an open disk removed. Let $C' \subset C$ be such that $\Sigma_C \setminus \Sigma_{C'}$ is homeomorphic to an annulus.
    If $p \in \Tube_{\partial C}$ is a projection of type $X \in \Irr Z(\caC)$ and
    $\omega \in \caS_{C}^{p}$, then $\omega|_{\caA_{C'}} \in \caS_{C'}^{\star^X}$.  
\end{lemma}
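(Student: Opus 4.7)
The plan is to reduce the claim to the finite-volume Lemma~\ref{lem:restriction yields maximally mixed boundary conditions} via an exhaustion by finite subregions. Let $B_n \subset \R^2$ be the axis-aligned box of side $2n$ centred at the origin, and let $D_n$ and $D_n'$ denote the finite subcomplexes of $C$ and $C'$ consisting of all cells contained in $B_n$. Let $C''$ be the annular region consisting of the cells in $\Sigma_C \setminus \Sigma_{C'}$; since $C''$ is finite, for all $n$ large enough we have $C'' \subset D_n$, and therefore $\Sigma_{D_n} \setminus \Sigma_{D_n'}$ coincides with the annulus $C''$. The surface $\Sigma_{D_n}$ has the inner boundary $\caS$ inherited from $\Sigma_C$ together with an outer truncation boundary $\caT_n$ cut out by $\partial B_n$; since $C$ and $C'$ agree outside $C''$, this truncation boundary is common to $\Sigma_{D_n}$ and $\Sigma_{D_n'}$. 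The pair $D_n' \subset D_n$ therefore fits the hypotheses of Lemma~\ref{lem:restriction yields maximally mixed boundary conditions}, with $\caS$ and $\caS'$ playing the roles of the two distinguished boundary components.

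Next I would show that the density matrix $\rho_n$ representing $\omega|_{\caA_{D_n}}$ lies in $\caD_{D_n}$ and satisfies the type-$X$ hypothesis at $\caS$. Since $\omega(B_f) = 1$ for all $f \in F_C \supset F_{D_n}$ and the $B_f$'s pairwise commute by Lemma~\ref{lem:commutativity lemma}, a standard Cauchy-Schwarz argument gives $\omega(B_{D_n}) = 1$, where $B_{D_n} := \prod_{f \in F_{D_n}} B_f$. Because $B_{D_n}$ is the orthogonal projection from $\caH_{D_n}$ onto the skein subspace $H_{D_n}$, this forces $\rho_n$ to be supported on $H_{D_n}$, i.e.\ $\rho_n \in \caD_{D_n}$. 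Combining the hypothesis $\omega \in \caS_C^{p^X}$ with $p \leq P^X$ yields
\[
\Tr\bigl(\rho_n\, \frt_\caS(P^X)\bigr) = \omega\bigl(\frt_\caS(P^X)\bigr) \ge \omega\bigl(\frt_\caS(p)\bigr) = 1,
\]
so $\rho_n$ satisfies type-$X$ boundary conditions at $\caS$. Applying Lemma~\ref{lem:restriction yields maximally mixed boundary conditions}, the partial trace $\rho_n' := \Tr_{(D_n)_0 \setminus (D_n')_0} \rho_n$, which represents $\omega|_{\caA_{D_n'}}$, satisfies maximally mixed boundary conditions of type $X$ at $\caS'$.

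Finally I would pass to the infinite-volume limit. For any local $a \in \caA_{C'}^{\loc}$ and any unitary $u \in \Tube_{\caS'}$, the operator $\frt_{\caS'}(u)$ is supported on the finite collar $C'^{\caS'}$, so for $n$ large enough $\supp(a) \cup C'^{\caS'} \subset (D_n')_0$, and therefore
\[
\omega\bigl(\frt_{\caS'}(u)\, a\, \frt_{\caS'}(u^*)\bigr) = \Tr\bigl(\rho_n'\, \frt_{\caS'}(u)\, a\, \frt_{\caS'}(u^*)\bigr) = \Tr(\rho_n'\, a) = \omega(a),
\]
where the middle equality uses the maximally mixed property of $\rho_n'$. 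By density, this invariance extends to all $a \in \caA_{C'}$, so $\omega|_{\caA_{C'}} \circ \Ad[\frt_{\caS'}(u)] = \omega|_{\caA_{C'}}$. The type-$X$ identity $\omega(\frt_{\caS'}(P^X)) = \Tr(\rho_n'\, \frt_{\caS'}(P^X)) = 1$ is immediate from the conclusion of Lemma~\ref{lem:restriction yields maximally mixed boundary conditions}, and $\omega(B_f) = 1$ for all $f \in F_{C'} \subset F_C$ is inherited from $\omega \in \caS_C$. Together these show $\omega|_{\caA_{C'}} \in \caS_{C'}^{\star^X}$. The only mild subtlety is that the truncated surfaces $\Sigma_{D_n}$ and $\Sigma_{D_n'}$ carry an extra truncation boundary $\caT_n$ absent from the infinite picture, but since Lemma~\ref{lem:restriction yields maximally mixed boundary conditions} imposes no conditions at, and makes no claim about, other boundary components of the ambient region, this presents no obstacle.
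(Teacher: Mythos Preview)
Your proof is correct and follows essentially the same route as the paper's: restrict $\omega$ to a suitable finite subregion, check that the resulting density matrix lies in the skein subspace with type-$X$ boundary condition at $\caS$, and invoke Lemma~\ref{lem:restriction yields maximally mixed boundary conditions}. The paper's proof is a two-sentence sketch using a single finite region $\widetilde C$, whereas you spell out the exhaustion, the Cauchy--Schwarz verification that $\rho_n \in \caD_{D_n}$, and the passage to infinite volume; these are exactly the details the paper suppresses.
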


\begin{proof}
    It suffices to show that $\omega|_{\caA_{C'}}$ satisfies maximally mixed boundary conditions at the unique $\caS' \in \Bd(\Sigma_{C'})$. 
    This follows by applying \Cref{lem:restriction yields maximally mixed boundary conditions} to the density matrix corresponding to the restriction of $\omega$ to a finite region $\widetilde C\subset C$, chosen such that $\caS \in \Bd(\Sigma_{\widetilde C})$ and such that if $\widetilde C' = \widetilde C \cap C'$ then $\Sigma_{\widetilde C} \setminus \Sigma_{\widetilde C'}$ is an annulus and  $\caS' \in \Bd(\Sigma_{\widetilde C'})$.
\end{proof}

\section{String Operators} \label{sec:string operators}

\subsection{Drinfeld insertions} \label{subsec:Drinfeld insertions}

For each $X \in \Irr Z(\caC)$ and each extended circle $\caS$ we fix a unit vector $w^{X}_{\caS} \in \caC(X \rightarrow \chi^{\otimes\caS})$ with respect to the trace inner product. We write $p_{\caS}^{X} \in \Tube_{\caS}$  for the corresponding minimal projector (see Proposition \ref{prop:matrix units for Tube_n}). 
In the case where $X=\I$ and $\caS=\caS_e$, we specify
\begin{equation} \label{eq:ground state boundary condition at puncture}
    w^{\I}_{\caS_e} = \frac{1}{\caD} \, \sum_{a \in \Irr \caC} d_a^{1/2} \,\,\, \adjincludegraphics[valign=c, width = 1.0cm]{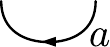}
\end{equation}
so that the corresponding minimal projector $p^{\I}_{\caS_e} = p^{\I}$ is the projector introduced in Eq. \eqref{eq:p^one defined}. (Note that all boundary components $\caS_e$ for $e \in \caE$ are isomorphic as extended circles). Since the extended circle $\caS$ will always be clear from context, it will always be dropped from the notation, so that $w^X$ stands for $w^X_{\caS}$ and $p^X$ stands for $p^X_{\caS}$.

\begin{convention} \label{conv:Drinfeld strands attaching to boundary components}
    Having fixed the boundary conditions $w^X$ for $X \in \Irr Z(\caC)$ we introduce the following graphical convention for string diagrams:
    \begin{equation}
        \adjincludegraphics[valign=c, height = 1.2cm]{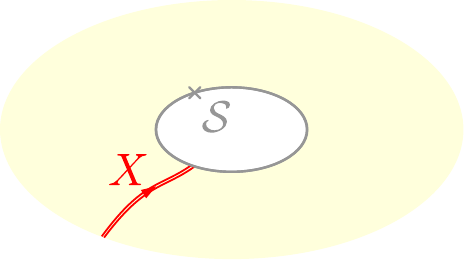} = \adjincludegraphics[valign=c, height = 1.2cm]{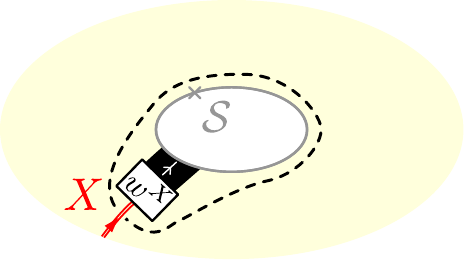}.
    \end{equation}
    Given an arbitrary boundary condition $w \in \caC(X \rightarrow \chi^{\otimes\caS})$ we also depict
    \begin{equation}
        \adjincludegraphics[valign=c, height = 1.2cm]{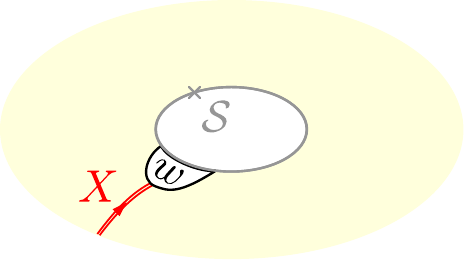} = \adjincludegraphics[valign=c, height = 1.2cm]{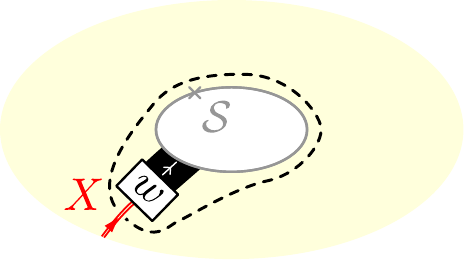}.
    \end{equation}
\end{convention}

Let $C$ be a finite connected region and let $\anchor$ be an anchor for $C$. Proposition \ref{prop:the great interface} implies that vectors of the the form
\begin{align*}
    \Psi_C^{\anchorino} \big( \al \otimes w_0 \otimes w^{X_1} &\otimes \cdots \otimes w^{X_m} \big) \\ &= \left( \prod_{\kappa=0}^m d_{X_{\kappa}} \right)^{1/2} \, \caD^{m} \times \sigma_C^{-1} \left( \adjincludegraphics[valign=c, height = 2.0cm]{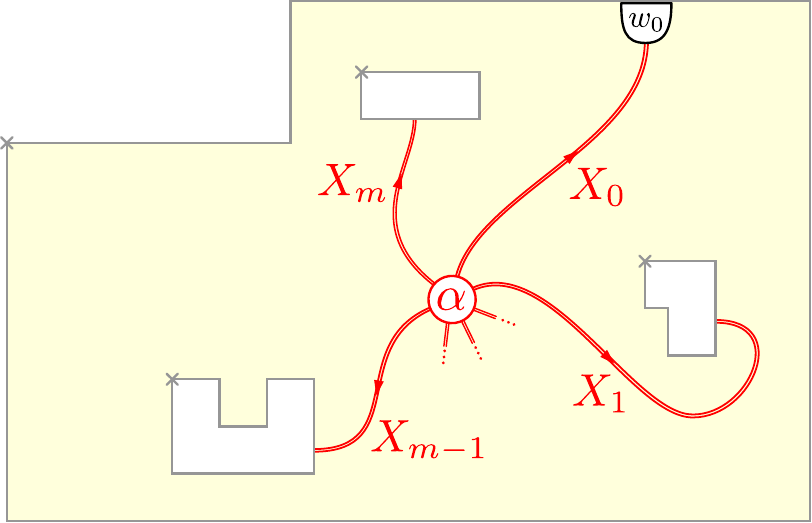} \right),
\end{align*}
for $X_0 \in \Irr Z(\caC)$, $w_0 \in \caC(X_0 \rightarrow \chi^{\otimes\caS_0^{\anchorino}})$, and $\al \in Z(\caC)( X_0^* \rightarrow X_1 \otimes \cdots \otimes X_m )$ span the subspace 
\begin{equation} \label{eq:skein subspace with free outer boundary}
    H_C^{(X_1, \cdots, X_m)} := H_C( \id, p^{X_1}, \cdots, p^{X_m} )
\end{equation}
defined in Eq. \eqref{eq:def H_C(p)}. Here we use the enumeration of connected boundary components induced by $\anchor$.

For any $\beta \in Z(\caC)( X_1 \otimes \cdots \otimes X_m \rightarrow Y_1 \otimes \cdots \otimes Y_m )$ we define the Drinfeld insertion $\Dr_C^{\anchorino}[\beta] \in \caA_C$ which acts on $H_C^{(X_1, \cdots, X_m)}$ as
\begin{align} \label{eq:Drinfeld insertion defined}
\begin{split}
    \Dr_C^{\anchorino}[\beta] : \Psi_C^{\anchorino} \big( \al \otimes w_0 &  \otimes w^{X_1} \otimes \cdots \otimes w^{X_m} \big) \\  & \mapsto  \Psi_C^{\anchorino} \big( (\beta \circ \al) \otimes w_0 \otimes w^{Y_1} \otimes \cdots \otimes w^{Y_m} \big),
\end{split}
\end{align}
and annihilates the orthogonal complement of $H_C^{(X_1, \cdots, X_m)} \subset \caH_C$.
Graphically, the action of $\sigma_C \circ \Dr_C^{\anchorino}[\beta] \circ \sigma^{-1}_C : A(\Sigma_C) \rightarrow A(\Sigma_C)$ is given by
\begin{equation} \label{eq:Drinfeld insertion graphical}
    \adjincludegraphics[valign=c, height = 1.7cm]{graphical_vector_in_H_C.pdf} \,\,\, \mapsto \left( \prod_{\kappa = 1}^{m} \,\,\, \frac{ d_{Y_{\kappa}} }{ d_{X_{\kappa}} } \right)^{1/2} \,\, \adjincludegraphics[valign=c, height = 1.7cm]{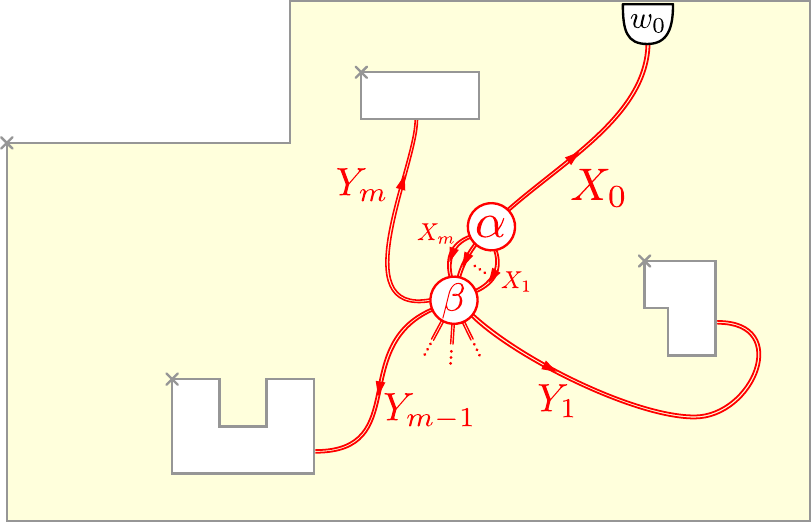}\,\,.
\end{equation}

\begin{lemma}\label{lem:Drinfeld insertion multiplicativity}
    Let $C$ be a finite connected region with $|\Bd(\Sigma_C)|=m+1$ and $\anchor$ an anchor for $C$. 
    If $\Theta = \bigoplus_{X\in \Irr(Z(\caC))} X$,
    then $\Dr_C^{\anchorino}$ defines a *-representation of $\End_{Z(\caC)}(\Theta^{\otimes m})$ on 
    \begin{equation}
    \label{eq:domain of Dr-action}
        \bigoplus_{X_1,\ldots, X_m\in \Irr(Z(\caC))} H_C^{(X_1, \cdots,X_m)}.
    \end{equation}
\end{lemma}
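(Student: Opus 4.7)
My plan is to reduce the statement to the standard fact that left composition with a morphism defines a *-representation on a morphism space equipped with the trace inner product, invoking the great interface isomorphism of Proposition \ref{prop:the great interface} as the bridge.

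First, I observe that $\End_{Z(\caC)}(\Theta^{\otimes m}) = \bigoplus_{\vec X, \vec Y} Z(\caC)(X_1\otimes\cdots\otimes X_m \to Y_1\otimes\cdots\otimes Y_m)$, with the sum indexed by tuples $\vec X, \vec Y \in (\Irr Z(\caC))^m$, and correspondingly the domain \eqref{eq:domain of Dr-action} is an orthogonal direct sum of the pieces $H_C^{(\vec X)}$, since the central projectors $P^{X_\kappa}$ on different inner boundary components are mutually orthogonal across varying $X_\kappa$. I define $\Dr_C^{\anchorino}$ block-by-block using \eqref{eq:Drinfeld insertion defined} and extend linearly; the formula makes $\Dr_C^{\anchorino}[\beta]$ for $\beta$ in the $(\vec X, \vec Y)$-block a map $H_C^{(\vec X)} \to H_C^{(\vec Y)}$ annihilating everything else, so the block structure on endomorphisms matches that on the domain.

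Next, Proposition \ref{prop:the great interface} identifies $\bigoplus_{\vec X} H_C^{(\vec X)}$ unitarily with
\begin{equation*}
    \bigoplus_{X_0 \in \Irr Z(\caC)} Z(\caC)(X_0^* \to \Theta^{\otimes m}) \otimes \caC(X_0 \to \chi^{\otimes\caS_0^{\anchorino}})
\end{equation*}
via $\Psi_C^{\anchorino}$, where the right-hand side carries the tensor product of trace inner products. Reading off \eqref{eq:Drinfeld insertion defined}, under this identification $\Dr_C^{\anchorino}[\beta]$ becomes $L_\beta \otimes \id$ on each $X_0$-summand, where $L_\beta(\alpha) := \beta \circ \alpha$. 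The normalization prefactor $\left(\prod d_{X_\kappa}\right)^{1/2}$ built into $\Psi_C^{\anchorino}$ precisely absorbs the $\left(\prod d_{Y_\kappa}/d_{X_\kappa}\right)^{1/2}$ factor appearing in the graphical formula \eqref{eq:Drinfeld insertion graphical}, so no stray scalars interfere.

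The final step is the observation that $\beta \mapsto L_\beta$ is a *-representation of $\End_{Z(\caC)}(\Theta^{\otimes m})$ on $Z(\caC)(X_0^* \to \Theta^{\otimes m})$: multiplicativity $L_{\beta_1 \beta_2} = L_{\beta_1} L_{\beta_2}$ and $L_{\id} = \id$ are immediate from associativity and unitality of composition, and the *-property is
\begin{equation*}
    (L_\beta \alpha, \alpha')_{\tr} = \tr(\alpha^\dag \beta^\dag \alpha') = (\alpha, L_{\beta^\dag} \alpha')_{\tr},
\end{equation*}
using that the dagger on $Z(\caC)$ is inherited from that of $\caC$. Tensoring with $\id$ on the $\caC$-factors, summing over $X_0$, and transporting back through $\Psi_C^{\anchorino}$ delivers the claim. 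I do not anticipate a serious obstacle; the bulk of the work is already encoded in Proposition \ref{prop:the great interface}, and the remaining verifications amount to bookkeeping about composition and the trace inner product.
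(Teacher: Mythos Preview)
Your proposal is correct and follows essentially the same approach as the paper: both decompose $\End_{Z(\caC)}(\Theta^{\otimes m})$ into its $(\vec X,\vec Y)$-blocks, both invoke the unitary $\Psi_C^{\anchorino}$ of Proposition~\ref{prop:the great interface} to transport $\Dr_C^{\anchorino}[\beta]$ to left composition $\alpha\mapsto\beta\circ\alpha$ on the first tensor factor, and both read off the $*$-property and multiplicativity from that description. Your write-up is simply more explicit about the bookkeeping (e.g.\ the cancellation of quantum-dimension prefactors and the trace-inner-product computation for $L_\beta^*=L_{\beta^\dag}$), whereas the paper condenses these into the single remark that $\Psi_C^{\anchorino}$ is a Hilbert-space isomorphism.
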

\begin{proof}
    We have that 
    $$\End_{Z(\caC)}(\Theta^m) \simeq \bigoplus_{\substack{X_1,\ldots, X_m\in \Irr(Z(\caC)) \\ Y_1,\ldots, Y_m\in \Irr(Z(\caC))}} Z(\caC)\left(X_1 \otimes \cdots \otimes X_m \to Y_1 \otimes \cdots \otimes Y_m \right).
    $$
    Let $\beta$ be a morphism belonging to one of these direct summands. 
    It is clear that $\Ran \, \Dr_C^{\anchorino}[\beta] \subset H_C^{(Y_1, \cdots, Y_m)}$, and using that $\Psi_C^{\anchorino}$ is an isomorphism of Hilbert spaces we find $(\Dr_C^{\anchorino}[\beta])^* = \Dr_C^{\anchorino}[ \beta^{\dag} ]$.
    Multiplicativity is immediate from the definition \eqref{eq:Drinfeld insertion defined}, and $\Dr_C^{\anchorino}[\id_{\Theta^{\otimes m}}]$ is the projection onto the subspace in \eqref{eq:domain of Dr-action}.
\end{proof}

We say two anchors are equivalent up to $\caS\in \Bd(\Sigma)$, if they induce the same enumeration on $\Bd(\Sigma)$ and if, after removing the edge connected to $\caS$ from either anchor, the remaining subgraphs are equivalent in the sense of equivalence of anchors. That is, one \emph{subanchor} may be transformed into the other by isotopy of $\Sigma$ which fixes all fiducial points.

\begin{lemma}
    If $C$ is a finite connected region and $\anchor,\anchor'$ are anchors on $\Sigma_C$, equivalent up to the outer boundary $\caS_0\in \Bd(\Sigma_C)$, 
    then $\Dr_C^{\anchorino} = \Dr_C^{\anchorino'}$.
\end{lemma}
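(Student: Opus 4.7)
The plan is to reduce to proving the equality of operators on the subspace $\bigoplus_{X_1, \ldots, X_m} H_C^{(X_1, \ldots, X_m)}$, since both $\Dr_C^{\anchorino}[\beta]$ and $\Dr_C^{\anchorino'}[\beta]$ vanish on its orthogonal complement by \Cref{lem:Drinfeld insertion multiplicativity}, and the direct summands are unambiguously defined because $\anchor$ and $\anchor'$ induce the same enumeration of $\Bd(\Sigma_C)$. Since $\Psi_C^{\anchorino}$ depends only on the equivalence class of the anchor (\Cref{prop:characterisation of skein modules}), I may apply the isotopy equating the two subanchors and assume without loss of generality that $\anchor$ and $\anchor'$ agree on every inner anchor edge, differing only in the routing of the edge to $\caS_0$.

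Next I would rewrite vectors $\Psi_C^{\anchorino}(\al \otimes w_0 \otimes w^{X_1} \otimes \cdots \otimes w^{X_m})$ in the $\anchor'$ decomposition. The two outer edges are paths from the common anchor point to $\caS_0$ avoiding the fiducial, so their homotopy classes differ by a sequence of elementary moves of two types: (i) a Dehn twist around $\caS_0$, and (ii) a loop of the outer edge around an inner boundary $\caS_\kappa^{\anchorino}$. For (i), \Cref{lem:anchors related by Dehn twists} gives $\Psi_C^{\anchorino'} = \theta_{X_0} \cdot \Psi_C^{\anchorino}$; since $\Dr_C^{\anchorino}[\beta]$ preserves the type $X_0$ at $\caS_0$, the phase $\theta_{X_0}$ cancels from both sides of the putative equality. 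For (ii), a single loop of the outer edge around $\caS_\kappa^{\anchorino}$ can be contracted by pushing the $X_0$-strand twice through the $Z(\caC)$-strand labelled $X_\kappa$ that runs along the $\kappa$-th subanchor edge. Interpreting each crossing as a braiding of $Z(\caC)$ and sliding both crossings down the $X_\kappa$-strand onto the anchor point yields an identification $\Psi_C^{\anchorino'}(\al \otimes w_0 \otimes \cdots) = \Psi_C^{\anchorino}(\al' \otimes w_0 \otimes \cdots)$, where $\al'$ is obtained from $\al$ by composition with a $Z(\caC)$-morphism built from $c_{X_\kappa, X_0}$ and $c_{X_0, X_\kappa}$, possibly after passing through the coevaluation in \eqref{eq:modified Phi}.

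The lemma then follows from naturality of the braiding in $Z(\caC)$: for $\beta \in Z(\caC)(X_1 \otimes \cdots \otimes X_m \to Y_1 \otimes \cdots \otimes Y_m)$ one has $(\id_{X_0} \otimes \beta) \circ c_{X_\kappa, X_0} = c_{Y_\kappa, X_0} \circ (\beta \otimes \id_{X_0})$ together with the analogous relation for $c_{X_0, X_\kappa}$. Hence the reparametrization $\al \mapsto \al'$ commutes with left-composition by $\beta$, which translates exactly to $\Dr_C^{\anchorino}[\beta] = \Dr_C^{\anchorino'}[\beta]$ on the chosen subspace. I expect the main obstacle to be the careful graphical bookkeeping required to identify the precise $Z(\caC)$-morphism inserted into $\al$ — tracking orientations, the powers and directions of braiding that correspond to the winding, and the interaction with the coevaluation $\coev_{X_0}$ implicit in \eqref{eq:modified Phi} — but once this is settled, the $Z(\caC)$-naturality of the braiding makes the commutation with $\beta$ automatic.
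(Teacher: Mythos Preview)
Your reduction to a common subanchor and your treatment of move (i) via \Cref{lem:anchors related by Dehn twists} match the paper exactly. The difference is that the paper shows your move (ii) never actually arises. Once the subanchors coincide, the outer edges $\epsilon_{\anchorino}$ and $\epsilon_{\anchorino'}$ are embedded in $\Sigma_C$ and, being edges of an anchor, do not cross the subanchor. Hence they live in the complement of the subanchor, and cutting $\Sigma_C$ along the subanchor connects all inner boundaries into one, yielding an \emph{annulus}. After also fixing a common attachment point on $\caS_0$ (harmless since the isotopy need not fix $\caS_0$ beyond the fiducial), the two outer edges are paths between fixed points on the two boundary circles of this annulus, and such paths form a $\pi_1(S^1)\simeq\Z$-torsor generated by the Dehn twist about $\caS_0$. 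So only move (i) is needed, and the phase $\theta_{X_0}^z$ cancels because $\Dr_C^{\anchorino}[\beta]$ fixes $X_0$.

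Your braiding-naturality argument for move (ii) is plausible at the level of string diagrams in $A(\Sigma_C)$ --- the $X_0$-strand can indeed be pushed through the $X_\kappa$-strand using the half-braiding of $Z(\caC)$, and naturality would commute the resulting reparametrization of $\alpha$ past $\beta$ --- but it is unnecessary, and the ``main obstacle'' you anticipate (careful tracking of orientations, powers of braiding, and the interaction with $\coev_{X_0}$) is entirely sidestepped by the paper's topological observation. The paper's proof is therefore shorter and avoids all of that bookkeeping; what it buys is recognizing that the constraint of being an embedded anchor, rather than merely a homotopy class of paths in $\Sigma_C$, forces the outer edge into an annulus.
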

\begin{proof}
    Let $\epsilon_{\anchorino}\subset \anchor, \epsilon_{\anchorino'}\subset \anchor'$ denote the edges connecting the anchor point to $\caS_0$ in either anchor.
    Since $\Psi_{C}^{\anchorino}$ depends only on the equivalence class of the anchor, the same is true for the representation $\Dr_C^{\anchorino}$.
    Therefore, without loss of generality, we may assume that $\anchor\setminus \epsilon_{\anchor}= \anchor'\setminus \epsilon_{\anchor'}$.
    Moreover, we may assume that 
    the edges $\epsilon_{\anchorino}, \epsilon_{\anchorino'}$ 
    have the same attaching point on $\caS_0$. 
    Cutting $\Sigma_C$ along the common subanchor will connect all the interior boundary components, producing a surface homeomorphic to an annulus. 
    Since $\anchor, \anchor'$ induce the same enumeration of $\Bd(\Sigma_C)$, the edges $\epsilon_{\anchorino}, \epsilon_{\anchorino'}$ have the same attaching point on the unique interior boundary of the cut surface.
    Since the circle $S^1$ is a deformation retract of the annulus, the 
    homotopy-classes of paths between two fixed points on either boundary component of the annulus is a $\pi_1(S^1)$-torsor.
    The generator is (the isotopy class of) a Dehn twist around $\caS_0$.
    It follows by \Cref{lem:anchors related by Dehn twists}, that there is a $z\in \Z$ such that 
    $$ \Psi_C^{\anchorino'} \big( \al \otimes w_0 \otimes w^{X_1} \otimes \cdots \otimes w^{X_m}\big) =  \theta_{X_0}^z \times \Psi_C^{\anchorino} \big( \al \otimes w_0 \otimes w^{X_1} \otimes \cdots \otimes w^{X_m}\big)$$
    for all $\al \in Z(\caC)( X_0^* \rightarrow X_1 \otimes \cdots \otimes X_m )$ and all $w_0 \in \caC( X_0 \rightarrow \chi^{\otimes \caS_0^{\anchorino}} )$.

    That $\Dr_C^{\anchorino}[\beta] = \Dr_C^{\anchorino'}[\beta]$ for any $\beta \in Z(\caC)( X_1 \otimes \cdots \otimes X_m \rightarrow Y_1 \otimes \cdots \otimes Y_m )$ now follows from the fact that these operators do not change the object $X_0$ assigned to the outer boundary component.
\end{proof}

\begin{lemma} \label{lem:string insertions commute with outer Tube actions}
    The actions $\Dr_C^{\anchorino}$ and $\frt_{\caS_0^{\anchorino}}$ commute for any finite connected region $C$ with anchor $\anchor$. 
    That is, $[ \Dr_C^{\anchorino}[\beta], \frt_{\caS^{\anchorino}_0}(a) ] = 0$ for all 
    $a \in \Tube_{\caS^{\anchorino}_0}$
    and $\beta\in \End_{Z(\caC)}(\Theta^{\otimes m})$, where $|\Bd(\Sigma_C)|=m+1$. The same holds if $\frt_{\caS_0^{\anchorino}}$ is replaced by $\frt_{\caN}$ for a decorated submanifold $\caN\subset \caS_0^{\anchorino}$. 
\end{lemma}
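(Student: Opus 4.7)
The plan is to use the unitary isomorphism $\Psi_C^{\anchorino}$ of Proposition \ref{prop:the great interface} to identify $\frt_{\caS_0^{\anchorino}}$ and $\Dr_C^{\anchorino}$ as acting on complementary tensor factors of $H_C^{(X_1,\ldots,X_m)}$.

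First I would observe that both operators preserve $H_C$, and moreover that $\frt_{\caS_0^{\anchorino}}(a)$ preserves both $H_C^{(X_1,\ldots,X_m)}$ and its orthogonal complement in $\caH_C$. The latter holds because, by Lemma \ref{lem:commutativity lemma}, $\frt_{\caS_0^{\anchorino}}(a)$ commutes with $B_f$ for all $f\in F_C$ and with $\frt_{\caS_\kappa^{\anchorino}}(p^{X_\kappa})$ for $\kappa = 1,\ldots, m$, hence with the orthogonal projector onto $H_C^{(X_1,\ldots,X_m)}$. Since $\Dr_C^{\anchorino}[\beta]$ annihilates the orthogonal complement of $H_C^{(X_1,\ldots,X_m)}$ by definition, proving commutativity reduces to the restrictions to $H_C^{(X_1,\ldots,X_m)}$.

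On this subspace I would apply $\Psi_C^{\anchorino}$ to identify $H_C^{(X_1,\ldots,X_m)}$ with
\[
    \bigoplus_{X_0 \in \Irr Z(\caC)} Z(\caC)(X_0^* \rightarrow X_1 \otimes \cdots \otimes X_m) \otimes \caC(X_0 \rightarrow \chi^{\otimes\caS_0^{\anchorino}}) \otimes \bigotimes_{\kappa=1}^{m} \C \cdot w^{X_\kappa}.
\]
Because $\Psi_C^{\anchorino}$ is an intertwiner of $A(\partial \Sigma_C)$-modules, the action of $\frt_{\caS_0^{\anchorino}}(a)$ transports to an action on the middle factor $\caC(X_0 \rightarrow \chi^{\otimes\caS_0^{\anchorino}})$ via formula \eqref{eq:Tube_n_module}, preserving the $X_0$-grading. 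By definition \eqref{eq:Drinfeld insertion defined}, the map $\Dr_C^{\anchorino}[\beta]$ with $\beta : X_1\otimes \cdots\otimes X_m \rightarrow Y_1\otimes\cdots\otimes Y_m$ sends $\al \otimes w_0 \otimes w^{X_1}\otimes\cdots\otimes w^{X_m}$ to $(\beta\circ\al)\otimes w_0\otimes w^{Y_1}\otimes\cdots\otimes w^{Y_m}$, so it preserves the $X_0$-grading and acts as the identity on the middle factor $\caC(X_0 \rightarrow \chi^{\otimes \caS_0^{\anchorino}})$. Since within each $X_0$ block the two operators act on complementary tensor factors, they commute.

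The statement for $\frt_{\caN}$ with $\caN \subset \caS_0^{\anchorino}$ then follows from the identification \eqref{eq:cyclinder inclusion} of $A(\caN)$ as a subalgebra of $\Tube_{\caS_0^{\anchorino}}$: under $\Psi_C^{\anchorino}$, $\frt_{\caN}(a)$ still acts only on the $\caC(X_0\rightarrow\chi^{\otimes\caS_0^{\anchorino}})$ factor, and the same disjoint-tensor-factor argument applies. The main conceptual point to watch is that $\Dr_C^{\anchorino}[\beta]$ is \emph{not} an intertwiner for the inner Tube actions (it changes the inner boundary types $X_\kappa \mapsto Y_\kappa$), and the argument precisely exploits that it commutes only with the outer one, which is available because the outer boundary is distinguished in the decomposition of Proposition \ref{prop:the great interface}.
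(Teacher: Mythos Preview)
Your argument is correct and follows essentially the same route as the paper. Both proofs split $\caH_C$ into a piece on which $\Dr_C^{\anchorino}[\beta]$ vanishes and is preserved by the outer Tube action, and a piece on which $\Psi_C^{\anchorino}$ exhibits the two operators as acting on complementary tensor factors. The only cosmetic difference is that the paper splits along $H_C$ versus $H_C^{\perp}$, whereas you split along the finer $H_C^{(X_1,\ldots,X_m)}$ versus its complement; your choice requires the extra input of Lemma~\ref{lem:commutativity lemma} to see that $\frt_{\caS_0^{\anchorino}}(a)$ commutes with the inner-boundary projectors, but this is harmless. For the $\frt_{\caN}$ case, note that the reference you want is Lemma~\ref{lem:frt inclusions} rather than the bare inclusion map~\eqref{eq:cyclinder inclusion}: it is that lemma which guarantees $\frt_{\caN}(a)$ agrees with $\frt_{\caS_0^{\anchorino}}(\iota(a))$ on $H_C$, and hence on $H_C^{(X_1,\ldots,X_m)}$, so that your disjoint-tensor-factor argument indeed carries over.
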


\begin{proof}
    Let us write $\caS = \caS_0^{\anchorino}$. We first show $\big[\Dr_C^{\anchorino}[\beta], \frt_{\caS}(a) \big] = 0$ for any $a \in \Tube_{\caS}$. On the orthogonal complement of the skein subspace $H_C$ the commutator vanishes because $\Dr_C^{\anchorino}[\beta]$ annihilates $H_C^{\perp}$, and  $\frt_{\caS}(a) H_C^{\perp} \subset H_C^{\perp}$. It therefore remains to verify that the commutator vanishes on $H_C$. Since $\Psi_C^{\anchorino}$ is an isomorphism of unitary $\Tube$-modules it is sufficient to show that 
    $$ \big( \Psi_C^{\anchorino} \big)^{-1} \, \Dr_C^{\anchorino}[\beta] \, \Psi^{\anchorino}_C : \caC^*_{\anchorino}(\Sigma_C) \rightarrow \caC^*_{\anchorino}(\Sigma_C) $$
    commutes with the $\Tube$ action $\triangleright_0$ on $\caC^*_{\anchorino}$ corresponding under $\Psi_C^{\anchorino}$ to $\frt_{\caS}$. But $\triangleright_0$ acts only on the tensor factor $\caC(X_0 \rightarrow \chi^{\otimes\caS_0^{\anchorino}})$ of $\caC^*_{\anchorino}$, while $\big( \Psi_C^{\anchorino} \big)^{-1} \, \Dr_C^{\anchorino}[\beta] \, \Psi^{\anchorino}_C$ acts as identity on that factor by definition. We conclude that $\big[\Dr_C^{\anchorino}[\beta], \frt_{\caS}(a) \big] = 0$.
    
    Finally, let $\caN \subset \caS_0^{\anchorino}$ and $a\in A(\caN)$. As before, $[ \Dr_C^{\anchorino}[\beta], \frt_{\caN}(a) ]$ vanishes on $H_C^{\perp}$, and $\frt_{\caN}(a) =  \frt_{\caS_0^{\anchorino}}( \iota(a) )$ when acting on $H_C$ (see \Cref{lem:frt inclusions}) so the result for $\frt_\caN$ follows from that of $\frt_{\caS^{\anchorino}_0}$.
\end{proof}

\begin{lemma} \label{lem:string insertions preserve ground state constraints}
    Let $C$ be a finite connected region 
    with anchor $\anchor$ and let $f$ be a face such that one of the following holds:
    \begin{itemize}
        \item none of the vertices of $f$ belong to $C$,
        \item $f$ belongs to $C$,
        \item $f$ sits on the outer boundary of $C$.
    \end{itemize}
    Then $B_f$ commutes with the action $\Dr_C^{\anchorino}$, i.e. $[ \Dr_C^{\anchorino}[\beta], B_f] = 0$.
\end{lemma}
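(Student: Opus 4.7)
The plan is to treat the three cases separately. Cases (i) and (ii) follow from general considerations; case (iii) is the main challenge.

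For case (i), where none of the vertices of $f$ belong to $C_0$, the operator $B_f$ is supported on vertices outside $C_0$, while $\Dr_C^{\anchorino}[\beta] \in \caA_C$ acts trivially outside $C_0$. Disjoint supports give immediate commutativity.

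For case (ii), where $f \in F_C$, I would first show that $B_f$ commutes with the orthogonal projection onto $H_C$ and acts as the identity on $H_C$. Indeed, $B_f$ is one of the commuting factors of $B_C$, so $B_f B_C = B_C$, and $B_f$ preserves $H_{C_1}$ because it modifies morphisms only on the collar of $f$ without violating any string-net constraint. Since $H_C = B_C H_{C_1}$, these two observations give both claims. The definition of $\Dr_C^{\anchorino}[\beta]$ ensures that its range lies in $H_C$ and that it annihilates $H_C^{\perp}$. Consequently, on $H_C$ both composites in the commutator equal $\Dr_C^{\anchorino}[\beta]$ (using $B_f|_{H_C} = \id$), while on $H_C^{\perp}$ both composites vanish (using that $B_f$ preserves $H_C^{\perp}$).

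For case (iii), the plan is to enlarge $C$ to $\widetilde C := C \cup \widetilde f$, where $\widetilde f$ denotes the face $f$ together with its boundary edges and vertices. Then $f \in F_{\widetilde C}$, and the inner boundary components of $\Sigma_{\widetilde C}$ coincide with those of $\Sigma_C$, so one can choose an anchor $\widetilde\anchorino$ on $\widetilde C$ sharing the subanchor of $\anchor$, with the outer edge routed suitably inside $\Sigma_C$. Case (ii) applied to $\widetilde C$ then yields $[B_f, \Dr_{\widetilde C}^{\widetilde\anchorino}[\beta]] = 0$. The remaining task is to establish the compatibility identities
\begin{equation*}
    B_f \, \Dr_C^{\anchorino}[\beta] \;=\; B_f \, \Dr_{\widetilde C}^{\widetilde\anchorino}[\beta], \qquad \Dr_C^{\anchorino}[\beta] \, B_f \;=\; \Dr_{\widetilde C}^{\widetilde\anchorino}[\beta] \, B_f,
\end{equation*}
as operators on $\caH$. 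The intuition is that once the dotted loop has been inserted by $B_f$ around the face $f$, the resulting state is indistinguishable at the skein-module level between the two surfaces: by the cloaking property \eqref{eq:normalization and absorption properties} the dotted loop effectively fills in the indentation of $\Sigma_C$ at $f$, so that the diagram on $\Sigma_C$ may be reinterpreted as one on $\Sigma_{\widetilde C}$ without affecting the graphical action of the Drinfeld insertion. Combining these identities with the commutator for $\widetilde C$ yields the desired commutation. The main obstacle is establishing these compatibility identities rigorously, which requires careful diagrammatic bookkeeping to match the graphical descriptions of $\Dr_C^{\anchorino}[\beta]$ and $\Dr_{\widetilde C}^{\widetilde\anchorino}[\beta]$ after the dotted loop has been absorbed.
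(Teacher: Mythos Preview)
Your treatments of cases (i) and (ii) are correct; the paper simply declares these obvious.

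For case (iii) your route diverges from the paper's, and the gap you flag at the end is the entire difficulty. The compatibility identities $B_f\,\Dr_C^{\anchorino}[\beta]=B_f\,\Dr_{\widetilde C}^{\widetilde\anchorino}[\beta]$ and $\Dr_C^{\anchorino}[\beta]\,B_f=\Dr_{\widetilde C}^{\widetilde\anchorino}[\beta]\,B_f$ are not just diagrammatic bookkeeping. On the range of $B_f$, the second identity reduces to showing that $\Dr_C^{\anchorino}[\beta]$ and $\Dr_{\widetilde C}^{\widetilde\anchorino}[\beta]$ agree on $H_{\widetilde C}^{(X_1,\ldots,X_m)}$, which is exactly a variant of the Inclusion Lemma (Lemma~\ref{lem:inclusion lemma}). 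In the paper that lemma is proved \emph{after} the present one, and its proof explicitly invokes Lemma~\ref{lem:string insertions preserve ground state constraints} to commute $\Dr_C^{\anchorino_C}[\beta]$ past $B_I$. So you cannot appeal to it here, and an independent proof of your compatibility identities would essentially reproduce the substance of the lemma you are trying to establish.

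The paper's argument for case (iii) is more direct and sidesteps this. Because $f$ sits on the outer boundary, its collar region $C^f$ splits as $C^{\inn}\cup C^{\out}$ with $C^{\inn}\subset C$ and $C^{\out}$ disjoint from $C$. The dotted-loop insertion defining $B_f$ is then written, via a decomposition of type~\eqref{eq:decomposition into simples} along the interface between $C^{\inn}$ and $C^{\out}$, as a finite sum of products of two kinds of operators: elements of $\frt_{\caN_{\inn}}(A(\caN_{\inn}))$ with $\caN_{\inn}=\partial\Sigma_{C^{\inn}}\cap\partial\Sigma_C\subset\caS_0^{\anchorino}$, and operators supported on $C^{\out}$. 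The first kind commute with $\Dr_C^{\anchorino}[\beta]$ by Lemma~\ref{lem:string insertions commute with outer Tube actions}, which is already available and explicitly covers $\frt_\caN$ for any decorated submanifold $\caN$ of the outer boundary; the second kind commute by disjoint supports. The commutator vanishes immediately, with no comparison of Drinfeld insertions across different regions.
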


\begin{proof}
    The first two cases are obvious.
    
    Write $\caS_0 = \caS^{\anchor}_0$ for the outer boundary. If $f$ sits on the outer boundary then the region $C^f$ decomposes into two non-empty regions $C^{\inn}$ and $C^{\out}$, where $C^{\inn}$ consists of the vertices and edges of $C^f$ that are entirely contained in $\Sigma_C$, and $C^{\out}$ consists of the vertices and edges that are entirely contained in $\Sigma^c_C$. Let $\caN_{\inn} := \partial \Sigma_{C^{\inn}} \cap \partial \Sigma_C$ be the decorated submanifold shared by $\partial \Sigma_{C^{\inn}}$ and $\partial \Sigma_C$. Let us specialise to the case where $C^{\inn}_0$ consists of the left two vertices of the face $f$ and the edge between them. By writing $B_f$ as
    \begin{equation} \label{eq:B_f in out decomposition}
        B_f \, \adjincludegraphics[valign=c, width = 2.0cm]{B_f_before.pdf} = \adjincludegraphics[valign=c, width = 2.0cm]{B_f_after.pdf} = \frac{1}{\caD^2} \, \sum_{a \in \Irr \caC} d_a \sum_{r, s \in \Irr \caC} \, d_r d_s \, \adjincludegraphics[valign=c, width = 2.0cm]{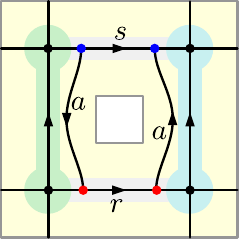}\,\,,
    \end{equation}
    where we indicated $C^{\inn}$ in green and $C^{\out}$ in blue, we see that $B_f$ belongs to the algebra generated by $\frt_{\caN_{\inn}}( A(\caN_{\inn}) )$ and $\frt_{\partial \Sigma_{C^{\out}}}( A(\partial \Sigma_{C^{\out}})$. Since $\Dr^{\anchorino}_C[\beta]$ commutes with $\frt_{\caN_{\inn}}( A(\caN_{\inn}) )$ by Lemma \ref{lem:string insertions commute with outer Tube actions}, and $\Dr^{\anchorino}_C[\beta]$ commutes with $\frt_{\partial \Sigma_{C^{\out}}}( \Tube_{\partial \Sigma_{C^{\out}}} )$ by disjoint supports, we find that $[ \Dr_C^{\anchorino}[\beta], B_f] = 0$.

    All other cases can be treated in the same way, by decomposing the insertion of an $a$-loop as above, or in one of the following ways:
    \begin{equation*}
        \sum_{r, s \in \Irr \caC} \, d_r d_s \, \adjincludegraphics[valign=c, width = 2.0cm]{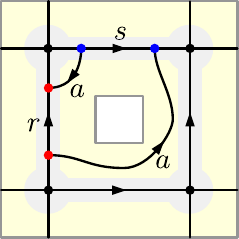} = \sum_{r, s, t \in \Irr \caC} \, d_r d_s d_t \, \adjincludegraphics[valign=c, width = 2.0cm]{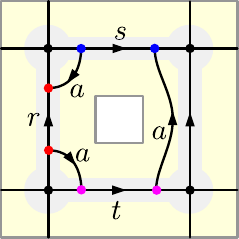} =  \sum_{r, s, t, u \in \Irr \caC} \, d_r d_s d_t d_u \, \adjincludegraphics[valign=c, width = 2.0cm]{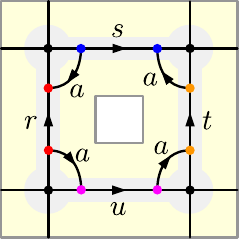}\,\,,
    \end{equation*}
    or in a rotated version of one of these.
\end{proof}

\begin{remark}
    \Cref{lem:string insertions preserve ground state constraints} also follows from \Cref{lem:string insertions commute with outer Tube actions}, \Cref{lem:gluing skein subspaces}, and the statement that $\gl$ descends to an isomorphism $A(\Sigma_{C}) \otimes_{A(\caI)} A(\Sigma_{C'}) \simeq A(\Sigma_{C} \sqcup_{\caI} \Sigma_{C'})$, where $\caI$ is a common gluing boundary (see \cite[Theorem~5.2.10]{walker2006tqft}).
\end{remark}

\subsection{Inclusion Lemma} \label{subsec:inclusion lemma}

Let $D = C \sqcup_I C'$ be a finite connected region obtained by gluing $C$ and $C'$ along a self-avoiding dual path $I$ as described in Section \ref{ssubsec:gluing skein subspaces}. Then all inner boundary components of $\Sigma_C$ are also inner boundary components of $\Sigma_D$. Let $\anchor_C$ be an anchor of $\Sigma_C$ whose attachment point to the outer boundary component $\caS^{\anchorino_C}_0$ of $\Sigma_C$ also lies on the outer boundary component of $\Sigma_D$. Let $\anchor_D$ be an anchor for $\Sigma_D$ such that $\caS^{\anchorino_D}_0$ is the outer boundary of $\Sigma_D$ and such that $\caS^{\anchorino_C}_{\kappa} = \caS^{\anchorino_D}_{\kappa + \lambda}$ for $\kappa = 1, \cdots, m$ and some fixed \emph{offset} $\lambda \in \{ 0, 1, \cdots, n-m \}$. If the graph of $\anchor_D$ moreover contains the graph of $\anchor_C$ as a subgraph, then we say $\anchor_D$ \emph{extends} $\anchor_C$ with offset $\lambda$. More generally, any anchor $\anchor'_D \sim \anchor_D$ is also said to extend $\anchor_C$ with offset $\lambda$.

Write $P^{(X_1, \cdots, X_n)}_D$ for the orthogonal projector onto the skein subspace $H_D^{(X_1, \cdots, X_n)} \subset \caH_D$ defined in Eq. \eqref{eq:skein subspace with free outer boundary}.
\begin{lemma} \label{lem:inclusion lemma}
    Let $D$ be obtained by gluing $C$ and $C'$ along a self-avoiding dual path $I$, and let $\anchor_D$ extend $\anchor_C$ with offset $\lambda$. For any $X_1, \cdots, X_n, Y_1, \cdots, Y_n \in \Irr Z(\caC)$ and any $\beta \in Z(\caC) \big( X_{\lambda+1} \otimes \cdots \otimes X_{\lambda+m} \rightarrow Y_1 \otimes \cdots \otimes Y_m \big)$ we have
    $$  \Dr^{\anchorino_C}_C[\beta] P^{(X_1, \cdots, X_n)}_D = \Dr_{D}^{\anchorino_D}[\id_{X_1 \otimes \cdots \otimes X_{\lambda}} \otimes \beta \otimes \id_{X_{\lambda + m + 1} \otimes \cdots \otimes  X_n}].  $$
\end{lemma}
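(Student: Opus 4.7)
Both operators annihilate the orthogonal complement of $H_D^{(X_1,\ldots,X_n)} \subset \caH_D$: the right-hand side by construction of $\Dr_D^{\anchorino_D}$ (see \eqref{eq:Drinfeld insertion defined}), and the left-hand side because the projector $P_D^{(X_1,\ldots,X_n)}$ sits on the outside. So I reduce to verifying the identity on a spanning vector of $H_D^{(X_1,\ldots,X_n)}$, say
$\psi = \Psi_D^{\anchorino_D}(\alpha \otimes w_0 \otimes w^{X_1}\otimes\cdots\otimes w^{X_n})$, where $P_D^{(X_1,\ldots,X_n)}$ acts as the identity. By the definition of $\Dr_D^{\anchorino_D}$, the right-hand side outputs the vector obtained from $\psi$ by precomposing $\alpha$ with $\id\otimes\beta\otimes\id$ and swapping $w^{X_\kappa}$ for $w^{Y_\kappa}$ for $\kappa\in\{\lambda+1,\ldots,\lambda+m\}$; my goal is to recover this same vector by applying $\Dr_C^{\anchorino_C}[\beta]$ to $\psi$.

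The central step is a graphical rearrangement that exploits the extension $\anchor_D \supset \anchor_C$: the $m$ strands of the $D$-spider running from the $D$-anchor to the boundaries $\caS^{\anchorino_D}_{\lambda+\kappa}$ (for $\kappa=1,\ldots,m$) all pass through the $C$-anchor point inside $\Sigma_C$. I would first bunch them into a single composite strand labelled $X_{\lambda+1}\otimes\cdots\otimes X_{\lambda+m}$ between the $D$-anchor and the $C$-anchor, then resolve this composite into simples $Z\in\Irr Z(\caC)$ via the identity $\id = \sum_{Z,j} d_Z\,\gamma_j^Z \circ (\gamma_j^Z)^\dagger$ for an orthonormal basis $\{\gamma_j^Z\}$ of $Z(\caC)(Z\to X_{\lambda+1}\otimes\cdots\otimes X_{\lambda+m})$. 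The cloaking property \eqref{eq:normalization and absorption properties} of dotted lines makes this rewriting valid in the skein module. Invoking the gluing lemma \Cref{lem:gluing skein subspaces} along the gluing interval $\caI$ then decomposes each summand into the gluing of a vector
$\psi_C^{Z,j} = \Psi_C^{\anchorino_C}\big(\gamma_j^Z \otimes w_0^C \otimes w^{X_{\lambda+1}}\otimes\cdots\otimes w^{X_{\lambda+m}}\big) \in H_C^{(X_{\lambda+1},\ldots,X_{\lambda+m})}$ and a complementary vector on $\Sigma_{C'}$ that does not depend on $\gamma_j^Z$; here $w_0^C$ is a boundary condition on $\caS^{\anchorino_C}_0$ matching the $Z$-strand that crosses $\caI$.

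Now $\Dr_C^{\anchorino_C}[\beta]$ affects only the $\Sigma_C$ factor of each summand, replacing $\gamma_j^Z$ by $\beta\circ\gamma_j^Z$ and $w^{X_\kappa}$ by $w^{Y_\kappa}$. Running the graphical decomposition in reverse — rebunching the modified summands and reassembling through $\Psi_D^{\anchorino_D}$ — produces exactly the right-hand side, completing the verification. The main technical obstacle is tracking the various normalizations so that the assembly on the nose reproduces the right-hand side: the $\big(\prod d_{X_\kappa}\big)^{1/2}\caD^{m-1}$ prefactors of $\Psi$ from \eqref{eq:Phi_anchor_defined} applied to $\Sigma_C$ versus $\Sigma_D$, the $(d_{Y_\kappa}/d_{X_\kappa})^{1/2}$ factors per puncture in \eqref{eq:Drinfeld insertion graphical}, the $d_Z$ factors from the resolution of identity, and the $\caD^{\abs{I_{\inn}}}\,d_{b_{\caI}}^{-1/2}$ factor from \Cref{lem:gluing skein subspaces} must combine consistently across the decomposition-and-reassembly. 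A secondary subtlety is compatibility of the boundary enumerations induced by $\anchor_C$ and $\anchor_D$ — this is guaranteed by the offset $\lambda$ in the definition of extension, with any residual Dehn-twist ambiguity around the outer boundary of $\Sigma_C$ absorbed in $w_0^C$ and controlled by \Cref{lem:anchors related by Dehn twists}.
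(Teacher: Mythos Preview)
Your high-level strategy—reduce to spanning vectors $\psi=\Psi_D^{\anchorino_D}(\alpha\otimes w_0\otimes w^{X_1}\otimes\cdots\otimes w^{X_n})$, factorize across the cut $I$, act with $\Dr_C^{\anchorino_C}[\beta]$ on the $C$-factor, and reassemble—matches the paper. But the execution has two genuine gaps.

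First, a geometric confusion. Since $\anchor_D$ extends $\anchor_C$, the two anchor points \emph{coincide}; there is no ``composite strand between the $D$-anchor and the $C$-anchor''. Consequently the $m$ strands you bunch into $Z$ are precisely the ones that \emph{stay} in $\Sigma_C$; what crosses $\caI$ are the remaining $n-m$ strands to the inner boundaries of $C'$ together with the $X_0$-strand to $\caS_0^{\anchorino_D}$. So the outer boundary condition ``$w_0^C$ matching the $Z$-strand that crosses $\caI$'' does not describe what actually sits on $\caS_0^{\anchorino_C}$, and your claimed identification $\psi_C^{Z,j}=\Psi_C^{\anchorino_C}(\gamma_j^Z\otimes w_0^C\otimes\cdots)$ is not the $\Sigma_C$-restriction one obtains. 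The paper instead first contracts the outer dotted loop, then applies the $\caC$-level resolution \eqref{eq:decomposition into simples} along $I$ so that a single \emph{simple} $a\in\Irr\caC$ crosses $\caI$—this is what the gluing lemma actually requires, since \Cref{lem:gluing skein subspaces} is stated for $\phi\in H^b_{C_1}$ with simple boundary labels. Only after factorizing does the paper reinstate a dotted loop and perform the $Z(\caC)$-decomposition to bring the $\Sigma_C$-factor into Drinfeld form \eqref{eq:Drinfeld insertion graphical}.

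Second, and more structurally, you never address the projector $B_I$. The gluing lemma yields $\psi=\caD^{\abs{I_{\inn}}}\,B_I\sum(\cdots)\otimes(\cdots)$, so $\Dr_C^{\anchorino_C}[\beta]\psi$ has $B_I$ sitting between $\Dr_C^{\anchorino_C}[\beta]$ and the product vectors. The faces of $I_{\inn}$ lie on the \emph{outer} boundary of $C$, so commuting $\Dr_C^{\anchorino_C}[\beta]$ past $B_I$ is exactly the content of \Cref{lem:string insertions preserve ground state constraints}; without invoking it you cannot evaluate $\Dr_C^{\anchorino_C}[\beta]$ on the $C$-factors alone. This is the step that makes the whole argument work, and it is missing from your outline.
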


\begin{proof}
    Let
    $$ \psi  = \Psi_D^{\anchorino_D} \big(  \al \otimes w_0 \otimes w^{X_1} \otimes \cdots \otimes w^{X_n} \big) $$
    for arbitrary $X_0 \in \Irr Z(\caC)$, $\al \in Z(\caC)( X_0^* \rightarrow X_1 \otimes \cdots \otimes X_m)$, and $w_0 \in \caC( X_0 \rightarrow \chi^{\otimes\caS_0})$. 
    By Proposition \ref{prop:the great interface}, the range of $P_D^{(X_1, \cdots, X_n)}$ is spanned by vectors of this form.
    In order to compute the action of $\Dr_C^{\anchorino_C}[\beta]$ on $\psi$ we will write $\psi$ as a linear combination of pure tensors according to the decomposition $\caH_D = \caH_C \otimes \caH_{C'}$. By replacing $\anchor_D$ with an equivalent anchor, we may without loss of generality assume that the anchor point of $\anchor_D$ lies in $\Sigma_C$ and that its attachment point to the exterior boundary component also belongs to the exterior boundary of $\Sigma_{C'}$. 
    Schematically,
    $$
        \sigma_D(\psi) = \adjincludegraphics[valign=c, height = 2.5cm]{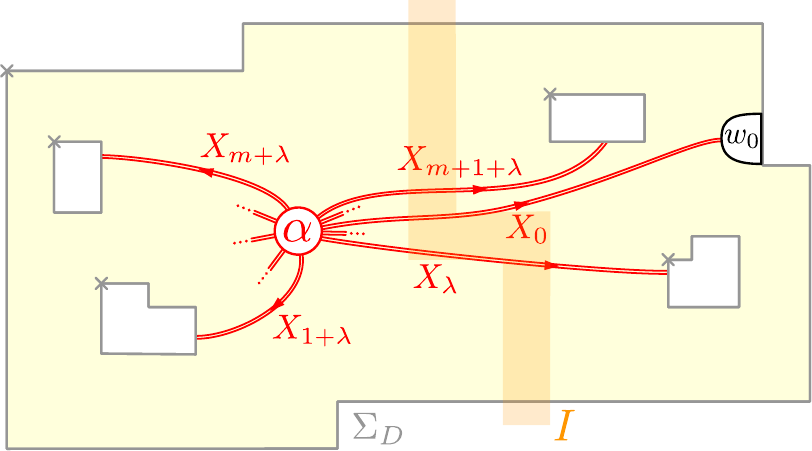} \,\, , 
    $$
    where the Drinfeld center strands lie on the anchor $\anchor_D$. By unpacking Convention \ref{conv:Drinfeld strands attaching to boundary components} for the exterior boundary condition $w_0$ and noting that the corresponding dotted line can be contracted to a point, we obtain
    $$
        \sigma_D(\psi) = \adjincludegraphics[valign=c, height = 2.5cm]{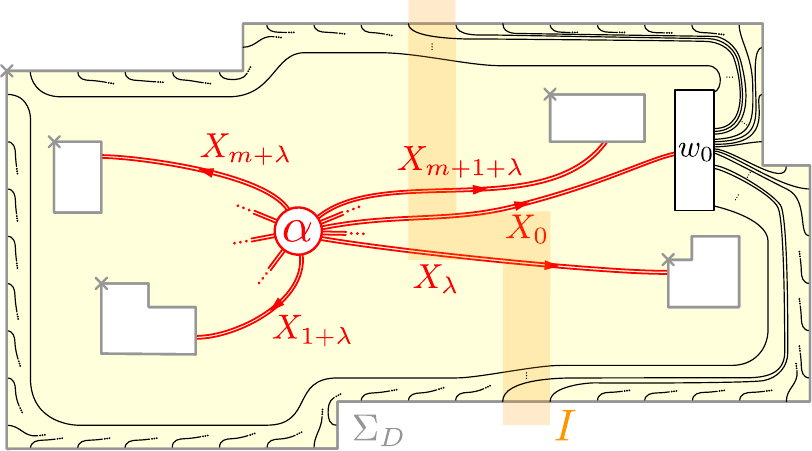} \,.
    $$
    This illustrates that by isotopy we may assume that anchor lines connected to inner boundary components of $C'$ as well as all the edges attaching to $\caS_0^{\anchorino_C}$ intersect the cut $I$ exactly once, and no other edges of the string diagram intersect $I$. Since $I$ is contractible, we may apply a decomposition \eqref{eq:decomposition into simples} to get a sum over string diagrams which have a single strand, labelled by a simple object $a \in \Irr \caC$, crossing $I$ along some edge in $\caE_I$. (Note that the assumptions on $I$ made in Section \ref{ssubsec:gluing skein subspaces} guarantee that $\caE_I$ is not empty.) 
    With this arrangement of the string diagrams we find that each summand factorises under gluing. That is, the restrictions to $\Sigma_C, \Sigma_{C'}$ are valid string diagrams. Schematically,
    \begin{align*}
        \sigma_D(\psi) &= \sum_{a \in \Irr \caC} \, d_a \, \adjincludegraphics[valign=c, height = 2.5cm]{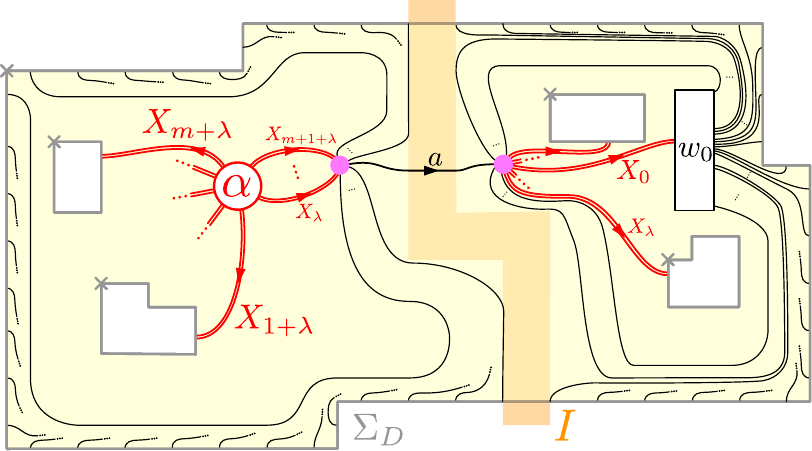} \\
        &= \sum_{a \in \Irr \caC} \, d_a  \, \gl \left( \adjincludegraphics[valign=c, height = 2.5cm]{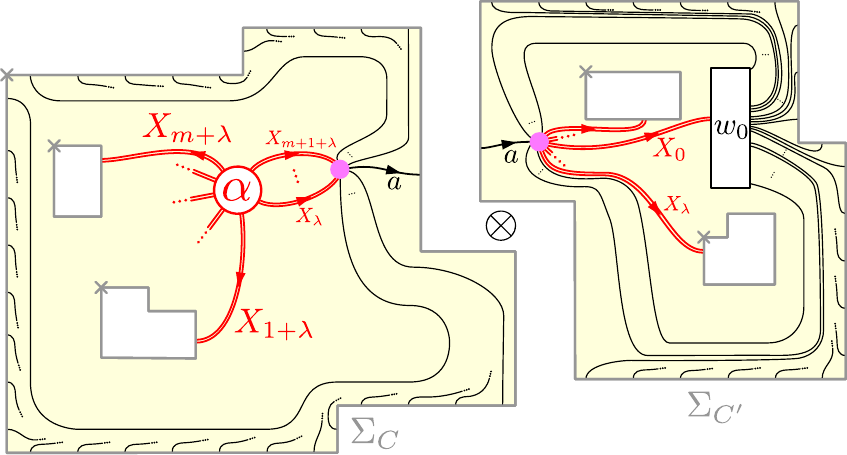} \right) \,\, ,
    \end{align*}
    where $\gl : A(\Sigma_C) \otimes A(\Sigma_{C'}) \to A(\Sigma_D)$ is the gluing map of Section \ref{subsec:gluing}. 
    Using Lemma \ref{lem:gluing skein subspaces} we express $\psi$ as the image under $B_I$ of a sum of product vectors in $\caH_{C}\otimes \caH_{C'}$:
    \begin{align*}
        \psi &= \caD^{\abs{I_{\inn}}} \, B_I \times \sum_{a} \, d_a^{1/2} \\
        &\, \sigma_C^{-1} \left( \adjincludegraphics[valign=c, height=2.5cm]{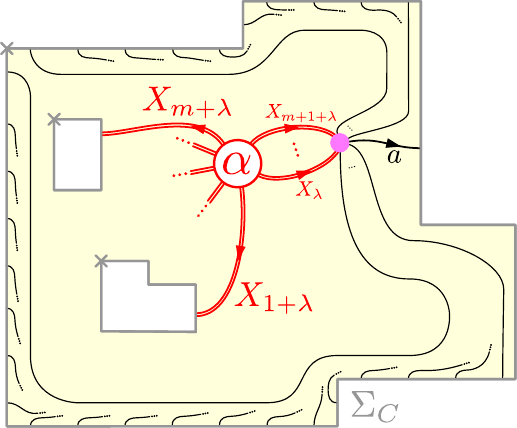} \right) \otimes \sigma_{C'}^{-1} \left( \adjincludegraphics[valign=c, height=2.5cm]{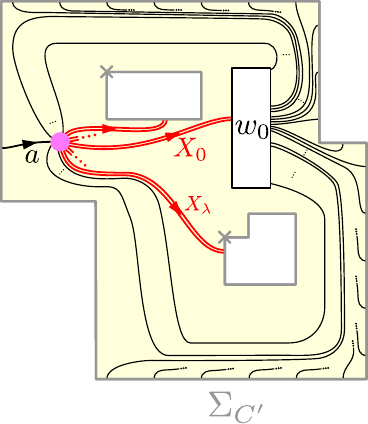} \right).
    \end{align*}
    Lemma \ref{lem:string insertions preserve ground state constraints} implies that $\Dr_C^{\anchorino_C}[\beta]$ commutes with $B_I$, so we can evaluate $\Dr_C^{\anchorino_C}[\beta]$ directly on the first factors of each summand. Here we can reinstate a dotted line and push it towards the outer boundary so, possibly after decomposing the identity on $X_{m+1+\lambda}\otimes \cdots \otimes X_{\lambda}$ in $Z(\caC)$, these string diagrams are of the form \eqref{eq:Drinfeld insertion graphical}. That is, the action is given by \eqref{eq:Drinfeld insertion defined}, and we obtain
    \begin{align*}
        \Dr_C^{\anchorino_C}[\beta] \psi &= \caD^{\abs{I_{\inn}}} \, B_I \times  \left( \frac{ \prod_{\kappa=1}^m d_{Y_{\kappa}} }{ \prod_{\kappa = 1}^m \, d_{X_{\kappa + \lambda}} } \right)^{1/2} \sum_{a} \, d_a^{1/2} \\
        &\times \, \sigma_C^{-1} \left( \adjincludegraphics[valign=c, height=2.5cm]{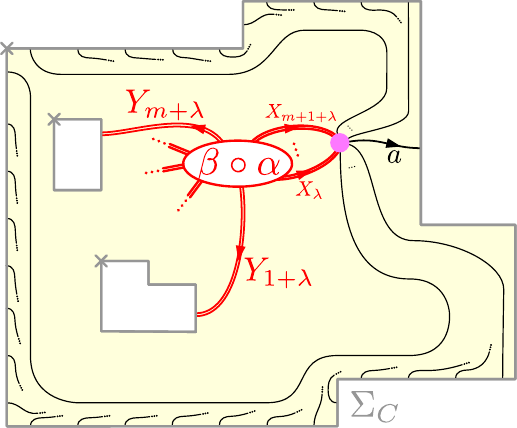} \right) \otimes \sigma_{C'}^{-1} \left( \adjincludegraphics[valign=c, height=2.5cm]{inclusion_4_right.pdf} \right) \,\, .
    \end{align*}
    The action of $\Dr_C^{\anchorino_C}[\beta]$ results in a string diagram for every summand with the morphism $\beta \circ \alpha$ at the anchor point, while the boundary condition at $S_0^{\anchorino_C}$ is left unchanged. The tensor product decomposition can now be undone, resulting in a string diagram on $\Sigma_D$ which is again of the form \eqref{eq:Drinfeld insertion graphical}, it is the image under $\Psi_D^{\anchorino_D}$ of a product vector in $\caC_{\anchorino}^*(\Sigma_D)$.
    We find $\Dr_C^{\anchorino_C}[\beta] \psi = \Dr_D^{\anchorino_D}[\id_{X_1 \otimes \cdots \otimes X_{\lambda}} \otimes \beta \otimes \id_{X_{\lambda + m + 1} \otimes \cdots \otimes  X_n}] \psi$ by comparing boundary conditions and the morphism at the anchor point. That is,
    \begin{align*}
        \Dr_C^{\anchorino_C}[\beta] \psi &= \left( \frac{ \prod_{\kappa=1}^m d_{Y_{\kappa}} }{ \prod_{\kappa = 1}^m d_{X_{\kappa + \lambda}} } \right)^{1/2} \sigma_D^{-1} \left( \adjincludegraphics[valign=c, height=2.5cm]{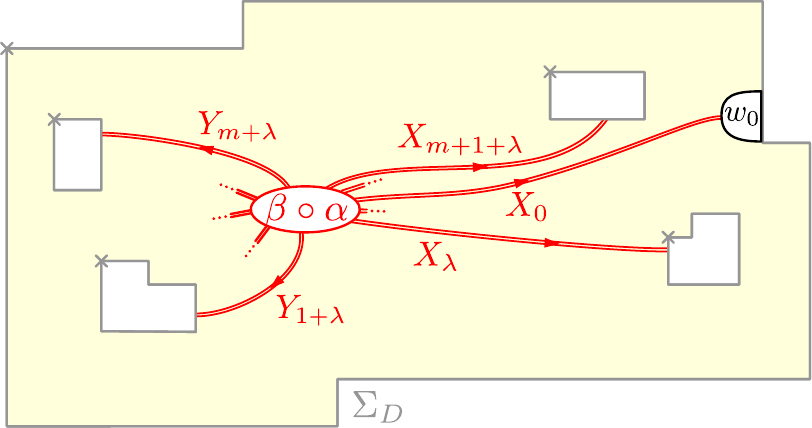} \right) \\
        &= \Dr_D^{\anchorino_D}[\id_{X_1 \otimes \cdots \otimes X_{\lambda}} \otimes \beta \otimes \id_{X_{\lambda + m + 1} \otimes \cdots \otimes  X_n}] \psi.
    \end{align*}
    Since the operator $\Dr_D^{\anchorino_D}[\id_{X_1 \otimes \cdots \otimes X_{\lambda}} \otimes \beta \otimes \id_{X_{\lambda + m + 1} \otimes \cdots \otimes  X_n}]$ annihilates the orthogonal complement of $P_D^{(X_1, \cdots, X_n)}$, this proves the lemma.
\end{proof}

\subsection{Links} \label{subsec:links}

A \emph{link} is a finite dual path $L = (f_1, \cdots, f_l)$ consisting of at least five faces, such that
\begin{itemize}
    \item $f_i$ and $f_j$ share an edge if and only if $j = i \pm 1$ (or $i = j$). In particular, $L$ is self-avoiding.
    \item the first four faces of $L$ lie on a straight line, as do the last four faces of $L$.
\end{itemize}

Let $e_{\ii} =: \partial_{\ii} L$ denote the edge between faces $f_1$ and $f_2$ and $e_{\f} =: \partial_{\f} L$ the edge between the last two faces $f_{l-1}$ and $f_l$. The puncture at $e_{\ii}$, resp. $e_{\f}$, is called the \emph{initial}, resp. \emph{final}, puncture of $L$.
To a link $L$ we associate a region $C^L$ consisting of all vertices and edges belonging to the faces constituting the link except  for $e_{\ii}$ and $ e_{\f}$, and all the faces $f_3, \cdots, f_{l-2}$, which we call the \emph{bulk faces} of the link. The associated surface $\Sigma_L := \Sigma_{C^L}$ is a twice punctured disk, with punctures at the edges $e_{\ii}$ and $e_{\f}$. The assumptions on $L$ guarantee that $C^L$ can be cut along some path in between the two punctures, for example cutting transversally across the third face.

Let $C^{L, r} \subset C^L$ be the subregion consisting of all vertices and edges on the right side of $C^L$ with respect to the direction of $L$. We call the surface $\Sigma_{L, r} := \Sigma_{C^{L, r}}$ the \emph{right strip} of $L$.

We fix an anchor $\anchor_L$ for $C^L$ whose equivalence class up to the outer boundary component $\caS_0^{\anchorino_L}$ is uniquely determined by the condition that the underlying graph of $\anchor_L$ lies in the right strip $\Sigma_{L,r}$. In particular, $\caS_{1}^{\anchorino_L}=\caS_{e_{\ii}}$ is the initial puncture, and $\caS^{\anchorino_L}_2=\caS_{e_{\f}}$ is the final puncture of $L$.
To see that this condition indeed determines the equivalence class up to $\caS_0^{\anchorino_L}$ uniquely, note first that $\Sigma_{L,r}$ is a disk, so any two anchors with the same attachment points in $\Sigma_{L,r}$ are equivalent by isotopy in $\Sigma_{L,r}$ keeping $\partial \Sigma_{L, r}$ fixed. The boundary components $\caS_1^{\anchorino_L}$ and $\caS_2^{\anchorino_L}$ are rectangles with fiducial points sitting on a corner, and $\partial \Sigma_{L,r}$ contains exactly one straight line of each of these boundary components. It follows that different choices of attachment point to these boundary components are also related to each other by isotopy of $\Sigma_L$ keeping the fiducial points fixed. Different choices of attachment point to the outer boundary are related by isotopy of $\Sigma_L$, where we need not keep the fiducial point on $\caS_0^{\anchorino_L}$ fixed.
\begin{figure}[h]
    \centering
    \includegraphics[width=0.5\textwidth]{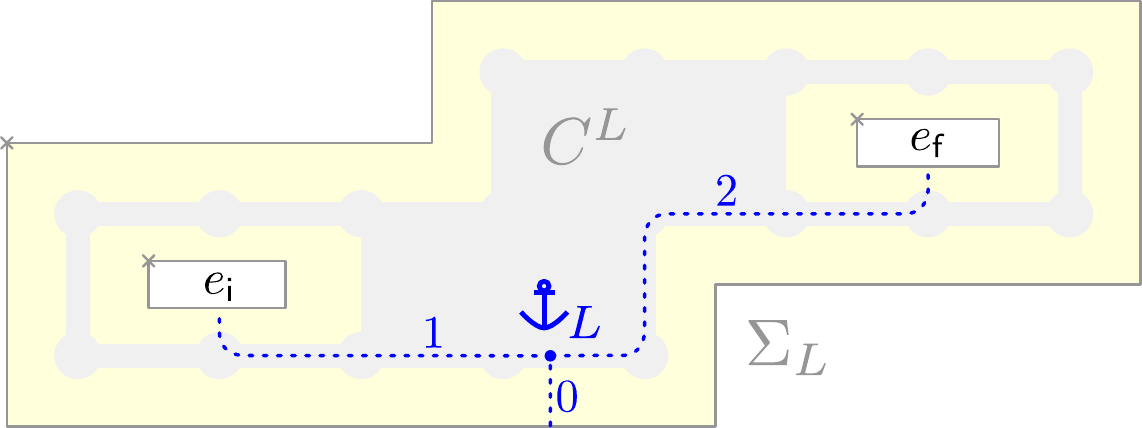}
    \caption{The anatomy of a link $L$. The region $C^L$ is indicated in light grey, the surface $\Sigma_L$ in light yellow, and the anchor $\anchor_L$ in blue.}
    \label{fig:link}
\end{figure}

We write $\supp(L) = V_{C^L}$ for the \emph{support} of the link $L$. We will often write $L$ instead of $C^L$ in notations depending on the region $C^L$. For example, $\caA_L := \caA_{C^L}, \caH_L := \caH_{C^L}$, $H_L := H_{C^L}$, $\sigma_L := \sigma_{C^L}$, $\Psi_L^{\anchorino} := \Psi_{C^L}^{\anchorino}$, etc. Consistent with this notation, we have $F_L = F_{C^L} = \{ f_3, \cdots, f_{l-2} \}$. 
We also suppress the standardised choice of anchor $\anchor_L$, writing $\Dr_L[\beta] = \Dr_{C^L}^{\anchorino_L}[\beta]$. The boundary components of $\Sigma_L$ will always have the linear ordering induced by $\anchor_L$, so that $H_L(p_0, p_1, p_2) = H_L(\underline p)$ with $p_{\kappa} = p_{\caS^{\anchorino_L}_{\kappa}}$ for $\kappa = 0, 1, 2$. We also write $\frt^L_{\kappa} := \frt_{\caS_{\kappa}^{\anchorino_L}}$ for the $\Tube$-actions on these three boundary components.

Two links $L_1 = (f_1, \cdots, f_k)$ and $L_2 = (f'_1\cdots, f'_l)$ are \emph{composable} if $\partial_{\ii}(L_1)=\partial_\f(L_2)$ and the composite $L_2 \wedge L_1 := (f'_1, \cdots, f'_{l-1}, f'_l, f_2, \cdots, f_k)$ is again a link. Being composable is \emph{not} a symmetric relation. Note that the right strip of $L_2 \wedge L_1$ is the union of the right strips of $L_1$ and $L_2$.

\subsection{Unitary gates} \label{subsec:unitary gates}

We construct unitaries which produce and annihilate anyon pairs on the punctures of a given link, or move an anyon from one puncture to another. The construction extends the hopping operators of \cite{christian2023lattice, green2024enriched}.

\subsubsection{Partial isometries for pair creation and hopping} \label{ssubsec:pair creation and hopping}

For each $X \in \Irr Z(\caC)$ we fix a (non-canonical) unitary isomorphism $\zeta_X : X^* \rightarrow \bar X$. We represent $\zeta_X$ in the graphical calculus by a solid box, and $\zeta_X^{\dag}$ by an empty box:
\begin{equation} \label{eq:star to bar}
    \zeta_X = \,\,\, \adjincludegraphics[valign=c, height = 0.8cm]{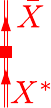} \,\,\, = \,\,\, \adjincludegraphics[valign=c, height = 0.8cm]{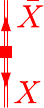} \,\,,\quad\quad \zeta^{-1}_X = \zeta^{\dag}_X = \,\,\, \adjincludegraphics[valign=c, height = 0.8cm]{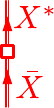} \,\,\, = \,\,\, \adjincludegraphics[valign=c, height = 0.8cm]{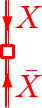}\,.
\end{equation}
We will moreover make use of the morphisms $\ev^{\dag}_X$ and $\coev_X^{\dag}$, for which we introduce the following graphical representations:
\begin{equation} \label{eq:graphical gagger duals}
    \ev_X^{\dag} = \,\,\, \adjincludegraphics[valign=c, width = 1.5cm]{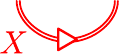} \,\,\, , \,\,\, \coev_X^{\dag} = \,\,\, \adjincludegraphics[valign=c, width = 1.5cm]{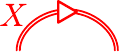} \,\, .
\end{equation}Let L be a link. We define an operator which creates an $X \bar X$ anyon pair at the punctures of $L$ by
\begin{equation} \label{eq:pair creation operator}
    \Dr_L^{(\I\I \rightarrow X \bar X)} := d_X^{-1/2} \, \Dr_{L}[ (\id_X \otimes \zeta_X) \circ \coev_X ] = d_X^{-1/2} \, \Dr_{L} \left[ \,\,\,\,\, \adjincludegraphics[valign=c, height = 0.8cm]{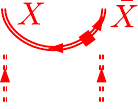} \right],
\end{equation}
where we indicate explicitly the incoming and outgoing identity strands using dotted double lines in the diagrams, so that source and target objects of the morphisms are clear. These identity strands are attached to the rest of the diagram using unitors in an arbitrary way, all ways representing the same morphism by Mac Lane's coherence theorem.

The adjoint of this operator annihilates an $X \bar X$ pair:
\begin{equation} \label{eq:pair annihilation operator}
    \Dr_L^{(X \bar X \rightarrow \I \I)} := (\Dr_L^{(\I\I \rightarrow X \bar X)})^* = d_X^{-1/2} \, \Dr_{L}[ \coev_X^{\dag} \circ (\id_X \otimes \zeta_X^{\dag}) ] = d_X^{-1/2} \, \Dr_{L} \left[ \,\,\,\,\, \adjincludegraphics[valign=c, height = 0.8cm]{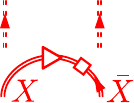} \right].
\end{equation}

Similarly, we define operators which move an $X$ anyon between the initial and final punctures of $L$:
\begin{equation} \label{eq:hopping operator}
    \Dr_L^{(X \I \rightarrow \I X)} := \Dr_{L}\left[ \adjincludegraphics[valign=c, height = 1.0cm]{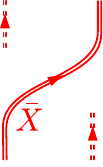} \right], \quad \Dr_L^{(\I X \rightarrow X \I)} := (\Dr_L^{(\I X \rightarrow X \I)})^* = \Dr_{L}\left[ \adjincludegraphics[valign=c, height = 1.0cm]{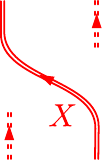} \right].
\end{equation}

Write $B_L = B_{C_L} = \prod_{f \in F_L} B_f$.
\begin{lemma} \label{lem:intertwining properties of hopping and pair creation}
    The pair creation operator $\Dr_L^{(\I\I \rightarrow X \bar X )} \in \caA_L$ is a partial isometry between $H_L(P^{\I}, p^{\I}, p^{\I})$ and $H_L(P^{\I}, p^X, p^{\bar X})$, with
    \begin{align*}
        P_L^{\I \I} &:= \frt^L_1( p^{\I} ) \frt^L_2(p^{\I}) B_L
        = (\Dr_L^{(\I\I \rightarrow X \bar X )})^* \Dr_L^{(\I\I \rightarrow X \bar X )}, \\
        P_L^{X \bar X} &:= \frt^L_0(P^{\I}) \frt^L_1( p^X ) \frt^L_2(p^{\bar X}) B_L
        = \Dr_L^{(\I\I \rightarrow X \bar X )} (\Dr_L^{(\I\I \rightarrow X \bar X )})^*.
        \intertext{Similarly, the hopping operator $\Dr_L^{(X \I \rightarrow \I X)} \in \caA_L$ is a partial isometry between $H_L(P^{\bar X}, p^{X}, p^{\I})$ and $H_L(P^{\bar X}, p^{\I}, p^{X})$, with } 
         P_L^{X \I} &:= \frt^L_1( p^{X} ) \frt^L_2(p^{\I}) B_L 
        = (\Dr_L^{(X \I \rightarrow \I X)})^* \Dr_L^{(X \I \rightarrow \I X)}, \\
        P_L^{\I X} &:= \frt^L_1( p^{\I} ) \frt^L_2(p^{X}) B_L 
        = \Dr_L^{(X \I \rightarrow \I X)} (\Dr_L^{(X \I \rightarrow \I X)})^*. 
    \end{align*}
\end{lemma}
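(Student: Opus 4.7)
The plan is to use the $*$-representation property of Drinfeld insertions (Lemma \ref{lem:Drinfeld insertion multiplicativity}) to reduce each of the four compositions $\Dr^* \Dr$ and $\Dr \Dr^*$ to a single insertion $\Dr_L[\cdot]$ applied to a morphism in $Z(\caC)$ that I compute by graphical calculus. The resulting morphisms will turn out to be identities (or scalar multiples of summand projections), so the insertions become the orthogonal projections onto the claimed skein subspaces by the characterization of skein subspaces in Proposition \ref{prop:the great interface}.

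For the pair creation operator, set $\gamma := (\id_X \otimes \zeta_X) \circ \coev_X \in Z(\caC)(\I \to X \otimes \bar X)$, so that $\Dr_L^{(\I\I \to X\bar X)} = d_X^{-1/2}\, \Dr_L[\gamma]$. Unitarity of $\zeta_X$ together with $\coev_X^\dag \circ \coev_X = d_X\, \id_\I$ (sphericity) gives $\gamma^\dag \circ \gamma = d_X\, \id_\I$, so multiplicativity yields
\[ (\Dr_L^{(\I\I \to X\bar X)})^* \Dr_L^{(\I\I \to X\bar X)} = d_X^{-1}\, \Dr_L[\gamma^\dag \gamma] = \Dr_L[\id_\I]. \]
Moreover, $\gamma/\sqrt{d_X}$ is then an isometry $\I \hookrightarrow X \otimes \bar X$, so $d_X^{-1}\gamma\gamma^\dag$ is the orthogonal projector in $\End_{Z(\caC)}(X \otimes \bar X)$ onto the $\I$-summand, and the other composition equals $\Dr_L[d_X^{-1}\gamma\gamma^\dag]$. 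Combined with Proposition \ref{prop:the great interface}, $\Dr_L[\id_\I]$ is the projection onto $H_L^{(\I,\I)} = H_L(\id, p^\I, p^\I)$, and $\Dr_L[d_X^{-1}\gamma\gamma^\dag]$ is the projection onto $H_L(P^\I, p^X, p^{\bar X})$.

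For the hopping operator, let $\eta \in Z(\caC)(\bar X \otimes \I \to \I \otimes \bar X)$ be the morphism defining $\Dr_L^{(\bar X\I \to \I\bar X)}$. A direct graphical computation using unitarity of $\zeta_X$ (see \eqref{eq:star to bar}) together with the snake relations gives $\eta^\dag \circ \eta = \id_{\bar X \otimes \I}$ and $\eta \circ \eta^\dag = \id_{\I \otimes \bar X}$, so the two compositions equal $\Dr_L[\id_{\bar X \otimes \I}]$ and $\Dr_L[\id_{\I \otimes \bar X}]$, which project onto $H_L^{(\bar X, \I)}$ and $H_L^{(\I, \bar X)}$ respectively. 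Since $\bar X$ is simple, the only $X_0 \in \Irr Z(\caC)$ with $Z(\caC)(X_0^* \to \bar X \otimes \I) \neq 0$ is $X_0 = X$, so these subspaces coincide with $H_L(P^X, p^{\bar X}, p^\I)$ and $H_L(P^X, p^\I, p^{\bar X})$, explaining the absence of an explicit $\frt^L_0(P^X)$-factor in $P_L^{\bar X\I}$ and $P_L^{\I\bar X}$.

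It remains to identify each projection $\Pi_{H_L(\underline p)}$ onto a target skein subspace with the stated product $\prod_\kappa \frt^L_\kappa(p_\kappa) \cdot B_L$. The factors commute by Lemma \ref{lem:commutativity lemma}, so the product is itself a projection, and any vector in $H_L(\underline p)$ is fixed by each factor; the nontrivial direction is the reverse inclusion, namely showing that whenever each factor acts as identity on a vector $\psi \in \caH_L$ the vector already lies in $H_L$. This amounts to bookkeeping: $B_L$ enforces the face constraints at the bulk faces of the link together with the string-net constraints at all edges adjacent to a bulk face, while $\frt^L_1$ and $\frt^L_2$ annihilate anything violating string-net constraints in the collar regions of the two endcaps, which together cover the remaining edges of $C^L$ (the endcap edges adjacent only to the non-bulk faces $f_1, f_2, f_{l-1}, f_l$). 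This edge-by-edge verification is the main technical obstacle; once it is in place, the stated equalities follow.
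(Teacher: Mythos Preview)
Your approach is essentially the same as the paper's: both reduce the four compositions $\Dr^*\Dr$ and $\Dr\,\Dr^*$ to a single Drinfeld insertion via the multiplicativity of Lemma~\ref{lem:Drinfeld insertion multiplicativity}, compute the resulting morphism in $Z(\caC)$ by graphical calculus, and then read off the projector using that $\Dr_L[\beta]$ annihilates the orthogonal complement of $H_L^{(X_1,X_2)}$ by definition together with Proposition~\ref{prop:the great interface}. The paper phrases the $P_L^{X\bar X}$ case slightly differently---arguing that $(\Dr_L^{(\I\I\to X\bar X)})^*$ kills $H_L(P^Y,p^X,p^{\bar X})$ for $Y\neq\I$ because the target $H_L(P^Y,p^{\I},p^{\I})$ is zero---whereas you observe directly that $d_X^{-1}\gamma\gamma^\dag$ is the projector onto the $\I$-summand of $X\otimes\bar X$; these are equivalent.

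The one place you go further than the paper is your final paragraph, where you verify that the product $\frt^L_1(p_1)\frt^L_2(p_2)B_L$ really is the orthogonal projector onto $H_L(\id,p_1,p_2)$ as a subspace of $\caH_L$, i.e.\ that the string-net constraints on the bulk-face collars together with those on the two puncture collars cover every edge of $C^L$. The paper's proof silently takes this identification for granted. Your edge-by-edge check is correct and makes the argument watertight, but it is not a genuinely different idea---just a point the paper leaves implicit in the link geometry.
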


\begin{proof}
    Note that $\Dr_L^{(\I \I \rightarrow X \bar X)}$ annihilates the orthogonal complement of $H_L(\I, p^{\I}, p^{\I}) = H_L(P^{\I}, p^{\I}, p^{\I})$ by definition. The operator
    $$ (\Dr_L^{(\I \I \rightarrow X \bar X)})^* \, \Dr_L^{(\I \I \rightarrow X \bar X)} = d_X^{-1} \, \Dr_L \left[ \adjincludegraphics[valign=c, width = 1.5cm]{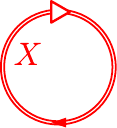} \right] $$
    acts as identity on this subspace. Indeed, the morphism in the argument of the right hand side is precisely the trace of $\id_X$ (see for example \cite[Remark~3.14]{penneys2018unitary}). This shows the first claim. 

    Similarly, $(\Dr_L^{(\I \I \rightarrow X \bar X)})^*$ annihilates the orthogonal complement of $H_L(\id, p^X, p^{\bar X})$ by definition, and it also annihilates $H_L(P^Y, p^X, p^{\bar X})$ for $Y \neq \I$. Indeed, this subspace is mapped into $H_L(P^Y, p^{\I}, p^{\I}) = \{0\}$ which is trivial by \Cref{prop:the great interface} and the fact that $Z(\caC)( Y^* \rightarrow \I ) = \{0\}$.
    
    By noting that the operator
    $$ \Dr_L^{(\I\I \rightarrow X \bar X )} (\Dr_L^{(\I\I \rightarrow X \bar X )})^* = d_X^{-1} \Dr_L\left[ \adjincludegraphics[valign=c, width = 1.5cm]{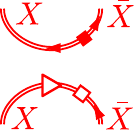} \right] $$
    acts as identity on the remaining subspace $H_L(P^{\I}, p^X, p^{\bar X})$ (again see \cite[Remark~3.14]{penneys2018unitary}), we obtain the second claim.

    The hopping operator $\Dr_L^{(X \I \rightarrow \I X)}$ annihilates the orthogonal complement of $H_L(\id, p^{X}, p^{\I}) = H_L(P^{\bar X}, p^{X}, p^{\I})$ by definition, and the operator
    $$ (\Dr_L^{(X \I \rightarrow \I X)})^* \Dr_L^{(X \I \rightarrow \I X)} = \Dr_L\left[  \id_{X \otimes \I}  \right] $$
    acts on this subspace as the identity. This shows the third equality. The proof of the last equality is identical to that of the third.
\end{proof}

\subsubsection{Unitary gates and concatenation} \label{ssubsec:definition of unitary gates}

For any link $L$ we have the projectors $P^{\I\I}_L, P_L^{\I X}, P_L^{X \I}, P_L^{X \bar X} \in \caB(\caH_L)$ defined in Lemma \ref{lem:intertwining properties of hopping and pair creation}. Note that all these projectors are all orthogonal to each other. We write
$P^{\I ?}_L := P_L^{\I \I} + P_L^{\I X}$,
$P^{X ?}_L := P_L^{X \I} + P_L^{X \bar X }$,
and $P^{??}_L = P^{X ?}_L + P^{\I ?}_L = P^{\I\I} + P_L^{X \I} + P_L^{\I X} + P_L^{X \bar X}$.
The definition of $P^{\I ?}_L$ and $P^{??}_L$ depends on the object $X$, which will always be clear from context.

For any $X \in \Irr Z(\caC)$ and any link $L$ we define a self-adjoint unitary $u_L^X$ by
\begin{equation} \label{eq:unitary gate defined}
    u_L^X := \Dr_L^{(\I \I \rightarrow X \bar X)} + \Dr_L^{(X \bar X \rightarrow \I \I)} + \Dr_L^{(X \I \rightarrow \I X)} + \Dr_L^{(\I X \rightarrow X \I)} + (\I - P^{??}_L).
\end{equation}
Note that $u_L^{\I} = \I$. It is immediate from Lemma \ref{lem:intertwining properties of hopping and pair creation} that the unitary $u_L^X$ satisfies the following intertwining properties:
\begin{equation} \label{eq:intertwining properties of u_L}
    u^X_L P_L^{\I \I} = P_L^{X \bar X} u^X_L, \quad \quad u^X_L P_L^{ \I X} = P_L^{X \I } u^X_L,
\end{equation}
which combine into
\begin{equation}\label{eq:combined intertwining}
    u^X_L P_L^{\I ?} = P_L^{X ?} u^X_L.
\end{equation}
We also see that if $L_1$ and $L_2$ are composable links, 
then
\begin{equation}\label{eq:overlapping link}
    u_{L_1}^X P^{\I \I}_{L_2} = P^{\I ?}_{L_2} u_{L_1}^X P^{\I \I}_{L_2}.
\end{equation}
Combined, these equations show that 
\begin{equation*}
    u_{L_2}^X u_{L_1}^X P^{\I \I}_{L_2} = 
    P^{X ?}_{L_2} u_{L_2}^X u_{L_1}^X  P^{\I \I}_{L_2},
\end{equation*}
which illustrates a key feature of the unitary gates: when acting sequentially along a \emph{chain} of composable links on a state that satisfies ground state constraints on all but finitely many of those links, they will eventually start moving an $X$ anyon to infinity.

We also have the following concatenation property.
\begin{lemma} \label{lem:concatenation lemma}
    Let $X\in \Irr Z(\caC)$ and let $L_1$ and $L_2$ be composable links with $L = L_2 \wedge L_1$. Then
    $$ u^X_{L_2} u^X_{L_1} P_L^{\I ?} = u_L^X P_L^{\I ?}. $$
\end{lemma}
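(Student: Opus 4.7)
The plan is to decompose $P_L^{?\I} = P_L^{\I\I} + P_L^{\bar X \I}$ (orthogonal summands) and verify the identity on each piece, corresponding to the two physical scenarios the unitaries handle: ``pair creation then hopping'' versus ``two hoppings in succession''. On $P_L^{\I\I} H_L$, both punctures of $L$ carry the trivial anyon and the ground-state constraints $B_f = 1$ hold on every bulk face of $L$, including $f_{k-1}$ and $f_k$. The only non-vanishing component of $u^X_{L_1}$ on such a state is the pair creation $\Dr^{(\I\I \to X \bar X)}_{L_1}$; the intermediate result violates $B_{f_{k-1}}$ and $B_{f_k}$ because a $\bar X$ has appeared on the middle edge. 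The only non-vanishing component of $u^X_{L_2}$ on that intermediate state is then the hopping $\Dr^{(\bar X \I \to \I \bar X)}_{L_2}$, which transports the $\bar X$ to the final puncture of $L$, restoring the bulk constraints. The target must match $u^X_L = \Dr^{(\I\I \to X \bar X)}_L + \ldots$ on this subspace. The argument on $P_L^{\bar X \I} H_L$ is parallel, with both factors acting as hopping operators.

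To convert this heuristic into a rigorous identity, I would invoke the inclusion lemma (Lemma \ref{lem:inclusion lemma}) to re-express the Drinfeld insertions on $L_1$ and $L_2$ as Drinfeld insertions on the common region $L$. Concretely, cut $C^L$ along a self-avoiding dual path $I$ that runs from the outer boundary of $C^L$ transversally through a single one of the bulk faces $f_{k-1}$ or $f_k$ between the link halves, and back out to the outer boundary. This partitions $C^L$ into two connected subregions, one containing (a version of) $C^{L_1}$ and the other containing $C^{L_2}$, up to a slight asymmetric reassignment of the middle material. Choosing an anchor $\anchor_L$ for $C^L$ that extends the standardised anchors of $L_1$ and $L_2$ with the appropriate offsets, the inclusion lemma lifts the factors to $\Dr_L[\tilde\beta_1]$ and $\Dr_L[\tilde\beta_2]$ on the relevant subspace, where $\tilde\beta_i$ is $\beta_i$ padded by an identity strand on the boundary component that $L_i$ does not control.

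With both factors expressed as $\Dr_L[\cdot]$, multiplicativity (Lemma \ref{lem:Drinfeld insertion multiplicativity}) yields $u^X_{L_2} u^X_{L_1} P_L^{?\I} = \Dr_L[\tilde\beta_2 \circ \tilde\beta_1] P_L^{?\I}$, so it remains to check a single diagrammatic identity in $Z(\caC)$. For the $P_L^{\I\I}$ branch, $\tilde\beta_2 \circ \tilde\beta_1$ is a $\coev_X$ at the middle slots pre-composed with a hopping morphism on the final slots; the zigzag (snake) identity for the duality $(\coev_X, \ev_X)$ in $Z(\caC)$ collapses this to the pair-creation morphism from the initial to the final slot of $L$, which is exactly the morphism defining $\Dr_L^{(\I\I \to X \bar X)}$. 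For the $P_L^{\bar X \I}$ branch, two hopping morphisms compose to a single identity strand running across $L$, which defines $\Dr_L^{(\bar X \I \to \I \bar X)}$.

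The main obstacle is the careful setup of the inclusion lemma, because $C^{L_1}$ and $C^{L_2}$ share vertices and edges around the common middle edge, so they are not disjoint as subregions of $C^L$ and one cannot cut $C^L$ directly into $C^{L_1} \sqcup C^{L_2}$. The asymmetric cut through only one of $f_{k-1}, f_k$ circumvents this, at the cost of introducing auxiliary identity strands and possibly a Dehn twist in relating anchors, both of which turn into trivial bookkeeping in the final diagrammatic identity. An alternative, equivalent route is to work directly in the skein picture via $\sigma_L$ on $\Sigma_L$: $u^X_{L_1}$ and $u^X_{L_2}$ add adjacent $X$-strands in $\Sigma_{L_1}$ and $\Sigma_{L_2}$, and isotopy of string diagrams in $\Sigma_L$ (together with the zigzag relation at the middle) identifies their concatenation with the single strand inserted by $u^X_L$.
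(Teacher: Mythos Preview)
Your overall strategy is right: decompose $P_L^{?\I} = P_L^{\I\I} + P_L^{\bar X\I}$, identify the surviving summands of each $u^X_{L_i}$ on each piece, lift both Drinfeld insertions to a common region via the inclusion lemma, and compose there using multiplicativity. The paper does exactly this. However, the common region you propose does not work, and this is a genuine gap rather than bookkeeping.

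The inclusion lemma (Lemma~\ref{lem:inclusion lemma}) requires that every inner boundary component of the smaller region $C$ be an inner boundary component of the larger region $D$. The region $C^{L_1}$ has two inner boundary components: one at $\partial_{\ii} L = \partial_{\ii} L_1$ and one at the \emph{middle} puncture $\partial_{\f} L_1 = \partial_{\ii} L_2$. The region $C^L$ has inner boundary components only at $\partial_{\ii} L$ and $\partial_{\f} L$; the middle edge is present in $C^L$ and its neighbouring faces $f_{k-1}, f_k$ are bulk faces of $L$, so there is no middle puncture in $\Sigma_L$. Hence no cut of $C^L$ can yield $C^{L_1}$ as one of the two pieces, and the inclusion lemma cannot lift $\Dr_{L_1}$ to $\Dr_L$. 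The same obstruction applies to $\Dr_{L_2}$. Your remark that an anchor $\anchor_L$ for $C^L$ should ``extend'' $\anchor_{L_1}$ already signals the problem: $\anchor_{L_1}$ has an edge running to the middle puncture, and there is nothing in $\Sigma_L$ for that edge to attach to.

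The paper's fix is to introduce the intermediate region $D = C^{L_1} \cup C^{L_2}$, whose associated surface $\Sigma_D$ is a \emph{thrice}-punctured disk (initial, middle, and final punctures). One obtains $D$ by gluing $C^{L_1}$ to a complement along a cut through the third face of $L_2$, and separately by gluing $C^{L_2}$ to a complement along a cut through the third-to-last face of $L_1$; a single anchor $\anchor_D$ (supported on the right strip of $L$) extends $\anchor_{L_1}$ with offset $0$ and $\anchor_{L_2}$ with offset $1$. The inclusion lemma then lifts both factors to three-slot insertions $\Dr_D[\cdot]$, where they compose by Lemma~\ref{lem:Drinfeld insertion multiplicativity}. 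The final step is to pass from $\Dr_D$ back to $\Dr_L$: since $P_L^{\I\I}$ and $P_L^{\bar X\I}$ are dominated by $B_L = B_D\,\frt_{\caS_2^{\anchorino_D}}(p^{\I})$, and the chosen vacuum boundary condition $w^{\I}$ at the middle puncture (Eq.~\eqref{eq:ground state boundary condition at puncture}) makes the middle slot of $\Dr_D$ trivial, one identifies $\Dr_D^{(\I\I\I \to X\I\bar X)} P_L^{\I\I} = \Dr_L^{(\I\I \to X\bar X)} P_L^{\I\I}$ and similarly for the hopping branch. This last identification is the step your two-slot approach has no analogue for.
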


\begin{proof}
    Using Lemmas \ref{lem:string insertions preserve ground state constraints} and \ref{lem:intertwining properties of hopping and pair creation} we find
    \begin{align*}
        u^X_{L_2} u^X_{L_1} P_L^{\I ?} &= \Dr_{L_2}^{(\I X \rightarrow X \I)} \Dr_{L_1}^{(\I \I \rightarrow X \bar X)} P_L^{\I \I} \quad + \quad \Dr_{L_2}^{(\I X \rightarrow X \I)} \Dr_{L_1}^{(\I X \rightarrow X \I)} P_L^{\I X},
        \intertext{and}
        u^X_L P_L^{\I ?} &= \Dr_L^{( \I \I \rightarrow X \bar X)} P_L^{\I \I} \quad + \quad  \Dr_L^{(\I X \rightarrow X \I)} P_L^{\I X}.
    \end{align*}
    It is therefore sufficient to show that
    \begin{equation} \label{eq:first concatenation equality}
        \Dr_{L_2}^{(\I X \rightarrow X \I)} \Dr_{L_1}^{(\I \I \rightarrow X \bar X)} P_L^{\I \I} = \Dr_L^{( \I \I \rightarrow X \bar X)} P_L^{\I \I},
    \end{equation}
    and
    \begin{equation} \label{eq:second concatenation equality}
        \Dr_{L_2}^{(\I X \rightarrow X \I)} \Dr_{L_1}^{(\I X \rightarrow X \I)} P_L^{\I X} = \Dr_L^{(\I X \rightarrow X \I)} P_L^{\I X}.
    \end{equation}

    Let's first derive Eq. \eqref{eq:first concatenation equality}. Consider the region $D = C^{L_1} \cup C^{L_2}$. We may describe $D$ as the region obtained from $C^L$ by removing the faces $L_1\cap L_2$ as well as the edge shared between these faces. Since $L = L_2 \wedge L_1$ is a link, the region $D$ has an associated surface $\Sigma_D$ which is homeomorphic to a disk with three holes cut out, corresponding to the punctures of $L_1$ and $L_2$ where the initial puncture of $L_1$ is the same as the final puncture of $L_2$.

    For invoking the inclusion lemma, we regard $D$ as being obtained by gluing $C^{L_1}$ and a uniquely determined region $C^{L_1, c}$ along the dual path $I_1$ cutting across the third face of $L_2$:
    \begin{center}
	   \includegraphics[width = 0.5\textwidth]{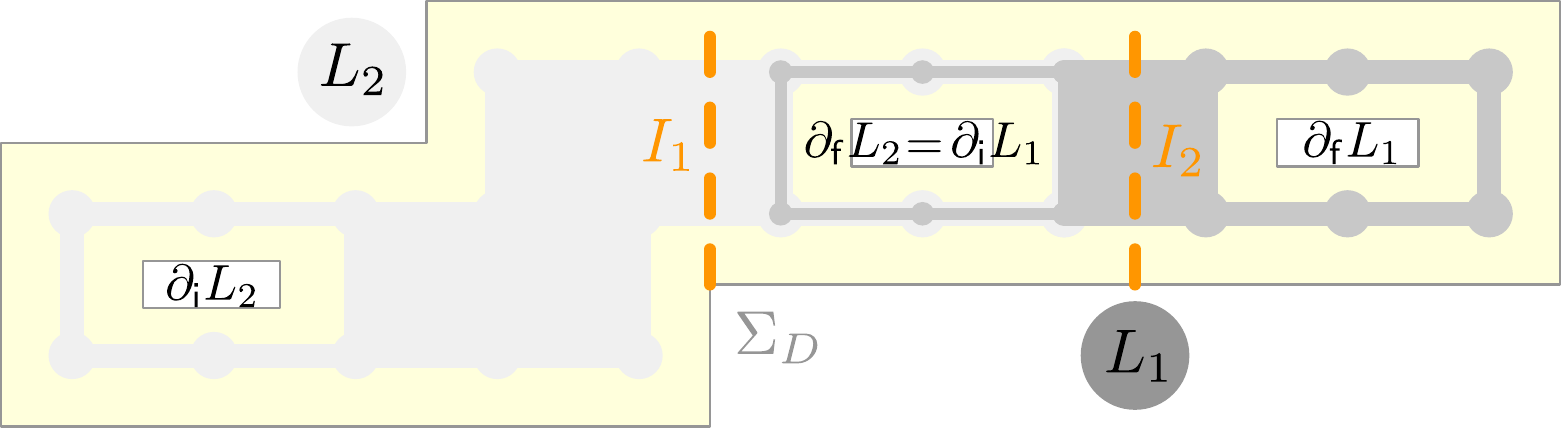} 
    \end{center}
    Similarly we may regard $D$ as being obtained by gluing $C^{L_2}$ and a uniquely determined region $C^{L_2, c}$ along the dual path $I_2$ cutting across the third to last face of $L_1$.
    The anchors $\anchor_{L_1}$ and $\anchor_{L_2}$ lie on $\Sigma_D$ as follows:
    \begin{center}
	   \includegraphics[width = 0.5\textwidth]{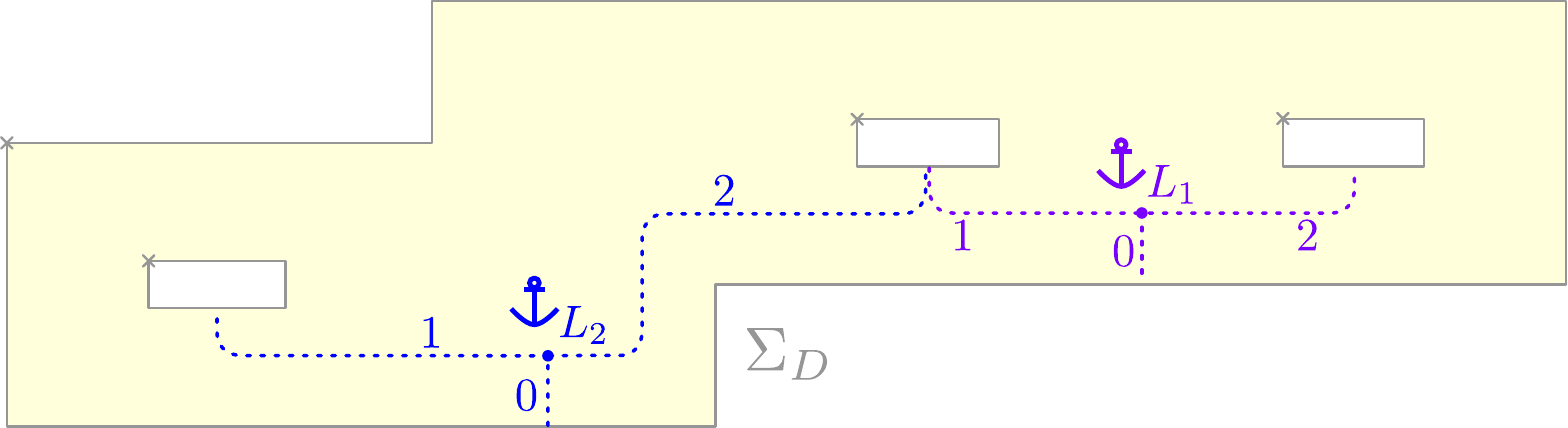} 
    \end{center}
    We pick an anchor $\anchor_D$ for $D$ in the equivalence class up to the outer boundary, uniquely determined by the property that $\anchor_D$ extends the anchor $\anchor_{L_2}$ with offset 0, and is supported on the right strip of $L$. Then the anchor $\anchor_D$ for $D$ also extends the anchor $\anchor_{L_1}$ with offset 1:
    \begin{center}
	   \includegraphics[width = 0.5\textwidth]{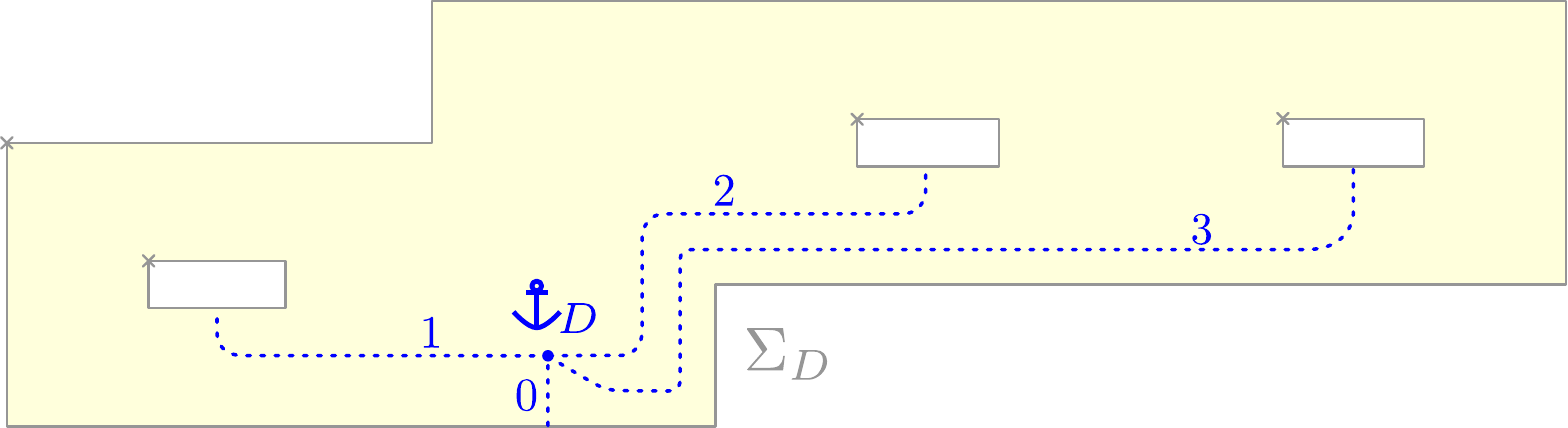} 
    \end{center}
    As we get $L$ from filling the middle puncture of $D$,
    $$B_L = B_D \frt_{\caS^{\anchorino_D}_2}(p^\1) = \frt_{\caS^{\anchorino_D}_2}(p^\1) B_D.$$
    Since $P_L^{\1\1}$ and $P_L^{\bar X \1}$ are both dominated by $B_L$, we can invoke the inclusion lemma \ref{lem:inclusion lemma} twice to get
    \begin{equation} \label{eq:concatenation_intermediate_1}
    \Dr_{L_2}^{(\I X \rightarrow X \I)} \Dr_{L_1}^{(\I \I \rightarrow X \bar X)} P_L^{\I \I} 
    = \Dr_D^{(\I X \bar X \rightarrow X \I \bar X)} \Dr_D^{(\I \I \I \rightarrow \I X \bar X)} P_L^{\I \I},
    \end{equation}
    where
    \begin{align*}
    \Dr_D^{(\I \I \I \rightarrow \I X \bar X)} &= \Dr_D \left[ \adjincludegraphics[valign=c, height = 1.0cm]{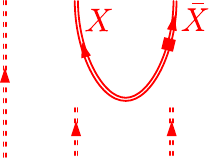} \right], \\
    \Dr_D^{(X \bar X \I \rightarrow X \I \bar X)} &= \Dr_D \left[ \adjincludegraphics[valign=c, height = 1.0cm]{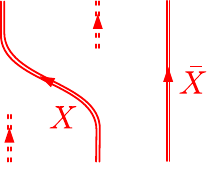} \right].
    \end{align*}
    By the choice of boundary condition $w^\1$ \eqref{eq:ground state boundary condition at puncture}, we have 
    \begin{equation} \label{eq:concatenation_intermediate_2}
    \Dr_L^{( \I \I \rightarrow X \bar X)} P_L^{\I \I} = \Dr_D^{( \I \I \I \rightarrow X \I \bar X)} P_L^{\I \I},
    \end{equation}
    where we define
    $$
    \Dr_D^{( \I \I \I \rightarrow X \I \bar X)} = \Dr_D \left[ \adjincludegraphics[valign=c, height = 1.0cm]{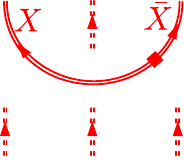} \right].
    $$
    Equation \eqref{eq:first concatenation equality} now follows from Eqs. \eqref{eq:concatenation_intermediate_1} and \eqref{eq:concatenation_intermediate_2}, and the fact that $\Dr_D$ is a representation (\Cref{lem:Drinfeld insertion multiplicativity}).
    With similar definitions, we derive \eqref{eq:second concatenation equality} by
    \begin{align*}  
    \Dr_{L_2}^{(\I X \rightarrow X \I)} \Dr_{L_1}^{(\I X \rightarrow X \I)} P_L^{\I X} 
    &= \Dr_D^{(\I X \I \rightarrow X \I \I)} \Dr_D^{(\I \I X \rightarrow \I X \I)} P_L^{\I X}  \\
    &= \Dr_D^{(\I \I X \rightarrow X \I \I)} P_L^{\I X} \\
    &= \Dr_L^{(\I X \rightarrow X \I)} P_L^{\I X}.
    \end{align*}
\end{proof}

\subsection{String operators} \label{subsec:string operators}

A \emph{chain}\footnote{visualize a bicycle chain} $\scrC = ( L_n )_{n \in \N}$ is a half-infinite sequence of links $L_n$ such that $L^{\scrC}_{\interval{m}{n}} := L_n \wedge L_{n-1} \wedge \cdots \wedge L_m$ is a well-defined link for all natural numbers $m \leq n$. We write $\scrC[\interval{m}{n}]=(L_i)_{i=m}^n$ for the finite subchains of $\scrC$ as well as $\scrC_n := \scrC[\interval{1}{n}]$, and $\partial_{\ii} \scrC = \partial_{\f} L_1$ for the initial puncture of the chain, which is the final puncture of the first link. We also write $\supp(\scrC) = \bigcup_{n \in \N} \supp(L_n)$ for the support of the chain $\scrC$. We say a face $f \in \caF$ belongs to the chain $\scrC$ if it belongs to one of its links.

For any $X \in \Irr Z(\caC)$ and natural numbers $m \leq n$ we define unitaries $$U^X_{\scrC[\interval{m}{n}]} = u^X_{L_n} \times \cdots \times u^X_{L_m}$$ 
and automorphisms $\rho^X_{\scrC_n} := \Ad[ (U^X_{\scrC[\interval{1}{n}]})^* ]$. 

\begin{lemma} \label{lem:limiting endomorphisms}
    Let $\scrC$ be a chain.
	For any $x \in \caA^{\loc}$ there is $n_0$ large enough so that $ \rho_{\scrC_{n}}^X(x) = \rho_{\scrC_{n_0}}^X(x)$ for all $n \geq n_0$. In particular, the limit
	$$ \rho_{\scrC}^X := \lim_{n \uparrow \infty} \, \rho_{{\scrC}_n}^X,$$
	exists and defines a unital *-endomorphism of $\caA$.
    Moreover, $\rho_{\scrC}^X$ is supported on $\supp({\scrC})$ in the sense that if $x \in \caA_{\supp({\scrC})^c}$ then $\rho_{\scrC}^X(x) = x$.
\end{lemma}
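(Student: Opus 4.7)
The plan is to show that the sequence $\bigl(\rho_{\scrC_n}^X(x)\bigr)_n$ stabilizes for each $x \in \caA^{\loc}$ and then extend the limit by continuity. Using that $u^X_L$ is self-adjoint (see \eqref{eq:unitary gate defined}), the recursion reads
$$ \rho_{\scrC_{n+1}}^X(x) \,=\, \rho_{\scrC_n}^X \bigl( u_{L_{n+1}}^X \, x \, u_{L_{n+1}}^X \bigr). $$
If $x \in \caA^{\loc}_V$ for some finite $V \subset \Z^2$ and $\supp(L_{n+1}) \cap V = \emptyset$, then $u_{L_{n+1}}^X \in \caA_{L_{n+1}}$ commutes with $x$ and the recursion collapses to $\rho_{\scrC_{n+1}}^X(x) = \rho_{\scrC_n}^X(x)$. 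So the crux is a geometric fact: $\supp(L_n) \cap V = \emptyset$ for all sufficiently large $n$.

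To prove this I would use that, by definition of a chain, the composite $L^\scrC_{[1,n]} = L_1 \wedge \cdots \wedge L_n$ is itself a link, and in particular a self-avoiding dual path. Each composition $L^\scrC_{[1,n]} \wedge L_{n+1}$ contributes at least three new faces (the bulk and final faces of $L_{n+1}$), so the length of the composite walk grows without bound. A self-avoiding walk on the locally finite dual lattice of $\Z^2$ must leave any finite set of faces after finitely many steps. Since $\supp(L_n)$ is built from vertices of the last two faces of $L^\scrC_{[1, n-1]}$ together with the new faces contributed by $L_n$ itself -- all of which sit in the tail of the composite walk -- it avoids any fixed finite $V$ once $n$ is large enough.

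With stabilization in hand, $\rho_\scrC^X(x) := \lim_n \rho_{\scrC_n}^X(x)$ is well defined on $\caA^{\loc}$ and inherits unitality and the *-homomorphism properties from the inner automorphisms $\rho_{\scrC_n}^X$. Each $\rho_{\scrC_n}^X$ is isometric, so $\rho^X_\scrC$ is norm-preserving on $\caA^{\loc}$ and extends uniquely by continuity to a unital *-endomorphism $\rho^X_\scrC : \caA \rightarrow \caA$. The support property then follows at once: if $x \in \caA^{\loc}_{\supp(\scrC)^c}$, then $u_{L_n}^X$ commutes with $x$ for every $n$, whence $\rho_{\scrC_n}^X(x) = x$ for all $n$; this extends to $\caA_{\supp(\scrC)^c}$ by continuity. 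The only step that is not routine bookkeeping is the geometric escape argument, and it is forced by the self-avoidance built into the definition of a link.
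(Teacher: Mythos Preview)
Your proof is correct and follows essentially the same approach as the paper's: both argue that the supports $\supp(L_n)$ eventually miss any fixed finite set, so the sequence $\rho_{\scrC_n}^X(x)$ stabilizes, then extend by continuity. The paper's proof simply asserts the existence of a maximal $n_0$ with $\supp(u_{L_{n_0}}^X) \cap \supp(x) \ne \emptyset$ without justification, whereas you actually supply the geometric escape argument (self-avoidance of the composite link forces the tail faces, and hence $\supp(L_n)$, to leave any finite region).
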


\begin{proof}
    Given $x\in \caA^{\loc}$, there is a maximal $n_0$ such that $\supp(u_{L_{n_0}}^X) \cap \supp(x) \ne \emptyset$ (unless $x$ is a multiple of the identity, in which case $n_0=1$). 
    It is clear from the definition of $\rho_{{\scrC}_n}$ that $n_0$ has the desired property.
    The sequence $\rho_{{\scrC}_n}^X(x)$ is eventually constant with limit $ \rho_{\scrC}^X(x):= \rho_{{\scrC}_{n_0}}^X(x)$.
    This yields a well-defined unital *-endomorphism of $\caA^{\loc}$ which extends to the whole of $\caA$ by continuity, and which is evidently supported on $\supp({\scrC})$.
\end{proof}

\section{Irreducible Anyon Representations} \label{sec:anyon representations}

Fix a chain ${\scrC} = \{ L_n \}_{n \in \N}$ with $e = \partial_{\ii} {\scrC}$ and let $X \in \Irr Z(\caC)$. Recall that $(\pi^{\I}, \caH, \Omega)$ is the GNS triple of the unique frustration free ground state $\omega^{\I}$ of the Levin-Wen Hamiltonian.
We define a new representation $$\pi_{\scrC}^X := \pi^{\I} \circ \rho_{\scrC}^X,$$ where $\rho_\scrC^X$ is the endomorphism defined in \Cref{lem:limiting endomorphisms}. Our goal in this section is to show that $\pi_{\scrC}^X$ is an irreducible anyon representation, that $\pi_{\scrC}^X \simeq \pi_{{\scrC}'}^X$ for any two chains ${\scrC}, {\scrC}'$, and that $\pi_{\scrC}^X$ and $\pi_{{\scrC}'}^Y$ are disjoint if $X \neq Y$.

For the remainder of this section we will usually abbreviate notation with $\rho^X = \rho_{\scrC}^X$, $\rho^X_n = \rho_{{\scrC}_n}^X$, $u^X_n = u^X_{L_n}$, $u^X_{\interval{m}{n}} = u_{L_{\interval{m}{n}}^\scrC}^X$, as well as $U^X_{\interval{m}{n}} = U^X_{\scrC[\interval{m}{n}]}$, $U^X_n = U^X_{\scrC[\interval{1}{n}]}$, and $P^{\bullet\bullet}_{\interval{m}{n}} = P^{\bullet\bullet}_{L_{\interval{m}{n}}^\scrC}$ for natural numbers $m \leq n$

\subsection{Pure anyon states belonging to \texorpdfstring{$\pi_{\scrC}^X$}{piCX}} \label{subsec:pure anyon state in the rep}

Recall from Section \ref{subsec:anyons on punctures} that for any $e \in \caE$ and any minimal projector $p \in \Tube_{\caS_e}$ we have a pure state $\omega_e^p$ uniquely characterized by the constraints
\begin{equation} \label{eq:local constraints for anyon at puncture}
    \omega_e^p( \frt_e(p) ) = \omega_e^p(B_f) = 1 \quad \forall f \in F_{C^{(e)}}.
\end{equation}

In Section \ref{subsec:Drinfeld insertions} we fixed for each $X \in \Irr \caC$ a minimal projector $p^X \in \Tube_{\caS_e}$ of type $X$. Let us write $\omega_e^X := \omega_e^{p^{\bar X}}$. We give this state the label $X$ because it is in the range of $\frt_{\partial \Sigma_{\bbD}}(P^X)$ for any sufficiently large region $\bbD$ homeomorphic to a disk with a puncture at $e$. This is  the correct notion of charge for the state $\omega_e^X$ from the point of view of sector theory.

\begin{lemma} \label{lem:pure anyon state produced by string operator}
    We have
    $$ \omega_e^X = \omega^{\I} \circ \rho^X. $$
\end{lemma}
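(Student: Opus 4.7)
The plan is to use the uniqueness part of \Cref{prop:unique anyon states}: $\omega_e^X$ is the unique state in $\caS_{C^{(e)}}^{p^X}$, so it suffices to show that $\omega^\I\circ\rho^X$ satisfies $(\omega^\I\circ\rho^X)(\frt_e(p^X))=1$ and $(\omega^\I\circ\rho^X)(B_f)=1$ for every $f\in F_{C^{(e)}}$.

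The crucial intermediate step is to prove that the GNS vector $\psi_n := \pi^\I(U_n^X)\Omega$ lies in the range of $\pi^\I(P^{X\bar X}_{[1,n]})$, i.e.\ in the subspace carrying an $X$-anyon at $e$, a $\bar X$-anyon at $\partial_\f L_n$, and ground-state constraints on every bulk face of the composite link $L^{\scrC}_{[1,n]}$. I would first observe that $\pi^\I(P^{?\I}_{[1,n]})\Omega = \Omega$: indeed $P^{\I\I}_{[1,n]}$ is a product of commuting ground-state projectors $B_f$ (by \Cref{lem:p^one enforces ground state constraints}), each of which is $1$ on $\Omega$ by Cauchy--Schwarz in the GNS representation, while $P^{\bar X\I}_{[1,n]}$ is dominated by $\1 - \frt^{L^{\scrC}_{[1,n]}}_1(p^\I)$ and therefore annihilates $\Omega$. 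I would then prove by induction on $n$ the identity $\pi^\I(U_n^X)\Omega = \pi^\I(u^X_{[1,n]})\Omega$: the step inserts the projector $\pi^\I(P^{?\I}_{[1,n]})$ between $u^X_n$ and $u^X_{[1,n-1]}$ applied to $\Omega$ and invokes \Cref{lem:concatenation lemma}. Finally the intertwining \eqref{eq:intertwining properties of u_L} applied to the single link $L^{\scrC}_{[1,n]}$ gives $\psi_n = \pi^\I(P^{X\bar X}_{[1,n]})\psi_n$.

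With this in hand the two constraints are immediate. Because the collar of $\caS_e$ is independent of the ambient region, $\frt_e(p^X) = \frt^{L^{\scrC}_{[1,n]}}_1(p^X)$ dominates $P^{X\bar X}_{[1,n]}$, so $\pi^\I(\frt_e(p^X))\psi_n = \psi_n$ and $(\omega^\I\circ\rho_n^X)(\frt_e(p^X)) = \|\psi_n\|^2 = 1$ for every $n\ge 1$. For a face $f\in F_{C^{(e)}}$: if $f\notin \supp(\scrC)$ then \Cref{lem:string insertions preserve ground state constraints} gives $\rho_n^X(B_f) = B_f$, and the expectation equals $\omega^\I(B_f) = 1$; if $f$ belongs to some link $L_k$ of $\scrC$, a geometric inspection of composable links shows that $f$ becomes a bulk face of $L^{\scrC}_{[1,n]}$ for all $n$ large enough, whence $B_f\geq P^{X\bar X}_{[1,n]}$ and again $(\omega^\I\circ\rho_n^X)(B_f) = 1$. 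Both equalities then pass to $\rho^X$ by eventual constancy of $\rho_n^X$ on local observables (\Cref{lem:limiting endomorphisms}).

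The main obstacle is the geometric bookkeeping around the moving final puncture. At finite $n$ the two faces adjacent to $\partial_\f L_n$ carry the $\bar X$ excitation and their $B_f$-constraints fail in $\psi_n$; one must verify that any such face becomes bulk in the longer composite $L^{\scrC}_{[1,m]}$ for some $m>n$ (using that the four end-faces of each $L_k$ lie in the bulk of $L^{\scrC}_{[1,k+1]}$), while confirming that $f_1^1, f_2^1$ are the only faces excluded from every bulk — precisely the faces removed from $F_{C^{(e)}}$. A secondary technical point is the identification $\frt_e = \frt^{L^{\scrC}_{[1,n]}}_1$ on the relevant collar, which follows from the fact that collars depend only on the boundary component, not on the ambient region.
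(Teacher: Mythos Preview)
Your approach is correct and essentially the same as the paper's: both reduce $U_n^X$ to the single gate $u^X_{[1,n]}$ when acting on the vacuum via the concatenation lemma (the paper phrases this as $\omega^\I\circ\rho^X_n = \omega^\I\circ\Ad[(u^X_{[1,n]})^*]$, you as $\pi^\I(U_n^X)\Omega = \pi^\I(u^X_{[1,n]})\Omega$), then verify the constraints from the intertwining relation and \Cref{lem:string insertions preserve ground state constraints}. The paper handles the $B_f$ constraint uniformly by noting that $B_f$ commutes with $u^X_{[1,n]}$ for large $n$, whereas your case split ``$f\notin\supp(\scrC)$'' versus ``$f$ belongs to some $L_k$'' is not literally exhaustive (faces sitting on the outer boundary of the chain, sharing some but not all vertices with $\supp(\scrC)$, fall through); your invocation of \Cref{lem:string insertions preserve ground state constraints} rather than mere disjoint supports in the first case suggests you intended to cover these, and indeed that lemma does the job.
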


\begin{proof}
    It is sufficient to show that the state $\omega^\1 \circ \rho^{X}$ satisfies the constraints of Eq. \eqref{eq:local constraints for anyon at puncture} with $p = p^{\bar X}$.
    Applying the Concatenation Lemma \ref{lem:concatenation lemma} inductively shows that for all $m<n \in \N$,
    \begin{equation}
        U^X_{\interval{m}{n}} P_{\interval{m}{n}}^{\1\1}
        = u^X_n \cdots u^X_m P_{\interval{m}{n}}^{\1\1}
        = u^X_{\interval{m}{n}} P_{\interval{m}{n}}^{\1\1}.
    \end{equation}
    Since $\omega^\1$ satisfies all ground state constraints, we have that
    \begin{align*}
        \omega^\1(\rho^X_n(x)) &= 
        \omega^\1(P_{\interval{1}{n}}^{\1\1} \, \rho_n^X(x) \, P_{\interval{1}{n}}^{\1\1}) \\
        &= \omega^\1((U^X_n P_{\interval{1}{n}}^{\1\1})^* x U^X_n P_{\interval{1}{n}}^{\1\1}) 
        = \omega^\1((u^X_{\interval{1}{n}})^* \, x \, u^X_{\interval{1}{n}} )
    \end{align*}
    for all $x\in \caA$.
    Using the intertwining property \eqref{eq:intertwining properties of u_L}, we have 
    $$
    \frt_e(p^{\bar X}) u^X_{\interval{1}{n}} P_{\interval{1}{n}}^{\1\1}
    = \frt_e(p^{\bar X}) P_{\interval{1}{n}}^{X \bar X} u^X_{\interval{1}{n}} 
    = u^X_{\interval{1}{n}} P_{\interval{1}{n}}^{\1\1}.
    $$
    It follows that $\omega^{\I} \big( \rho^X_n( \frt_e(p^{\bar X}) ) \big)  =1$ for all $n$, so $\omega^{\I} \big( \rho^X( \frt_e(p^{\bar X}) ) \big)  =1$.

    Let $f$ be a face which is not adjacent to $e$. Then by \Cref{lem:string insertions preserve ground state constraints}, $B_f$ commutes with $u^X_{\interval{1}{n}}$ unless $f$ is adjacent to $\partial_\ii L^\scrC_{\interval{1}{n}}$. So there is an $N$ such that $B_f$ commutes with $u^X_{\interval{1}{n}}$ for all $n\ge N$, and again we find that $\omega^{\I} \big( \rho^X( B_f ) \big)  =1$.
\end{proof}

Let us denote by $(\pi_e^X, \caH_e^X, \Omega_e^X)$ the GNS triple of $\omega_e^X$. Since $\omega_e^X$ is a pure state (Proposition \ref{prop:unique anyon states}), the representation $\pi_e^X$ is irreducible.

\subsection{Unitary equivalence of \texorpdfstring{$\pi_{\scrC}^X$}{piCX} and \texorpdfstring{$\pi_e^X$}{pieX} } \label{subsec:equivalence of reps}

Denote by $(\pi_e^X, \caH_e^X, \Omega_e^X)$ the GNS triple of the pure state $\omega_e^X$. 
Recall that $\partial_{\ii} \scrC = e$.
In this section we prove the following proposition.
\begin{proposition} \label{prop:equivalence of string and GNS reps}
    The representation $\pi_{\scrC}^X = \pi^{\I} \circ \rho^X_{\scrC}$ is irreducible, and there is a unitary equivalence,
    $$ \pi_{\scrC}^X \simeq \pi_e^X.$$
\end{proposition}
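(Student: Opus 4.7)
The plan is to establish the unitary equivalence by realising $\pi_{\scrC}^X$ as a GNS representation of the pure state $\omega_e^X$, with $\Omega$ as the cyclic vector. By \Cref{lem:pure anyon state produced by string operator}, $\omega_e^X = \omega^{\I} \circ \rho^X_{\scrC}$, so that $\langle \Omega, \pi_{\scrC}^X(a) \Omega \rangle = \omega_e^X(a)$ for every $a \in \caA$. Standard GNS theory then provides a canonical isometric intertwiner $V : \caH_e^X \to \caH$ with $V \Omega_e^X = \Omega$ and $V \pi_e^X(a) = \pi_{\scrC}^X(a) V$; its range is the cyclic subspace $\overline{\pi_{\scrC}^X(\caA) \Omega}$. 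Thus $\pi_{\scrC}^X \simeq \pi_e^X$ reduces to showing that $\Omega$ is cyclic for $\pi_{\scrC}^X$. Once this is established, irreducibility of $\pi_{\scrC}^X$ is immediate: $\omega_e^X$ is pure by \Cref{prop:unique anyon states}, so its GNS representation $\pi_e^X$ is irreducible.

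For the cyclicity I invoke \Cref{lem:limiting endomorphisms}: the endomorphism $\rho^X_{\scrC}$ acts as the identity on $\caA_{\supp(\scrC)^c}$. Consequently $\pi_{\scrC}^X$ and $\pi^{\I}$ agree as representations of $\caA_{\supp(\scrC)^c}$, so the cyclic subspace of $\Omega$ for $\pi_{\scrC}^X$ already contains $\pi^{\I}(\caA_{\supp(\scrC)^c}) \Omega$. It therefore suffices to verify that $\Omega$ is cyclic for the restricted representation $\pi^{\I}|_{\caA_{\supp(\scrC)^c}}$, a Reeh--Schlieder-type statement for the vacuum with respect to the algebra of observables in the complement of the thin, semi-infinite tubular neighbourhood $\supp(\scrC)$.

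The main obstacle is precisely this final cyclicity assertion. I expect it to follow from the local characterisation of $\omega^{\I}$ as the unique frustration-free ground state (\Cref{prop:unique ffgs}) together with the skein-subspace machinery of Sections~\ref{sec:tube algebras and skein modules}--\ref{sec:anyon states}. Concretely, one shows that any $\eta \in \caH$ orthogonal to $\pi^{\I}(\caA_{\supp(\scrC)^c}) \Omega$ must vanish: writing $\eta = \lim_n \pi^{\I}(c_n) \Omega$ with $c_n \in \caA^{\loc}$ (possible since $\Omega$ is cyclic for the irreducible $\pi^{\I}$), orthogonality forces $\lim_n \omega^{\I}(c_n^* a) = 0$ for all $a \in \caA_{\supp(\scrC)^c}$. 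The explicit local structure of $\omega^{\I}$ (expressible via the skein inner product and $B_f$-projectors, cf.\ \Cref{prop:isomorphism of skein subspace and skein module}) should then allow one to approximate $c_n$ in the $\omega^{\I}$-GNS seminorm by operators whose support avoids $\supp(\scrC)$, by using the dotted-line/$B_f$ technology to push content out of the strip with negligible error. This would yield $\omega^{\I}(c_n^* c_n) \to 0$, hence $\eta = 0$, completing the proof.
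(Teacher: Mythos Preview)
Your reduction is sound up to the Reeh--Schlieder step: the GNS isometry $V:\caH_e^X\to\caH$ exists with range $\overline{\pi_{\scrC}^X(\caA)\Omega}$, and since $\rho_{\scrC}^X$ fixes $\caA_{\supp(\scrC)^c}$, cyclicity of $\Omega$ for $\pi^{\I}|_{\caA_{\supp(\scrC)^c}}$ would indeed give cyclicity for $\pi_{\scrC}^X$ and finish the proof. The gap is that you do not actually prove this cyclicity. The sketch you offer---``push content out of the strip with negligible error'' using dotted lines and $B_f$-projectors---is not an argument: the skein technology lets you deform string diagrams \emph{within} a skein subspace, but there is no mechanism given for approximating an arbitrary local operator, in GNS seminorm, by one supported in $\supp(\scrC)^c$. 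Your final paragraph is in fact circular: showing that $c_n$ can be so approximated is precisely the Reeh--Schlieder statement you are trying to establish. This statement may well be true for the Levin--Wen vacuum, but it is nontrivial and not available from the results of Sections~\ref{sec:tube algebras and skein modules}--\ref{sec:anyon states}.

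The paper avoids Reeh--Schlieder entirely. Instead of proving $\Omega$ is cyclic, it takes an arbitrary unit vector $|\Psi\rangle=\pi^{\I}(O)|\Omega\rangle$ with $O\in\caA^{\loc}$ and shows that its $\pi_{\scrC}^X$-vector state $\psi$ equals $\omega_e^X\circ\Ad[V]$ for some \emph{local} unitary $V$. The mechanism is dynamical: one approximates $\psi$ by $\psi_n(x)=\langle\Psi,\pi^{\I}((U_n^X)^*xU_n^X)\Psi\rangle$, uses the concatenation lemma (\Cref{lem:concatenation lemma}) and the intertwining relations \eqref{eq:intertwining properties of u_L} to show that for large $n$ the state $\psi_n$ satisfies ground-state constraints outside a fixed ball with a single $\bar X$-puncture at $e_n$, and then invokes the uniqueness of density matrices in $\caD_{\ann}(\star^X,\star^{\bar X})$ (\Cref{lem:characterization of some caDs}) to conclude that $\psi_n$ and the analogous approximant $\omega_n$ of $\omega_e^X$ agree outside that ball. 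A purification argument on the finite-dimensional interior then produces the local unitary $V$. This route uses the detailed structure of the model (links, concatenation, skein-subspace classification) in place of an abstract Reeh--Schlieder input.
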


The strategy for showing that $\pi_{\scrC}^X$ is irreducible is to show that all of its vector states are pure. In fact, it is sufficient to show this for vector states corresponding to a dense subset of the unit ball of $\caH$.

Let us consider a unit vector in $\caH$ of the form $| \Psi \rangle = \pi^{\I}(O) | \Omega \rangle$ for some $O \in \caA^{\loc}$. This defines a vector state $\psi$ in $\pi^X_\scrC$ by
$$\psi(x) := \langle \Psi, \pi_{\scrC}^X(x) \Psi \rangle = \langle \Omega, \pi^{\I} \big(  O^* \rho^X(x) O \big) \, \Omega \rangle$$
for all $x \in \caA$. Our goal is to show that $\psi$ is also a vector state of the GNS representation $\pi_e^X$.

First note that the state $\psi$ can be obtained as the limit of a sequence of vector states of the vacuum representation $\pi^{\I}$ as follows. For $n \in \N$ we let
$$  |\Psi_n \rangle := \pi^{\I}(U^X_n) | \Psi \rangle. $$
These are unit vectors in $\caH$ which correspond under the vacuum representation to pure states $\psi_n$ defined by
$$ \psi_n(x) = \langle \Psi_n, \, \pi^{\I}(x) \, \Psi_n \rangle, \quad x \in \caA.$$

\begin{lemma} \label{lem:psi_n converge to psi}
    In the $w^*$-topology, $\psi_n \rightarrow \psi$.
\end{lemma}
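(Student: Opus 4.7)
The plan is to reduce everything to the local algebra and then use a standard density argument. First I would unpack the definitions. Since $|\Psi_n\rangle = \pi^{\I}(U_n^X O) |\Omega\rangle$, we have
$$ \psi_n(x) = \omega^{\I}\big( O^*\, (U_n^X)^* x\, U_n^X\, O \big) = \omega^{\I}\big( O^*\, \rho^X_n(x)\, O \big), $$
while by definition $\psi(x) = \omega^{\I}\big( O^*\, \rho^X(x)\, O \big)$. So the claim is that
$$ \omega^{\I}\big( O^*\, \rho^X_n(x)\, O \big) \;\longrightarrow\; \omega^{\I}\big( O^*\, \rho^X(x)\, O \big) \qquad \text{for every } x \in \caA. $$

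The key input is Lemma \ref{lem:limiting endomorphisms}: for any $x \in \caA^{\loc}$ there is $n_0 = n_0(x)$ such that $\rho^X_n(x) = \rho^X(x)$ for all $n \geq n_0$. Since $O \in \caA^{\loc}$ is fixed, the product $O^* \rho^X_n(x) O$ is eventually equal to $O^* \rho^X(x) O$, and therefore $\psi_n(x) = \psi(x)$ for all sufficiently large $n$. This gives pointwise convergence (in fact, eventual equality) on the dense *-subalgebra $\caA^{\loc}$.

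To promote this to $w^*$-convergence on all of $\caA$, I would invoke the standard $3\varepsilon$-argument: $\psi_n$ and $\psi$ are states, hence bounded linear functionals of norm $1$. Given $y \in \caA$ and $\varepsilon > 0$, pick $x \in \caA^{\loc}$ with $\|y - x\| < \varepsilon/3$ and then choose $n$ large enough that $\psi_n(x) = \psi(x)$; then
$$ |\psi_n(y) - \psi(y)| \leq |\psi_n(y-x)| + |\psi_n(x) - \psi(x)| + |\psi(x-y)| \leq \tfrac{2\varepsilon}{3} < \varepsilon. $$

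I do not expect any real obstacle here: the lemma is essentially a rewriting followed by eventual stabilization of $\rho^X_n$ on local operators, combined with density. The only thing to be slightly careful about is that $\rho^X$ is defined as a norm limit on $\caA$ but acts by eventual equality on $\caA^{\loc}$, which is exactly what Lemma \ref{lem:limiting endomorphisms} provides.
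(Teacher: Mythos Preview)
Your proof is correct and follows essentially the same approach as the paper: both reduce to eventual equality of $\psi_n(x)$ and $\psi(x)$ on $\caA^{\loc}$ via Lemma~\ref{lem:limiting endomorphisms}, then invoke density. You spell out the $3\varepsilon$ argument explicitly where the paper simply says ``the claim now follows by density.''
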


\begin{proof}
    Take $x \in \caA^{\loc}$. By Lemma \ref{lem:limiting endomorphisms} there is $n_0$ large enough so that
    $$ \pi_{\scrC}^X(x) = \pi^{\I}( \rho^X(x) ) = \pi^{\I}( \rho^X_n(x) ) = \pi^{\I} \big(  (U^X_n)^* \, x \, U^X_n \big) $$
    for all $n \geq n_0$. It follows that
    \begin{align*}
        \psi(x) &= \langle \Psi, \pi_{\scrC}^X(x) \, \Psi \rangle = \langle \Psi, \pi^{\I} \big( (U^X_n)^* x U^X_n \big) \, \Psi \rangle \\
        &= \langle \Psi_n, \pi^{\I}(x) \, \Psi_n \rangle = \psi_n(x)
    \end{align*}
    for all $n \geq n_0$. The claim now follows by density.
\end{proof}

In a similar way we can approximate the anyon state $\omega_e^X$ by a sequence of vector states of the vacuum representation as follows. For each $n \in \N$ we let
$$ | \Omega_n \rangle = \pi^{\I}(U^X_n) | \Omega \rangle $$
and obtain pure states $\omega_n$ defined by
$$ \omega_n(x) := \langle \Omega_n, \pi^{\I}(x) \, \Omega_n \rangle = \omega^{\I}( (U^X_n)^* x U^X_n ), \quad x \in \caA. $$

\begin{lemma} \label{lem:omega_n converge to omega}
     In the $w^*$-topology, $\omega_n \rightarrow \omega_e^X$.
\end{lemma}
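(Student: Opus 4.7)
The plan is to mirror the argument of Lemma \ref{lem:psi_n converge to psi} exactly, simply replacing $\Psi$ by $\Omega$. The key input is Lemma \ref{lem:pure anyon state produced by string operator}, which identifies $\omega_e^X = \omega^{\I} \circ \rho^X$, together with Lemma \ref{lem:limiting endomorphisms}, which says that for every local observable $x \in \caA^{\loc}$ the sequence $\rho_n^X(x)$ is eventually constant and equal to $\rho^X(x)$.

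Concretely, I would take $x \in \caA^{\loc}$ and invoke Lemma \ref{lem:limiting endomorphisms} to find $n_0$ such that for all $n \geq n_0$,
$$ \rho^X(x) = \rho_n^X(x) = (U^X_n)^* \, x \, U^X_n. $$
Applying $\omega^{\I}$ to both sides gives
$$ \omega_e^X(x) = \omega^{\I}\bigl( \rho^X(x) \bigr) = \omega^{\I}\bigl( (U^X_n)^* x U^X_n \bigr) = \omega_n(x) $$
for all $n \geq n_0$. So the sequence $\omega_n(x)$ is eventually equal to $\omega_e^X(x)$ on every local observable.

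To upgrade this pointwise eventual equality on $\caA^{\loc}$ to $w^*$-convergence on the whole $\rm C^*$-algebra $\caA$, note that all the $\omega_n$ and $\omega_e^X$ are states, hence uniformly bounded in norm by $1$. A standard $3\varepsilon$ approximation argument then does the job: given $x \in \caA$ and $\varepsilon > 0$, choose $x' \in \caA^{\loc}$ with $\norm{x - x'} < \varepsilon/3$, choose $n_0$ large enough so that $\omega_n(x') = \omega_e^X(x')$ for all $n \geq n_0$, and estimate
$$ |\omega_n(x) - \omega_e^X(x)| \leq |\omega_n(x - x')| + |\omega_n(x') - \omega_e^X(x')| + |\omega_e^X(x' - x)| < \varepsilon. $$
This gives $\omega_n \to \omega_e^X$ in the $w^*$-topology. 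There is no real obstacle here; the lemma is essentially a restatement of Lemma \ref{lem:pure anyon state produced by string operator} in sequential form, with the only substantive ingredient being the local stabilization of $\rho_n^X$ from Lemma \ref{lem:limiting endomorphisms}.
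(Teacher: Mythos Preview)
Your proof is correct and follows exactly the approach of the paper, which simply states that the proof is identical to that of Lemma~\ref{lem:psi_n converge to psi} once one invokes Lemma~\ref{lem:pure anyon state produced by string operator}. You have even spelled out the density/$3\varepsilon$ argument that the paper leaves implicit.
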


\begin{proof}
    Using \Cref{lem:pure anyon state produced by string operator}, the proof is identical to that of \Cref{lem:psi_n converge to psi}.
\end{proof}

Before continuing towards the proof of Proposition \ref{prop:equivalence of string and GNS reps}, let us introduce some regions that will be used often below.

Recall that for any $R > 0$ we denote by $B_R = \{ x \in \R^2 \, : \, \norm{x}_{\infty} \leq R \}$ the closed box of side length $2R$ centered on the origin. We define the region $\bbD_R$ to consist of all faces, edges, and vertices of $C^{\Z^2}$ that are contained in $B_R$. The associated surface $\Sigma_{\bbD_R}$ is homeomorphic to a disk. For any $0 < R_1 < R_2 - 1$ we define the annular region $\ann_{R_1, R_2}$ to consist of all faces, edges, and vertices contained in the closure of $B_{R_2} \setminus B_{R_1}$. For any edge $e \in \caE$ we moreover define regions $\bbD_R^{(e)}$ obtained from $\bbD_R$ by removing $e$ and its two neighbouring faces, and similarly $\ann_{R_1, R_2}^{(e)}$ obtained from $\ann_{R_1, R_2}$ by removing $e$ and its two neighbouring faces.
If $e$ belongs to $\bbD_R$ at a suitable distance from the boundary then $\Sigma_{\bbD_R^{(e)}}$ is homeomorphic to an annulus and we simply say $\bbD_R^{(e)}$ is an annulus.

Now we argue that for $n$ sufficiently large, the states $\omega_n$ and $\psi_n$ agree on all observables supported outside of some box.
To this end, let us introduce the following state spaces:

\begin{definition} \label{def:state spaces for unitary equivalence}
    For any $R > 0$, any $e \in \caE$, and any $X \in \Irr Z(\caC)$ we let $\caS_{> R}^{(e, X)}$ be the space of states $\psi : \caA \rightarrow \C$ for which
    $$ \psi \big(  \frt_e(p^{X}) \big) = \psi \big(  B_f \big) = 1 $$
    for all faces $f\subset B_R^c$ such that $e$ is not adjacent to $f$.
\end{definition}

That is, $\caS_{> R}^{(e, X)}$ consists of states satisfying ground state constraints outside of the box of size $R$, except at the puncture at $e$ where instead the projector $\frt_e(p^{X})$ is satisfied.

Recall that $\scrC = \{ L_n \}_{n \in \N}$ and put $e_n = \partial_{\ii} L_n = \partial_{\f} L_{n+1}$ for all $n \in \N$.

\begin{lemma} \label{lem:omega_n and psi_n agree outside a ball part 1}
    There is $R > 0$ and $n_1 \in \N$ such that $\omega_n$ and $\psi_n$ both belong to $\caS_{>R}^{(e_n, X)}$ for all $n \geq n_1$.
\end{lemma}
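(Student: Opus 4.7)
The strategy is to combine the concatenation lemma \ref{lem:concatenation lemma}, the intertwining property \eqref{eq:combined intertwining}, and the standard fact that $\omega^{\I}(P) = 1$ for a projector $P$ implies $\omega^{\I}(AP) = \omega^{\I}(PA) = \omega^{\I}(A)$ for every $A \in \caA$. First I would fix $R > 0$ large enough that $B_R$ contains $e$ together with its two adjacent faces and $\supp(O)$, and choose $n_1$ so that for all $n \ge n_1$ the link $L_n$ and its puncture $e_n$ lie outside $B_R$.

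For $\omega_n$, iterating Lemma \ref{lem:concatenation lemma} (as in the proof of Lemma \ref{lem:pure anyon state produced by string operator}) gives $U^X_n P^{\I\I}_n = u^X_{L^{\scrC}_{[1,n]}} P^{\I\I}_n$. Since $P^{\I\I}_n \le P^{?\I}_{L^{\scrC}_{[1,n]}}$, the intertwining \eqref{eq:combined intertwining} yields $\frt_{e_n}(p^{\bar X}) U^X_n P^{\I\I}_n = U^X_n P^{\I\I}_n$; using $\omega^{\I}(P^{\I\I}_n) = 1$ (rewriting $\frt_e(p^{\I})$ as a product of $B_f$'s via Lemma \ref{lem:p^one enforces ground state constraints}), this gives $\omega_n(\frt_{e_n}(p^{\bar X})) = 1$. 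For $f \subset B_R^c$ not adjacent to $e_n$ (hence not to $e$), Lemma \ref{lem:string insertions preserve ground state constraints} yields $[B_f, u^X_{L^{\scrC}_{[1,n]}}] = 0$ and Lemma \ref{lem:commutativity lemma} gives $[B_f, P^{\I\I}_n] = 0$, so $(U^X_n)^* B_f U^X_n P^{\I\I}_n = B_f P^{\I\I}_n$ and $\omega_n(B_f) = \omega^{\I}(B_f) = 1$.

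For $\psi_n$, I would reduce to the $\omega_n$-argument by absorbing the initial part of the string into the vector. Pick $n_0$ such that $\supp(u^X_{L_k}) \cap \supp(O) = \emptyset$ for all $k > n_0$, and enlarge $R$ to contain $\supp(U^X_{n_0})$. Setting $\tilde O := U^X_{n_0} O$, the commutation $[U^X_{[n_0+1,n]}, O] = 0$ gives $\Psi_n = \pi^{\I}(U^X_{[n_0+1,n]}) \pi^{\I}(\tilde O) \Omega$. Applying the previous paragraph's concatenation/intertwining argument to the sub-chain $(L_k)_{k > n_0}$ (with initial puncture $e_{n_0}$), but with the projector $P^{\bar X \I}_{L^{\scrC}_{[n_0+1,n]}}$ in place of $P^{\I\I}_n$ (since $P^{\bar X \I} \le P^{?\I}$ as well), yields
\begin{equation*}
    \bigl(x \, U^X_{[n_0+1,n]} - U^X_{[n_0+1,n]} \, x\bigr) \, P^{\bar X \I}_{L^{\scrC}_{[n_0+1,n]}} = 0
\end{equation*}
for $x$ equal to $\frt_{e_n}(p^{\bar X})$ or to a $B_f$ as in the statement. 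Once one verifies that $\pi^{\I}(\tilde O)\Omega \in \Ran\, \pi^{\I}\!\bigl(P^{\bar X \I}_{L^{\scrC}_{[n_0+1,n]}}\bigr)$, it follows that $\pi^{\I}(x) \Psi_n = \Psi_n$, and hence $\psi_n(x) = 1$.

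The main obstacle is precisely this last range condition: the factors of the sub-chain projector associated to faces in the bulk and to $e_n$ pose no problem (they sit outside $\supp(\tilde O) \subset B_R$ and so commute with $\tilde O$, and are satisfied on $\Omega$), but the $\bar X$-anyon factor $\frt_{e_{n_0}}(p^{\bar X})$ overlaps with $\supp(U^X_{n_0})$. The identity $\pi^{\I}(\frt_{e_{n_0}}(p^{\bar X}))\tilde O\Omega = \tilde O\Omega$ is obtained by invoking the same concatenation/intertwining identity for the initial sub-chain $(L_1, \ldots, L_{n_0})$ --- giving $\frt_{e_{n_0}}(p^{\bar X}) U^X_{n_0} P^{\I\I}_{L^{\scrC}_{[1,n_0]}} = U^X_{n_0} P^{\I\I}_{L^{\scrC}_{[1,n_0]}}$ --- and then unfolding $\omega^{\I}(\tilde O^* \frt_{e_{n_0}}(p^{\bar X}) \tilde O) = \omega^{\I}(O^* P^{\I\I}_{L^{\scrC}_{[1,n_0]}} O) = \omega^{\I}(O^* O) = 1$, which identifies $\tilde O \Omega$ as a vector representative of the pure anyon state of type $\bar X$ at $e_{n_0}$.
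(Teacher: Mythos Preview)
Your argument for $\omega_n$ is fine and matches the paper's (which handles $\omega_n$ as the case $O = \I$). The gap is in the $\psi_n$ part, specifically in the final paragraph where you claim $\pi^{\I}(\frt_{e_{n_0}}(p^{\bar X}))\tilde O\Omega = \tilde O\Omega$.

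The chain of equalities $\omega^{\I}(\tilde O^* \frt_{e_{n_0}}(p^{\bar X}) \tilde O) = \omega^{\I}(O^* P^{\I\I}_{L^{\scrC}_{[1,n_0]}} O) = \omega^{\I}(O^* O)$ fails. To pass from the first expression to the second you would need $O\Omega \in \Ran \, \pi^{\I}\bigl(P^{\I\I}_{L^{\scrC}_{[1,n_0]}}\bigr)$, since the identity $\frt_{e_{n_0}}(p^{\bar X}) U^X_{n_0} P^{\I\I}_{[1,n_0]} = U^X_{n_0} P^{\I\I}_{[1,n_0]}$ only tells you how $(U^X_{n_0})^* \frt_{e_{n_0}}(p^{\bar X}) U^X_{n_0}$ acts on that range. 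But $P^{\I\I}_{L^{\scrC}_{[1,n_0]}}$ enforces ground state constraints on the first $n_0$ links, whose support by construction overlaps $\supp(O)$; so in general $O\Omega$ is \emph{not} in that range. Concretely, if $O$ is a unitary that breaks a single string-net constraint on an interior edge of $L_1$ (and touches nothing else), then $P^{??}_{L_1} O\Omega = 0$, each $u^X_{L_k}$ acts by the $(\I - P^{??}_{L_k})$ term, and $\tilde O\Omega = O\Omega$ satisfies the \emph{vacuum} constraint $\frt_{e_{n_0}}(p^{\I})$ at $e_{n_0}$, not $p^{\bar X}$.

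The paper avoids this by never asserting a definite $\bar X$ at the intermediate puncture. It works with the subchain $[n_1, n]$ (with $n_1 = n_0 + 1$), whose projector $P^{\I\I}_{[n_1,n]}$ is supported outside $\supp(O)$ and hence \emph{is} satisfied by $O\Omega$ and by $|\Psi_{n_0-1}\rangle$. Then it uses Eq.~\eqref{eq:overlapping link} to conclude only the weaker statement $|\Psi_{n_0}\rangle \in \Ran \, P^{?\I}_{[n_1,n]}$; since the concatenation lemma is stated for $P^{?\I}$ and the intertwining \eqref{eq:combined intertwining} carries $P^{?\I}$ to $P^{?\bar X}$, this is enough to conclude $\psi_n(\frt_{e_n}(p^{\bar X})) = 1$ without ever controlling the state at $e_{n_0}$. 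Your argument becomes correct if you replace $P^{\bar X \I}_{L^{\scrC}_{[n_0+1,n]}}$ by $P^{?\I}_{L^{\scrC}_{[n_0+1,n]}}$ throughout and drop the last paragraph entirely.
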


\begin{proof}
    Recall that a link is a sequence of faces. We say a link is contained in $S \subset \R^2$ if all of its faces are subsets of $S$.
    Let $R_O>0$ be such that $\supp O \subset B_{R_O}$. 
    Let $n_0$ be maximal such that $L_{n_0}\cap B_{R_O} \ne \emptyset$.
    Then $\partial_\ii L_{n_0}$ necessarily lies outside $B_{R_O}$. 
    We choose $R> R_O$ such that $L_{n_0}$ is contained in $B_R$ and set $n_1 = n_0 + 1$.
    The point is now that $|\Psi\rangle$ satisfies ground state constraints along the support of any subchain starting from $n_1$, i.e. 
    $ \pi^\1(P_{\interval{n_1}{n}}^{\I \I}) | \Psi \rangle = | \Psi \rangle$.
    The support of the links $L_k$ where $k < n_0 = n_1 - 1$ is disjoint from this subchain, so for any $n \geq n_1$ we have
    $$ \pi^{\I}(P_{\interval{n_1}{n}}^{\I \I}) | \Psi_{n_0 - 1} \rangle 
    = \pi^{\I}(P_{\interval{n_1}{n}}^{\I \I}U^X_{n_0-1}) | \Psi \rangle = 
    | \Psi_{n_0 - 1} \rangle.$$
    The link $L_{n_0}$ overlaps with $L_{n_1}$ at the puncture at $\partial_{\f} L_{n_1}$ but as expressed by \eqref{eq:overlapping link}, $u^X_{n_0}$ will either preserve the ground state constraint or produce an $X$ anyon at the puncture when acting on $| \Psi_{n_0 - 1} \rangle$. That is,
    $$ | \Psi_{n_0} \rangle 
    = \pi^{\I}(u^X_{n_0}) | \Psi_{n_0 - 1} \rangle
    = \pi^{\I}(P_{\interval{n_1}{n}}^{\I ?}) | \Psi_{n_0} \rangle.
    $$
    Now it follows, again by inductively applying the Concatenation \Cref{lem:concatenation lemma}, that 
    $$ | \Psi_{n} \rangle 
    = \pi^{\I}\Big(u^X_{\interval{n_1}{n}} P_{\interval{n_1}{n}}^{\I ?} \Big) | \Psi_{n_0} \rangle
    = \pi^{\I}(P_{\interval{n_1}{n}}^{X ?}) | \Psi_{n} \rangle,$$
    where the last equality uses the intertwining property \eqref{eq:combined intertwining}.
    This implies that $\psi_n(\frt_{e_{n}}( p^{X})) = 1$ and $\psi_n(B_f) = 1$ for all bulk faces $f$ of $L_{\interval{n_1}{n}}^\scrC$.
    For any other face $f\subset B_{R}^c$ which is not adjacent to $e_n$, the projector $B_f$ commutes with $O$ and with $U^X_n$ (by \Cref{lem:string insertions preserve ground state constraints}) so $\psi_n(B_f)=1$. (Note in particular that the faces adjacent to $e_{n_0} = \partial_\ii L_{n_0}$ belong to $L_{n_0}$ which is contained in $B_R$.)
    Thus $\psi_n \in \caS_{>R}^{(e_n, X)}$. Taking $O=1$ shows $\omega_n \in \caS_{>R}^{(e_n, X)}$ as well.
\end{proof}

\begin{lemma} \label{lem:omega_n and psi_n agree outside a ball part 2}
    There is $R \geq 0$ and a number $n_2 \in \N$ such that $\omega_n|_{B_R^c} = \psi_n|_{B_R^c}$ for all $n \geq n_2$.
\end{lemma}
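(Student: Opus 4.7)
The strategy is to fix a local observable $x \in \caA^{\loc}_{B_R^c}$ and exhibit, for $n$ large enough, a finite pair-of-pants region $C' \subset B_R^c$ containing $\supp x$ on which both restrictions $\omega_n|_{\caA_{C'}}$ and $\psi_n|_{\caA_{C'}}$ coincide with the unique density matrix in a one-dimensional state space. By density this will yield the asserted equality on all of $\caA_{B_R^c}$.

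Enlarging $R$ if needed, I assume $\supp O \subset B_R$, that $R$ exceeds the extent of $\supp u^X_{L_1}$, and that \Cref{lem:omega_n and psi_n agree outside a ball part 1} holds with some slack, so that both $\omega_n$ and $\psi_n$ satisfy $B_f=1$ on a thin annular collar contained in $B_R^c$ and adjacent to its inner edge. Given $x \in \caA^{\loc}_{B_R^c}$ with $\supp x \subset B_{R_x}$, for $n$ large I pick $R_x < R' < \infty$ with $e_n$ strictly interior to $\Sigma_{\ann_{R,R'}^{(e_n)}}$, and then $R'' > R'$ with $\supp U^X_n \cup \supp O \subset B_{R''}$. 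I set $C':=\ann_{R,R'}^{(e_n)}$, a pair of pants with outer boundary $\caS_{R'}$, inner boundary $\caS_R$, and puncture $\caS_{e_n}$.

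Both restrictions are density matrices in $\caD_{C'}$ satisfying $\frt_{e_n}(p^{\bar X})=1$ at the puncture. Since $U^X_n$ and $O$ are supported inside $B_{R''}$, both states agree with $\omega^{\I}$ on $\caA_{B_{R''}^c}$; by \Cref{lem:restrictions of infinite volume states} applied to $\omega^{\I}\in \caS^{p^{\I}}_{C^{(e)}}$, this common restriction lies in $\caS^{\star^{\I}}_{B_{R''}^c}$. Applying \Cref{lem:restriction yields maximally mixed boundary conditions} to the finite annulus $\ann_{R',R''}$ then propagates the maximally mixed type-$\I$ condition inward to $\caS_{R'}$. A second application of \Cref{lem:restriction yields maximally mixed boundary conditions}, this time to the thin annular collar just inside $\caS_R$ (where by choice of $R$ both states satisfy ground-state constraints), yields the maximally mixed condition at $\caS_R$, without yet specifying its type.

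By \Cref{lem:characterization of some caDs}, the space $\caD_{C'}(\star^{\I}, \star^Y, p^{\bar X})$ is isomorphic to the space of density matrices on $Z(\caC)(\I \to Y \otimes \bar X)$, which is one-dimensional for $Y \cong X$ and zero otherwise. Consequently the inner type is forced to be $X$, and both restrictions must coincide with the unique density matrix in $\caD_{C'}(\star^{\I}, \star^X, p^{\bar X})$. This yields $\omega_n(x) = \psi_n(x)$, and by continuity $\omega_n|_{\caA_{B_R^c}} = \psi_n|_{\caA_{B_R^c}}$ for all $n$ beyond some $n_2$. The principal subtlety is the inner boundary: because the $X$ anyon lives at $e \in B_R$, one cannot extend $C'$ inward using ground-state constraints in a substantial way, so the type at $\caS_R$ has to be pinned down via the fusion-rule dimension count rather than by a direct restriction argument, and getting the maximally mixed invariance at $\caS_R$ requires the slack built into the choice of $R$ that places a thin ground-state annulus just inside $\caS_R$.
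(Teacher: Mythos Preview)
Your overall strategy matches the paper's: restrict both states to a pair-of-pants region, pin down enough boundary conditions to land in a singleton $\caD$-space via Lemma~\ref{lem:characterization of some caDs}, and conclude by density. Two steps need tightening.

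\textbf{Outer boundary.} Lemma~\ref{lem:restrictions of infinite volume states} tells you the restriction to $B_{R''}^c$ has $\star^{\I}$ with respect to the Tube action whose collar sits \emph{outside} $B_{R''}$. The Tube action at the outer boundary of $\ann_{R',R''}$ has its collar \emph{inside} $B_{R''}$, so your step~5 does not follow as written. The paper instead chooses $R'$ large enough that the outer collar of the pair-of-pants itself already lies where $\omega_n,\psi_n$ agree with $\omega^{\I}$, and then uses directly that $\omega^{\I}(\frt_{\caS_{R'}}(P^{\I}))=1$ (which holds for the outer boundary of any disk-like region, by Lemma~\ref{lem:characterization of some caDs}). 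Your extra radius $R''$ and the detour through Lemma~\ref{lem:restrictions of infinite volume states} are unnecessary.

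\textbf{Inner boundary.} Lemma~\ref{lem:restriction yields maximally mixed boundary conditions} as stated assumes a definite type at the boundary being traced out, so it does not literally yield ``$\star$ without type''. You can repair this by first decomposing the inner boundary condition by type and invoking your fusion-rule count to see that only $Y=X$ contributes, and then applying the lemma with type~$X$. That is exactly what the paper does: it shows the restriction to $\ann_{R,R'}^{(e_n)}$ lies in $\caD(P^{\I},\id,p^{\bar X})=\caD(P^{\I},P^X,p^{\bar X})$ via Lemma~\ref{lem:characterization of some caDs}, and only then shrinks both boundaries by one to land in the singleton $\caD(\star^{\I},\star^X,p^{\bar X})$.

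With these two adjustments your argument becomes essentially the paper's.
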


\begin{proof}
    By Lemma \ref{lem:omega_n and psi_n agree outside a ball part 1} there is $R>0$ and $n_1 \in \N$ such that $\omega_n$ and $\psi_n$ both belong to $\caS_{>R}^{(e_n, X)}$ for all $n \geq n_1$.

    Let $\caS$ be the outer boundary of an arbitrary finite connected region $C$. Then $\caS$ is the unique connected boundary component of the disk-like region $D$ obtained from $C$ by filling in its inner holes, and the restriction of $\omega^{\I}$ to $\caA_D$ corresponds to a density matrix in $\caD_{D} = \caD_{D}(P^{\I})$, where we used \Cref{lem:characterization of some caDs}. This shows that $\omega^{\I}( \frt_{\caS}(P^{\I}) ) = 1$ for any such $\caS$.
    
    For any $n\ge n_1$, take $R'>R$ large enough so that the support of $\scrC[1,n]$ is contained in $B_{R'}$ (recall that $\supp O \subset B_{R}$). Then the region $\ann_{R, R'}^{(e_n)}$ has associated surface homeomorphic to a thrice punctured sphere. (two connected boundary components from the original annulus and one from the puncture at $e_n$.) Moreover, denoting by $\caS_{R'}$ the outer boundary component of $\Sigma_{\ann_{R, R'}}$, we have that $\omega_n( \frt_{\caS_{R'}}(a) ) = \psi_n( \frt_{\caS_{R'}}(a) ) = \omega^{\I}( \frt_{\caS_{R'}}(a) )$ for any $a \in \Tube_{\caS_{R'}}$. Since $\omega^{\I}( \frt_{\caS_{R'}}(P^{\I}) ) = 1$, it follows that the restrictions of $\omega_n$ and $\psi_n$ to $\ann_{R, R'}^{(e_n)}$ are given by density matrices in
    $$\caD_{\ann_{R, R'}^{(e_n)}}(P^{\I}, \id,  p^{X} ) = \caD_{\ann_{R, R'}^{(e_n)}}(P^{\I}, P^{\bar X}, p^{X} ),$$
    where the equality follows from Lemma \ref{lem:characterization of some caDs}. Here, the first slot gives the boundary condition on the outer boundary of the annulus, the second slot gives the boundary condition on the inner boundary of the annulus, and the final slot gives the boundary condition at the puncture at $e_n$.

    Choosing $n_2\ge n_1$ such that whenever $n\ge n_2$ and $R'>R+1$ is large enough, then
    $\ann_{R+1, R'-1}^{e_n}$ is also a thrice punctured sphere, we find by \Cref{lem:restriction yields maximally mixed boundary conditions} that the restrictions of $\psi_n$, $\omega_n$ to $\ann_{R+1, R'-1}^{e_n}$ correspond to density matrices in $\caD_{\ann_{R, R'}^{(e_n)}}(\star^{\I}, \star^{\bar X}, p^{X} )$.
    But this is a singleton according to \Cref{lem:characterization of some caDs}, so 
    $$\psi_n|_{B_{R'-1}\setminus B_{R+1}} = \omega_n|_{B_{R'-1} \setminus B_{R+1}}.$$
    As $R'$ can be chosen arbitrarily large this proves the claim by density.
\end{proof}

\begin{lemma} \label{lem:unitary equivalence of omega and psi}
    There is a unitary $V \in \caA^{\loc}$ such that $\psi = \omega_e^X \circ \Ad[V]$.
\end{lemma}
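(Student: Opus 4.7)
The plan is to combine Lemma~\ref{lem:omega_n and psi_n agree outside a ball part 2} with a standard fact about pure states on tensor products of $\rm C^*$-algebras, and then pass to a convergent subsequence in a compact unitary group.

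First, Lemma~\ref{lem:omega_n and psi_n agree outside a ball part 2} yields $R > 0$ and $n_2 \in \N$ such that $\omega_n|_{\caA_{B_R^c}} = \psi_n|_{\caA_{B_R^c}}$ for every $n \geq n_2$. Crucially, each $\omega_n$ and each $\psi_n$ is a pure state, being a vector state of the irreducible vacuum representation $\pi^{\I}$. The next step is to invoke the following standard fact from algebraic quantum mechanics: two pure states on $\caA \simeq \caA_{B_R} \otimes \caA_{B_R^c}$ that agree on the second tensor factor differ by conjugation with a unitary in the first (finite-dimensional matrix) factor $\caA_{B_R}$. Applied to the pairs $(\omega_n, \psi_n)$, this yields unitaries $V_n \in \caA_{B_R}$ with $\psi_n = \omega_n \circ \Ad[V_n]$ for all $n \geq n_2$.

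Since $\caA_{B_R}$ is finite-dimensional, its unitary group is norm-compact, so a subsequence $V_{n_k}$ converges in norm to a unitary $V \in \caA_{B_R} \subset \caA^{\loc}$. Combining this with the $w^*$-convergences $\omega_n \to \omega_e^X$ and $\psi_n \to \psi$ supplied by Lemmas~\ref{lem:omega_n converge to omega} and~\ref{lem:psi_n converge to psi}, and the norm estimate $\|V_{n_k}^* x V_{n_k} - V^* x V\| \to 0$ valid for any fixed $x \in \caA^{\loc}$, one can pass to the limit in the identity $\psi_{n_k}(x) = \omega_{n_k}(V_{n_k}^* x V_{n_k})$ to conclude that $\psi(x) = \omega_e^X(V^* x V)$ for every $x \in \caA^{\loc}$. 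The equality $\psi = \omega_e^X \circ \Ad[V]$ then extends to all of $\caA$ by density and continuity.

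The main obstacle is justifying the cited standard fact. Its proof relies on the structure of the GNS representation $\pi$ of a pure state on a tensor product with a type~I factor: restriction to $\caA_{B_R}$ is quasi-equivalent to a multiple of the defining representation, producing a decomposition $\caH \simeq \caH_{B_R} \otimes \caK$ with $\pi|_{\caA_{B_R}}$ acting as $M \otimes \I$; the restriction $\pi|_{\caA_{B_R^c}}$ acts as $\I \otimes \tilde\pi$ with $\tilde\pi$ irreducible on $\caK$; agreement of the two given states on $\caA_{B_R^c}$ identifies the multiplicity spaces $\caK$ via GNS uniqueness of the shared restriction; and uniqueness of purifications of the resulting reduced density matrix on $\caK$ produces the unitary $V \in \caA_{B_R}$ satisfying $\pi(V)\Omega_1 = \Omega_2$. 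Variants of this statement are standard in algebraic sector theory and appear, e.g., in \cite{naaijkens2011localized, bols2025classification}.
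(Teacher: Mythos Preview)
Your proof is correct, but it takes a slightly different and less direct route than the paper's. Both arguments start from Lemma~\ref{lem:omega_n and psi_n agree outside a ball part 2} and use uniqueness of purifications in $\caH \simeq \caH_{B_R}\otimes\caH_{B_R^c}$ with $\caH_{B_R}$ finite-dimensional; the divergence is in how one passes from the finite-$n$ statement to the limit. The paper chooses $n_3\ge n_2$ so that all links $L_n$ with $n> n_3$ are disjoint from $B_R$, finds a \emph{single} unitary $W\in\caB(\caH_{B_R})$ with $W\,|\Omega_{n_3}\rangle = |\Psi_{n_3}\rangle$, and then observes that this same $W$ satisfies $|\Psi_n\rangle=W\,|\Omega_n\rangle$ for all $n\ge n_3$ since the remaining gates $u^X_{L_n}$ commute with $W$ by disjoint supports. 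Your approach instead produces a sequence $V_n$ and extracts a convergent subsequence by compactness of the unitary group of $\caA_{B_R}$. Both reach $\psi=\omega_e^X\circ\Ad[V]$ with $V\in\caA_{B_R}$; the paper's argument is marginally more explicit (no subsequence needed), while yours is more robust in that it doesn't rely on the eventual disjointness of the gates from $B_R$. Note also that your ``standard fact'' simplifies here: since $\omega_n$ and $\psi_n$ are both vector states in the \emph{same} irreducible representation $\pi^{\I}$, uniqueness of purifications applies immediately to the vectors $|\Omega_n\rangle,|\Psi_n\rangle\in\caH_{B_R}\otimes\caH_{B_R^c}$ without any GNS comparison of distinct representations.
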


\begin{proof}
    Let $R$ and $n_2$ be as in the statement of Lemma \ref{lem:omega_n and psi_n agree outside a ball part 2}. 
    Pick $n_3 \ge n_2$ such that $L_n$ is disjoint from $B_R$ for all $n\ge n_3$.
    The vacuum Hilbert space has tensor product structure $\caH = \caH_{B_R} \otimes \caH_{B_R^c}$ (because $\caH_{B_R}$ is finite dimensional), and Lemma \ref{lem:omega_n and psi_n agree outside a ball part 2} implies that the unit vectors $| \Omega_{n_3} \rangle$ and $| \Psi_{n_3} \rangle$ have the same expectation values for any observable in $\caB(\caH_{B_R^c})$. That is, the vectors $| \Omega_{n_3} \rangle$ and $| \Psi_{n_3} \rangle$ are both purifications of the same state on $\caA_{B_R^c}$, with finite dimensional purification space $\caH_{B_R}$. It follows that there exists a unitary $W \in \caB( \caH_{B_R})$ such that $W | \Psi_{n_3} \rangle = | \Omega_{n_3} \rangle$.

    Let $n > n_3$. Using that $W$ has support disjoint from $U^X_{\interval{n_3}{n}}$ we find
    \begin{align*}
        \ket{ \Psi_n} 
        = \pi^\1(U^X_{\interval{n_3+1}{n}}) \ket{\Psi_{n_3}} 
        = \pi^\1(U^X_{\interval{n_3+1}{n}}) W \ket{\Omega_{n_3}} 
        = W \pi^\1(U^X_{\interval{n_3+1}{n}}) \ket{\Omega_{n_3}}
        = W \ket{ \Omega_n}.
    \end{align*}
    Since $B_R \cap \Z^2$ is finite, there is a unitary $V \in \caA_{B_R}$ such that $W = \pi^{\I}(V)$. We then have $\psi_n = \omega_n \circ \Ad[V]$ for all $n > n_3$.
    Together with Lemmas \ref{lem:psi_n converge to psi} and \ref{lem:omega_n converge to omega}  this implies that $\psi = \omega_e^X \circ \Ad[V],$ as required.
\end{proof}

\begin{proofof}[Proposition \ref{prop:equivalence of string and GNS reps}]
    It follows from the previous Lemma that any state $\psi$, corresponding under the representation $\pi_{\scrC}^X$ to a unit vector in $\caH$ of the form $| \Psi \rangle = \pi^{\I}(O) | \Omega \rangle$ for some $O \in \caA^{\loc}$, is a vector state for the GNS representation $\pi_{e}^X$. Indeed, a representing vector is given by $\pi^X_e(V)\ket{\Omega^X_e}$ where $|\Omega^X_e\rangle$ denotes the cyclic vector of the GNS representation corresponding to the state $\omega^X_e$, and $V\in \caA$ is the unitary granted by \Cref{lem:unitary equivalence of omega and psi}. In particular, any such $\psi$ is a pure state.
    
    Since $\pi^\1(\caA^{\loc})\ket{\Omega}$ is dense in $\caH$, it follows that $\pi^X_{\scrC}$ is irreducible.
    By Lemma \ref{lem:pure anyon state produced by string operator} the state $\omega_e^X$ is a vector state for both irreducible representations $\pi_{\scrC}^X$ and $\pi_e^X$, so we conclude that $\pi_{\scrC}^X \simeq_{u.e.} \pi_e^X$.
\end{proofof}

\subsection{Disjointness} \label{subsec:disjointness of reps}

We no longer keep $\scrC$ and $X$ fixed.
Recall that $\pi_e^X$ is the GNS representation of the pure state $\omega_e^X$.

\begin{proposition} \label{prop:disjointness}
	Let $X, Y \in \Irr Z(\caC)$ and $e, e' \in \caE$. Then $\pi_e^X \simeq_{u.e.} \pi_{e'}^Y$ if and only if $X = Y$.
\end{proposition}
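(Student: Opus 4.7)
The plan is to treat the two implications separately.

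For the ``if'' direction, $X = Y \implies \pi_e^X \simeq \pi_{e'}^X$, I will reduce to equivalence of string representations via Proposition \ref{prop:equivalence of string and GNS reps}. My plan is to construct chains $\scrC$ starting at $e$ and $\scrC'$ starting at $e'$ whose tails eventually coincide with matching indexing: explicitly, by routing both through a common remote puncture $e_*$ and then following a shared infinite chain from $e_*$, I obtain chains of the form $\scrC = (L_1^e, \ldots, L_M^e, L_1^{\mathrm{tail}}, L_2^{\mathrm{tail}}, \ldots)$ and $\scrC' = (L_1^{e'}, \ldots, L_{M'}^{e'}, L_1^{\mathrm{tail}}, L_2^{\mathrm{tail}}, \ldots)$. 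Writing $W = U_{\scrC[1, M]}^X$ and $W' = U_{\scrC'[1, M']}^X$, which are unitaries in $\caA^{\loc}$, one checks that for any $x \in \caA^{\loc}$ both $\rho_\scrC^X(x) = W^* \tilde\rho(x) W$ and $\rho_{\scrC'}^X(x) = (W')^* \tilde\rho(x) W'$, where $\tilde\rho$ is the string endomorphism associated to the shared tail starting at $e_*$. Consequently $\pi^{\I} \circ \rho_{\scrC'}^X = \Ad\bigl[ \pi^{\I}((W')^* W) \bigr] \circ \pi^{\I} \circ \rho_\scrC^X$, exhibiting $\pi_\scrC^X \simeq \pi_{\scrC'}^X$ via the local unitary $\pi^{\I}((W')^* W)$. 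Chaining with Proposition \ref{prop:equivalence of string and GNS reps} yields $\pi_e^X \simeq \pi_\scrC^X \simeq \pi_{\scrC'}^X \simeq \pi_{e'}^X$.

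For the ``only if'' direction, $\pi_e^X \simeq \pi_{e'}^Y \implies X = Y$, my plan is to introduce local projectors that detect the anyon type from spatial infinity. For each $n$ let $\caS_n$ denote the boundary of $\Sigma_{\bbD_n}$, and set $\Pi_n^X := \frt_{\caS_n}(P^X) \in \caA^{\loc}$, a projector by Proposition \ref{prop:matrix units for Tube_n} and the $*$-representation property of $\frt_{\caS_n}$. I first claim that for $n$ large enough that $\bbD_n$ contains both punctures,
\begin{equation*}
    \omega_e^X(\Pi_n^X) = 1, \qquad \omega_{e'}^Y(\Pi_n^X) = 0 \text{ whenever } X \neq Y.
\end{equation*}
The first identity follows from Lemma \ref{lem:restrictions of infinite volume states}, applied with $C = C^{(e)}$ and $C' = C \setminus \bbD_n$: the restriction $\omega_e^X|_{\caA_{C'}}$ lies in $\caS_{C'}^{\star^X}$, which by Definition \ref{def:state spaces} forces $\omega_e^X(\Pi_n^X) = 1$. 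The same argument applied to $\omega_{e'}^Y$ yields $\omega_{e'}^Y(\Pi_n^Y) = 1$, and since $P^X, P^Y$ are orthogonal central projectors in $\Tube_{\caS_n}$ when $X \neq Y$, we deduce $\omega_{e'}^Y(\Pi_n^X) = 0$.

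Now suppose $V \colon \caH_e^X \to \caH_{e'}^Y$ is an intertwining unitary and assume $X \neq Y$ for contradiction. From $\omega_e^X(\Pi_n^X) = 1$ it follows that $\pi_e^X(\Pi_n^X) \Omega_e^X = \Omega_e^X$ for all large $n$, and by intertwining $\pi_{e'}^Y(\Pi_n^X) V \Omega_e^X = V \Omega_e^X$. My plan is to establish that $\pi_{e'}^Y(\Pi_n^X) \xi \to 0$ as $n \to \infty$ for every $\xi \in \caH_{e'}^Y$, which applied to $\xi = V \Omega_e^X$ forces $V \Omega_e^X = 0$, contradicting $\| V \Omega_e^X \| = 1$. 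For the convergence, note first that $\omega_{e'}^Y(\Pi_n^X) = 0$ for large $n$ implies $\pi_{e'}^Y(\Pi_n^X) \Omega_{e'}^Y = 0$. Given $\epsilon > 0$ and $\xi \in \caH_{e'}^Y$, cyclicity of $\Omega_{e'}^Y$ together with norm-density of $\caA^{\loc}$ in $\caA$ produce $A \in \caA_{\bbD_R}$ for some $R$ with $\| \xi - \pi_{e'}^Y(A) \Omega_{e'}^Y \| < \epsilon$. For $n$ large enough that the collar of $\caS_n$ is disjoint from $\bbD_R$, the projector $\Pi_n^X$ commutes with $A$, giving
\begin{equation*}
    \pi_{e'}^Y(\Pi_n^X) \pi_{e'}^Y(A) \Omega_{e'}^Y = \pi_{e'}^Y(A) \pi_{e'}^Y(\Pi_n^X) \Omega_{e'}^Y = 0.
\end{equation*}
A triangle inequality then yields $\| \pi_{e'}^Y(\Pi_n^X) \xi \| \leq \epsilon$, which closes the argument since $\epsilon$ was arbitrary.

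The main bookkeeping obstacle I anticipate lies in the ``if'' direction, in arranging the initial finite chains from $e$ and $e'$ to the common puncture $e_*$ so that they satisfy the link axioms of Section \ref{subsec:links} (minimum length, composability, right-strip conditions), and then aligning them so the shared tails are indexed identically; the subsequent algebra of string operators is then routine. The ``only if'' direction is conceptually cleaner once the type-detection projectors $\Pi_n^X$ are identified, and relies only on cyclicity of the GNS vector, orthogonality of central projectors in distinct type sectors of $\Tube_{\caS_n}$, and disjointness of local supports.
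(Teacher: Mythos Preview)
Your overall strategy is sound for both directions, but each differs from the paper's route, and there is one technical slip in the ``only if'' half worth flagging.

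In the ``only if'' direction, your detection projector $\Pi_n^X = \frt_{\caS_n}(P^X)$ is built from the collar of $\caS_n$ sitting \emph{inside} $\bbD_n$, whereas Lemma~\ref{lem:restrictions of infinite volume states} applied with $C' = C^{(e)} \setminus \bbD_n$ yields a statement about the Tube action on the collar of the boundary of $C'$, which sits \emph{outside} $\bbD_n$ (and is $\hat\caS_n$ rather than $\caS_n$). These are different operators, so the lemma does not literally give $\omega_e^X(\Pi_n^X)=1$ as you wrote. The repair is straightforward: either redefine $\Pi_n$ using the exterior collar, or follow the paper and restrict $\omega_e^X$ to the finite annulus $\bbD_R^{(e)}$, whose outer Tube action detects type $\bar X$ via $\caD_{\bbD_R^{(e)}}(p^X,\star^{\bar X})$. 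From there the paper simply notes $\lvert \omega_e^X(\frt_R(P^{\bar X})) - \omega_e^Y(\frt_R(P^{\bar X}))\rvert = 1$ for all large $R$ and invokes \cite[Cor.~2.6.11]{bratteli2012operator} to conclude disjointness. Your hands-on strong-convergence argument is a valid and more self-contained alternative once the collar is placed correctly.

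For the ``if'' direction, your shared-tail construction is correct in principle, but the paper's argument is much shorter and sidesteps the chain bookkeeping you anticipate: pick a single link $L$ with $\partial_\ii L = e$ and $\partial_\f L = e'$ (or two links through an intermediate puncture $e''$ when no single link exists); the unitary gate $u_L$ transports the anyon directly, giving $\omega_{e'}^X = \omega_e^X \circ \Ad[(u_L)^*]$ and hence unitary equivalence of the GNS representations without any appeal to Proposition~\ref{prop:equivalence of string and GNS reps} or to infinite chains.
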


\begin{proof}
    The statement follows from showing that $\pi_e^X \simeq_{u.e.} \pi_{e'}^X$, and that $X\ne Y$ implies that $\pi_e^X$ and $\pi_e^Y$ are disjoint.
    Assume $X\ne Y$.
    Let $R>0$ be such that $\Sigma_{\bbD_R^{(e)}}$ is homeomorphic to an annulus.
    As noted in the proof of \Cref{prop:unique anyon states}, the restriction $\omega_e^X|_{\caA_{B_R}}$ corresponds to the unique density matrix in $\caD_{\bbD_{R}^{(e)}}(p^{\bar X}, \star^{X})$.
    In particular $\omega_e^X(\frt_{R}(P^{X}))=1$, writing $\frt_{R}$ for the Tube action on the outer boundary of $\bbD^{(e)}_{R}$.
    Similarly, $\omega_e^Y(\frt_{R}(P^{Y}))=1$.
    By orthogonality, $P^{X}P^{Y}=0$ so
    $$  \abs{ \omega_{e}^X \big(  \frt_{R}(P^{X}) \big) - \omega_{e}^{Y} \big(  \frt_{R}( P^{X} )  \big)  } = 1.  $$
    Since $R$ can be taken arbitrarily large, it  follows from Corollary 2.6.11 of \cite{bratteli2012operator} that the GNS representations $\pi_e^X$ and $\pi_e^Y$ are disjoint.
    
    Now we show that $\pi_e^X \simeq_{u.e.} \pi_{e'}^X$.
    Assume that there is a link $L$ with $\partial_\ii L= e$ and $\partial L_\f=e'$. Then $\omega^X_{e'} = \omega^X_e \circ \Ad[(u_L^X)^*]$, so the GNS representations of these pure states are unitarily equivalent. In case there is no such link we obtain the result by considering an intermediate equivalence with $\pi_{e''}^X$ for a suitable edge $e''$ such that there are links $L_1$ and $L_2$ with $e = \partial_{\ii} L_1, e'' = \partial_{\f} L_1 = \partial_{\ii} L_2$ and $e' = \partial_{\f} L_2$. Such an intermediate $e''$ can always be found.
\end{proof}

\subsection{Superselection criterion} \label{subsec:superselection criterion}
\begin{proposition} \label{prop:superselection criterion for piX}
	The representations $\pi_e^X$ for all $e\in \caE$ and $X\in \Irr Z(\caC)$ satisfy the superselection criterion.
\end{proposition}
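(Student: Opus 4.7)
The plan is to exploit the factorization $\pi_e^X \simeq_{u.e.} \pi^{\I} \circ \rho_{\scrC}^X$ from \Cref{prop:equivalence of string and GNS reps} together with the fact that $\rho_{\scrC}^X$ is supported on $\supp(\scrC)$ (\Cref{lem:limiting endomorphisms}). Given any cone $\Lambda$, the strategy is to arrange for a chain $\scrC$ whose support is entirely contained in $\Lambda^c$, so that $\rho_{\scrC}^X$ acts as the identity on $\caA_\Lambda$, and the corresponding representation restricted to $\caA_\Lambda$ coincides literally with the vacuum restricted to $\caA_\Lambda$.

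First, I use the unitary equivalence $\pi_e^X \simeq_{u.e.} \pi_{e'}^X$ from \Cref{prop:disjointness}, valid for any $e'\in \caE$, to move the anyon to some $e' \in \caE$ lying deep inside $\Lambda^c$. This equivalence is global, so it descends to an equivalence of the restrictions $\pi_e^X|_{\caA_\Lambda} \simeq_{u.e.} \pi_{e'}^X|_{\caA_\Lambda}$.

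Second, I construct a chain $\scrC$ starting at $e'$ with $\supp(\scrC) \subset \Lambda^c$. Since $\Lambda$ has opening angle strictly less than $2\pi$, the complement $\Lambda^c$ has non-empty interior and extends to infinity. I choose a unit vector $\hat w$ pointing into the interior of $\Lambda^c$ and a half-infinite tubular neighborhood $T$ of a ray from $e'$ in direction $\hat w$ that is itself contained in $\Lambda^c$. The direction $\hat w$ can be approximated by infinite lattice paths consisting of horizontal and vertical segments, and the corresponding sequence of links (each with straight first and last four faces, as required by the definition of a link) can be taken to lie within $T$ by choosing the segments sufficiently long. Composability of successive links and the straight-endpoint condition are automatic for such straight lattice segments.

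With $\scrC$ in hand, \Cref{prop:equivalence of string and GNS reps} gives $\pi_{e'}^X \simeq_{u.e.} \pi_{\scrC}^X = \pi^{\I} \circ \rho_{\scrC}^X$. Since $\supp(\scrC) \cap \Lambda = \emptyset$, \Cref{lem:limiting endomorphisms} implies $\rho_{\scrC}^X(a) = a$ for all $a \in \caA^{\loc}_\Lambda$, which extends by continuity to $\rho_{\scrC}^X|_{\caA_\Lambda} = \id_{\caA_\Lambda}$, so $\pi_{\scrC}^X|_{\caA_\Lambda} = \pi^{\I}|_{\caA_\Lambda}$. Chaining the equivalences, $\pi_e^X|_{\caA_\Lambda} \simeq_{u.e.} \pi^{\I}|_{\caA_\Lambda}$, as required. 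The main technical point is the geometric existence of the chain avoiding $\Lambda$: this is routine for cones with moderate opening angle, and requires a touch more care when the opening angle is close to $2\pi$, since then $T$ may be narrow, but any real direction is approximable to arbitrary precision by lattice paths with bounded step size, so a chain lying within $T$ can always be constructed.
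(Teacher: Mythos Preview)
Your proof is correct and follows essentially the same route as the paper: construct a chain $\scrC$ supported in $\Lambda^c$, observe that $\pi_{\scrC}^X|_{\caA_\Lambda} = \pi^{\I}|_{\caA_\Lambda}$ since $\rho_{\scrC}^X$ acts trivially there, and then chain together the unitary equivalences $\pi_e^X \simeq \pi_{e'}^X$ (from \Cref{prop:disjointness}) and $\pi_{e'}^X \simeq \pi_{\scrC}^X$ (from \Cref{prop:equivalence of string and GNS reps}) where $e' = \partial_{\ii}\scrC$. The paper asserts the existence of a chain in $\Lambda^c$ without comment, whereas you give a more explicit geometric justification, which is a reasonable elaboration.
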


\begin{proof}
	We must show that for any cone $\Lambda$ there is a unitary equivalence
	$$ \pi_e^X|_{\Lambda} \simeq_{u. e} \pi^{\I}|_{\Lambda}. $$

	Let ${\scrC}$ be a chain supported in the complement of $\Lambda$, so $\rho_{\scrC}^X(x) = x$ for all $x \in \caA_{\Lambda}$. Then we have
	$$ \pi_{\scrC}^X|_{\Lambda} = (\pi^{\I} \circ \rho_{\scrC}^X)|_{\Lambda} = \pi^{\I}|_{\Lambda}. $$
    Let $e' = \partial_{\ii} {\scrC}$. The unitary equivalence $\pi_{\scrC}^X \simeq_{u.e.} \pi_{e'}^X$ from Proposition \ref{prop:equivalence of string and GNS reps} together with $\pi_e^X \simeq_{u.e.} \pi_{e'}^X$ from Proposition \ref{prop:disjointness} gives the desired equivalence.
\end{proof}

\section{Completeness} \label{sec:completeness}

We show that every irreducible anyon representation $\pi$ is isomorphic to one of the anyon representations $\pi_e^X$ constructed in Section \ref{sec:anyon representations}.

\subsection{Anyon representations contain locally excited vector states}

Let $\pi : \caA \rightarrow \caB(\caH)$ be an irreducible anyon representation.

For any $S \subset \R^2$, let $S^F \subset \caF$ be the set of faces such that $B_f \in \caA_S$. For bounded $S \subset \R^2$, define
$$ P_S := \prod_{f \in S^F} \, B_f. $$

We now construct analogous projectors for infinite regions $S$ in the von Neumann algebra $\pi( \caA )'' = \caB(\caH)$.

We say a non-decreasing sequence $\{ S_n \}$ of subsets of $\R^2$ converges to $S \subset \R^2$ if $\bigcup_{n} S_n = S$. Let $\{S_n\}$ be such a non-decreasing sequence of bounded subsets of $\R^2$ which converges to a possibly unbounded $S \subset \R^2$. Then the sequence of projectors $\{ \pi( P_{S_n} ) \}$ is non-increasing and converges in the strong operator topology to the orthogonal projector $p_S$ onto the intersection of the ranges of the $\pi( P_{S_n} )$, see for example \cite[Thm 4.32(a)]{Weidmann}. In particular, $p_S$ is independent of the particular sequence $\{S_n\}$, and if $S$ is bounded then $p_S = \pi(P_S)$.

The following proposition is an adaptation of \cite[Proposition 5.2]{bols2025classification} to Levin-Wen models.
\begin{proposition} \label{prop:anyon rep contains finitely excited vector state}
	Any anyon representation $\pi : \caA \rightarrow \caH$ has a vector state $\psi$ for which there is $R \geq 0$ such that
	$$  \psi(B_f) = 1  $$
	for all $f \in \caF$ such that $B_f \in \caA_{B_{R}^c}$.
\end{proposition}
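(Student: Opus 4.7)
The strategy is to cover the complement of a large ball by two cones and apply the superselection criterion on each, using the mutual commutativity of the $B_f$ projectors (\Cref{lem:commutativity lemma}) to extract a common invariant vector.

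First I would choose two cones $\Lambda_1, \Lambda_2$ with opening angles close to $2\pi$ and roughly opposite axes, arranged so that $\Lambda_1 \cup \Lambda_2 \supseteq B_R^c$ for some $R>0$, and so that $\Lambda_1 \cap \Lambda_2$ contains a further auxiliary cone $\Lambda_3$. Every face $f$ with $B_f \in \caA_{B_R^c}$ then satisfies either $B_f \in \caA_{\Lambda_1}$ or $B_f \in \caA_{\Lambda_2}$. Because all $B_f$'s commute pairwise, the SOT-limit projections defined just before the proposition satisfy
\begin{equation*}
    p_{\Lambda_1} p_{\Lambda_2} \;=\; p_{\Lambda_1 \cup \Lambda_2} \;\le\; p_{B_R^c},
\end{equation*}
so it suffices to exhibit a unit vector in $\mathrm{Ran}(p_{\Lambda_1} p_{\Lambda_2})$: the corresponding vector state $\psi$ then satisfies $\psi(B_f)=1$ on $\caA_{B_R^c}$.

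Next, applying the superselection criterion to each $\Lambda_i$ yields unitary intertwiners $V_i$ between the vacuum GNS space and the Hilbert space of $\pi$, with $V_i \pi^{\I}(a) = \pi(a) V_i$ for $a \in \caA_{\Lambda_i}$. Extending by continuity to weak closures, $V_i p_{\Lambda_i}^{\mathrm{vac}} V_i^{*} = p_{\Lambda_i}$. Since $\pi^{\I}(B_f)\Omega = \Omega$ for every $f$, we have $p_{\Lambda_i}^{\mathrm{vac}} \Omega = \Omega$, whence $V_i \Omega \in \mathrm{Ran}\, p_{\Lambda_i}$; in particular each $p_{\Lambda_i}$ is nonzero. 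The question is whether they admit a common nonzero range vector.

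The main obstacle, and the technical heart of the proof, is showing $p_{\Lambda_1} p_{\Lambda_2} \neq 0$, since commuting nonzero projections can still have vanishing product. The combined operator $W := V_1^{-1} V_2$ commutes with $\pi^{\I}(a)$ for every $a \in \caA_{\Lambda_3}$, and hence lies in $\pi^{\I}(\caA_{\Lambda_3})'$; by Haag duality for the vacuum representation \cite{ogata2025haag}, this commutant equals $\pi^{\I}(\caA_{\Lambda_3^c})''$. Using this one may adjust $V_2$ by a unitary in $\pi(\caA_{\Lambda_2})' = \pi(\caA_{\Lambda_2^c})''$, the latter duality being inherited from the vacuum via the equivalence $\pi|_{\Lambda_2} \simeq \pi^{\I}|_{\Lambda_2}$, to obtain a new intertwiner $V_2'$ on $\Lambda_2$ with $V_2' \Omega \in \mathrm{Ran}\, p_{\Lambda_1}$; because $V_2'$ still intertwines on $\Lambda_2$, the same vector remains in $\mathrm{Ran}\, p_{\Lambda_2}$. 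Setting $\Psi := V_2' \Omega$ produces the required unit vector in $\mathrm{Ran}(p_{\Lambda_1} p_{\Lambda_2})$ and hence in $\mathrm{Ran}\, p_{B_R^c}$. This mirrors \cite[Prop.~5.2]{bols2025classification} in the Kitaev setting and is where the Haag duality input from \cite{ogata2025haag} is essentially used.
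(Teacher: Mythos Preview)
Your approach differs substantially from the paper's, and has a genuine gap at the crucial step.

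The paper's proof does \emph{not} use Haag duality. After setting up the two cones $\Lambda_1, \Lambda_2$ and intertwiners $U_i$ with $\Omega_i = U_i \Omega$, it observes that the vector states $\omega_i(x) = \langle \Omega_i, \pi(x)\, \Omega_i \rangle$ are both vector states of the \emph{same} representation $\pi$. By Corollary~2.6.11 of Bratteli--Robinson this forces asymptotic agreement: there is $N$ with $|\omega_1(O) - \omega_2(O)| \leq \tfrac{1}{2}\|O\|$ for all $O \in \caA^{\loc}_{B_N^c}$. Since $\omega_2(P_{\Lambda_2^{N,n}}) = 1$ one gets $\omega_1(P_{\Lambda_2^{N,n}}) \geq 1/2$, and combined with $\langle \Omega_1, p_{\Lambda_1^{>N}} \Omega_1 \rangle = 1$ this yields $\langle \Omega_1, p_{B_N^c} \Omega_1 \rangle \geq 1/2$. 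The desired vector is then the normalization of $p_{B_N^c}\, \Omega_1$. No duality input is needed; the overlap comes entirely from the fact that the two states live in a common representation.

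Your argument breaks down at the adjustment step. You obtain $W = V_1^{-1} V_2 \in \pi^{\I}(\caA_{\Lambda_3^c})''$ via Haag duality and then assert one can modify $V_2$ by a unitary in $\pi(\caA_{\Lambda_2})'$ to land $V_2'\Omega$ in $\mathrm{Ran}\, p_{\Lambda_1}$. Unwinding what this requires: writing $\tilde U = V_2^{-1} U V_2$, one needs a unitary $\tilde U \in \pi^{\I}(\caA_{\Lambda_2^c})''$ with $\tilde U \Omega = W^{-1} \Omega$. But you only know $W^{-1} \in \pi^{\I}(\caA_{\Lambda_3^c})''$, and $\Lambda_3^c$ \emph{strictly contains} $\Lambda_2^c$ (indeed $\Lambda_3^c$ is most of the plane). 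There is no mechanism here to localize $W^{-1}$, or even its action on $\Omega$, into the smaller algebra $\pi^{\I}(\caA_{\Lambda_2^c})''$. This is a genuine gap: commuting nonzero projections can have zero product, and the Haag-duality information you have extracted does not rule this out. Contrary to what you say, the adaptation of \cite[Prop.~5.2]{bols2025classification} carried out in the paper does not invoke \cite{ogata2025haag} at this point; the quantitative overlap is supplied purely by the Bratteli--Robinson corollary.
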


\begin{proof}
	Let $\Lambda_1, \Lambda_2$ be two cones such that $\Lambda_1 \cup \Lambda_2 = \R^2$, and such that any $B_f$ belongs to $\caA_{\Lambda_1}$ or $\caA_{\Lambda_2}$ (or both).

	Since $\pi$ satisfies the superselection criterion, there are unitaries $U_i : \caH \rightarrow \caH$ for $i = 1, 2$ such that
	$$ \pi|_{\Lambda_i} = (\Ad[U_i] \circ \pi^{\I})|_{\Lambda_i}. $$
	It follows that
	$$ \omega^{\I}(x) = \langle \Omega^{\I}, \pi^{\I}(x) \, \Omega^{\I} \rangle = \langle \Omega^{\I}, U_i^* \pi(x) U_i \, \Omega^{\I} \rangle    $$
	for all $x \in \caA_{\Lambda_i}$. Define states $\omega_i$ by $\omega_i(x) := \langle \Omega_i, \pi(x) \Omega_i \rangle$ for $x \in \caA$, where $\ket{\Omega_i} = U_i\ket{\Omega^{\I}} \in \caH$. The states $\omega_i$ are vector states of $\pi$ and satisfy $\omega_i(x) = \omega^{\I}(x)$ for all $x \in \caA_{\Lambda_i}$.

	Let $\Lambda_i^{> n} := \Lambda_i \setminus B_{n}$ and $\Lambda_i^{n, n+m} := \Lambda_i^{>n} \setminus \Lambda_i^{>n+m}$ for all $m, n \in \N$. Then the sequence $m \mapsto \Lambda_i^{n, n+m}$ is a non-decreasing sequence of bounded sets converging to $\Lambda_i^{> n}$. We have
	$$1 = \omega^{\I}\big( P_{\Lambda_i^{n, n+m}} \big) = \omega_i \big( P_{\Lambda_i^{n, n+m}} \big) = \langle \Omega_i, \pi \big( P_{\Lambda_i^{n, n+m}} \big)  \Omega_i \rangle,$$
	where we used that all these projectors are supported in $\Lambda_i$. It follows that
	$$\langle \Omega_i, p_{\Lambda_i^{> n}} \, \Omega_i \rangle = 1. $$
    From Corollary 2.6.11 of \cite{bratteli2012operator} we obtain an $N \in \mathbb{N}$ such that
	$$|\omega_1(O) - \omega_2 (O)| \leq \frac{1}{2} \norm{O}$$
	for all $O \in \caA^{\loc} \cap \caA_{B_N^c}$. This implies that
    $ \omega_1(P_{\Lambda_2^{N, n}}) > 1/2$
    for all $n > N$, so by continuity $\langle \Omega_1, p_{\Lambda_2^{>N}} \Omega_1 \rangle \ge 1/2$. 
    Since $B_{N}^c \subset \Lambda_1^{>N} \cup \Lambda_2^{>N}$, we have $ p_{B_{N}^c } \ge p_{\Lambda_2^{>N}} p_{\Lambda_1^{>N}}$ so
    $$\langle \Omega_1, p_{B_{N}^c } \Omega_1 \rangle \ge \langle \Omega_1,p_{\Lambda_2^{>N}} p_{\Lambda_1^{>N}} \Omega_1 \rangle \ge \frac{1}{2}. $$
    It follows that $p_{B_{N}^c} \ket{\Omega_1} \neq 0$, so we can define a normalized vector
	$$\ket{\Psi} := \frac{p_{B_{N}^c} \ket{\Omega_1}}{||p_{B_{N}^c} \ket{\Omega_1}||} \in \caH.$$ 
    The corresponding vector state of $\pi$ is defined by $\psi(x):=\langle \Psi, \pi(x) \, \Psi \rangle$ for all $x \in \caA$.
    To finish the proof, we verify that $\psi(B_f) = 1$ whenever $B_f \in \caA_{B_{N}^c}$. We have
	$$ \psi(B_f) = \frac{\langle \Omega_1, p_{B_{N}^c} \pi(B_f) p_{B_{N}^c} \Omega_1 \rangle }{ \norm{  p_{B_{N}^c} \ket{\Omega_1} }^2}
	= \frac{\langle \Omega_1, p_{B_{N}^c} \, \Omega_1 \rangle }{  \norm{p_{B_{N}^c} \ket{\Omega_1} }^2} = 1.
	$$
	This concludes the proof.
\end{proof}

\subsection{Proof of completeness} \label{subsec:proof of completeness}

\begin{proposition} \label{prop:completeness of anyon sectors}
    Let $\pi : \caA \rightarrow \caB(\caH)$ be an irreducible anyon representation. Then there is a unique $X \in \Irr Z(\caC)$ so that $\pi \simeq_{u.e.} \pi_e^X$ for any $e \in \caE$.
\end{proposition}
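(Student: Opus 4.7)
The plan is to extract from $\pi$ a pure vector state $\psi'$ that agrees with $\omega_e^X$ (for some $X$) on the complement of a large finite ball, and then to use the finite-dimensional purification argument of \Cref{lem:unitary equivalence of omega and psi} to produce a local unitary $V$ with $\psi' = \omega_e^X \circ \Ad[V]$. Because GNS representations of pure states related by a local unitary are unitarily equivalent, this will give $\pi \simeq_{u.e.} \pi_e^X$, while uniqueness of $X$ will follow from the disjointness part of \Cref{prop:disjointness}.

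First I would apply \Cref{prop:anyon rep contains finitely excited vector state} to obtain a vector state $\psi$ of $\pi$, represented by a unit vector $\ket{\Psi} \in \caH$, and $R > 0$ such that $\psi(B_f) = 1$ for every face $f$ with $B_f \in \caA_{B_R^c}$. Fix $R'$ sufficiently larger than $R$, let $C$ be the infinite region of $C^{\Z^2}$ outside $B_{R'}$, and let $\caS$ be its unique boundary component. With a large enough buffer $R' - R$ one arranges both (i) that $\frt_\caS(a)$ commutes with $B_f$ for every $f\in F_C$ and $a \in \Tube_\caS$, via \Cref{lem:commutativity lemma} applied to $C^\caS \cup C^f$, and (ii) that each generating $A_e$ of the string-net projector $\frt_\caS(\mathrm{id}_{\Tube_\caS})$ on $C^\caS$ is dominated by some $B_f \in \caA_{B_R^c}$, so that $\psi(\frt_\caS(\mathrm{id}_{\Tube_\caS})) = 1$. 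Decomposing $\mathrm{id}_{\Tube_\caS} = \sum_X P^X$ centrally and each $P^X = \sum_i p_i^X$ into minimal projectors, I can select $X \in \Irr Z(\caC)$ and a minimal $p^X \leq P^X$ with $\pi(\frt_\caS(p^X)) \ket{\Psi} \neq 0$. Setting $\ket{\Psi'}$ to be this vector normalized and $\psi'$ its vector state (pure because $\pi$ is irreducible), property (i) combined with $\pi(B_f)\ket{\Psi} = \ket{\Psi}$ gives $\psi'(B_f) = 1$ for all $f \in F_C$, while $\psi'(\frt_\caS(p^X)) = 1$ by construction, so $\psi'|_{\caA_C} \in \caS_C^{p^X}$.

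Now pick any edge $e$ inside $B_{R'}$ and any $R'' > R'$ large enough that the region $C'$ outside $B_{R''}$ satisfies both $\Sigma_{C^{(e)}}\setminus \Sigma_{C'}$ and $\Sigma_C\setminus \Sigma_{C'}$ being annuli. Applying \Cref{lem:restrictions of infinite volume states} once to $\omega_e^X \in \caS_{C^{(e)}}^{p^X}$ and once to $\psi'|_{\caA_C} \in \caS_C^{p^X}$ places both $\omega_e^X|_{\caA_{C'}}$ and $\psi'|_{\caA_{C'}}$ in $\caS_{C'}^{\star^X}$. Since this set is a singleton by \Cref{prop:unique anyon states}, $\psi'|_{\caA_{C'}} = \omega_e^X|_{\caA_{C'}}$. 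Invoking finite-dimensionality of $\caH_{B_{R''}}$, the purification argument of \Cref{lem:unitary equivalence of omega and psi} produces a unitary $V \in \caA_{B_{R''}}$ with $\psi' = \omega_e^X \circ \Ad[V]$, whence $\pi \simeq_{u.e.} \pi_e^X$. For any $e' \in \caE$, \Cref{prop:disjointness} gives $\pi_e^X \simeq_{u.e.} \pi_{e'}^X$, and uniqueness of $X$ follows from the disjointness part of the same proposition. The main subtlety is the geometric bookkeeping ensuring (i) and (ii) hold simultaneously so that $\psi'|_{\caA_C}$ lies in $\caS_C^{p^X}$; once this is in place, the remaining steps parallel the arguments of \Cref{subsec:equivalence of reps}.
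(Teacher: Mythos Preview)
Your approach is essentially the paper's: obtain a locally-excited vector state via \Cref{prop:anyon rep contains finitely excited vector state}, project it onto a type-$X$ sector, show the result agrees with $\omega_e^X$ outside a finite box using the uniqueness of $\caS_{C'}^{\star^X}$, and conclude $\pi \simeq_{u.e.} \pi_e^X$. The paper uses the central projector $P^X$ and the finite annuli $\ann_{R,R'}$ where you use a minimal projector and the infinite region $C$, but these are cosmetic differences.

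There is, however, a gap in your last step. The purification argument of \Cref{lem:unitary equivalence of omega and psi} takes place inside a \emph{single} Hilbert space: both $\ket{\Omega_{n_3}}$ and $\ket{\Psi_{n_3}}$ live in the vacuum GNS space $\caH \simeq \caH_{B_R} \otimes \caH_{B_R^c}$, and the unitary $W$ is found on the common finite-dimensional factor. In your situation $\psi'$ is a vector state of $\pi$ while $\omega_e^X$ is a vector state of $\pi_e^X$; these are a priori different irreducible representations, so although each Hilbert space factors as $\caH_{B_{R''}} \otimes (\cdot)$, the second factors carry potentially inequivalent irreducible representations of $\caA_{B_{R''}^c}$, and there is no common space in which to compare purifications. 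The purification argument only yields $V \in \caA_{B_{R''}}$ once you already know these two representations of $\caA_{B_{R''}^c}$ agree, which is part of what you are trying to prove. The paper avoids this by invoking Corollary~2.6.11 of \cite{bratteli2012operator} directly: two irreducible representations with pure vector states that agree on $\caA_{B_{R''}^c}$ are unitarily equivalent. You should either cite that result, or first argue from the common GNS representation of $\psi'|_{\caA_{B_{R''}^c}} = \omega_e^X|_{\caA_{B_{R''}^c}}$ that the second tensor factors are equivalent, and only then run the purification.
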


\begin{proof}
    By Proposition \ref{prop:anyon rep contains finitely excited vector state}, $\pi$ has a vector state $\psi$ for which there is $R \geq 1$ such that $\psi(B_{f}) = 1$ for all faces $f$ whose vertices belong to $B_{R}^c$. Consider the annular regions $\ann_{R, R'}$ for $R' \geq R + 1$ and denote by $\frt_{R}$ the $\Tube$ actions on the inner boundary of annuli $\Sigma_{\ann_{R, R'}}$.
    Note that $\frt_R(1) = \sum_{X} \frt_R(P^X)$ is the projector which enforces string-net constraints along the boundary of $B_{R}^c$. 
    This constraint is also enforced by $\prod_{f\in I_R}B_f$ where $I_R \subset \caF$ is the finite set of faces on the boundary of $B_{R}^c$, so
    $$
    \sum_{X}  \psi(\frt_R(P^X)) =\psi\left( \prod_{f\in I_R} B_f \right) = 1.
    $$
    Since each of the terms of this sum is positive, there is some $X\in \Irr(Z(\caC))$ such that
    $$ \psi_X(\bullet) = \psi(\frt_R(P^{\bar X}))^{-1}\psi(\frt_R(P^{\bar X})  \bullet \frt_R(P^{\bar X}) ),$$
    is a well-defined state.
    It is clear that $\psi_X$ is also a vector state of $\pi$.
    It satisfies
    $$ \psi_X(B_f) = \psi_X \big( \frt_R(P^{\bar X}) \big) = 1 $$
    for all faces $f$ whose vertices belong to $B_R^c$. It follows that the restriction of $\psi_X$ to any annular region $\ann_{R, R'}$ corresponds to a density matrix in $\caD_{\ann_{R, R'}}( P^{\bar X}, P^{X} )$. It then follows from Lemma \ref{lem:restriction yields maximally mixed boundary conditions} that the restriction of $\psi_X$ to any annular region $\ann_{R+1, R'}$ for $R' \geq R+2$ corresponds to a density matrix in $\caD_{\ann_{R+1, R'}}( \star^{\bar X}, \star^{X})$. Since $R'$ can be chosen arbitrarily large, it follows that $\psi_X|_{B_{R+1}^c}$ belongs to $\caS^{\star^{\bar X}}_{C_{>R+1}}$ (recall Definition \ref{def:state spaces}), where $C_{>R+1}$ is the infinite region consisting of all faces, edges, and vertices belonging to the closure of $B_{R+1}^c$.

    Let $e \in \caE$ be given. By choosing $R>0$ above large enough we may without loss of generality assume that $\bbD^{(e)}_R$ is an annulus. Then the definition of $\omega_e^X = \omega_e^{p^{\bar X}}$ as the unique state in $\caS_{C^{(e)}}^{p^{\bar X}}$ together with Lemma \Cref{lem:restrictions of infinite volume states} implies that $\omega_e^X|_{\caA_{B_{R+1}^c}} \in \caS_{C_{>R+1}}^{\star^{\bar X}}$.
    By Proposition \ref{prop:unique anyon states}, the state space $\caS_{C_{>R+1}}^{\star^{\bar X}}$ is a singleton, so $\psi_X$ and $\omega_e^X$ have identical restrictions to $B_{R+1}^c$. Since $\pi$ and $\pi_e^X$ are irreducible it now follows from Corollary 2.6.11 of \cite{bratteli2012operator} that $\pi \simeq_{u.e.} \pi_e^X$.
    Uniqueness of $X$ follows from disjointness, \Cref{prop:disjointness}.
\end{proof}

\subsection{Proof of the classification Theorem} \label{subsec:proof of main theorem}

We are now ready to prove the main result of this paper.

\begin{proofof}[Theorem \ref{thm:classification of anyon sectors}]
	Fix an edge $e \in \caE$. By Proposition \ref{prop:superselection criterion for piX} the representations $\{ \pi_e^X \}_{X \in \Irr Z(\caC)}$ satisfy the superselection criterion. Since these representations are the GNS representations of pure states $\omega_e^X$, they are irreducible anyon representations. Moreover, by Proposition \ref{prop:disjointness} all these representations are disjoint from each other. We therefore have for each $X \in \Irr Z(\caC)$ a distinct irreducible anyon sector $[\pi_e^X]$. Finally, by \Cref{prop:completeness of anyon sectors} any irreducible anyon representation belongs to one of these sectors.
\end{proofof}

\appendix

\section{Proof of Proposition \ref{prop:characterisation of skein modules}} \label{app:proof of characterization of skein modules}

\subsection{Orthonormal basis for skein modules on disks and annuli} \label{subsec:orthonormal bases for some skein modules}

In the course of the proof of Proposition \ref{prop:characterisation of skein modules}, as well as in Appendix \ref{app:proof of skein subspace isomorphism}, we will often make use of the gluing law Eq. \eqref{eq:gluing formula}. In preparation, let us here give some explicit bases of some skein modules on disks and annuli.
\begin{enumerate}
    \item \underline{The disk $D_1$ with one marked boundary point} : Suppose the boundary point is labelled by $a \in \Irr \caC$. If $a \neq \I$ then any string diagram for this skein module evaluates to zero, \ie $A(D_1 ; a) = \{ 0 \}$ if $a \neq \I$. If $a = \I$ then any string diagram evaluates to a complex number and we have $A(D_1 ; \I) \simeq \C$. A unit vector is represented by the empty string diagram on the disk. \label{enum:simple_skein_bases_1}

    \item \label{skein basis 2} \underline{The disk $D_2$ with one incoming and one outgoing boundary point} : Suppose the boundary points are labelled by $a, b \in \Irr \caC$. If $a \neq b$ then any string diagram for this skein module evaluates to zero and we have $A(D_2, a \cup b) = \{0\}$. If $a = b$ then any string diagram can be interpreted as a morphism from $a$ to $a$, so $A(D_2 ; a \cup a ) \simeq \C$ and is spanned by the unit vector
    \begin{equation} \label{eq:ONB of A(D_2;aa)}
        \frac{1}{d_a^{1/2}} \, \adjincludegraphics[valign=c, width = 1.2cm]{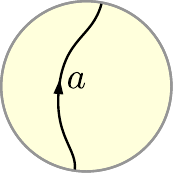}.
    \end{equation} 
    \item \underline{The annulus $\ann$ with no marked boundary points} : An orthonormal basis is given by
    \begin{equation} \label{eq:ONB of ann}
        \left\lbrace \adjincludegraphics[valign=c, width = 1.6cm]{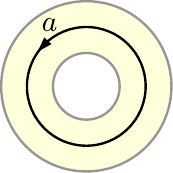} \right\rbrace_{a \in \Irr \caC}.
    \end{equation}
    Indeed, any string diagram on $\ann$ can by isotopy be put in the form of the left hand side of
    \begin{equation*}
        \adjincludegraphics[valign=c, width = 2.0cm]{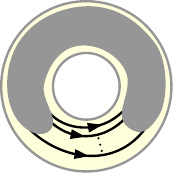} = \sum_{a \in \Irr \caC} d_a \,  \adjincludegraphics[valign=c, width = 2.0cm]{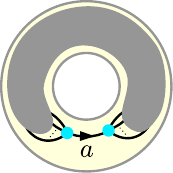} = \sum_{a \in \Irr \caC} d_a \lambda_a \,  \adjincludegraphics[valign=c, width = 2.0cm]{annzerobasis.pdf}, 
    \end{equation*}
    where the grey blob contains some arbitrary string diagram. The first equality is \eqref{eq:decomposition into simples}, and for the second equality we note that the grey blob together with the coloured vertices evaluates to some $\lambda_a \id_a \in \caC(a \rightarrow a) \simeq \C$. We conclude that Eq. \eqref{eq:ONB of ann} is indeed a basis. To see that it is orthonormal, compute
    \begin{equation*}
        \left( \adjincludegraphics[valign=c, width = 1.6cm]{annzerobasis.pdf}, \adjincludegraphics[valign=c, width = 1.6cm]{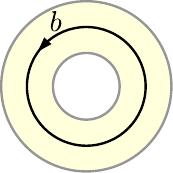} \right) = Z_{\ann \times I} \left( \adjincludegraphics[valign=c, width = 2.0cm]{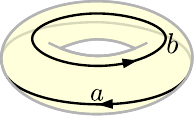} \right) = \frac{\delta_{a b}}{d_a} Z_{B^3} \left( \adjincludegraphics[valign=c, width = 2.0cm]{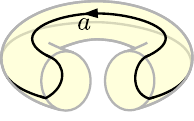} \right) = \delta_{a b},
    \end{equation*}
    where we evaluate $Z_{\ann \times I}$ using the gluing law \eqref{eq:gluing formula} by cutting the solid torus transversally to the $a$ and $b$ strings. The summation of \eqref{eq:gluing formula} runs over the unique basis vector \eqref{eq:ONB of A(D_2;aa)}.
\end{enumerate}

\subsection{Orthonormal basis of skein modules on punctured spheres}

Let $\Sigma$ be an extended surface homeomorphic to a sphere with $m$ holes cut out, and let $\anchor$ be an anchor for $\Sigma$.
Let $\mathbf{X}:\Bd(\Sigma) \to \Irr(Z(\caC))$, writing $X_\kappa = \mathbf{X}(\caS^{\anchorino}_\kappa)$.
Let $\mathbf{\underline{a}}$ be a boundary condition on $A(\Sigma)$ specified by a function on $\Bd(\Sigma)$ defined by 
$$\underline a_\kappa = \mathbf{\underline{a}}(\caS_\kappa^{\anchorino}) : m_{\caS_\kappa^{\anchorino}}\to \Irr(\caC).$$
For given $\mathbf{X},\mathbf{\underline{a}}$, let $\mathbf{i}:\Bd(\Sigma)\to \N$ be such that $i_\kappa = \mathbf{\underline{i}}(\caS_\kappa^{\anchorino})$ belongs to the index set of the chosen basis of $\caC(X_\kappa \to \otimes\underline a_\kappa)$.
The set of such functions $\mathbf i$ label the corresponding product basis of 
$$\caC(X_1 \to \otimes \underline a_1) \otimes \cdots \otimes \caC(X_m \to \otimes \underline{ a}_m),$$
and we refer to $\mathbf{i}$ as indexing the $(\mathbf{X,\underline{a}})$-basis.
For every $m$-tuple consisting of $X_1, \ldots, X_m \in \Irr Z(\caC)$, fix an orthonormal basis $\{ \al_l \}$ of $Z(\caC)( \I \rightarrow X_1 \otimes \cdots \otimes X_m)$ with respect to the trace inner product. 

\begin{lemma} \label{lem:skein_basis}
    The elements,
    \begin{equation} \label{eq:skein basis}
    e^{{\anchorino};\mathbf{X,\underline{a}}}_{\mathbf{i},l} := 
    \Phi_{\Sigma}^{\anchor}( \al_l \otimes w^{X_1 \, \underline{a}_1}_{i_1} \otimes \cdots \otimes w^{X_m \, \underline{a}_m}_{i_{m}} )
    \end{equation}
    ranging over all $\mathbf{X} : \Bd(\Sigma)\to \Irr Z(\caC)$, all boundary conditions $\underline{\mathbf{a}}$ on $\Sigma$, all $\mathbf{i}$ indexing the $(\mathbf{X,\underline{a}})$-basis, and all $l$ indexing the basis of $Z(\caC)(\I \to X_1 \otimes \cdots \otimes X_m)$,
    form an orthonormal basis of $A(\Sigma)$ w.r.t. the TQFT inner product. In particular, the map $\Phi_{\Sigma}^{\anchor}$ of Proposition \ref{prop:characterisation of skein modules} is an isomorphism of Hilbert spaces.
\end{lemma}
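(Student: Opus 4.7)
The plan is to first verify that the proposed elements span $A(\Sigma)$, then compute their TQFT pairwise inner products directly, and finally conclude that $\Phi_\Sigma^\anchor$ is an isometric isomorphism by comparing with the evident orthonormal product basis of $\caC_\anchor(\Sigma)$.

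For spanning, I would cut $\Sigma$ along the graph of $\anchor$ to get a topological disk $D$ whose boundary alternates between arcs of the boundary components $\caS_\kappa^\anchor$ and paired traversals of the anchor edges. Any class $[x]\in A(\Sigma)$ is represented by a string diagram which, up to isotopy, meets each anchor edge transversally in a finite set of points. Collecting the labels across each edge of $\anchor$ and applying the coloured-vertex relation \eqref{eq:decomposition into simples} once per edge, then pushing the resulting simple strand through the cloaking property of a dotted loop (Convention \ref{conv:dotted line}), one writes $[x]$ as a sum of diagrams of the form appearing in \eqref{eq:Phi_anchor_defined}: a $Z(\caC)$-strand of simple type $X_\kappa$ running along each anchor edge, resolved at the boundary component $\caS_\kappa^\anchor$ via some element of $\caC(X_\kappa\to\chi^{\otimes \caS_\kappa^\anchor})$ and joined at the anchor point by a morphism $\I\to X_1\otimes\cdots\otimes X_m$. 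Expanding the boundary morphisms in the chosen basis $\{w^{X_\kappa\,\underline a_\kappa}_i\}$ and the anchor morphism in $\{\alpha_l\}$ expresses $[x]$ as a linear combination of the $e^{\anchor;\mathbf{X},\underline{\mathbf a}}_{\mathbf i,l}$.

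For orthonormality, I would evaluate $(e^{\anchor;\mathbf{X},\underline{\mathbf a}}_{\mathbf i,l}, e^{\anchor;\mathbf X',\underline{\mathbf a}'}_{\mathbf i',l'}) = Z_{\Sigma\times I}(\hat e \cup e')$ via the gluing law \eqref{eq:gluing formula}. I would cut $\Sigma\times I$ by $m$ properly embedded disks, each separating a small tubular neighbourhood of one $\caS_\kappa^\anchor\times I$ (together with the portion of the anchor edge ending there) from the rest, leaving a central ball containing the two anchor-point morphisms connected by $m$ parallel $Z(\caC)$-strands. The bases of the cutting disks are skein modules of the types enumerated in Appendix \ref{subsec:orthonormal bases for some skein modules}, so the sum in \eqref{eq:gluing formula} is explicit. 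On each tubular piece, the partition function reduces via the graphical calculus and the unit-normalization of $w^{X_\kappa\,\underline a_\kappa}_{i_\kappa}$ to the evaluation on the left-hand side of \Cref{lem:useful identity}, producing $\delta_{X_\kappa X_\kappa'}\delta_{\underline a_\kappa \underline a_\kappa'}\delta_{i_\kappa i_\kappa'}/d_{X_\kappa}$. The central ball evaluates to $\tr(\alpha_l^\dag\circ\alpha_{l'})=\delta_{l l'}$ by orthonormality of the $\alpha$-basis in the trace inner product. The combinatorial factors $d_a$ arising from the basis in \eqref{eq:ONB of ann} on each cutting disk combine with the $\caD^{-2}$ from each dotted line (Convention \ref{conv:dotted line}) and with the prefactor $\bigl(\prod_\kappa d_{X_\kappa}\bigr)\caD^{2(m-1)}$ coming from squaring the normalization in \eqref{eq:Phi_anchor_defined} to give exactly $1$.

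The main obstacle will be the normalization bookkeeping in the gluing computation: keeping track of the $d_{X_\kappa}^{-1}$ factors from \Cref{lem:useful identity}, the $\caD^{-2}$ per dotted line, the $d_a$-weights on the gluing bases, and the normalization prefactor in \eqref{eq:Phi_anchor_defined} simultaneously. Once the inner product computation above is verified, linear independence of the $e^{\anchor;\mathbf{X},\underline{\mathbf a}}_{\mathbf i,l}$ is immediate from orthonormality, so they form an orthonormal basis. Since $\Phi_\Sigma^\anchor$ maps the evident orthonormal product basis of $\caC_\anchor(\Sigma)$ (whose factors carry the trace inner product) bijectively onto this basis of $A(\Sigma)$, it is a unitary isomorphism, proving in particular the unitarity claim of \Cref{prop:characterisation of skein modules}. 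The intertwining property with the $A(\partial\Sigma)$-action follows separately by inspecting how the Tube-action \eqref{eq:Tube_n_module} on each factor glues onto the corresponding boundary component of the defining diagram \eqref{eq:Phi_anchor_defined}.
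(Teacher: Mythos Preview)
Your orthonormality computation is essentially the paper's: both evaluate $Z_{\Sigma\times I}(\hat e\cup e')$ by using the pair of dotted loops at each boundary component together with \Cref{lem:useful identity} to produce the Kronecker deltas $\delta_{X_\kappa X'_\kappa}\delta_{\underline a_\kappa\underline a'_\kappa}\delta_{i_\kappa i'_\kappa}/d_{X_\kappa}$, and then cut the remaining handlebody into a ball via the gluing formula \eqref{eq:gluing formula} to extract $\delta_{ll'}$. The paper simplifies first (cloaking, then \Cref{lem:useful identity}) and cuts afterwards along $m-1$ compressing disks; you propose to cut first into $m$ tubular pieces and a central ball. Note that $\Sigma\times I$ is a genus $m-1$ handlebody, so only $m-1$ independent compressing disks exist; your ``$m$ cuts'' is off by one, but this is cosmetic.

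The spanning argument, however, has a real gap. After cutting along the anchor and applying \eqref{eq:decomposition into simples}, what you obtain along each anchor edge is a simple object $a_\kappa\in\Irr\caC$, not a simple object $X_\kappa\in\Irr Z(\caC)$. The cloaking relation in \eqref{eq:normalization and absorption properties} lets $Z(\caC)$-strands pass through dotted loops; it does not promote a $\caC$-strand to a $Z(\caC)$-strand, and at this stage there are no dotted loops in your diagram to invoke anyway (inserting a free dotted loop around a boundary component is not the identity---it is the projection $P^\I$). So the passage ``pushing the resulting simple strand through the cloaking property of a dotted loop \ldots a $Z(\caC)$-strand of simple type $X_\kappa$'' is the step that fails. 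The paper bypasses this by acting with the resolution of the identity $\sum_{X,\underline a,i}E^X_{\underline a,i;\underline a,i}=\id$ of each $\Tube_{\caS_\kappa}$ (Proposition~\ref{prop:matrix units for Tube_n}) on every boundary component; each matrix unit $E^X$ already carries the dotted loop and the morphisms $w^{X\underline a}_i$, so the $Z(\caC)$-data and the dotted loops are inserted simultaneously and for free. After that, cloaking is used only to sweep the leftover interior blob into a contractible region (Eq.~\eqref{eq:diagram simplification}), where it becomes an honest morphism in $Z(\caC)(\I\to X_1\otimes\cdots\otimes X_m)$. Replacing your anchor-cutting step with this Tube-algebra insertion fixes the argument.
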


\begin{proof}
    Throughout this proof, we drop the anchor from notation. For any $\mathbf{X, \underline{a},i},l$, let
    \begin{equation}\label{eq:ONB tilde of Sigma_m}
        \tilde e^{\mathbf{X,\underline{a}}}_{\mathbf{i},l}
        = \left( \prod_{\kappa=1}^m d_{X_{\kappa}}^{-1/2} \right) \, \caD^{1-m} \, \times \, e^{\mathbf{X,\underline{a}}}_{\mathbf{i},l} \, 
        = \adjincludegraphics[valign=c, width = 5.0cm]{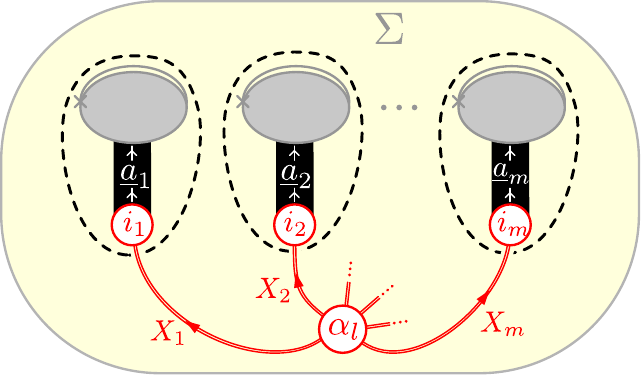} \in A(\Sigma)
    \end{equation}
    Using the resolution of identity from Eq. \eqref{eq:Tube_n matrix unit properties} we find that an arbitrary element of $A(\Sigma)$ can always be written as a linear combination of elements of the form on the left hand side of Eq. \eqref{eq:diagram simplification}, where we present $\Sigma$ as a disk with $m-1$ holes cut out of the interior, and the grey blob contains an arbitrary string diagram.

    \begin{equation} \label{eq:diagram simplification}
        \adjincludegraphics[valign=c, width = 3.0cm]{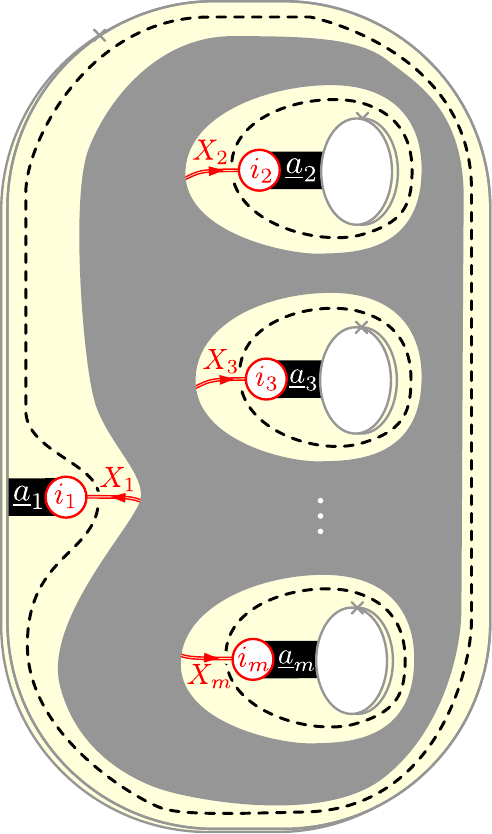} = \adjincludegraphics[valign=c, width = 3.0cm]{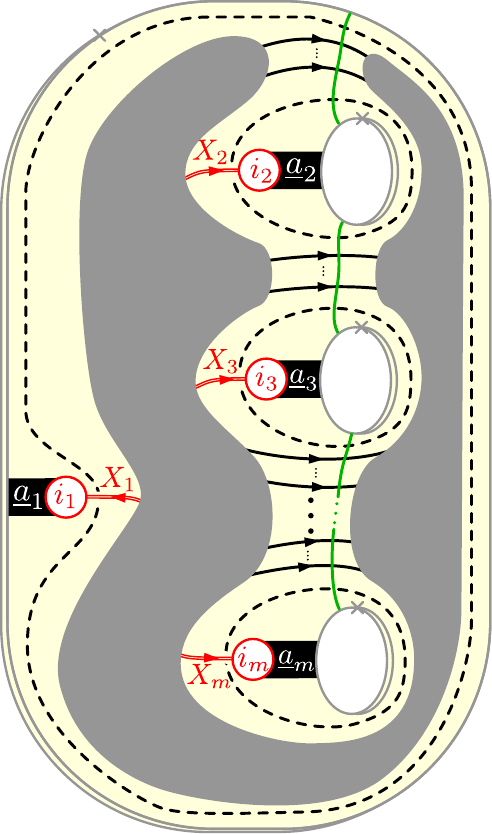} = \adjincludegraphics[valign=c, width = 3.0cm]{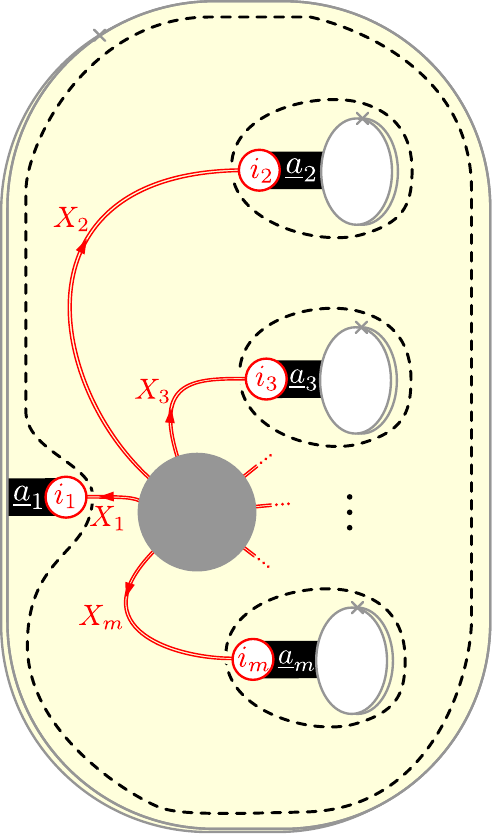} = \adjincludegraphics[valign=c, width = 3.0cm]{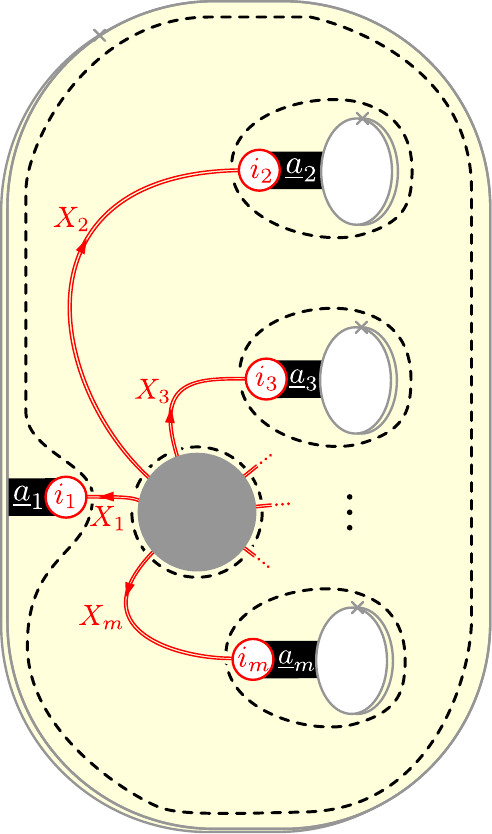}
    \end{equation}
    The string diagram contained in the grey blob may be moved past the punctures by isotopy and the cloaking property. In more detail, draw green lines connecting boundary components $\caS_{\kappa}$ and $\caS_{\kappa + 1}$ for $\kappa = 1, \cdots, m-1$ such that these green lines do not intersect the anchor. This condition makes the green lines unique up to isotopy of $\Sigma$, keeping the anchor fixed. By isotopy invariance one can always arrange the string diagram such that the green lines do not contain vertices of the string diagram and such that edges of the string diagram intersect the green lines transversally, as shown in the second panel of Eq. \eqref{eq:diagram simplification}. In the next step we use the cloaking property of the dotted lines and isotopy to deform the grey blob to a contractible region as in the third panel.
    Finally, we can insert a new small dotted loop and use the cloaking property of the pre-existing dotted lines to obtain a string diagram as in the final panel of Eq. \eqref{eq:diagram simplification}. In this final diagram, the grey blob surrounded by the dotted loop represents a morphism in $Z(\caC)(\I \rightarrow X_1 \otimes \cdots \otimes X_m)$ which can be written as a linear combination of the basis elements $\{ \al_l \}$. In all, we conclude that vectors $\tilde e^{\mathbf{X,\underline{a}}}_{\mathbf{i},l}$ span the skein module $A(\Sigma)$.

    Let us now show that these vectors form an orthogonal family w.r.t. the TQFT inner product. This follows from the following computation, where we again present $\Sigma$ as a disk with $m-1$ holes cut out:
    \begin{align*}
        \bigg( \tilde e^{ \mathbf X', \mathbf{\underline{a}}' }_{\mathbf{j}, l'}, & \tilde e^{\mathbf{X,\underline{a}}}_{\mathbf{i},l}\bigg)_{A(\Sigma)} = \left( \prod_{\kappa = 1}^m \delta_{\underline{a}_{\kappa}, \underline{a}'_{\kappa}} \right) \, Z_{\Sigma \times I}\left(  \adjincludegraphics[valign=c, width = 6.0cm]{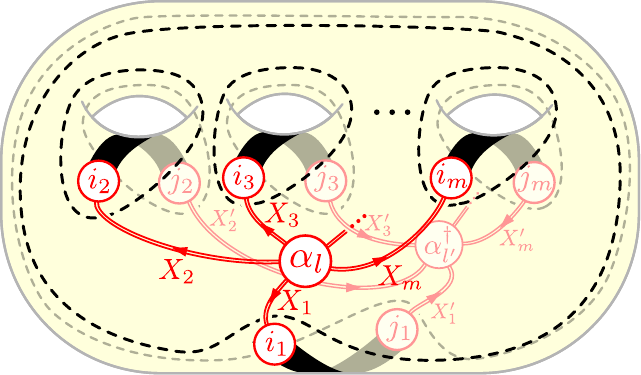}  \right) . \\
        \intertext{For each connected boundary component of $\Sigma$ we now have a pair of dotted lines going around the corresponding hole of $\Sigma 
        \times I$. Using the cloaking property of the dotted lines we pull one of each pair over the other to encircle the $w^{X_{\kappa} \underline{a}_{\kappa}}_{i_{\kappa}}$ and $(w^{X'_{\kappa} \underline{a}'_{\kappa}}_{j_{\kappa}})^{\dag}$ morphisms, and use Eq. \eqref{eq:useful identity} to obtain}
        &= \left( \prod_{\kappa = 1}^m \frac{  \delta_{\underline{a}_{\kappa}, \underline{a}'_{\kappa}} \delta_{X_{\kappa}, X'_{\kappa}} \delta_{i_{\kappa} \, j_{\kappa}} }{ d_{X_{\kappa}}  } \right) \, Z_{\Sigma \times I}\left(  \adjincludegraphics[valign=c, width = 6.0cm]{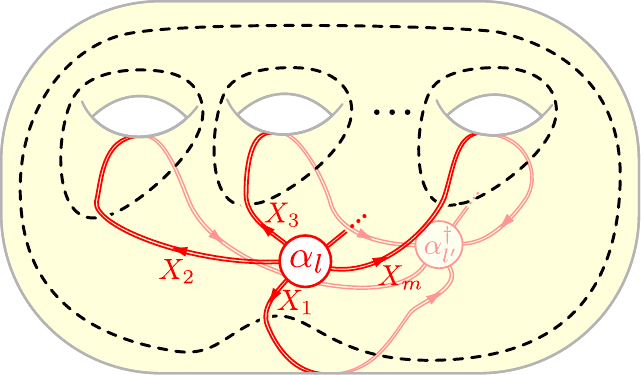}  \right). \\
        \intertext{The resulting diagram only contains dotted lines and lines labelled by objects of $Z(\caC)$. We may contract the outer dotted line to a point using the cloaking property and so get rid of it. Now evaluate $Z_{\Sigma \times I}$ using the gluing formula \eqref{eq:gluing formula} by cutting $\Sigma \times I$ along $m-1$ disks transversal to the remaining dotted loops to obtain a 3-ball and summing over bases Eq. \eqref{eq:ONB of A(D_2;aa)}. We find}
        &= \frac{1}{\caD^{2(m-1)}} \, \left( \prod_{\kappa = 1}^m \frac{  \delta_{\underline{a}_{\kappa}, \underline{a}'_{\kappa}} \delta_{X_{\kappa}, X'_{\kappa}} \delta_{i_{\kappa}, j_{\kappa}} }{ d_{X_{\kappa}}  } \right) \\
        & \quad\quad\quad\quad\quad \times Z_{B^3}\left(  \adjincludegraphics[valign=c, width = 6.0cm]{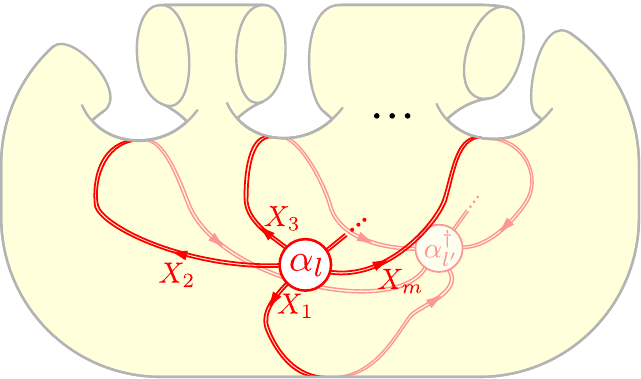}  \right) \\
        &= \frac{\delta_{l l'}}{\caD^{2(m-1)}} \, \left( \prod_{\kappa = 1}^m \frac{  \delta_{\underline{a}_{\kappa}, \underline{a}'_{\kappa}} \delta_{X_{\kappa}, X'_{\kappa}} \delta_{i_{\kappa}, j_{\kappa}}  }{ d_{X_{\kappa}}  } \right).
    \end{align*}
    Comparing to \eqref{eq:ONB tilde of Sigma_m}, we conclude that $e^{\mathbf{X,\underline{a}}}_{\mathbf{i},l}$ are the elements of an orthonormal basis of $A(\Sigma)$ with respect to the TQFT inner product.
\end{proof}

\subsection{Intertwining Tube-actions} \label{subsec:proof of skein characterisation}

\begin{proofof}[Proposition~\ref{prop:characterisation of skein modules}]
    By Lemma \ref{lem:skein_basis} we have that $\Phi_{\Sigma}^{\anchor}$ is an isomorphism of Hilbert spaces. It remains to verify that $\Phi_{\Sigma}^{\anchor}$ intertwines the $\Tube$-actions on $A(\Sigma)$ described in Section \ref{subsec:Tube algebras and their actions} with those on the morphism spaces given by given by Eq. \eqref{eq:Tube_n_module}. This can be seen by using the cloaking property of the dotted line as follows. For $\caS^{\anchorino}_{\kappa} \in \Bd(\Sigma)$, take $a \in \Tube_{\caS_{\kappa}^{\anchorino}}$ corresponding to an $f \in \caC( c \otimes \chi^{\otimes \caS_{\kappa}^{\anchorino}} \rightarrow \chi^{\otimes \caS_{\kappa}^{\anchorino}} \otimes c)$ as on the left-hand side of Eq. \eqref{eq:Tube_n_module}. Such $a$ span the $\Tube$ algebra. Take further $\al \in Z(\caC)( \I \rightarrow X_1 \otimes \cdots \otimes X_m )$ and $w_{\kappa} \in \caC(X_{\kappa} \rightarrow \chi^{\otimes \caS_{\kappa}^{\anchorino}})$, and write $K := \left( \prod_{\kappa=1}^m d_{X_{\kappa}}^{1/2} \right) \, \caD^{m-1}$. Then
    \begin{align*}
        \Phi_{\Sigma}^{\anchor} &\left(  \al \otimes w_1 \otimes \cdots \otimes \big(  a \triangleright w_{\kappa} \big) \otimes \cdots \otimes w_{m} \right) 
        =
        K \times \adjincludegraphics[valign=c, width = 4.0cm]{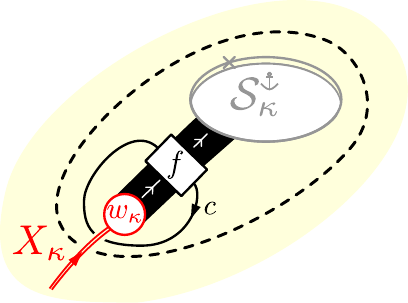} \\
        &= K \times \adjincludegraphics[valign=c, width = 4.0cm]{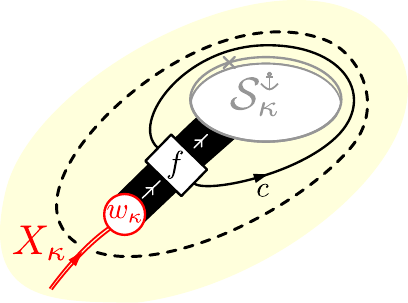}
        = 
        a \triangleright_{\kappa} \, \Phi_{\Sigma}^{\anchor} \left( \al \otimes w_1 \otimes \cdots \otimes  w_{\kappa} \otimes \cdots \otimes w_{m} \right),
    \end{align*}
    where we only depicted the string diagrams near the boundary component $\caS_{\kappa}^{\anchorino}$.
\end{proofof}

\section{Proof of Proposition \ref{prop:isomorphism of skein subspace and skein module}} \label{app:proof of skein subspace isomorphism}

In order to prove the Proposition, we must compare the TQFT inner product on $A_{\Sigma_C}$ with the skein inner product on $H_C$. To do this, we first compare the TQFT inner product on $A(\Sigma_{C_1}) \simeq H_{C_1}$ with the skein inner product on the string-net subspace $H_{C_1}$:
\begin{lemma} \label{lem:TQFT vs skein on string-net space}
    Let $C$ be a finite region and let $\phi, \psi \in H_{C_1}$. Then
    $$  \big( \sigma_{C_1}(\phi), \sigma_{C_1}(\psi) \big)_{A(\Sigma_{C_1})} = \langle \phi, \psi \rangle. $$
    \ie $\sigma_{C_1} : H_{C_1} \rightarrow A(\Sigma_{C_1})$ is an isometry.
\end{lemma}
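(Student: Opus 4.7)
The plan is to verify the claimed isometry on product vectors and compute both sides using the explicit skein inner product together with the TQFT gluing formula of Section~\ref{subsec:TQFT inner product}.

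By sesquilinearity, it suffices to check the identity for product vectors $\phi_f = \bigotimes_v f_v$ and $\phi_g = \bigotimes_v g_v$ in $H_{C_1}$, corresponding to string-net labellings $l_f, l_g$ and choices of $f_v \in \caH_v(l_f|_{\vec\partial v})$, $g_v \in \caH_v(l_g|_{\vec\partial v})$. If $l_f \neq l_g$, both sides vanish: the right-hand side by orthogonality of the summands $\caH_C(l)$ of $\caH_C$, and the left-hand side either because the boundary conditions differ (making the TQFT inner product zero by definition), or, at some internal edge $e$ where $l_f(e) \neq l_g(e)$, because the gluing formula applied to the bisecting disk of the strip $S_e$ involves an empty sum over $A(D_2; l_f(e) \cup l_g(e)) = \{0\}$. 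We may therefore assume $l_f = l_g = l$.

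With matching labellings the skein inner product factorises as $\langle \phi_f, \phi_g \rangle = \prod_v \tr(f_v^\dag \circ g_v) / \prod_{e \in \vec\partial v} d_{l(e)}^{1/2}$; since internal edges contribute to two vertices while boundary edges contribute to one, this becomes
\begin{equation*}
\langle \phi_f, \phi_g \rangle = d_{\partial l}^{-1/2} \, \prod_{e \in E_C} d_{l(e)}^{-1} \, \prod_{v \in V_C} \tr(f_v^\dag \circ g_v).
\end{equation*}
For the left-hand side, the definitions give $(\sigma_{C_1}(\phi_f), \sigma_{C_1}(\phi_g))_{A(\Sigma_{C_1})} = d_{\partial l}^{-1/2} \, Z_{\Sigma_{C_1} \times I}(\hat x_f \cup x_g)$, and I would evaluate the partition function via the gluing formula \eqref{eq:gluing formula} by cutting $\Sigma_{C_1} \times I$ along disks $\Delta_e$, one per internal edge $e \in E_C$, with $\Delta_e$ the pinched product of an arc bisecting the strip $S_e$ with $I$. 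Each $\Delta_e$ meets the string diagram in exactly two marked points, both labelled $l(e)$, and the one-dimensional skein module $A(D_2; l(e) \cup l(e))$ is spanned by a single strand of squared norm $d_{l(e)}$, so each cut contributes the factor $d_{l(e)}^{-1}$.

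After these cuts, $\Sigma_{C_1} \times I$ decomposes into pieces $P_v \simeq D_v' \times I$, each a $3$-ball, whose boundary $2$-sphere carries a closed string diagram with $g_v$ on the top hemisphere, $f_v^\dag$ on the bottom hemisphere, and four connecting strands routed either through the inserted basis strand at each cut (for internal edges) or through a pinched point of $\partial\Sigma_{C_1}$ (for boundary edges). The normalisation $Z_{B^3}(x) = \ev(x)$ identifies $Z_{P_v}$ with the categorical trace $\tr(f_v^\dag \circ g_v)$, and assembling the cut factors with the vertex evaluations recovers the expression for $\langle \phi_f, \phi_g \rangle$ derived above. The main obstacle will be the topological bookkeeping of the pinched-product decomposition — correctly identifying the cutting disks $\Delta_e$ as disks with two marked points in the correct configuration, and handling boundary edges where no cut is needed but the top and bottom strands meet at a pinched point on $\partial\Sigma_{C_1}$ so as to close up the sphere diagram at $\partial P_v$.
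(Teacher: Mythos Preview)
Your proposal is correct and follows essentially the same approach as the paper's proof: reduce to product vectors, cut $\Sigma_{C_1}\times I$ along one disk per internal edge using the gluing formula with the basis \eqref{eq:ONB of A(D_2;aa)} for $A(D_2)$, and evaluate the resulting $3$-balls via $Z_{B^3}=\ev$ to recover the categorical traces $\tr(f_v^\dag\circ g_v)$. Your bookkeeping of the quantum-dimension factors matches the paper's, and your explicit treatment of mismatched labellings is a bit more careful than the paper, which only remarks on this at the end.
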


\begin{proof}
    Let $\phi_x = \bigotimes_{v \in V_C} \phi_{x(v)}$ and $\phi_y = \bigotimes_{v \in V_C} \phi_{y(v)}$ be product states in $H^b_{C_1}$ corresponding under $\pi_{C_1}$ to string diagrams $x$ and $y$ on $\Sigma_{C_1}$. We represent
    \begin{align*}
    	(x, y)_{A(\Sigma_{C_1})} &= Z_{\Sigma_{C_1} \times I} \left( \adjincludegraphics[valign=c, height = 1.8cm]{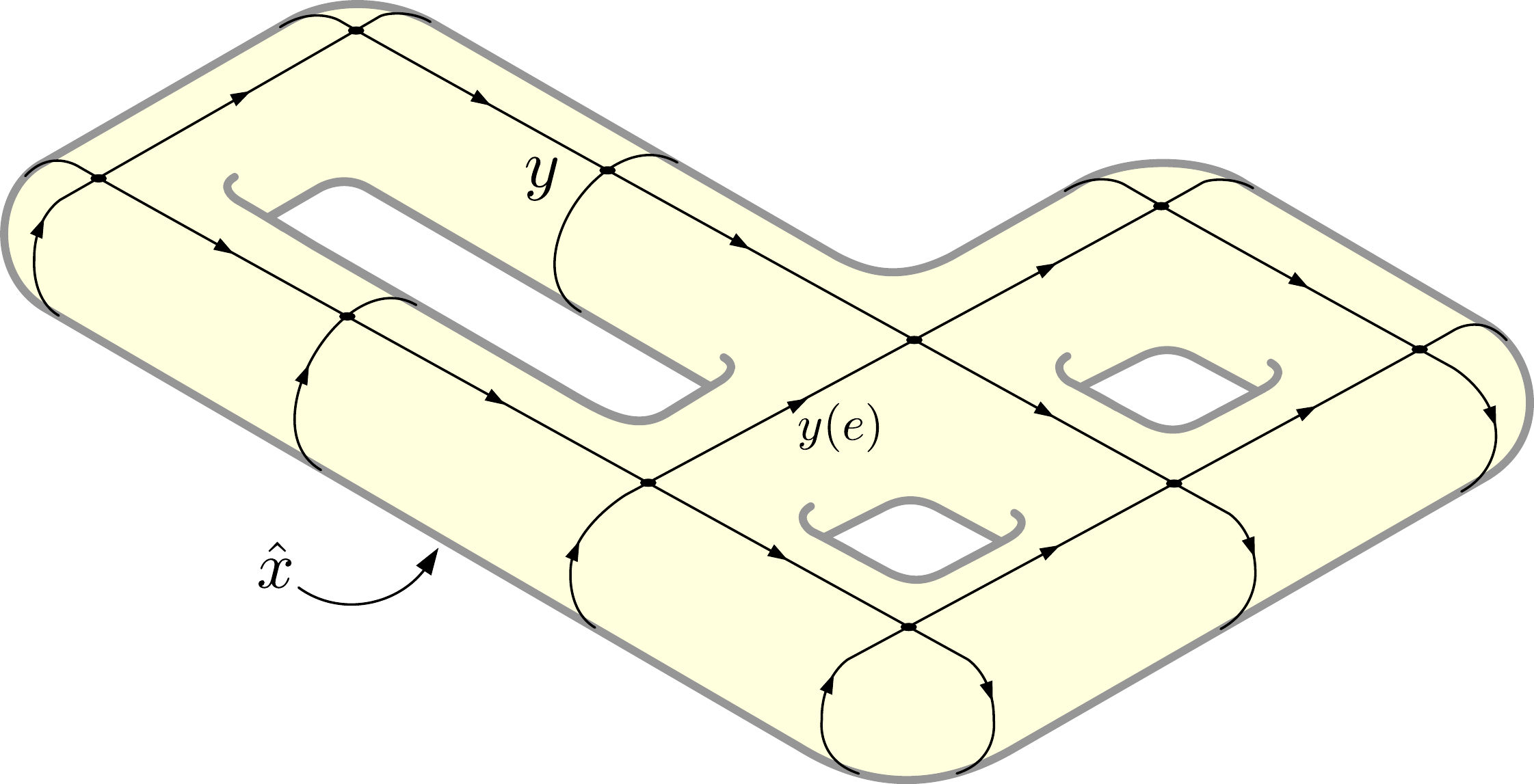} \right). \\
    	\intertext{We cut $\Sigma_{C_1} \times I$ transversal to edges of $C_1$ and use the gluing formula \eqref{eq:gluing formula}, summing over bases Eq \eqref{eq:ONB of A(D_2;aa)} to obtain}
    	      (x, y)_{A(\Sigma_{C_1})} &= \left( \prod_{e \in E_C} \frac{\delta_{x(e), y(e)}}{d_{x(e)}}\right) \, Z_{\bigsqcup_{v\in V_C} B^3} \left( \adjincludegraphics[valign=c, height = 1.8cm]{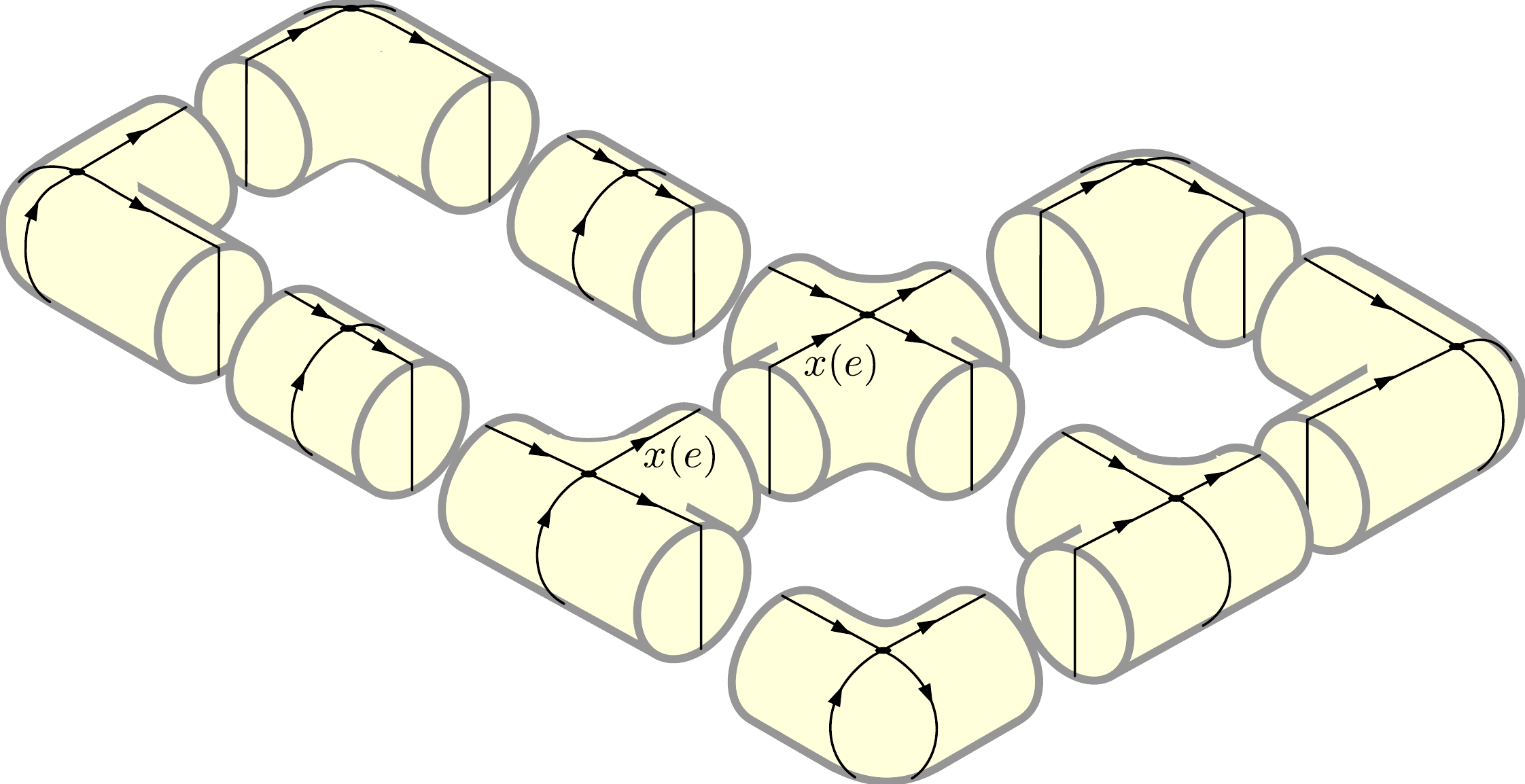} \right). \\
    	       \intertext{Using the product property of the partition function we now find}
    	      (x, y)_{A(\Sigma_{C_1})} &= \left(\prod_{e \in E_C} \frac{\delta_{x(e), y(e)}}{d_{x(e)}}\right) \, \prod_{v \in V_C} \left( \prod_{e \in \vec \partial v} d_{x(e)}^{1/2}  \right) \langle \phi_{x(v)}, \phi_{y(v)} \rangle \\
    	       &= \left( \prod_{e \in \vec \partial C} d_{x(e)}^{1/2} \right) \langle \phi_x, \phi_y \rangle = d_{\partial x}^{1/2} \langle \phi_x, \phi_y \rangle.
    \end{align*}
    Recall that $\sigma_{C_1}(\phi_x) = d_{\partial x}^{-1/4} [x]_{\Sigma_{C_1}}$ and likewise for $\phi_y$. Since $H_{C_1}^b$ is spanned by vectors of this form, and since both inner products vanish whenever boundary labellings of $\phi$ and $\psi$ do not match, we conclude that $\sigma_{C_1}$ is an isometry.
\end{proof}

We proceed to consider the TQFT inner product on $A(\Sigma_C)$.
\begin{lemma} \label{lem:TQFT vs skein on skein subspaces}
    Let $C$ be a finite region and $\phi, \psi \in H_{C}$. Then
    $$ \big( \sigma_C(\phi), \sigma_C(\psi) \big)_{A(\Sigma_C)}  =  \langle \phi, \psi \rangle,$$
    \ie $\sigma_C|_{H_C} : H_C \rightarrow A(\Sigma_C)$ is an isometry.
\end{lemma}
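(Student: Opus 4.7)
The plan is to reduce the claim to Lemma~\ref{lem:TQFT vs skein on string-net space} via the gluing formula \eqref{eq:gluing formula}. Since $\sigma_C = \caD^{-|F_C|}\,\iota_C \circ \sigma_{C_1}$, the statement is equivalent to
\[
(\iota_C \sigma_{C_1}\phi,\, \iota_C \sigma_{C_1}\psi)_{A(\Sigma_C)} = \caD^{2|F_C|}\,(\sigma_{C_1}\phi,\, \sigma_{C_1}\psi)_{A(\Sigma_{C_1})}
\]
for $\phi, \psi \in H_C$, after which Lemma~\ref{lem:TQFT vs skein on string-net space} finishes the job.

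Geometrically, $\Sigma_C = \Sigma_{C_1} \cup \bigsqcup_{f \in F_C} D_f$, with $D_f$ the disk filling the face $f$, so the pinched product $\Sigma_C \times I$ contains a properly embedded interior cylinder $\caS_f \times I$ for each face. Cutting along all these cylinders separates $\Sigma_C \times I$ into $\Sigma_{C_1} \times I$ on the outside and one 3-ball $D_f \times I$ on the inside of each cylinder. The cuts carry no marked points, since the circles $\caS_f$ lie in the interior of $\Sigma_C$, so the relevant gluing skein module is the unmarked annulus, with the orthonormal basis $\{[a\text{-loop}]\}_{a \in \Irr \caC}$ described in Section~\ref{subsec:orthonormal bases for some skein modules}. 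Since $Z_{D_f \times I}(a_f\text{-loop})=d_{a_f}$, iterating the gluing formula over all faces yields
\[
Z_{\Sigma_C \times I}(y) = \sum_{(a_f)_{f \in F_C}} \Bigl(\prod_f d_{a_f}\Bigr)\,Z_{\Sigma_{C_1}\times I}\Bigl(y \cup \bigcup_f a_f\text{-loop parallel to }\caS_f\Bigr)
\]
for any string diagram $y$ on $\partial(\Sigma_C\times I)$ whose graph is contained in the smaller surface $\partial(\Sigma_{C_1}\times I)$. Using $\sum_{a} d_a\,[a\text{-loop}] = \caD^{2}\,[\text{dotted loop}]$, the right-hand side is $\caD^{2|F_C|}$ times $Z_{\Sigma_{C_1}\times I}$ evaluated with one dotted loop added near each $\caS_f$.

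Specialising to $y = \widehat{\iota_C \sigma_{C_1}\phi}\cup \iota_C \sigma_{C_1}\psi$, the hypothesis $\phi = B_C \phi$ together with the graphical definition \eqref{eq:B_f graphical definition} of $B_f$ implies that $\sigma_{C_1}(\phi)$ is fixed by the ``insert a dotted loop around $\caS_f$'' operator on $A(\Sigma_{C_1})$ for every $f \in F_C$, and the same holds for $\psi$. In the partition function, the extra dotted loops produced by the gluing formula therefore sit parallel to those already present in $\sigma_{C_1}\phi$ and $\sigma_{C_1}\psi$ within an annular collar of $\caS_f$, and by the projector property of the dotted line in \eqref{eq:normalization and absorption properties} they are absorbed into the existing ones. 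The right-hand side collapses to $\caD^{2|F_C|}\,(\sigma_{C_1}\phi, \sigma_{C_1}\psi)_{A(\Sigma_{C_1})}$, proving the displayed identity.

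The main delicate point is the application of the gluing formula across the interior cylinders $\caS_f \times I$: these are properly embedded surfaces with boundary in a pinched-product 3-manifold, so one has to verify that the cut really yields $\Sigma_{C_1}\times I$ together with the ball factors up to an appropriate homeomorphism of pairs (3-manifold together with distinguished gluing locus), and that the basis to be summed over is indeed that of the unmarked annulus. Once this topological bookkeeping is done, the extraction of the factor $\caD^{2|F_C|}$ and the final absorption of dotted loops follow directly from the local relations already established.
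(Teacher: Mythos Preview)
Your proposal is correct and follows essentially the same approach as the paper: both cut $\Sigma_C \times I$ into $\Sigma_{C_1} \times I$ and a 3-ball for each face using the gluing formula \eqref{eq:gluing formula}, sum over the unmarked-annulus basis \eqref{eq:ONB of ann}, evaluate the 3-balls to $d_{a_f}$, repackage the sum as $\caD^{2|F_C|}$ times a dotted-loop insertion, and reduce to Lemma~\ref{lem:TQFT vs skein on string-net space}. The only organizational difference is that the paper carries out the computation for arbitrary product vectors $\phi_x,\phi_y \in H_{C_1}$ and obtains $(\sigma_C\phi_x,\sigma_C\phi_y)_{A(\Sigma_C)} = \langle B_C\phi_x, B_C\phi_y\rangle$ before specializing to $H_C$, whereas you impose $\phi,\psi \in H_C$ from the start and absorb the dotted loops via the projector property; this is the same argument in a different order.
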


\begin{proof}
    Let $\phi_x = \bigotimes_{v \in V_C} \phi_{x(v)}$ and $\phi_y = \bigotimes_{v \in V_C} \phi_{y(v)}$ be product states in $H^b_{C_1}$ corresponding under $\pi_{C_1}$ to string diagrams $x$ and $y$ on $\Sigma_{C_1}$ with boundary condition $b$. 
    We relate the TQFT inner product on $A(\Sigma_C)$ to the one on $A(\Sigma_{C_1})$ by using the gluing formula \eqref{eq:gluing formula}.
    Indeed $\Sigma_C \times I$ is obtained from $\Sigma_{C_1} \times I$ by patching the holes corresponding to the internal faces $F_C$. A hole is patched by gluing a 3-ball along an annulus around the hole, so the gluing formula gives rise to a summation over basis elements of the annulus with no marked points \eqref{eq:ONB of ann} for each hole. 
    Using the product property of the partition function, the partition function on $\Sigma_C \times I$ is in this way written as a sum of products of a partition function on $\Sigma_{C_1}$ with partition functions on the glued in 3-balls. The 3-balls have basis elements \eqref{eq:ONB of ann} placed on a belt on their surfaces, which evaluate to the quantum dimensions $d_a$. In this way we find
    \begin{align*}
        ([x], [y])_{A(\Sigma_C)} &= Z_{\Sigma_C \times I} \left( \adjincludegraphics[valign=c, height = 2.0cm]{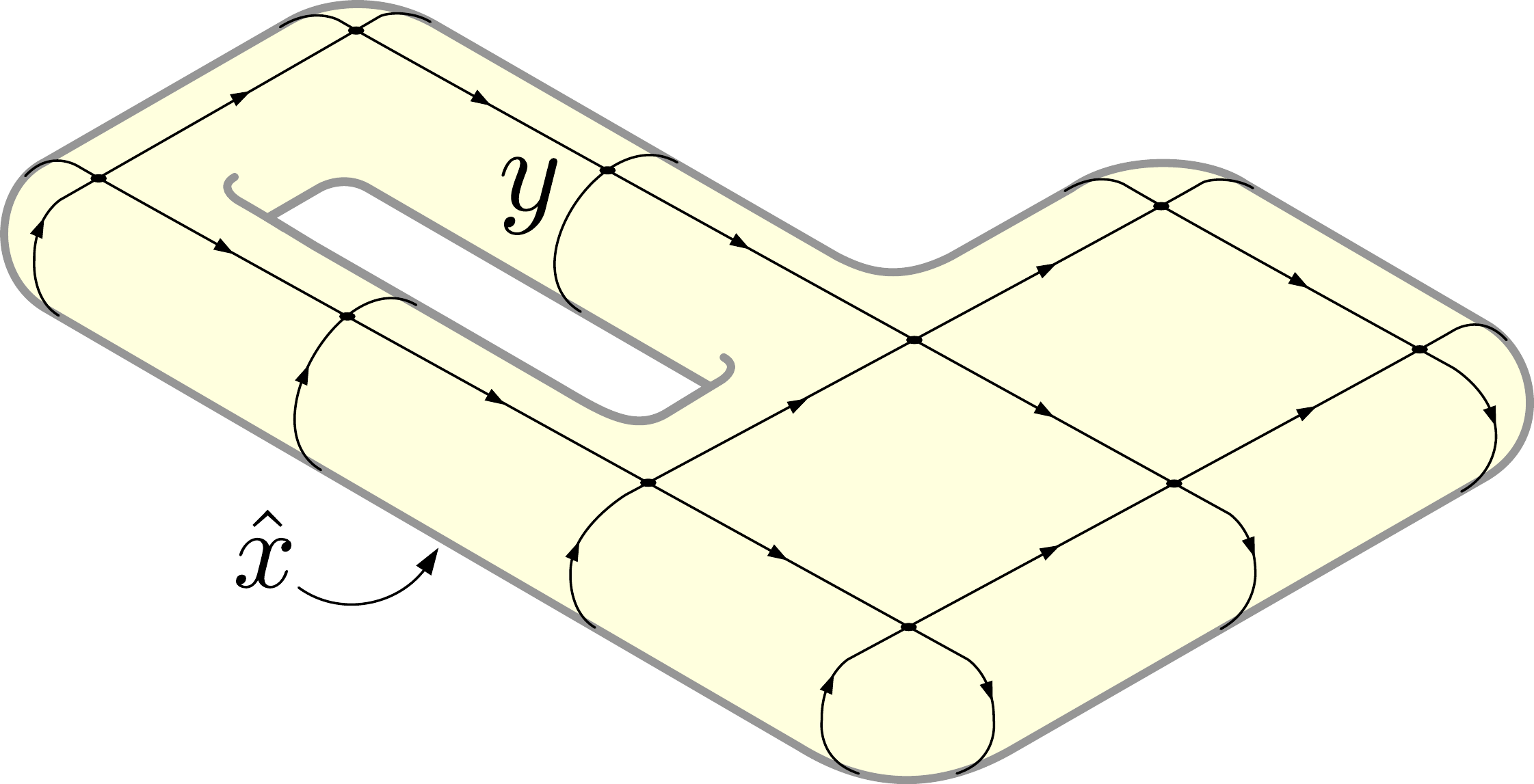} \right) \\
        &= \sum_{(a_{f}) \in (\Irr \caC)^{F_C}}  \, Z_{(\Sigma_{C_1} \times I) \sqcup B^3 \sqcup \cdots \sqcup B^3} \left(  \adjincludegraphics[valign=c, height = 2.0cm]{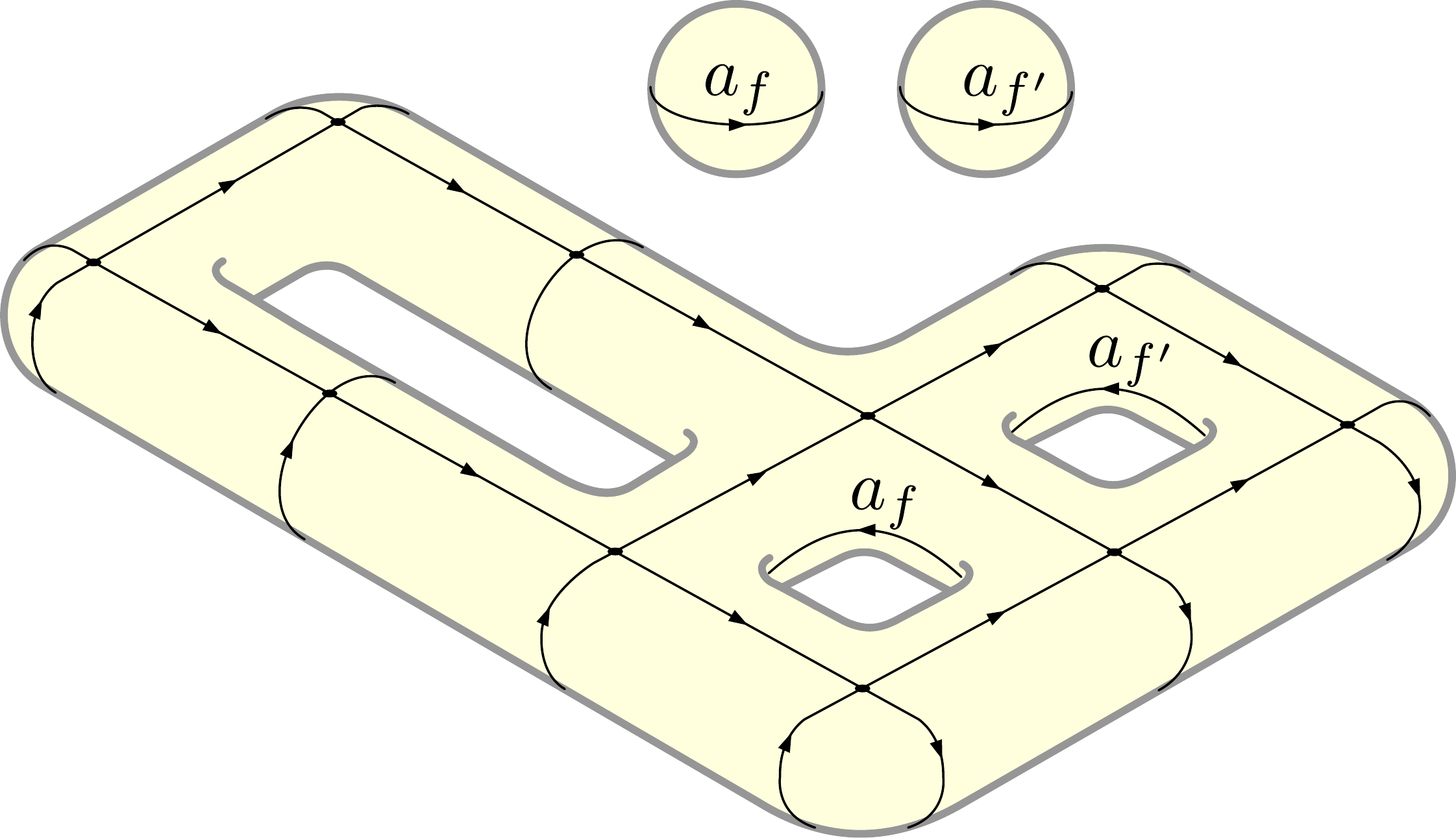}  \right) \\
        \intertext{ }
        &= \caD^{2 \abs{F_C}} \, Z_{\Sigma_{C_1} \times I} \left( \adjincludegraphics[valign=c, height = 2.0cm]{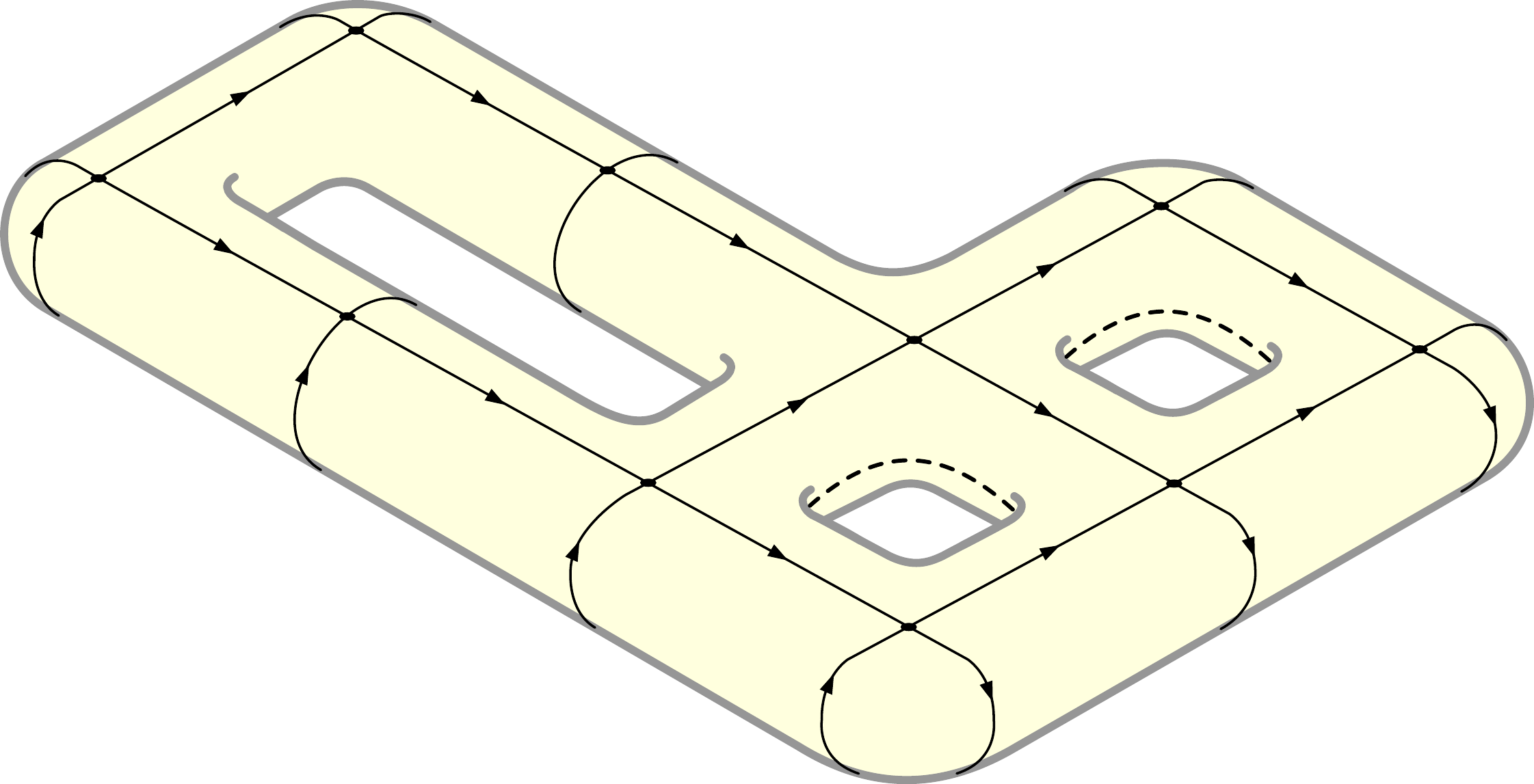} \right)\\
        &= \caD^{2 \abs{F_C}} d_{\partial x}^{1/2} \, \left\langle \phi_x,  B_C \phi_y \right\rangle,
    \end{align*}
    where we used Lemma \ref{lem:TQFT vs skein on string-net space} in the last step to relate the TQFT inner product on $A(\Sigma_{C_1})$ to the skein inner product. Recalling the definition of $\sigma_C$, we obtain $( \sigma_C(\phi_x), \sigma_C(\phi_y) )_{A(\Sigma_C)} = \langle B_C \phi_x, B_C \phi_y \rangle$. Since any $\phi, \psi \in H_C^b$ can be expressed as linear combinations of vectors of the form $\phi_x$ and $\phi_y$, and both inner products vanish whenever the boundary conditions do not match, this concludes the proof.
\end{proof}

\begin{lemma}\label{lem:sigma intertwines Tube actions}
    Let $C$ be a finite region. 
    For all $\phi \in H_{C_1}$, and all $a \in \Tube_\caS$ with $\caS \in \Bd(\Sigma_C)$, 
    \begin{equation*}
        \sigma_C \big(  \frt_{\caS}(a) \phi \big) = a \triangleright_{\caS} \sigma_C(\phi).
    \end{equation*}
\end{lemma}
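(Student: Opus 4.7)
My plan is to verify the identity on a spanning set by direct graphical computation. By linearity of both sides in $\phi$ and $a$ separately, it suffices to consider $\phi = \phi_x$ a product vector in $H_{C_1}$ arising via Convention~\ref{conv:graphical representation} from a string diagram $x$ on $\Sigma_{C_1}$, and $a = [y]$ for $y$ a string diagram on the cylinder $\caS \times I$, since such classes span $\Tube_\caS = A(\caS)$. Moreover, the restriction $\phi_x|_{C^\caS}$ still satisfies all string-net constraints on edges of $C^\caS$ (these form a subset of the edges of $C$), so the graphical formula \eqref{eq:graphical frt action} for $\frt_\caS$ applies on the collar.

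Both sides manifestly vanish unless the boundary labels of $x$ along $\caS$ coincide with those at the bottom of $y$: on the left, this is the Kronecker delta $\delta_{x_\caS, y_{\hat\caS}}$ appearing in \eqref{eq:graphical frt action}; on the right, the gluing map on skein modules returns zero whenever boundary labels disagree (see \Cref{subsec:gluing}). Assuming the labels match, \eqref{eq:graphical frt action} gives
$$\frt_\caS([y])\,\phi_x \;=\; \left(\frac{d_{y_\caS}}{d_{x_\caS}}\right)^{1/4}\,\phi_z,$$
where $z$ is the string diagram on $\Sigma_{C_1}$ obtained by gluing $y$ onto $x$ along $\caS$. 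The boundary labelling $\partial z$ coincides with $\partial x$ everywhere outside $\caS$ and equals $y_\caS$ on $\caS$, so $d_{\partial z} = d_{\partial x}\,(d_{y_\caS}/d_{x_\caS})$.

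The proof then reduces to a one-line calculation. Applying $\sigma_C$ to the left-hand side yields
$$\sigma_C\!\left(\frt_\caS([y])\phi_x\right) = \caD^{-|F_C|}\, d_{\partial z}^{-1/4}\left(\frac{d_{y_\caS}}{d_{x_\caS}}\right)^{1/4}[z]_{\Sigma_C} = \caD^{-|F_C|}\, d_{\partial x}^{-1/4}\,[z]_{\Sigma_C},$$
while the right-hand side equals
$$[y] \triangleright_\caS \sigma_C(\phi_x) = \caD^{-|F_C|}\, d_{\partial x}^{-1/4}\,\big([y]\triangleright_\caS [x]_{\Sigma_C}\big) = \caD^{-|F_C|}\, d_{\partial x}^{-1/4}\,[z]_{\Sigma_C},$$
where the last equality uses that the inclusion $\iota_C : A(\Sigma_{C_1}) \to A(\Sigma_C)$ commutes with the gluing action $\triangleright_\caS$ — the gluing takes place at $\caS \subset \partial \Sigma_{C_1} \cap \partial \Sigma_C$ and is disjoint from the internal local relations that $\iota_C$ imposes.

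The only real obstacle is the bookkeeping: the content of the lemma is that the weight $(d_{y_\caS}/d_{x_\caS})^{1/4}$ introduced in the graphical definition \eqref{eq:graphical frt action} of $\frt_\caS$ is precisely what is needed to cancel the change in the boundary normalisation $d_{\partial l}^{-1/4}$ in $\sigma_C$, thereby intertwining $\sigma_C$ with $\triangleright_\caS$. This is consistent with the design of $\sigma_C$, whose boundary-dependent weights turn the skein inner product into the TQFT inner product (cf.\ \Cref{lem:TQFT vs skein on string-net space} and \Cref{lem:TQFT vs skein on skein subspaces}) and thus make the $\Tube$-actions into $*$-representations.
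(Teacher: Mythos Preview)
Your proof is correct and follows essentially the same approach as the paper: reduce by linearity to product vectors $\phi_x$ and tube elements $[y]$, apply the graphical formula \eqref{eq:graphical frt action} on the collar, and verify that the boundary-dependent weight $(d_{y_\caS}/d_{x_\caS})^{1/4}$ exactly compensates the change in the normalisation $d_{\partial l}^{-1/4}$ built into $\sigma_C$. Your tracking of the normalisations is in fact slightly more explicit than the paper's, and your remark that $\iota_C$ commutes with $\triangleright_\caS$ because the gluing occurs on a shared boundary component is a useful clarification.
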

\begin{proof}
  Recall that $\sigma_C$ is defined on the whole of $H_{C_1}$. Consider a product state $\phi_x \in H_{C_1}$ corresponding to a string diagram $x$ on $\Sigma_{C_1}$. Let $a = [y] \in \Tube_{\caS}$ corresponding to a string diagram $y$ with boundary condition $y_{\hat S}$ matching $x_{\caS}$. 
  Schematically,
    \begin{align*}
        \phi_x = \adjincludegraphics[valign=c, width = 4.0cm]{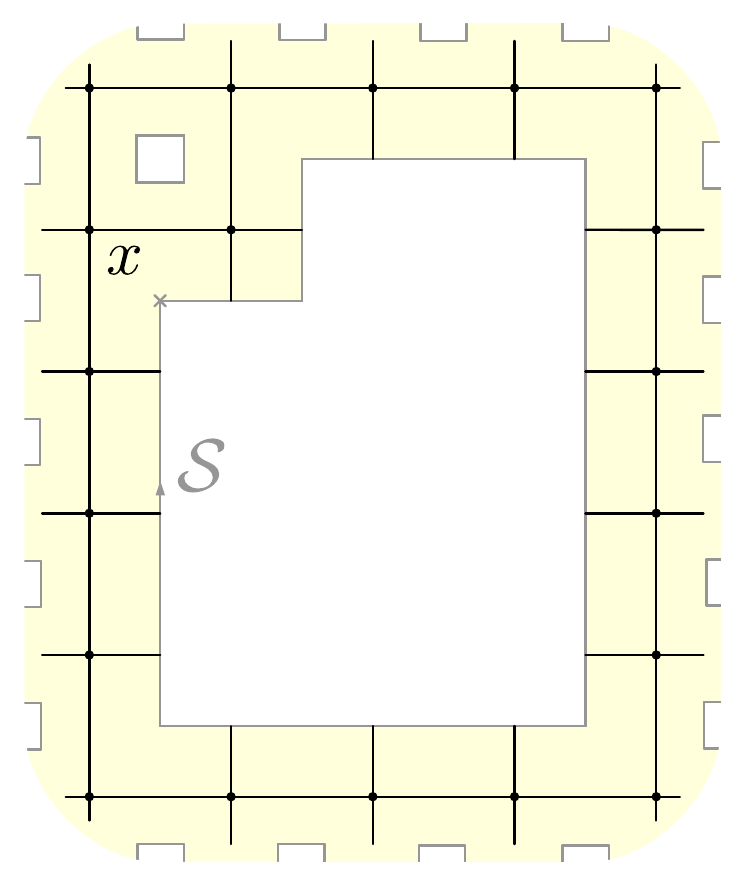}, \quad  a = \adjincludegraphics[valign=c, width = 2.5cm]{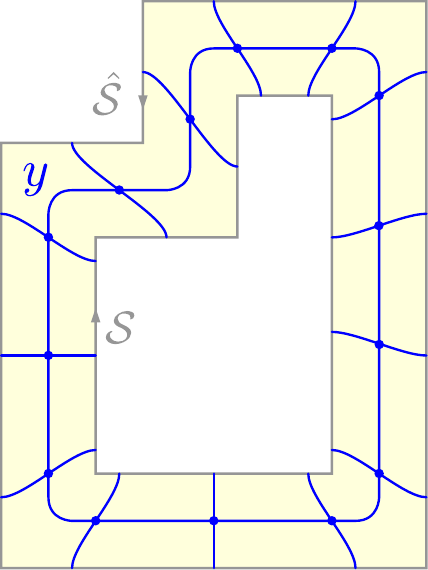}.
    \end{align*}
    Since $\phi_x$ is a product state it factorises as $\phi_x=\phi_\caS \otimes \phi_{\caS^c}$, where $\phi_\caS \in H_{C^\caS}$, and $\phi_{\caS^c}\in \caH_{C_0\setminus C^\caS_0}$. 
    The locality of local relations implies,
    $$
    \frt_{\caS}(a) \phi_x = \sigma_{C_1^{\caS}}^{-1} \big(  a \triangleright_{\caS} \sigma_{C^{\caS}}( \phi_\caS)  \big) \otimes \phi_{\caS^c} = \sigma_{C_1}^{-1} \big(  a \triangleright_{\caS} \sigma_{C_1}( \phi_x)  \big).
    $$
    In the graphical representation (Convention \ref{conv:graphical representation}), this is illustrated by
    $$
        \frt_{\caS}(a) \phi_x = \left( \frac{d_{y_{\caS}}}{d_{x_{\caS}}} \right)^{1/4} \,\adjincludegraphics[valign=c, width = 4.0cm]{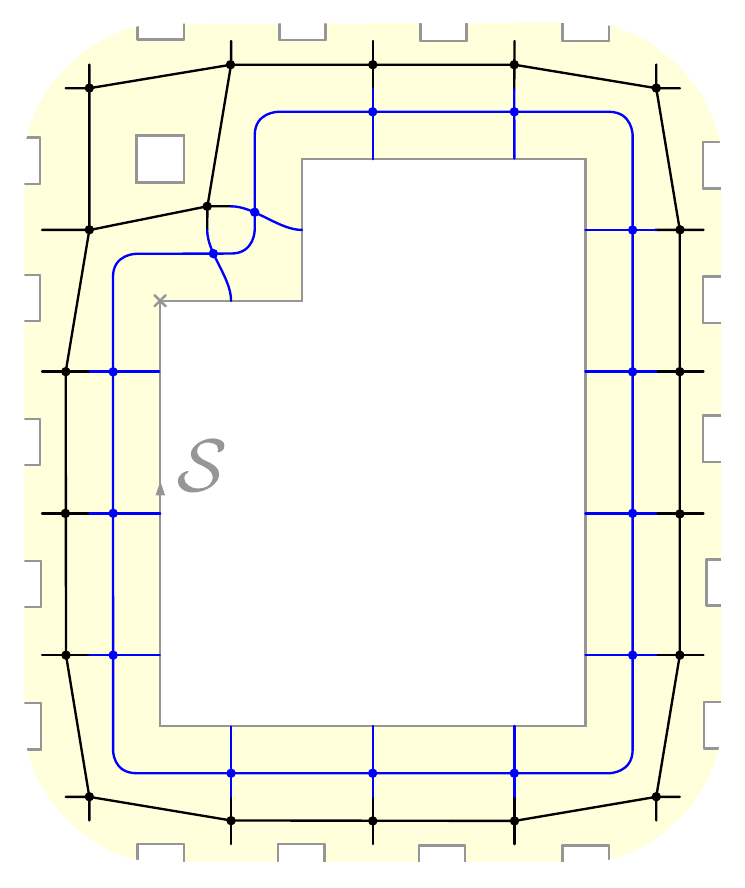}.
    $$
    As a result, 
    \begin{equation*}
        \sigma_C \big(  \frt_{\caS}(a) \phi_x \big) = \caD^{-\abs{F_C}} d_{x_{\caS}}^{-1/4} \big(  a \triangleright_{\caS} [x]_{\Sigma_C}  \big) =  a \triangleright_{\caS} \sigma_C(\phi_x).
    \end{equation*}
    If the boundary conditions $x_{\caS}$ and $y_{\hat \caS}$ do not match, then this equality still holds because either side vanishes. The claim now follows because vectors of the form $\phi_x$ span $H_{C_1}$.
\end{proof}

\begin{lemma} \label{lem:sigma_C absorbs B_C}
    Let $C$ be a finite region, then
    $$ \sigma_C( \phi )  = \sigma_C ( B_C \phi )$$
    for all $\phi \in H_{C_1}$. 
    The same holds true with $\sigma_C$ replaced by $\pi_C$.
\end{lemma}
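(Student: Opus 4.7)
The plan is to reduce the claim to an equality of equivalence classes in $A(\Sigma_C)$ and then to apply the normalization property of the dotted line (first relation of Eq. \eqref{eq:normalization and absorption properties}) once per face of $C$.

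Since $\sigma_C = \caD^{-\abs{F_C}} \cdot \iota_C \circ \sigma_{C_1}$, the scalar prefactor cancels between the two sides of the claimed equality, so it suffices to show $\iota_C(\sigma_{C_1}(B_C \phi)) = \iota_C(\sigma_{C_1}(\phi))$ for every $\phi \in H_{C_1}$. The same reduction applies to $\pi_C = \iota_C \circ \pi_{C_1}$: since $B_C$ is built from operators $B_f$ supported on collar regions $C^f$ interior to $C$, it acts trivially on the boundary labelling, so the boundary-dependent factor $d_{\partial l}^{-1/4}$ relating $\sigma_{C_1}$ to $\pi_{C_1}$ is identical on the two sides of the equality, and the $\pi_C$ statement follows from the $\sigma_C$ statement.

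Next I would rewrite $\sigma_{C_1}(B_C \phi)$ graphically as a decoration of $\sigma_{C_1}(\phi)$ by dotted loops. For $\phi \in H_{C_1}$ every string-net constraint is satisfied, so the string-net projector built into $\frt_f(P^{\I}) = B_f$ acts as the identity; by the graphical description \eqref{eq:B_f graphical definition}, the only remaining effect of each $B_f$ is to insert a dotted loop encircling the hole at face $f$ in $\Sigma_{C_1}$. Since the $B_f$ for distinct $f \in F_C$ act on disjoint collar neighbourhoods, and the inserted loops are mutually disjoint and do not meet the underlying graph of $\sigma_{C_1}(\phi)$, iterating gives that $\sigma_{C_1}(B_C \phi)$ is represented by the same string diagram as $\sigma_{C_1}(\phi)$, with one additional dotted loop encircling each of the $\abs{F_C}$ face holes.

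Finally, applying $\iota_C$ fills in every face hole of $\Sigma_{C_1}$, so each inserted dotted loop now bounds a contractible disk inside $\Sigma_C$ containing no other strings. By the normalization property of the dotted line these loops can be deleted one at a time without changing the equivalence class in $A(\Sigma_C)$, yielding $\iota_C(\sigma_{C_1}(B_C \phi)) = \iota_C(\sigma_{C_1}(\phi))$ as desired. The only point demanding care is checking that each face of $C$ sits inside a contractible disk of $\Sigma_C$ so that the local relation actually applies, but this is immediate from the construction of $\Sigma_C$ as the $1/3$-fattening of $C$. I do not expect any other obstacles.
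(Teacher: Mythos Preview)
Your proposal is correct and follows essentially the same approach as the paper: the key step in both is that $B_C$ inserts a dotted loop around each face hole of $\Sigma_{C_1}$, and once one passes to $\Sigma_C$ via $\iota_C$ these loops bound contractible disks and can be removed by the normalization relation. The only cosmetic differences are that the paper proves the $\pi_C$ version first (invoking Lemma~\ref{lem:sigma intertwines Tube actions} to justify the dotted-loop picture of $B_C$) and then deduces the $\sigma_C$ version from boundary-label preservation, whereas you do it in the opposite order and appeal directly to the graphical description \eqref{eq:B_f graphical definition}. One minor inaccuracy: the collar regions $C^f$ for adjacent faces are \emph{not} disjoint (they share the common edge and its vertices), but this does not matter for your argument since the dotted loops themselves, as curves in $\Sigma_{C_1}$, lie around distinct holes and hence are disjoint.
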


\begin{proof}
    Let us first prove the claim for $\pi_C$. It is sufficient to show that $\pi_C( \phi_x )  = \pi_C ( B_C \phi_x )$ for any string-net diagram $x$ on $\Sigma_{C_1}$.
    
    Recalling that $B_f$ acts on the string-net subspace $H_{C^f}$ as a $\Tube$ action, we find using \Cref{lem:sigma intertwines Tube actions} that $B_C\phi_x = \pi_{C_1}^{-1} \big( [\tilde x]_{\Sigma_{C_1}} \big)$ where $\tilde x$ is the string diagram on $\Sigma_{C_1}$ obtained from $x$ by inserting dotted loops around the holes corresponding to the faces of $C$. Since $\pi_C = \iota_C\circ \pi_{C_1}$,
    $$ \pi_C \big( B_C \phi_x \big)
    = \iota_C([\tilde x]_{\Sigma_{C_1}}) = [\tilde x]_{\Sigma_C} = [x]_{\Sigma_C} = \pi_C(\phi),$$
    where in the penultimate step we noted that, in $\Sigma_C$, we can contract all the dotted loops to a point, i.e. remove them by local relations on $\Sigma_C$.
    
    The claim for $\sigma_C$ now follows by noting that $B_C$ preserves boundary labellings, and $\sigma_C$ differs from $\pi_C$ only by factors depending on the boundary labelling.
\end{proof}

\begin{proofof}[Proposition~\ref{prop:isomorphism of skein subspace and skein module}]
    As a composition of surjective maps $\iota_C$ and $\sigma_{C_1}$, $\sigma_C$ is also surjective. Surjectivity of the restriction $\sigma_C|_{H_C}$ follows from \Cref{lem:sigma_C absorbs B_C}. 
    \Cref{lem:TQFT vs skein on skein subspaces} shows that $\sigma_C|_{H_C}$ is an isometry, and \Cref{lem:sigma intertwines Tube actions} shows that $\sigma_C$ is an intertwiner. 
\end{proofof}

\bibliographystyle{unsrturl}
\bibliography{bib}

@article{sectortheoryII,
  title={Sector theory of {L}evin-{W}en models {II} : {F}usion and braiding},
  author={Bols, Alex and Kj{\ae}r, Boris},
  year={2026},
  journal={arXiv preprint},
  eprint={2603.01936},
  archivePrefix={arXiv},
  primaryClass={math-ph},
}

@article{naaijkens2011localized,
  title={Localized endomorphisms in {K}itaev's {T}oric {C}ode on the plane},
  author={Naaijkens, Pieter},
  journal={Reviews in Mathematical Physics},
  volume={23},
  number={04},
  pages={347--373},
  year={2011},
  publisher={World Scientific},
  eprint = {1012.3857},
  archivePrefix = {arXiv}
}

@incollection{naaijkens2015kitaev,
  title={Kitaev’s quantum double model from a local quantum physics point of view},
  author={Naaijkens, Pieter},
  booktitle={Advances in algebraic quantum field theory},
  pages={365--395},
  year={2015},
  publisher={Springer},
  eprint = {1508.07170},
  archivePrefix = {arXiv}
}

@article{Naaijkens2012,
        Author = {Pieter Naaijkens},
        Journal = {Lett. Math. Phys.},
        Number = {3},
        Title = {Haag duality and the distal split property for cones in the toric code},
        Volume = {101},
        Year = {2012},
        eprint = {1106.4171},
        archivePrefix = {arXiv},
}

@Article{Fiedler2014,
  author        = {Leander Fiedler and Pieter Naaijkens},
  journal       = {Rev. Math. Phys.},
  volume        = {27},
  title         = {Haag duality for {K}itaev's quantum double model for abelian groups},
  Year           = {2015},
  abstract      = {We prove Haag duality for conelike regions in the ground state representation corresponding to the translational invariant ground state of Kitaev's quantum double model for finite abelian groups. This property says that if an observable commutes with all observables localised outside the cone region, it actually is an element of the von Neumann algebra generated by the local observables inside the cone. This strengthens locality, which says that observables localised in disjoint regions commute. As an application we consider the superselection structure of the quantum double model for abelian groups on an infinite lattice in the spirit of the Doplicher-Haag-Roberts program in algebraic quantum field theory. We find that, as is the case for the toric code model on an infinite lattice, the superselection structure is given by the category of irreducible representations of the quantum double.},
  archiveprefix = {arXiv},
  eprint        = {1406.1084},
  pages         = {1550021},
  file          = {:Fiedler2014 - Haag Duality for Kitaev's Quantum Double Model for Abelian Groups.pdf:PDF},
  keywords      = {math-ph, math.MP, math.OA, quant-ph, 81R15 (Primary) 46L60, 81T05, 82B20 (Secondary)},
}

@article{ogata2022derivation,
  title={A derivation of braided {$\mathrm{C}^*$}-tensor categories from gapped ground states satisfying the approximate {H}aag duality},
  author={Ogata, Yoshiko},
  journal={Journal of Mathematical Physics},
  volume={63},
  number={1},
  pages={011902},
  year={2022},
  publisher={AIP Publishing LLC},
  eprint={2106.15741}
}

@article{jones2023localtopologicalorderboundary,
      title={Local topological order and boundary algebras}, 
      author={Corey Jones and Pieter Naaijkens and David Penneys and Daniel Wallick},
      journal={arXiv preprint},
      year={2023},
      eprint={2307.12552}
}

@article{christian2023lattice,
  title={A lattice model for condensation in Levin-Wen systems},
  author={Christian, Jessica and Green, David and Huston, Peter and Penneys, David},
  journal={Journal of High Energy Physics},
  volume={2023},
  number={9},
  pages={1--56},
  year={2023},
  publisher={Springer},
  eprint={2303.04711},
  archivePrefix={arXiv}
}

@article{kawagoe2024levin,
  title={Levin-Wen is a gauge theory: entanglement from topology},
  author={Kawagoe, Kyle and Jones, Corey and Sanford, Sean and Green, David and Penneys, David},
  journal={Communications in Mathematical Physics},
  volume={405},
  number={11},
  pages={266},
  year={2024},
  publisher={Springer},
  eprint = {2401.13838}
}

@article{green2024enriched,
  title={Enriched string-net models and their excitations},
  author={Green, David and Huston, Peter and Kawagoe, Kyle and Penneys, David and Poudel, Anup and Sanford, Sean},
  journal={Quantum},
  volume={8},
  pages={1301},
  year={2024},
  publisher={Verein zur F{\"o}rderung des Open Access Publizierens in den Quantenwissenschaften},
  eprint = {2305.14068},
  archivePrefix = {arXiv}
}

@article{bols2024double,
  title={The double semion state in infinite volume},
  author={Bols, Alex and Kj{\ae}r, Boris and Moon, Alvin},
  journal={Annales Henri Poincar{\'e}},
  Year ={2024},
  organization={Springer},
eprint = {2306.13762},
archivePrefix = {arXiv},
}

@article{bols2025classification,
  title={Classification of the Anyon Sectors of {K}itaev’s Quantum Double Model},
  author={Bols, Alex and Vadnerkar, Siddharth},
  journal={Communications in Mathematical Physics},
  volume={406},
  number={8},
  pages={188},
  year={2025},
  publisher={Springer},
  eprint={2310.19661},
  archivePrefix={arXiv}
}

@article{bols2025category,
  title={The Category of Anyon Sectors for Non-Abelian Quantum Double Models},
  author={Bols, Alex and Hamdan, Mahdie and Naaijkens, Pieter and Vadnerkar, Siddharth},
  journal={Communications in Mathematical Physics},
  volume={407},
  number={1},
  pages={2},
  year={2026},
  publisher={Springer},
  eprint={2503.15611}
}

@article{zhang2016temperley,
  title={From the {T}emperley-{L}ieb categories to {T}oric {C}ode},
  author={Zhang, Yanbai},
  year={2017},
  journal={Undergraduate honors thesis, available at \url{https://tqft.net/web/research/students/YanbaiZhang/thesis.pdf}}
}

@article{ogata2025haag,
  title={Haag duality for 2{D} quantum spin systems},
  author={Ogata, Yoshiko and P{\'e}rez-Garc{\'i}a, David and Ruiz-de-Alarc{\'o}n, Alberto},
  journal={arXiv preprint},
  year={2025},
  eprint={2509.23734},
  archivePrefix = {arXiv}
}

@article{kitaev2003fault,
  title={Fault-tolerant quantum computation by anyons},
  author={Kitaev, Alexei Yu},
  journal={Annals of physics},
  volume={303},
  number={1},
  pages={2--30},
  year={2003},
  publisher={Elsevier},
  eprint = {quant-ph/9707021},
  archivePrefix = {arXiv}
}

@Article{Bombin2008,
  author    = {Bombin, Hector and Martin-Delgado, Miguel A.},
  journal   = {Physical Review B},
  title     = {Family of non-Abelian {K}itaev models on a lattice: Topological condensation and confinement},
  Year       = {2008},
  number    = {11},
  volume    = {78},
  pages     = {115421},
  publisher = {American Physical Society ({APS})},
    eprint = {0712.0190},
    archivePrefix = {arXiv},
}

@article{Yan2022,
year = {2022},
publisher = {IOP Publishing},
volume = {55},
number = {18},
author = {Yan, Bowen and Chen, Penghua and Cui, Shawn X},
title = {Ribbon operators in the generalized {K}itaev quantum double model based on {H}opf algebras},
pages = {185201},
journal = {J. Phys. A: Math. Theor.},
eprint = {2105.08202},
archivePrefix = {arXiv}
}

@article{levin2005string,
  title={String-net condensation: {A} physical mechanism for topological phases},
  author={Levin, Michael A and Wen, Xiao-Gang},
  journal={Physical Review B},
  volume={71},
  number={4},
  pages={045110},
  year={2005},
  publisher={APS},
  eprint = {cond-mat/0404617},
  archivePrefix = {arXiv}
}

@article{lin2021generalized,
  title={Generalized string-net models: A thorough exposition},
  author={Lin, Chien-Hung and Levin, Michael and Burnell, Fiona J},
  journal={Physical Review B},
  volume={103},
  number={19},
  pages={195155},
  year={2021},
  publisher={APS},
  eprint = {2012.14424},
  archivePrefix = {arXiv}
}

@article{kawagoe2020microscopic,
  title={Microscopic definitions of anyon data},
  author={Kawagoe, Kyle and Levin, Michael},
  journal={Physical Review B},
  volume={101},
  number={11},
  pages={115113},
  year={2020},
  publisher={APS},
  eprint={1910.11353}
}

@article{shi2020fusion,
  title={Fusion rules from entanglement},
  author={Shi, Bowen and Kato, Kohtaro and Kim, Isaac H},
  journal={Annals of Physics},
  volume={418},
  pages={168164},
  year={2020},
  publisher={Elsevier},
  eprint={1906.09376}
}

@article{kong2014universal,
    AUTHOR = {Kong, Liang},
    TITLE = {Some universal properties of {L}evin-{W}en models},
    journal = {X{VII}th {I}nternational {C}ongress on {M}athematical
              {P}hysics},
     PAGES = {444--455},
 PUBLISHER = {World Sci. Publ., Hackensack, NJ},
      YEAR = {2014},
      ISBN = {978-981-4449-23-6},
   MRCLASS = {82B20},
  MRNUMBER = {3204497},
  eprint={1211.4644},
  archivePrefix = {arXiv}
}

@article{sopenko2023chiral,
  title={Chiral topologically ordered states on a lattice from vertex operator algebras},
  author={Sopenko, Nikita},
  journal={arXiv preprint},
  year={2023},
  eprint={2301.08697},
  archivePrefix = {arXiv}
}

@article{doplicher1971local,
  title={Local observables and particle statistics {I}},
  author={Doplicher, Sergio and Haag, Rudolf and Roberts, John E},
  journal={Communications in Mathematical Physics},
  volume={23},
  pages={199--230},
  year={1971},
  publisher={Springer}
}

@article{doplicher1974local,
  title={Local observables and particle statistics {II}},
  author={Doplicher, Sergio and Haag, Rudolf and Roberts, John E},
  journal={Communications in Mathematical Physics},
  volume={35},
  pages={49--85},
  year={1974},
  publisher={Springer}
}

@article{fredenhagen1989superselection,
  title={Superselection sectors with braid group statistics and exchange algebras {I}: General theory},
  author={Fredenhagen, Klaus and Rehren, Karl-Henning and Schroer, Bert},
  journal={Communications in Mathematical Physics},
  volume={125},
  pages={201--226},
  year={1989},
  publisher={Springer}
}

@article{fredenhagen1992superselection,
  title={Superselection sectors with braid group statistics and exchange algebras {II}: Geometric aspects and conformal covariance},
  author={Fredenhagen, Klaus and Rehren, Karl-Henning and Schroer, Bert},
  journal={Reviews in Mathematical Physics},
  volume={4},
  pages={113--157},
  year={1992},
  publisher={World Scientific}
}

@article{frohlich1988statistics,
  title={Statistics of fields, the {Y}ang-{B}axter equation, and the theory of knots and links},
  author={Fr{\"o}hlich, J{\"u}rg},
  journal={Nonperturbative quantum field theory},
  pages={71--100},
  year={1988},
  publisher={Springer}
}

@article{frohlich1990braid,
  title={Braid statistics in three-dimensional local quantum theory},
  author={Fr{\"o}hlich, Jürg and Gabbiani, Fabrizio and Marchetti, Pieralberto},
  journal={Physics, Geometry and Topology},
  pages={15--79},
  year={1990},
  publisher={Springer}
}

@article{izumi2000,
	abstract = {We investigate the structure of the Longo--Rehren inclusion for a finite closed system of endomorphisms of factors, whose categorical structure is known to be the same as the asymptotic inclusion of A. Ocneanu. In particular, we obtain a precise description of the sectors associated with the Longo--Rehren inclusions in terms of half braidings, which do not necessarily satisfy the usual condition of braidings. In doing so, we give new proofs to most of the known statements concerning asymptotic inclusions. We construct a complete system of matrix units of the tube algebra using the half braidings, which will be used in the second part to describe concrete examples of the Longo--Rehren inclusions arising from the Cuntz algebra endomorphisms. We also discuss the case where the original system has a braiding, and generalize Ocneanu and Evans--Kawahigashi's method for the analysis of the asymptotic inclusions of the Hecke algebra subfactors.},
	author = {Izumi, Masaki},
	da = {2000/09/01},
	id = {Izumi2000},
	isbn = {1432-0916},
	journal = {Communications in Mathematical Physics},
	number = {1},
	pages = {127--179},
	title = {The Structure of Sectors Associated with {L}ongo-{R}ehren Inclusions. {G}eneral {T}heory},
	ty = {JOUR},
	volume = {213},
	year = {2000},
	Bdsk-Url-1 = {https://doi.org/10.1007/s002200000234}
}

@article{izumi2001examples,
	abstract = { As an application of the general theory established in the first part, we determine the structure of Longo--Rehren inclusions for several systems of sectors arising from endomorphisms of the Cuntz algebras. The E6 subfactor and the Haagerup subfactor are included among these examples, and the dual principal graphs and the S and T-matrices for their Longo--Rehren inclusions are obtained. We also construct several new subfactors using endomorphisms of the Cuntz algebras, and determine their tube algebra structure. },
	author = {Izumi, Masaki},
	journal = {Reviews in Mathematical Physics},
	number = {05},
	pages = {603-674},
	title = {The structure of sectors associated with {L}ongo-{R}ehren inclusions {II}: Examples},
	volume = {13},
	year = {2001},
	Bdsk-Url-1 = {https://doi.org/10.1142/S0129055X01000818}
}

@article{penneysUFCnotes,
  title={Lecture notes on unitary fusion categories},
  author={Penneys, David},
  year={2021},
  journal={Part of course on Topological Phases of Matter. Avaiable at \url{https://people.math.osu.edu/penneys.2/8800/Notes/FusionCats.pdf}}
}

@article{penneys2018unitary,
  title={Unitary dual functors for unitary multitensor categories},
  author={Penneys, David},
  journal={Higher Structures},
  volume={4},
  pages={22-56},
  year={2020},
  publisher={Institute of Mathematics CAS},
  eprint={1808.00323}
}

@book{etingof2015tensor,
  title={Tensor categories},
  author={Etingof, Pavel and Gelaki, Shlomo and Nikshych, Dmitri and Ostrik, Victor},
  volume={205},
  year={2015},
  publisher={American Mathematical Soc.}
}

@book{turaev2017monoidal,
  title={Monoidal categories and topological field theory},
  author={Turaev, Vladimir and Virelizier, Alexis},
  volume={322},
  year={2017},
  publisher={Springer}
}

@article{bakalov2001lectures,
author = {Bakalov, Bojko and {Kirillov, Jr.}, Alexander},
year = {2001},
month = {01},
pages = {},
title = {Lectures on tensor categories and modular functors},
volume = {21},
journal = {Amer. Math. Soc. Univ. Lect. Ser.}
}

@book{bratteli2012operator,
  title={Operator algebras and quantum statistical mechanics: Volume 1: C*-and W*-Algebras. Symmetry Groups. Decomposition of States},
  author={Bratteli, Ola and Robinson, Derek William},
  year={2012},
  publisher={Springer Science \& Business Media}
}

@article{turaev1992state,
  title={State sum invariants of 3-manifolds and quantum 6j-symbols},
  author={Turaev, Vladimir G and Viro, Oleg Ya},
  journal={Topology},
  volume={31},
  number={4},
  pages={865--902},
  year={1992}
}

@article{barrett1996invariants,
  title={Invariants of piecewise-linear 3-manifolds},
  author={Barrett, John and Westbury, Bruce},
  journal={Transactions of the American Mathematical Society},
  volume={348},
  number={10},
  pages={3997--4022},
  year={1996},
  eprint={hep-th/9311155}
}

@article{kirillov2010turaev,
  title={Turaev--{V}iro invariants as an extended {TQFT}},
  author={Kirillov Jr, Alexander and Balsam, Benjamin},
  journal={arXiv preprint},
  year={2010},
  eprint={1004.1533}
}

@article{kirillov2011string,
  title={String-net model of {T}uraev--{V}iro invariants},
  author={Kirillov Jr, Alexander},
  journal={arXiv preprint},
  year={2011},
  eprint={1106.6033}
}

@article{walker2021universal,
  title={A universal state sum},
  author={Walker, Kevin},
  journal={arXiv preprint},
  year={2021},
  eprint={2104.02101}
}

@article{walker2006tqft,
	title={Topological Quantum field Theories},
    journal={{A}vailable at \href{https://canyon23.net/math/tc.pdf}{https://canyon23.net/math/tc.pdf}},
	author={Walker, Kevin},
    year={2006}
}

@article{koenig2010quantum,
  title={Quantum computation with {T}uraev--{V}iro codes},
  author={Koenig, Robert and Kuperberg, Greg and Reichardt, Ben W},
  journal={Annals of Physics},
  volume={325},
  number={12},
  pages={2707--2749},
  year={2010},
  publisher={Elsevier},
  eprint={1002.2816},
  archivePrefix={arXiv}
}

@book{Weidmann,
  title={Linear Operations in Hilbert Spaces},
  author={Weidmann, Joachim},
  year={1980},
  publisher={Springer},
  doi={https://doi.org/10.1007/978-1-4612-6027-1}
}

@article{henriques2015categorifiedtracemoduletensor,
    AUTHOR = {Henriques, Andr\'e{} and Penneys, David and Tener, James},
     TITLE = {Categorified trace for module tensor categories over braided
              tensor categories},
   JOURNAL = {Doc. Math.},
  FJOURNAL = {Documenta Mathematica},
    VOLUME = {21},
      YEAR = {2016},
     PAGES = {1089--1149},
      ISSN = {1431-0635,1431-0643},
   MRCLASS = {18D10},
  MRNUMBER = {3578212},
MRREVIEWER = {Ang\'elica\ M.\ Osorno},
      eprint={1509.02937},
      archivePrefix={arXiv},
      primaryClass={math.QA},
}

@article{muger2003subfactorsquantumdouble,
title = {From subfactors to categories and topology {II}: The quantum double of tensor categories and subfactors},
journal = {Journal of Pure and Applied Algebra},
volume = {180},
number = {1},
pages = {159-219},
year = {2003},
author = {Michael Müger},
eprint={math/0111205},
archivePrefix={arXiv}
}

@article{henriques2016planar,
  title={Planar algebras in braided tensor categories},
  author={Henriques, Andr{\'e} and Penneys, David and Tener, James},
  journal={Memoirs of the American Mathematical Society},
  year={2023},
  volume = {282},
  number = {1392},
  eprint = {1607.06041}
}

@article{chen2022q,
  title={Q-system completion for $\mathrm{C}^*$ 2-categories},
  author={Chen, Quan and Palomares, Roberto Hern{\'a}ndez and Jones, Corey and Penneys, David},
  journal={Journal of Functional Analysis},
  volume={283},
  number={3},
  pages={109524},
  year={2022},
  publisher={Elsevier},
  eprint={2105.12010},
  archiveprefix={arXiv}
}

\end{document}